\DeclareFontFamily{OMX}{MnSymbolE}{}
\DeclareSymbolFont{MnLargeSymbols}{OMX}{MnSymbolE}{m}{n}
\DeclareFontShape{OMX}{MnSymbolE}{m}{n}{
    <-6>  MnSymbolE5
   <6-7>  MnSymbolE6
   <7-8>  MnSymbolE7
   <8-9>  MnSymbolE8
   <9-10> MnSymbolE9
  <10-12> MnSymbolE10
  <12->   MnSymbolE12
}{}
\DeclareFontShape{OMX}{MnSymbolE}{b}{n}{
    <-6>  MnSymbolE-Bold5
   <6-7>  MnSymbolE-Bold6
   <7-8>  MnSymbolE-Bold7
   <8-9>  MnSymbolE-Bold8
   <9-10> MnSymbolE-Bold9
  <10-12> MnSymbolE-Bold10
  <12->   MnSymbolE-Bold12
}{}
\let\llangle\@undefined
\let\rrangle\@undefined
\DeclareMathDelimiter{\llangle}{\mathopen}%
                     {MnLargeSymbols}{'164}{MnLargeSymbols}{'164}
\DeclareMathDelimiter{\rrangle}{\mathclose}%
                     {MnLargeSymbols}{'171}{MnLargeSymbols}{'171}
\newtcolorbox{myframe}[1][]{
  enhanced,
  arc=0pt,
  outer arc=0pt,
  colback=white,
  boxrule=0.5pt,
  boxsep=0mm,
  left=1mm,
  right=1mm,
  top=0.5mm,
  bottom=0.5mm,
  #1
}
\lstdefinelanguage{Scribble}{%
  basicstyle=\footnotesize\ttfamily,
  stringstyle=\color{Blue},
  showstringspaces=false,
  keywords={nested,new,calls,and,as,at,by,catches,choice,continue,do,from,global,import,instantiates,interruptible,local,module,or,par,protocol,rec,role,sig,throws,to,type,with,int,aux,reliable,crash},
  morestring=[b]",
  morestring=[b]',
  morecomment=[l][\color{greencomments}]{//},
}
\lstdefinelanguage{nuScr}{%
  basicstyle=\footnotesize\ttfamily,
  stringstyle=\color{Blue},
  showstringspaces=false,
  keywords={
    nested,new,calls,and,as,at,by,catches,choice,continue,do,from,global,import,instantiates,interruptible,local,module,or,par,protocol,rec,role,sig,throws,to,type,with,int,aux,
    safe
  },
  morestring=[b]",
  morestring=[b]',
  morecomment=[l][\color{greencomments}]{//},
  morecomment=[s][\color{magenta}]{(*}{*)},
}
\lstdefinelanguage{effpi}{
  keywords=[1]{
    case,class,sealed,abstract,object,extends,type,def,val,if,else,new,var,match
  },
  keywords=[2]{
    InChan,OutChan,RecVar,Rec,Out,In,InErr,Loop,
  },
  keywords=[3]{
    rec,send,receive,receiveErr,eval,par,Channel
  },
  keywordstyle=[1]{\color{blue}},
  keywordstyle=[2]{\color{ImperialIris}}, %
  keywordstyle=[3]{\color{OliveGreen}},
  otherkeywords={=>,.type,<:,>>:},
  morecomment=[l][\color{darkgray}]{//},
}
\definecolor{ImperialBlue}{HTML}{003E74}
\definecolor{ImperialDarkGreen}{HTML}{02893B}
\definecolor{ImperialTangerine}{HTML}{EC7300}
\definecolor{ImperialIris}{HTML}{751E66}
\definecolor{RYB1}{RGB}{141, 211, 199}
\definecolor{RYB2}{RGB}{255, 255, 179}
\definecolor{RYB3}{RGB}{190, 186, 218}
\definecolor{RYB4}{RGB}{251, 128, 114}
\definecolor{RYB5}{RGB}{128, 177, 211}
\definecolor{RYB6}{RGB}{253, 180, 98}
\definecolor{RYB7}{RGB}{179, 222, 105}
\tikzset{
  >=stealth,
  node distance=2cm,
  every state/.style={thick, fill=gray!10},
  initial text=$ $,
}
\pgfplotsset{
  compat=1.8,
  /pgfplots/bar cycle list/.style={/pgfplots/cycle list={%
    {brown!60!black,fill=brown!30!white,mark=none},
    {red,fill=red!30!white,mark=none},
    {blue,fill=blue!30!white,mark=none},
    {black,fill=gray,mark=none},
    }
  },
}
\newcolumntype{L}{>{$}l<{$}}
\newcolumntype{C}{>{$}c<{$}}
\newcolumntype{P}[1]{>{\centering\arraybackslash$}p{#1}<{$}}
\Crefname{section}{\S\!}{\S\!}%
\Crefname{subsection}{\S\!}{\S\!}%
\Crefname{subsubsection}{\S\!}{\S\!}%
\Crefname{appendix}{Appendix \S\!}{Appendix \S\!}
\Crefname{definition}{Def.\@}{Defs.\@}%
\Crefname{figure}{Fig.\@}{Figs.\@}%
\Crefname{example}{Ex.\@}{Exs.\@}%
\Crefname{corollary}{Cor.\@}{Cors.\@}%
\Crefname{theorem}{Thm.\@}{Thms.\@}%
\Crefname{proposition}{Prop.\@}{Props.\@}%
\Crefname{lemma}{Lem.\@}{Lems.\@}
\Crefname{equation}{Eq.\@}{Eqs.\@}
\crefname{section}{\S\!}{\S\!}%
\crefname{subsection}{\S\!}{\S\!}%
\crefname{subsubsection}{\S\!}{\S\!}%
\crefname{appendix}{Appendix \S\!}{Appendix \S\!}
\crefname{definition}{Def.\@}{Defs.\@}%
\crefname{figure}{Fig.\@}{Figs.\@}%
\crefname{example}{Ex.\@}{Exs.\@}%
\crefname{corollary}{Cor.\@}{Cors.\@}%
\crefname{theorem}{Thm.\@}{Thms.\@}%
\crefname{proposition}{Prop.\@}{Props.\@}%
\crefname{lemma}{Lem.\@}{Lems.\@}
\crefname{equation}{Eq.\@}{Eqs.\@}
\newif\ifdraft%
\newcommand{\ifempty}[3]{%
  \ifthenelse{\isempty{#1}}{#2}{#3}%
}%
\newcommand{\dom}[1]{{\color{black}\operatorname{dom}\!\left({#1}\right)}}%
\newcommand{\gtFv}[1]{\gtFmt{\operatorname{fv}}\!\left({#1}\right)}%
\newcommand{\fcv}[1]{\operatorname{fcv}\!\left({#1}\right)}%
\newcommand{\fev}[1]{\operatorname{fev}\!\left({#1}\right)}%
\newcommand{\fc}[1]{\operatorname{fc}\!\left({#1}\right)}%
\newcommand{\dpv}[1]{\operatorname{dpv}\!\left({#1}\right)}%
\newcommand{\fpv}[1]{\operatorname{fpv}\!\left({#1}\right)}%
\newcommand{\unfoldOne}[1]{%
  {\color{black}\operatorname{unf}\!\left({#1}\right)}}%
\newcommand{\notImplies}{\mathrel{\,\,\,\not\!\!\!\!\implies}}%
\newcommand{\notImpliedBy}{\mathrel{{\kern .5em}{\not{\kern -1.2em}\impliedby}}}%
\newcommand{\coloncolonequals}{\Coloneqq}%
\newcommand{\bnfdef}{\coloncolonequals}%
\newcommand{\bnfsep}{\mathbin{\;\big|\;}}%
\newcommand{\eval}[2]{#1 \downarrow #2}
\def\aka{a.k.a.\@\xspace}%
\def\cf{cf.\@\xspace}%
\def\eg{e.g.\@\xspace}%
\def\ie{i.e.\@\xspace}%
\definecolor{ruleColor}{rgb}{0.1, 0.3, 0.1}%
\newcommand{\inferrule}[1]{{\color{ruleColor}\textsc{\scriptsize [#1]}}}%
\newcommand{\inference}[3][]{\infer[\ifempty{#1}{}{\inferrule{#1}}]{#3}{#2}}%
\newcommand{\cinference}[3][]{\infer=[\ifempty{#1}{}{\inferrule{#1}}]{#3}{#2}}%
\newcommand{\setenum}[1]{\mathord{{\color{black}\left\{#1\right\}}}}%
\newcommand{\setcomp}[2]{\mathord{%
  {\color{black}\left\{{#1} \,\middle|\, {#2}\right\}}}}%
\newcommand{\predP}[1][]{\ifempty{#1}{\varphi}{\varphi_{#1}}}%
\newcommand{\predPApp}[2][]{\ifempty{#1}{\predP}{\predP[{#1}]}\!\left({#2}\right)}%
\newcommand{\bind}[2]{\nicefrac{#2}{#1}}%
\newcommand{\substenum}[1]{\mathord{\left\{{#1}\right\}}}%
\newcommand{\subst}[2]{\substenum{\bind{#1}{#2}}}%
\definecolor{hlColor}{rgb}{0.65, 1.0, 0.65}%
\newcommand{\lbbar}{\{\kern-0.2em|}
\newcommand{\rbbar}{|\kern-0.2em\}}
\definecolor{tyColorCustom}{rgb}{0.0, 0.0, 0.85}%
\newcommand{\tyCol}[1]{{\color{tyColorCustom}{#1}}}%
\newcommand{\tyFont}[1]{{#1}}%
\newcommand{\tyFmt}[1]{\tyCol{\tyFont{#1}}}%
\newcommand{\tyFontC}[1]{\operatorname{#1}}%
\newcommand{\tyFmtC}[1]{\tyCol{\tyFontC{#1}}}%
\newcommand{\tyGroundSet}{\stFmt{\mathcal{B}}}  
\newcommand{\tyGround}[1][]{\tyFmt{\ifempty{#1}{B}{B_{#1}}}}%
\newcommand{\tyNat}{\tyFmtC{nat}}%
\newcommand{\tyBool}{\tyFmtC{bool}}%
\newcommand{\tyUnit}{\tyFmtC{unit}}%
\newcommand{\tyInt}{\tyFmtC{int}}%
\newcommand{\tyString}{\tyFmtC{str}}%
\newcommand{\tyS}[1][]{\tyCol{\ifempty{#1}{\tyFont{S}}{\tyFont{S}_{#1}}}}%
\newcommand{\tySi}[1][]{\tyCol{\ifempty{#1}{\tyFont{S}'}{\tyFont{S}'_{#1}}}}%
\newcommand{\tySii}[1][]{\tyCol{\ifempty{#1}{\tyFont{S}''}{\tyFont{S}''_{#1}}}}%
\newcommand{\tyT}[1][]{\tyCol{\ifempty{#1}{\tyFont{T}}{\tyFont{T}_{#1}}}}%
\newcommand{\tyTi}[1][]{\tyCol{\ifempty{#1}{\tyFont{T}'}{\tyFont{T}'_{#1}}}}%
\newcommand{\muCol}[1]{{\color{red}#1}}%
\newcommand{\muWordEmpty}[1][]{\muCol{\epsilon}}%
{\centerline{\bf --- Begin Copied From Previous Paper ---} \hrule}%
{\hrule \centerline{\bf --- End Copied From Previous Paper ---}}%
  \hrule\vspace{1mm}}%
\hrule\vspace{1mm}\centerline{\bf --- End Discussion ---}}%
\newtcolorbox{cross}{blank,breakable,parbox=false,
  overlay={\draw[red,line width=5pt] (interior.south west)--(interior.north east);
    \draw[red,line width=5pt] (interior.north west)--(interior.south east);}}
\definecolor{roleColor}{rgb}{0.5, 0.0, 0.0}%
\newcommand{\roleCol}[1]{{\color{roleColor}#1}}%
\newcommand{\roleSet}{\roleCol{\mathcal{R}}}%
\newcommand{\roleFmt}[1]{\ensuremath{{\boldsymbol{\roleCol{\mathtt{#1}}}}}\xspace}%
\newcommand{\roleP}[1][]{%
  \ifempty{#1}{{\color{roleColor}\roleFmt{p}}}{{\color{roleColor}\roleFmt{p}_{#1}}}%
}%
\newcommand{\rolePi}[1][]{%
  \ifempty{#1}{{\color{roleColor}\roleFmt{p}'}}{{\color{roleColor}\roleFmt{p}'_{#1}}}%
}%
\newcommand{\roleQ}[1][]{%
  \ifempty{#1}{{\color{roleColor}\roleFmt{q}}}{{\color{roleColor}\roleFmt{q}_{#1}}}%
}%
\newcommand{\roleQi}[1][]{%
  \ifempty{#1}{{\color{roleColor}\roleFmt{q}'}}{{\color{roleColor}\roleFmt{q}'_{#1}}}%
}%
\newcommand{\roleR}[1][]{%
  \ifempty{#1}{{\color{roleColor}\roleFmt{r}}}{{\color{roleColor}\roleFmt{r}_{\!#1}}}%
}%
\newcommand{\roleS}[1][]{%
  \ifempty{#1}{{\color{roleColor}\roleFmt{s}}}{{\color{roleColor}\roleFmt{s}_{\!#1}}}%
}%
\newcommand{\roleT}[1][]{%
  \ifempty{#1}{{\color{roleColor}\roleFmt{t}}}{{\color{roleColor}\roleFmt{t}_{\!#1}}}%
}%
\newcommand{\roleU}[1][]{%
  \ifempty{#1}{{\color{roleColor}\roleFmt{u}}}{{\color{roleColor}\roleFmt{u}_{\!#1}}}%
}%
\definecolor{gtColor}{rgb}{0.43, 0.21, 0.1}%
\newcommand{\gtFmt}[1]{\ensuremath{{\color{gtColor}#1}}\xspace}%
\newcommand{\gtMsgFmt}[1]{\gtFmt{\labFmt{#1}}}%
\newcommand{\gtLab}[1][]{%
  \ifempty{#1}{\gtMsgFmt{m}}{{\color{gtColor}\gtMsgFmt{m}_{#1}}}%
}%
\newcommand{\gtLabi}[1][]{%
  \ifempty{#1}{\gtMsgFmt{m}'}{{\color{gtColor}\gtMsgFmt{m}'_{#1}}}%
}%
\newcommand{\gtLabii}[1][]{%
  \ifempty{#1}{\gtMsgFmt{m}''}{{\color{gtColor}\gtMsgFmt{m}''_{#1}}}%
}%
\newcommand{\gtG}[1][]{\gtFmt{\ifempty{#1}{G}{G_{#1}}}}%
\newcommand{\gtGi}[1][]{\gtFmt{\ifempty{#1}{G'}{G'_{#1}}}}%
\newcommand{\gtGii}[1][]{\gtFmt{\ifempty{#1}{G''}{G''_{#1}}}}%
\newcommand{\gtGiii}[1][]{\gtFmt{\ifempty{#1}{G'''}{G'''_{#1}}}}%
\newcommand{\gtSeq}{\mathbin{\gtFmt{.}}}%
\newcommand{\gtCommRaw}[3]{%
  \gtFmt{%
    {#1} {\to} {#2}{:}%
    \left\{%
      {#3}%
    \right\}%
  }%
}%
\newcommand{\gtComm}[6]{%
  \gtFmt{%
    \gtCommRaw{#1}{#2}{%
      \gtCommChoice{#4}{#5}{#6}%
    }_{#3}%
  }%
}%
\newcommand{\gtCommSmall}[6]{%
  \gtFmt{%
    \gtCommRaw{#1}{#2}{%
      \gtCommChoiceSmall{#4}{#5}{#6}%
    }_{#3}%
  }%
}%
\newcommand{\gtCommSingle}[5]{%
  \gtFmt{%
    {#1} {\to} {#2}{:}%
    \gtCommChoice{#3}{#4}{#5}%
  }%
}%
\newcommand{\gtCommChoice}[3]{%
  \gtFmt{%
    \gtMsgFmt{#1}\ifempty{#2}{}{({#2})}%
    \ifempty{#3}{}{\vphantom{x} \!\gtSeq\! {#3}}%
  }%
}%
\newcommand{\gtCommChoiceSmall}[3]{%
  \gtFmt{%
    \gtMsgFmt{#1}\ifempty{#2}{}{({#2})}%
    \ifempty{#3}{}{\vphantom{x} \!\gtSeq\! {#3}}%
  }%
}%
\newcommand{\gtEnd}{\gtFmt{\mathbf{end}}}%
\newcommand{\gtRec}[2]{\gtFmt{\mu{#1}.{#2}}}%
\newcommand{\gtRecVarBase}{\gtFmt{\mathbf{t}}}%
\newcommand{\gtRecVar}[1][]{\gtFmt{\ifempty{#1}{\gtRecVarBase}{\gtRecVarBase_{#1}}}}%
\newcommand{\gtRoles}[1]{{\color{roleColor} \operatorname{roles}(\gtFmt{#1})}}%
\newcommand{\gtProj}[3][]{%
  {\color{stColor}\gtFmt{#2}\!\ifempty{#1}{\upharpoonright}{\upharpoonright_{#1}}\!\roleFmt{#3}}%
}%
\newcommand{\gtMove}[1][\phantom{\stEnvAnnotGenericSym}]{\gtFmt{\xrightarrow{#1}}} %
\newcommand{\gtMoveStar}[1][]{\ifempty{#1}{\gtMove[]{}^{\!\gtFmt{*}}}{\gtMove[]{#1}^{\!\gtFmt{*}}}} %
\newcommand{\gtNotMove}[1]{\gtFmt{#1}\!\not{\!\!\gtMove}}%
\newcommand{\iruleGtMove}[1]{GR-{#1}}
\newcommand{\iruleGtMoveComm}[0]{\iruleGtMove{$\oplus\&$}}
\newcommand{\iruleGtMoveRec}[0]{\iruleGtMove{$\mu$}}
\newcommand{\iruleGtMoveCtx}[0]{\iruleGtMove{Ctx}}
\newcommand{\labFmt}[2][]{\ensuremath{\ifempty{#1}{\mathtt{#2}}{\mathtt{#2}\textsubscript{#1}}}\xspace}%
\definecolor{stColor}{rgb}{0, 0, 0.9}%
\newcommand{\stFmt}[1]{\ensuremath{{\color{stColor}#1}}\xspace}%
\newcommand{\stFmtC}[1]{\stFmt{\operatorname{#1}}}
\newcommand{\stOut}[3]{\ifempty{#1}{}{\roleFmt{#1}}\stFmt{\oplus{#2}\ifempty{#3}{}{({#3})}}}%
\newcommand{\stChoice}[2]{\stLabFmt{#1}\ifempty{#2}{}{\stFmt{({#2})}}}%
\newcommand{\stSeq}{\mathbin{\!\stFmt{.}\!}}%
\newcommand{\stIntC}{\mathbin{\stFmt{\oplus}}}%
\newcommand{\stIntSum}[3]{\roleFmt{#1}\stFmt{\oplus\!\left\{#3\right\}_{#2}}}%
\newcommand{\stExtC}{\mathbin{\stFmt{\&}}}%
\newcommand{\stExtSum}[3]{\roleFmt{#1}\stFmt{\&\!\left\{#3\right\}_{#2}}}%
\newcommand{\stRec}[2]{\stFmt{\mu{#1}.{#2}}}%
\newcommand{\stEnd}{\stFmt{\mathbf{end}}}%
\newcommand{\stLabFmt}[1]{\stFmt{\labFmt{#1}}}%
\newcommand{\stLab}[1][]{%
  \ifempty{#1}{\stLabFmt{m}}{\stLabFmt{m}_{{\color{stColor}#1}}}
}%
\newcommand{\stLabi}[1][]{%
  \ifempty{#1}{\stLabFmt{m'}}{\stLabFmt{m'}_{{\color{stColor}#1}}}
}%
\newcommand{\stLabii}[1][]{%
  \ifempty{#1}{\stLabFmt{m''}}{\stLabFmt{m''}_{{\color{stColor}#1}}}
}%
\newcommand{\stS}[1][]{\stFmt{\ifempty{#1}{S}{S_{#1}}}}%
\newcommand{\stSi}[1][]{\stFmt{\ifempty{#1}{S'}{S'_{#1}}}}%
\newcommand{\stSii}[1][]{\stFmt{\ifempty{#1}{S''}{S''_{#1}}}}%
\newcommand{\stSiii}[1][]{\stFmt{\ifempty{#1}{S'''}{S'''_{#1}}}}%
\newcommand{\stT}[1][]{\stFmt{\ifempty{#1}{T}{T_{#1}}}}%
\newcommand{\stTi}[1][]{\stFmt{\ifempty{#1}{T'}{T'_{#1}}}}%
\newcommand{\stTii}[1][]{\stFmt{\ifempty{#1}{T''}{T''_{#1}}}}%
\newcommand{\stTiii}[1][]{\stFmt{\ifempty{#1}{T'''}{T'''_{#1}}}}%
\newcommand{\stU}[1][]{\stFmt{\ifempty{#1}{U}{U_{#1}}}}%
\newcommand{\stUi}[1][]{\stFmt{\ifempty{#1}{U'}{U'_{#1}}}}%
\newcommand{\stB}[1][]{\stFmt{\ifempty{#1}{B}{B_{#1}}}}%
\newcommand{\stRecVarBase}{\stFmt{\mathbf{t}}}%
\newcommand{\stRecVar}[1][]{\stFmt{\ifempty{#1}{\stRecVarBase}{\stRecVarBase_{#1}}}}%
\newcommand{\stMerge}[2]{\stFmt{\bigsqcap_{#1}{#2}}}%
\newcommand{\stBinMerge}{\mathbin{\stFmt{\sqcap}}}%
\newcommand{\stSub}{\mathrel{\stFmt{\leqslant}}}%
\newcommand{\stNotSub}{\mathrel{\stFmt{\not\leqslant}}}%
\newcommand{\stepsto}[1][\quad]{\xrightarrow{#1}} %
\newcommand{\ltsSendRecv}[4]{\mpChanRole{#1}{#2}\mpFmt{[\roleFmt{#3}]}\labFmt{#4}}
\newcommand{\ltsSubject}[1]{{\color{roleColor} \operatorname{subject}({#1})}}%
\definecolor{mpColor}{rgb}{0, 0, 0}%
\newcommand{\mpFmt}[1]{{\color{mpColor}#1}}%
\newcommand{\mpLab}[1][]{%
  \mpFmt{\ifempty{#1}{\labFmt{m}}{{\labFmt{m}}_{\mathnormal #1}}}%
}%
\newcommand{\mpLabi}[1][]{%
  \mpFmt{\ifempty{#1}{\labFmt{m}'}{\labFmt{m}'_{\mathnormal #1}}}%
}%
\newcommand{\mpLabCrash}[0]{\mpFmt{\labFmt{\mathbf{crash}}}}
\newcommand{\mpLabFmt}[1]{\mpFmt{\labFmt{#1}}}%
\newcommand{\mpInt}{\mpFmt{\text{\texttt{i}}}}
\newcommand{\mpNat}{\mpFmt{\text{\texttt{n}}}}%
\newcommand{\mpTrue}{\mpFmt{\text{\texttt{true}}}}%
\newcommand{\mpFalse}{\mpFmt{\text{\texttt{false}}}}%
\newcommand{\mpString}[1]{\text{\texttt{"{#1}"}}}%
\newcommand{\mpUnit}[1]{\text{\texttt{({#1})}}}%
\newcommand{\mpSucc}[1]{\text{\texttt{succ}}({#1})}%
\newcommand{\mpNeg}[1]{\text{\texttt{neg}}({#1})}%
\newcommand{\mpChanRole}[2]{\mpFmt{{#1}[{#2}]}}%
\newcommand{\mpNil}{\mpFmt{\mathbf{0}}}%
\newcommand{\mpSeq}{\mathbin{\mpFmt{\!.\!}}}%
\newcommand{\mpIf}[3]{%
  \mpFmt{\mathsf{if}\,{#1}\,\mathsf{then}}\,{#2}\,\mathsf{else}\,{#3}%
}%
\newcommand{\mpChoice}[3]{%
  \mpFmt{%
    \mpLabFmt{#1}\ifempty{#2}{}{({#2})}\ifempty{#3}{}{\vphantom{x}\mpSeq {#3}}%
  }%
}%
\newcommand{\mpChoiceNoBind}[3]{%
  \mpFmt{%
    \mpLabFmt{#1}\ifempty{#2}{}{\langle{#2}\rangle}\ifempty{#3}{}{\vphantom{x}\mpSeq {#3}}%
  }%
}%
\newcommand{\mpBranchRaw}[3]{%
  \mpFmt{%
    {#1}[\roleFmt{#2}]  \& \,
      {#3}%
  }%
}%
\newcommand{\mpBranch}[7]{%
  \mpFmt{%
    {#1}[\roleFmt{#2}] \mathbin{\!\ifempty{\sum}{\sum}{\&}\!}%
    \ifempty{#3}{%
      \{\mpChoice{#4}{#5}{#6}\ifempty{#7}{}{,\;\mpChoice{\mpLabCrash}{}{#7}}\}_{#3}
    }{%
      \{\mpChoice{#4}{#5}{#6}\ifempty{#7}{}{,\;\mpChoice{\mpLabCrash}{}{#7}}\}_{#3}
    }%
  }%
}%
\newcommand{\mpBranchSingle}[5]{%
  \mpFmt{%
    {#1}[\roleFmt{#2}] \mathbin{\!\ifempty{\sum}{\sum}{\&}\!}%
    \mpChoice{#3}{#4}{#5}
  }%
}%
\newcommand{\mpSel}[5]{%
  \mpFmt{%
    {#1}[\roleFmt{#2}] \mathbin{\!\oplus\!}%
    \mpChoiceNoBind{#3}{#4}{#5}%
  }%
}%
\newcommand{\mpPar}{\mathbin{\mpFmt{\mid}}}%
\newcommand{\mpBigPar}[2]{\mathbin{\mpFmt{\Pi_{#1}}{#2}}}%
\newcommand{\mpRes}[2]{\mpFmt{\left(\mathbf{\nu}{#1}\right){#2}}}%
\newcommand{\mpJustDef}[3]{%
  \mpFmt{{#1}(#2) = {#3}}%
}%
\newcommand{\mpDef}[4]{%
  \mpFmt{\mathbf{def}\;\mpJustDef{#1}{#2}{#3}\;\mathbf{in}\;{#4}}%
}%
\newcommand{\mpDefAbbrev}[2]{%
  \mpFmt{\mathbf{def}\;{#1}\;\mathbf{in}\;{#2}}%
}%
\newcommand{\mpCall}[2]{\mpFmt{{#1}\!\left\langle{#2}\right\rangle}}%
\newcommand{\mpCallSmall}[2]{\mpFmt{{#1}\langle{#2}\rangle}}
\newcommand{\mpErr}{\mpFmt{\boldsymbol{\mathtt{err}}}}%
\newcommand{\mpCtx}[1][]{\mpFmt{\ifempty{#1}{\mathbb{C}}{\mathbb{C}_{#1}}}}%
\newcommand{\mpCtxi}[1][]{\mpFmt{\ifempty{#1}{\mathbb{C}'}{\mathbb{C}'_{#1}}}}%
\newcommand{\mpCtxHole}{[\,]}%
\newcommand{\mpCtxApp}[2]{{#1}\!\left[{#2}\right]}%
\newcommand{\mpV}[1][]{\mpFmt{\ifempty{#1}{v}{v_{#1}}}}
\newcommand{\mpVi}[1][]{\mpFmt{\ifempty{#1}{v'}{v'_{#1}}}}
\newcommand{\mpW}[1][]{\mpFmt{\ifempty{#1}{w}{w_{#1}}}}
\newcommand{\mpU}[1][]{\mpFmt{\ifempty{#1}{u}{u_{#1}}}}
\newcommand{\mpC}[1][]{\mpFmt{\ifempty{#1}{c}{c_{#1}}}}%
\newcommand{\mpCi}[1][]{\mpFmt{\ifempty{#1}{c'}{c'_{#1}}}}%
\newcommand{\mpD}[1][]{\mpFmt{\ifempty{#1}{d}{d_{#1}}}}%
\newcommand{\mpS}[1][]{\mpFmt{\ifempty{#1}{s}{s_{#1}}}}%
\newcommand{\mpSi}[1][]{\mpFmt{\ifempty{#1}{s'}{s'_{#1}}}}%
\newcommand{\mpX}[1][]{\mpFmt{\ifempty{#1}{X}{X_{#1}}}}%
\newcommand{\mpXi}[1][]{\mpFmt{\ifempty{#1}{X'}{X'_{#1}}}}%
\newcommand{\mpY}[1][]{\mpFmt{\ifempty{#1}{Y}{Y_{#1}}}}%
\newcommand{\mpYi}[1][]{\mpFmt{\ifempty{#1}{Y'}{Y'_{#1}}}}%
\newcommand{\mpP}[1][]{\mpFmt{\ifempty{#1}{P}{P_{#1}}}}%
\newcommand{\mpPi}[1][]{\mpFmt{\ifempty{#1}{P'}{P'_{#1}}}}%
\newcommand{\mpPii}[1][]{\mpFmt{\ifempty{#1}{P''}{P''_{#1}}}}%
\newcommand{\mpQ}[1][]{\mpFmt{\ifempty{#1}{Q}{Q_{#1}}}}%
\newcommand{\mpQi}[1][]{\mpFmt{\ifempty{#1}{Q'}{Q'_{#1}}}}%
\newcommand{\mpR}[1][]{\mpFmt{\ifempty{#1}{R}{R_{#1}}}}%
\newcommand{\mpDefD}[1][]{\mpFmt{\ifempty{#1}{D}{D_{#1}}}}%
\newcommand{\mpDefDi}[1][]{\mpFmt{\ifempty{#1}{D'}{D'_{#1}}}}%
\newcommand{\mpMove}{\to}%
\newcommand{\mpMoveStar}{\mathrel{\mpMove{}^{\!\!\!*}}}%
\newcommand{\mpNotMoveP}[1]{\mpFmt{#1}\!\not{\!\!\!\mpMove}}%
\newcommand{\mpCtxMove}{\mpFmt{\rightsquigarrow}}%
\newcommand{\mpCtxMoveStar}{\mathrel{\mpCtxMove{}^{\!*}}}%
\newcommand{\iruleMPRedCommExp}{R-$\oplus\&$-V}%
\newcommand{\iruleMPRedCommChannel}{R-$\oplus\&$-D}%
\newcommand{\iruleMPRedCondTrue}{R-Cond-T}%
\newcommand{\iruleMPRedCondFalse}{R-Cond-F}%
\newcommand{\iruleMPRedCall}{R-$\mpX$}%
\newcommand{\iruleMPRedCongr}{R-$\equiv$}%
\newcommand{\iruleMPCtx}[1][]{R-Ctx\ifempty{#1}{}{#1}}
\newcommand{\iruleMPRedCtx}{\iruleMPCtx[]}%
\newcommand{\iruleMPErrLabel}{R-Err}%
\newcommand{\iruleSafeComm}{S-${\stIntC}{\stExtC}$}%
\newcommand{\iruleSafeRec}{S-$\stFmt{\mu}$}%
\newcommand{\iruleMoveSession}{S-$\stEnvMoveWithSession[\mpS]$}%
\newcommand{\iruleStSubEnd}{Sub-$\stEnd$}
\newcommand{\iruleStSubRecL}{Sub-$\stFmt{\mu}$L}
\newcommand{\iruleStSubRecR}{Sub-$\stFmt{\mu}$R}
\newcommand{\iruleStSubOut}{Sub-$\stFmt{\oplus}$}
\newcommand{\iruleStSubIn}{Sub-$\stFmt{\&}$}
\newcommand{\iruleStSubGround}{Sub-$\tyGround$}
\newcommand{\iruleMPEnd}{T-$\stEnvEndPred$}%
\newcommand{\iruleMPSub}{T-Sub}%
\newcommand{\iruleMPNil}{T-$\mpNil$}%
\newcommand{\iruleMPDef}{T-$\mpFmt{\mathbf{def}}$}%
\newcommand{\iruleMPCall}{T-Call}%
\newcommand{\iruleMPPar}{T-$\mpPar$}%
\newcommand{\iruleMPRes}{T-$\mpFmt{\mathbf{\nu}}$Classic}%
\newcommand{\iruleMPGlobalRes}{T-$\gtG$-$\mpFmt{\mathbf{\nu}}$}
\newcommand{\iruleMPBranch}{T-$\mpFmt{\&}$}%
\newcommand{\iruleMPSelV}{T-$\mpFmt{\oplus}$-V}%
\newcommand{\iruleMPSelD}{T-$\mpFmt{\oplus}$-D}%
\newcommand{\iruleMPIf}{T-If}
\newcommand{\iruleTCtxOut}{$\stEnvNew$-$\stFmt{\oplus}$}%
\newcommand{\iruleTCtxIn}{$\stEnvNew$-$\stFmt{\&}$}%
\newcommand{\iruleTCtxCom}{$\stEnvNew$-$\stFmt{\oplus\&}$}%
\newcommand{\iruleTCtxRec}{$\stEnvNew$-$\mu$}%
\newcommand{\iruleTCtxCong}{$\stEnvNew$-$\stEnvComp$}%
\newcommand{\iruleTCtxCongBasic}{$\stEnvNew$-$\stEnvComp$$\tyGround$}%
\newcommand{\stEnv}[1][]{\stFmt{\ifempty{#1}{\Gamma}{\Gamma_{#1}}}}%
\newcommand{\stEnvi}[1][]{\stFmt{\ifempty{#1}{\Gamma'}{\Gamma'_{#1}}}}%
\newcommand{\stEnvii}[1][]{\stFmt{\ifempty{#1}{\Gamma''}{\Gamma''_{#1}}}}%
\newcommand{\stEnviii}[1][]{\stFmt{\ifempty{#1}{\Gamma'''}{\Gamma'''_{#1}}}}%
\newcommand{\stEnv}[1][]{\stFmt{\ifempty{#1}{\Delta}{\Delta_{#1}}}}%
\newcommand{\stEnvi}[1][]{\stFmt{\ifempty{#1}{\Delta'}{\Delta'_{#1}}}}%
\newcommand{\stEnvii}[1][]{\stFmt{\ifempty{#1}{\Delta''}{\Delta''_{#1}}}}%
\newcommand{\stEnviii}[1][]{\stFmt{\ifempty{#1}{\Delta'''}{\Delta'''_{#1}}}}%
\newcommand{\stEnvEmpty}{\stFmt{\emptyset}}%
\newcommand{\stEnvMap}[2]{\stFmt{\mpFmt{#1}\mathbin{:}{#2}}}%
\newcommand{\stEnvComp}{\mathpunct{\stFmt{,}}}%
\newcommand{\stEnvApp}[2]{\stFmt{#1\!\left(\mpFmt{#2}\right)}}%
\newcommand{\stEnvNew}[1][]{\stFmt{\ifempty{#1}{\Delta}{\Delta_{#1}}}}%
\newcommand{\stEnvNewi}[1][]{\stFmt{\ifempty{#1}{\Delta'}{\Delta'_{#1}}}}%
\newcommand{\stEnvNewii}[1][]{\stFmt{\ifempty{#1}{\Delta''}{\Delta''_{#1}}}}%
\newcommand{\stEnvAssoc}[3]{\stFmt{{#2} \mathrel{\stFmt{\sqsubseteq}_{#3}} {#1}}}
\newcommand{\stEnvMove}{\mathrel{\stFmt{\to}}}%
\newcommand{\stEnvMoveWithSession}[1][]{
\ifempty{#1}{}{
    \mathrel{\stFmt{\to}_{\!#1}}}}
\newcommand{\stEnvAnnotOutSym}{\stFmt{\oplus}}%
\newcommand{\stEnvAnnotInSym}{\stFmt{\&}}%
\newcommand{\stEnvAnnotGenericSym}[1][]{\stFmt{\ifempty{#1}{\alpha}{\alpha_{#1}}}}%
\newcommand{\stEnvAnnotGenericSymi}[1][]{\stFmt{\ifempty{#1}{\alpha'}{\alpha'_{#1}}}}%
\newcommand{\stEnvMoveAnnot}[1]{\mathrel{\stFmt{\xrightarrow{#1}}}}
\newcommand{\stEnvMoveGenAnnot}{\stEnvMoveAnnot{\stEnvAnnotGenericSym}}%
\newcommand{\stEnvMoveInAnnot}[3]{%
  \stEnvMoveAnnot{\stEnvInAnnot{#1}{#2}{#3}}%
}%
\newcommand{\stEnvMoveOutAnnot}[3]{%
  \stEnvMoveAnnot{\stEnvOutAnnot{#1}{#2}{#3}}%
}%
\newcommand{\stEnvMoveCommAnnot}[4]{%
  \stEnvMoveAnnot{\ltsSendRecv{#1}{#2}{#3}{#4}}%
}%
\newcommand{\stEnvInAnnot}[3]{\mpChanRole{\mpS}{#1}:{#2}{\stEnvAnnotInSym}{#3}}%
\newcommand{\stEnvOutAnnot}[3]{\mpChanRole{\mpS}{#1}:{#2}{\stEnvAnnotOutSym}{#3}}%
\newcommand{\stEnvInAnnotSmall}[3]{\mpChanRole{\mpS}{#1}{:}{#2}{\stEnvAnnotInSym}{#3}}%
\newcommand{\stEnvOutAnnotSmall}[3]{\mpChanRole{\mpS}{#1}{:}{#2}{\stEnvAnnotOutSym}{#3}}%
\newcommand{\stEnvCommAnnotSmall}[3]{\ltsSendRecv{\mpS}{#1}{#2}{#3}}%
\newcommand{\stEnvMoveP}[1]{{#1}\!\!\stEnvMove}%
\newcommand{\stEnvNotMoveP}[1]{{#1}\!\!\not\stEnvMove}%
\newcommand{\stEnvNotMoveWithSessionP}[2][]{{#2}\!\!\not\stEnvMoveWithSession[#1]}%
\newcommand{\stEnvMoveStar}{\mathrel{\stFmt{\stEnvMove{}^{\!\!\!*}}}}%
\newcommand{\stEnvMoveWithSessionStar}[1][]{%
  \ifempty{#1}{}{%
    \mathrel{\stFmt{\to^{\!*}}_{\!\!\!{#1}}}}}
\newcommand{\stEnvMoveAnnotP}[2]{{#1}\!\!\stEnvMoveAnnot{#2}}%
\newcommand{\stEnvEndPred}{\operatorname{end}}%
\newcommand{\stEnvEndP}[1]{\stEnvEndPred(\stFmt{#1})}%
\newcommand{\stEnvDFPred}{\operatorname{df}}%
\newcommand{\stEnvDFSessP}[2]{\stEnvDFPred({#1},\stFmt{#2})}%
\newcommand{\stEnvLivePred}{\operatorname{live}}%
\newcommand{\stEnvLiveSessP}[2]{\stEnvLivePred({#1},\stFmt{#2})}%
\newcommand{\stEnvSafePred}{\operatorname{safe}}%
\newcommand{\stEnvSafeP}[1]{\stEnvSafePred(\stFmt{#1})}%
\newcommand{\stEnvSafeSessP}[2]{\stEnvSafePred({#1},\stFmt{#2})}%
\newcommand{\mpEnv}[1][]{\stFmt{\ifempty{#1}{\Theta}{\Theta_{#1}}}}%
\newcommand{\mpEnv}[1][]{\stFmt{\ifempty{#1}{\Gamma}{\Gamma_{#1}}}}%
\newcommand{\mpEnvEmpty}{\stFmt{\emptyset}}%
\newcommand{\mpEnvMap}[2]{\stFmt{\mpFmt{#1}\mathbin{:}\stFmt{#2}}}%
\newcommand{\mpEnvComp}{\mathpunct{\stFmt{,}}}%
\newcommand{\mpEnvNew}[1][]{\stFmt{\ifempty{#1}{\Gamma}{\Gamma_{#1}}}}%
\newcommand{\stEnvEntails}[3]{%
  \stFmt{#1} \vdash \stFmt{\mpFmt{#2} \mathbin{:} {#3}}%
}%
\newcommand{\mpEnvEntails}[3]{%
  \stFmt{#1} \vdash \stFmt{\mpFmt{#2} \mathbin{:} \stFmt{#3}}%
}%
\newcommand{\stJudge}[3]{%
  \stFmt{\ifempty{#1}{#2}{{#1} \cdot {#2}}%
  \mathrel{\mpFmt{\vdash}} \mpFmt{#3}}%
}%
\newcommand{\stJudge}[3]{%
  \stFmt{\ifempty{#1}{#2}{{#1}}%
  \mathrel{\mpFmt{\vdash}} \mpFmt{#3} \; \mpFmt{\triangleright}\; #2}%
}%
\newcommand{\stJudgeNew}[3]{%
  \stFmt{\ifempty{#1}{#2}{{#1}}%
  \mathrel{\mpFmt{\vdash}} \mpFmt{#3} \; \mpFmt{\triangleright}\; #2}%
}%
\newcommand{\gtExChoice}{\gtG[\text{auth}]}%
\newcommand{\roleAuth}{{\color{roleColor}\roleFmt{a}}}%
\newcommand{\roleClient}{{\color{roleColor}\roleFmt{c}}}%
\newcommand{\roleServer}{{\color{roleColor}\roleFmt{s}}}%
\newtheorem{theorem}{Theorem}[section]
\newtheorem{lemma}[theorem]{Lemma}
\newtheorem{proposition}[theorem]{Proposition}
\newdefinition{definition}{Definition}[section]
\newdefinition{example}{Example}[section]                                                                       
\newdefinition{remark}{Remark}[section]
\newproof{pf}{Proof}
\newproof{pot}{Proof of Theorem \ref{thm2}}
\journal{Theoretical Computer Science}
\begin{document}
\begin{frontmatter}

\title{Less is More Revisited  \\ %
{\large Association with Global Protocols and Multiparty Sessions}} %

\author[label1]{Ping Hou\corref{cor1}}
\ead{ping.hou@cs.ox.ac.uk}

\author[label1]{Nobuko Yoshida}
\ead{nobuko.yoshida@cs.ox.ac.uk}

\author[label2]{Iona Kuhn}
\ead{ioku00001@stud.uni-saarland.de}

\cortext[cor1]{Corresponding author}

\affiliation[label1]{organization={Department of Computer Science, University of Oxford},%
            country={United Kingdom}}
 
 \affiliation[label2]{organization={Department of Computer Science, Saarland University},%
            country={Germany}}

\begin{abstract} 
Ensuring correctness of communication in distributed systems remains challenging. 
To address this, Multiparty session types (MPST), initially introduced by~\citet{HYC08,HYC16}, offer a type discipline in which a programmer or architect specifies an overall view of communication 
as a \emph{global protocol} (\emph{global type}), and each distributed program 
is locally type-checked against its \emph{end-point projection}.
In practice, the MPST framework has been integrated into over 25 programming
languages or tools. 
Ten years after the emergence of MPST,  
\citet{POPL19LessIsMore} 
discovered that 
existing \emph{proofs} of type safety 
using end-point projection with \emph{mergeability}  are
flawed, where the mergeability operator enlarges the 
typability of %
MPST end-point programs,  
admits easy implementation, 
and is more efficient than alternative approaches, including model checking. %
Nevertheless, following the result in~\cite{POPL19LessIsMore}, 
the soundness of end-point projection (with mergeability) has been interpreted in the literature as \emph{problematic}.
We clarify this concern by 
proposing a new general proof 
technique for \emph{type soundness} (\emph{subject reduction}) of multiparty session $\pi$-calculus,
which relies on an \emph{association} relation between the behavioural 
semantics of a global type and its
end-point projection. With this approach, \emph{behavioural properties}, namely \emph{session fidelity}, \emph{deadlock freedom}, and \emph{liveness}, are also guaranteed based on global types.
Additionally, we %
provide detailed comparisons  
with existing MPST  typing systems
and 
discuss their respective proof methods for type soundness.
\end{abstract}

 \begin{keyword}
Mobile processes, 
Session types, 
Global types, 
End-point projection, 
Subject reduction, 
Type soundness, 
Behavioural properties
 \end{keyword}
\end{frontmatter}

\section{Introduction}
\label{sec:intro}

Distributed systems are built upon interactions between concurrent processes, 
implemented using \emph{message-passing} communication abstractions. 
In this model, an interaction between processes can be interpreted as the exchange of messages, forming a \emph{protocol} that consists of sending and receiving values,  making choices 
between multiple possible paths, and repeating or terminating  
the interaction. Such protocols are carried out over communication transports used in web applications, ranging from request-response client-server interactions via HTTP to
full-duplex communication channels
via the WebSocket protocol~\cite{WebSocketRFC}.

\emph{Multiparty session types} (MPST)~\cite{HYC08,HYC16} constitute %
a type formalism inspired by the 
Web Service Choreography Description Language (WS-CDL)~\cite{CDL}, 
originating as \emph{abstract choreographies} or \emph{choreography APIs} that extract communication flows and message types while abstracting away from program-level constructs such as conditionals, assignments, and concrete values. %
This formalism facilitates the description, specification, and verification 
of communication in %
concurrent and distributed systems
based on \emph{multiparty protocols}.
Intuitively, MPST describe structured multiparty interactions closely related to interaction-based formalisms such as message sequence charts (MSCs) and choreographies; 
for instance, multiparty  protocols are commonly presented in an MSC-like form (\eg \cite[Figs.~1 and 2]{HYC16}), 
while additionally providing a typing discipline and projection mechanism that support  implementability and protocol compliance.

The methodology of MPST begins with specifying a protocol, called a \emph{global type}, which
describes a sequence of communication actions, choices, and
recursions between two or more participants.
The global type is then \emph{projected} into
a set of \emph{end-point local} (\emph{session}) \emph{types}, 
each representing a participant's viewpoint. 
Well-typed \emph{end-point implementations} (\emph{processes})
that conform to a global type are
guaranteed to be \emph{correct by construction},
ensuring safety, deadlock-freedom, and liveness of interactions. 
On the practical side, MPST are \emph{language-agnostic}, \ie specifications and 
end-point projection algorithms do not depend on specific programming languages.
The top-down approach 
has been implemented in a variety of mainstream programming languages, including Java~\cite{FASE16EndpointAPI,HY2017,DBLP:journals/scp/KouzapasDPG18,DBLP:conf/tacas/BoumaGJ23},
Go~\cite{DBLP:journals/pacmpl/CastroHJNY19}, 
Rust~\cite{CYV2022,DBLP:conf/ecoop/LagaillardieNY22,DBLP:conf/ecoop/HouLY24,DBLP:conf/ecoop/VassorY24}, 
TypeScript~\cite{MFYZ2021,DBLP:conf/ecoop/GheriLSTY22},
PureScript~\cite{DBLP:journals/corr/abs-1904-01287}, 
Scala~\cite{SDHY2017,OOPSLA21FaultTolerantMPST,ECOOP22MPSTScala,BHYZ2023,DBLP:conf/issta/FerreiraJ23,DBLP:conf/tacas/Jongmans25}, 
OCaml~\cite{YZF2021,INYY2020},
MPI-C~\cite{NCY2015},
F$\star$~\cite{ZFHNY2020},
F$\sharp$~\cite{NHYA2018},
Python~\cite{DBLP:conf/rv/NeykovaYH13,NY2017Actor,DBLP:journals/fmsd/DemangeonHHNY15},
Erlang~\cite{DBLP:journals/corr/Fowler16,NY2017,DBLP:conf/icsoft/EgidiGV22},
C~\cite{NYH2012,NYNTL2012}, 
C$\sharp$~\cite{DBLP:journals/corr/abs-2004-01325},
and domain-specific actor languages~\cite{DBLP:journals/corr/abs-1208-4632,DBLP:journals/scp/CharalambidesDA16,DBLP:conf/ecoop/00020DG21}.  
These implementations adopt diverse styles,  such as code generation, API generation,
static type checking,
protocol compliance, and  
runtime monitoring, facilitating their application in real-world programs~\cite{Yoshida24}.

Figure~\ref{fig:overview-of-topdown} illustrates the MPST workflow. 
The design of multiparty protocols begins with a  global type $\gtG$
(top of Figure~\ref{fig:overview-of-topdown}),
and each participant's implementation (process) $P_{\roleP}$ (bottom) 
relies on its local (session) type $\stT_{\roleP}$ (middle), obtained via  
end-point projection of the global type $\gtProj{\gtG}{\roleP}$. 
The global and local types capture the global and local communication
behaviours, respectively. 
Since each process $P_{\roleP}$ conforms to its type $\stT_{\roleP}$, 
the resulting set of processes collectively 
makes progress 
according to the global type.

\begin{figure}[t]
\centering
{\footnotesize
  \begin{tikzpicture}
 \node (Gtext) {A Global Type $\gtG$};
  \node[below= 0.5mm of Gtext, xshift=-27mm, align=center, pos=0.5] (proj) {
    \small {\bf{projection}}\,($\upharpoonright$)};
  \node[below=6mm of Gtext, xshift=-40mm] (LA) {\footnotesize Local Type for $\roleP$
  \small \boxed{\stT_{\roleP}}};
  \node[below=6mm of Gtext] (LB) {\footnotesize Local Type for $\roleQ$  \small \boxed{\stT_{\roleQ}}};
  \node[below=6mm of Gtext, xshift=40mm] (LC) {\footnotesize Local Type for $\roleR$
   \small  \boxed{\stT_{\roleR}}};
  \draw[->] (Gtext) -- (LA);
  \draw[->] (Gtext) -- (LB);
  \draw[->] (Gtext) -- (LC);
   \node[below= 1mm of LA, xshift=-8.4mm, align=center] (typ) {
    \small {\bf{typing}}\,($\vdash$)};
   \node[below=6mm of LA, xshift=0mm] (PA) {\, \, \, \footnotesize Process for $\roleP$ \small
    \boxed{P_{\roleP}}};
   \node[below=6mm of LB, xshift=0mm] (PB) {\, \, \, \footnotesize Process for $\roleQ$ \small
    \boxed{P_{\roleQ}}};
     \node[below=6mm of LC, xshift=0mm] (PC) {\, \, \, \footnotesize Process for $\roleR$ \small
    \boxed{P_{\roleR}}};
   \draw[->] (LA) -- (PA);
   \draw[->] (LB) -- (PB);
   \draw[->] (LC) -- (PC);
\end{tikzpicture}
}
\caption{Top-down methodology of multiparty session types}
\label{fig:overview-of-topdown}
\end{figure}

\paragraph{\bf Top-Down Multiparty Session Types} 
In this paper, we demonstrate the  rigour of the MPST \emph{top-down} 
approach by revisiting the work of Scalas and Yoshida~\cite{POPL19LessIsMore}.  
That work proposed a general MPST framework, known as the \emph{bottom-up} approach, which does \emph{not} require global types. 
Within the bottom-up framework, process safety is enforced by directly verifying a relevant set of local types.  %
\citet{DBLP:journals/pacmpl/UdomsrirungruangY25} reveal that, compared with the top-down methodology, the bottom-up offers greater typability but incurs higher computational cost, particularly for liveness checking and type inference. 
Moreover, as shown in~\cite{POPL19LessIsMore}, in asynchronous MPST -- where processes communicate via unbounded FIFO queues -- type checking becomes undecidable under the bottom-up approach, whereas decidability is preserved by the top-down procedure, as type checking is performed with respect to end-point types obtained via a decidable projection of a 
well-formed global type~\cite{HYC08,HYC16}\footnote{Type checking discussed here does not involve subtyping checks; in particular, asynchronous subtyping checking is generally undecidable (\eg~\cite{GPPSY2023}).}. Intuitively, the global type fully characterises the communication structure in advance, thereby reducing type checking to verifying local conformance rather than inferring global compatibility among independently specified local behaviours.

The work in~\cite{POPL19LessIsMore} also 
identified flaws in certain published type-safety proofs for top-down typing systems based on projection with \emph{mergeability}. 
In subsequent literature, this issue has been referred to as 
``\emph{broken proofs}" or ``\emph{unsound results}", as well as the claim that ``\emph{several versions of classical projection with the full merge are flawed}", framed in terms of the ``\emph{brittleness}" of merging-based projection mechanisms and the architectural limitations of existing MPST frameworks (\eg~\cite{DBLP:conf/pldi/Castro-Perez0GY21,DBLP:journals/pacmpl/JacobsBK22a,DBLP:conf/ecoop/JongmansF23,DBLP:conf/ecoop/Stutz23,DBLP:conf/esop/StutzD25}). 
In particular, this has led to a characterisation of these proof flaws as ``\emph{unsoundness under a specific formal definition}" (\eg \cite{DBLP:conf/itp/LiW25}), and to accounts emphasising the error-proneness of the theory~(\eg \cite{DBLP:journals/pacmpl/JacobsBK22a,DBLP:journals/pacmpl/HinrichsenJK24}).  
Without careful qualification, such assessments may be over-generalised into conclusions about the soundness of the top-down approach (with mergeability), or even about global types being \emph{problematic} in general.

In this paper, we show that a sound typing system can indeed be built using 
end-point projection with mergeability.  
More importantly, we demonstrate that global types enable a clear and structured proof method for type soundness.
To clarify the statement in~\cite{POPL19LessIsMore} precisely, 
we summarise global types, end-point projections, and merging operators, 
following their chronological development in the literature. 

\paragraph{End-point Projection with Plain Merging}
Theories and implementations based on MPST require  ``\emph{correct by construction}" protocols that 
prevent deadlocks and type errors during the interaction of endpoint programs. 
This is achieved by imposing a \emph{well-formedness condition} on global types, known as \emph{projectability}. A global type $\gtG$ is \emph{projectable} if a set of end-point types can be generated from it, following formal rules or algorithms.  

The MPST framework
with global types and end-point projection %
was first introduced by~\citet{HYC08}, %
with \emph{linearity conditions} %
ensuring 
both the well-formedness of global types annotated with type-level channel declarations 
and %
the \emph{projectability} of local types. %
Subsequently, \citet{BCDDDY08} proposed a simplified MPST system without channel declarations, 
which has since been widely adopted in both theory and practice; 
the global types used in this paper also follow this channel-less style.  
The end-point projection defined in~\cite{HYC08,BCDDDY08} employs \emph{plain merging}, which requires that, in a branching, the projections for uninvolved roles yield identical local types across all branches.

To illustrate end-point projection (with plain merging), consider a simple global type involving three participants: 
$\roleFmt{A}$lice,
$\roleFmt{B}$ob, and $\roleFmt{C}$arol. 
In the initial choice,
$\roleFmt{A}$ sends to $\roleFmt{B}$
\emph{either} a request for  $\gtMsgFmt{add}$ition or
$\gtMsgFmt{sub}$traction (each carrying an $\stFmtC{int}$eger). 
In both cases, %
$\roleFmt{B}$ continues by $\gtMsgFmt{forward}$ing 
the answer  $\stFmtC{int}$ 
to $\roleFmt{C}$, and   
the session then $\gtEnd$s. This protocol is specified 
by the following global type $\gtG[\text{p}]$:
  \begin{equation}
  \label{global:type:plain}
    \gtG[\text{p}] =%
    \gtCommRaw{\roleFmt{A}}{\roleFmt{B}}{
      \begin{array}{l}%
        \gtCommChoice{add}{\stFmtC{int}}{%
          \gtCommSingle{\roleFmt{B}}{\roleFmt{C}}{forward}{\stFmtC{int}}{%
           \gtEnd}
           }%
        \\%
        \gtCommChoice{sub}{\stFmtC{int}}{%
          \gtCommSingle{\roleFmt{B}}{\roleFmt{C}}{forward}{\stFmtC{int}}{%
           \gtEnd}
          }%
      \end{array}
    }
    \end{equation}
\noindent
Following \cite{HYC08,BCDDDY08}, the global type $\gtG[\text{p}]$ is projected onto three local types
(one for each role $\roleFmt{A}$, $\roleFmt{B}$, $\roleFmt{C}$):
\begin{equation}
\label{plain:local:types}
  \begin{array}{c}
    \stT[\roleFmt{A}] =%
      \stIntSum{\roleFmt{B}}{}{%
        \begin{array}{@{\hskip 0mm}l@{\hskip 0mm}}%
          \stChoice{\stLabFmt{add}}{\stFmtC{int}}\stSeq\stEnd %
          \\%
          \stChoice{\stLabFmt{sub}}{\stFmtC{int}}\stSeq{\stEnd} %
        \end{array}
      }%
      \quad 
       \stT[\roleFmt{B}]  =%
      \stExtSum{\roleFmt{A}}{}{%
        \begin{array}{@{\hskip 0mm}l@{\hskip 0mm}}%
          \stChoice{\stLabFmt{add}}{\stFmtC{int}} \stSeq%
          \stOut{\roleFmt{C}}{%
            \stChoice{\stLabFmt{forward}}{\stFmtC{int}}\stSeq \stEnd %
          }{}%
          \\%
          \stChoice{\stLabFmt{sub}}{\stFmtC{int}} \stSeq%
          \stOut{\roleFmt{C}}{%
            \stChoice{\stLabFmt{forward}}{\stFmtC{int}}
\stSeq \stEnd %
          }{}%
        \end{array}
      }%
\quad       
      \stT[\roleFmt{C}] =%
      \stExtSum{\roleFmt{B}}{}{%
        \begin{array}{@{\hskip 0mm}l@{\hskip 0mm}}%
          \stChoice{\stLabFmt{forward}}{\stFmtC{int}}
           \stSeq  \stEnd
        \end{array}
      }%
  \end{array}
\end{equation}
  \noindent
Here, $\stT[\roleFmt{A}]$ represents the 
interface of $\roleFmt{A}$ in $\gtG[\text{p}]$: %
it must send ($\stIntC$) to $\roleFmt{B}$ either %
$\stChoice{\stLabFmt{add}}{}$ %
or $\stChoice{\stLabFmt{sub}}{}$. %
In the first case, %
$\roleFmt{B}$ 
receives the $\stChoice{\stLabFmt{add}}{}$ message with $\stExtC$, 
performs an addition with some specific integer, 
$\stChoice{\stLabFmt{forward}}{}$s the result  to $\roleFmt{C}$,  
and the session ends.  %
Otherwise, 
the message $\stChoice{\stLabFmt{sub}}{}$ with $\stExtC$ is received
at $\roleFmt{B}$; after the subtraction at $\roleFmt{B}$, 
the result is
$\stChoice{\stLabFmt{forward}}{}$ed to $\roleFmt{C}$,
and the session likewise ends. 

We highlight the projection onto $\roleFmt{C}$: 
its local behaviour is determined by merging the 
two choices at $\roleFmt{B}$. 
In this example, both branches deliver the same message  
$\stChoice{\stLabFmt{forward}}{}$  to $\roleFmt{C}$, so the merge succeeds and the projection is well-defined. 
If instead the messages or payload types differed, the merge would fail, illustrating the constraint imposed by plain merging on the projectability of global types.

\paragraph{End-point Projection with Full Merging}
We now modify $\gtG[\text{p}]$ to obtain the following global type  $\gtG[\text{f}]$, 
where the message sent from $\roleFmt{B}$ to $\roleFmt{C}$ depends on the choice made by $\roleFmt{A}$:

  \begin{equation}
  \label{global:type:full}
    \gtG[\text{f}]    =%
    \gtCommRaw{\roleFmt{A}}{\roleFmt{B}}{
      \begin{array}{l}%
    \gtCommChoice{add}{\stFmtC{int}}{%
    \gtCommRaw{\roleFmt{B}}{\roleFmt{C}}{
      \begin{array}{l}%
       \gtCommChoice{add}{\stFmtC{int}}{%
           \gtEnd}
        \\%
        \gtCommChoice{forward}{\stFmtC{int}}{%
        \gtEnd
        }%
      \end{array}
    }        
           }%
    \\%
   \gtCommChoice{sub}{\stFmtC{int}}{%
   \gtCommRaw{\roleFmt{B}}{\roleFmt{C}}{
     \begin{array}{l}%
      \gtCommChoice{sub}{\stFmtC{int}}{%
          \gtEnd}
        \\%
       \gtCommChoice{forward}{\stFmtC{int}}{%
       \gtEnd
       }%
      \end{array}
   }%
   }
   \end{array}
  }
\end{equation}
\noindent
Under \emph{plain merging}, 
$\gtG[\text{f}]$ is no longer projectable onto $\roleFmt{C}$, 
since the projections for 
$\roleFmt{C}$ differ across branches.

However,  if $\roleFmt{C}$ prepares for all three message labels, $\stLabFmt{add}$, 
$\stLabFmt{sub}$ and $\stLabFmt{forward}$, its implementation will not
get stuck. 
That is, we should be able to merge not only a common label like 
$\stLabFmt{forward}$, but also input choice~(branching) types from the same role with disjoint
labels $\stLabFmt{add}$ and $\stLabFmt{sub}$,  
into a single input type that offers both options. 
This motivates \emph{full merging}, which generalises plain merging.  
Under the full merging strategy, projection of the global type  $\gtG[\text{f}]$
yields the following local types $\stTi[\roleFmt{A}], \stTi[\roleFmt{B}]$, and $\stTi[\roleFmt{C}]$: 

\medskip
  \centerline{\(
  \begin{array}{c}
    \stTi[\roleFmt{A}] =%
      \stIntSum{\roleFmt{B}}{}{%
        \begin{array}{@{\hskip 0mm}l@{\hskip 0mm}}%
          \stChoice{\stLabFmt{add}}{\stFmtC{int}} \stSeq \stEnd %
          \\%
          \stChoice{\stLabFmt{sub}}{\stFmtC{int}} \stSeq \stEnd %
        \end{array}
      }%
      \quad 
       \stTi[\roleFmt{B}]  =%
      \stExtSum{\roleFmt{A}}{}{%
        \begin{array}{@{\hskip 0mm}l@{\hskip 0mm}}%
          \stChoice{\stLabFmt{add}}{\stFmtC{int}} \stSeq%
          \stIntSum{\roleFmt{C}}{}{%
           \begin{array}{@{\hskip 0mm}l@{\hskip 0mm}}%
           \stChoice{\stLabFmt{add}}{\stFmtC{int}}\stSeq \stEnd %
            \\
            \stChoice{\stLabFmt{forward}}{\stFmtC{int}}\stSeq \stEnd
            \end{array}
          }%
          \\%
          \stChoice{\stLabFmt{sub}}{\stFmtC{int}} \stSeq%
          \stIntSum{\roleFmt{C}}{}{%
           \begin{array}{@{\hskip 0mm}l@{\hskip 0mm}}%
           \stChoice{\stLabFmt{sub}}{\stFmtC{int}}\stSeq \stEnd %
            \\
            \stChoice{\stLabFmt{forward}}{\stFmtC{int}}\stSeq \stEnd
            \end{array}
          }%
        \end{array}
      }%
\quad       
      \stTi[\roleFmt{C}] =%
      \stExtSum{\roleFmt{B}}{}{%
        \begin{array}{@{\hskip 0mm}l@{\hskip 0mm}}%
          \stChoice{\stLabFmt{add}}{\stFmtC{int}} \stSeq%
          \stEnd %
              {}%
          \\%
         \stChoice{\stLabFmt{sub}}{\stFmtC{int}} \stSeq%
          \stEnd\\
          \stChoice{\stLabFmt{forward}}{\stFmtC{int}} \stSeq\stEnd
        \end{array}
      }%
  \end{array}
  \)}

\medskip

To enlarge the set of projectable global types, 
end-point projection with full merging was first introduced by~\citet{ParameterisedYDBH10} and further developed  in~\cite{ParameterisedYDBH12,Denielou2012,CHEN2015708,TY2016}. 
Its aim is to broaden the typability of processes 
by allowing more global types to be considered projectable (well-formed) than under the 
original, more limited projection in~\cite{HYC08}. 
Mergeability is implemented in the Scribble protocol description language
\cite{scribble10,YHNN2013,YZF2021}
and other related MPST tools. 
As illustrated by $\gtG[\text{f}]$~\eqref{global:type:full},
without  full merging,  
a broad range of global protocols, including those with branch-dependent continuations,  are not projectable.
We define the projection and mergeability formally 
in Definition~\ref{def:global-proj}.

\paragraph{Subject Reduction Theorem}
We now revisit the issue highlighted %
in~\cite{POPL19LessIsMore}. 
While the top-down approach to MPST guarantees \emph{type soundness}, also known as \emph{subject reduction}, 
a subtle problem arises in the \emph{proofs} of the Subject Reduction Theorem given in~\cite{ParameterisedYDBH10,Denielou2012,CHEN2015708,TY2016}. 
These proofs rely on an \emph{invariance property} that is valid under projection with plain merging but \emph{invalid} under full merging. 
Consequently, although both the theorem and the projection algorithm under full merging remain correct, the existing proofs are \emph{unsound} in the general setting with full merging.

Intuitively, the \emph{Subject Reduction Theorem} 
states that typed processes reduce only to typed processes and, therefore, 
no (untypable) error state can be reached, \ie 
\emph{``typed processes never go wrong''}.

Formally, subject reduction is stated in terms of the typing judgement:

\smallskip
\centerline{\(
\stJudgeNew{\mpEnvNew}{\stEnvNew}{\mpP}
\)}

\smallskip
\noindent
which asserts that the process $\mpP$ conforms to the standard \emph{typing context} $\mpEnvNew$ 
and the \emph{session typing context} $\stEnvNew$. Specifically, $\mpEnvNew$ assigns base types to variables, 
while $\stEnvNew$ is a collection of session types, recording for each communication channel its current protocol state.

 Typically, one expects a formulation of subject reduction similar to that of the simply typed $\lambda$-calculus: 
\begin{quote}
\label{con:subject-reduction-1}%
\textbf{(SR1)}\ 
\em 
Assume $\stJudgeNew{\mpEnvNew}{\stEnvNew}{\mpP}$. 
If $\mpP \!\mpMove\! \mpPi$, then  
we have $\stJudgeNew{\mpEnvNew}{\stEnvNew}{\mpPi}$. 
\end{quote} 
However, this statement is \emph{too strong}, 
since the session typing context $\stEnvNew$ may change  when $\mpP$ reduces to $\mpPi$ 
due to communication. Session types are \emph{behavioural}: channel types can evolve during interactions.

A weaker property is therefore: 
\begin{quote}
  \label{con:subject-reduction-2}%
\textbf{(SR2)}\ 
\em 
Assume $\stJudgeNew{\mpEnvNew}{\stEnvNew}{\mpP}$. 
If $\mpP \!\mpMove\! \mpPi$, then  
there exists $\stEnvNewi$ such that 
$\stJudgeNew{\mpEnvNew}{\stEnvNewi}{\mpPi}$. 
\end{quote} 
Unfortunately, this condition is \emph{too general (\ie, too weak)} to ensure 
type safety, since $\stEnvNewi$ may be \emph{arbitrary}. 

To recover type safety, subject reduction must be strengthened by imposing an appropriate invariance condition:

\begin{quote}
\textbf{(SR3)}\ 
\label{con:subject-reduction-3}%
\em Assume $\stJudgeNew{\mpEnvNew}{\stEnvNew}{\mpP}$
and \underline{$\stEnvNew$ satisfies property $\varphi$}. 
If $\mpP \!\mpMove\! \mpPi$, then  
there exists $\stEnvNewi$ such that 
$\stJudgeNew{\mpEnvNew}{\stEnvNewi}{\mpPi}$
and \underline{$\stEnvNewi$ satisfies property $\varphi$}. 
\end{quote} 

The invariance condition required by {\bf (SR3)} is introduced at the typing of session restriction.
In top-down MPST, 
a channel-restricted process 
$\mpRes{\mpS}{}{\mpP}$ denotes a complete session $\mpS$, 
where each participant playing role $\roleP$ is typed by the projection 
$\gtProj{\gtG}{\roleP}$. 
Restricted processes are typed using the following rule: 
\begin{equation}
\label{eq:global}
    \inference[\iruleMPRes$\gtG$]{%
   \stEnvNewi = \setenum{%
   \stEnvMap{\mpChanRole{\mpS}{\roleP}}{\gtProj{\gtG}{\roleP}}%
    }_{\roleP \in \gtRoles{\gtG}}
      \qquad%
      \mpS \not\in \stEnvNew%
      \qquad%
      \stJudgeNew{\mpEnvNew}{%
      \stEnvNew \stEnvComp \stEnvNewi %
 }{%
        \mpP%
      }%
    }{%
      \stJudgeNew{\mpEnvNew}{%
        \stEnvNew %
      }{%
       \mpRes{\stEnvMap{\mpS}{\stEnvNewi}}\mpP
      }%
    }%
\end{equation}
This rule is crucial in subject reduction proofs, 
as it determines the initial session typing context $\stEnvNewi$ 
for a newly created session $\mpS$, and thereby fixes the invariance property required by {\bf (SR3)} to hold initially.

\paragraph{Problem: An Incorrect Invariant $\varphi$ for the Type System with Projection Using Full Merging} 
In binary session types, such an invariance property $\varphi$ is known as \emph{balancedness}~\cite{GH05}, which 
requires 
the two endpoints of %
a channel to carry \emph{dual types}: 
one %
offering selection ($\stIntC$) while the other %
branching ($\stExtC$). %
For multiparty session types, the challenge is therefore to identify an appropriate invariance property $\varphi$ that %
supports a sound subject reduction argument. 

Following the intuition of balancedness, a natural attempt in the multiparty setting is 
to define the invariance property $\varphi$ as a form of duality between local types. 
The system in \cite{BCDDDY08} introduced \emph{consistency} (also called \emph{coherence}, \cf~\cite{ParameterisedYDBH12}) as a generalisation of binary duality: 
a global type $\gtG$ %
projects to \emph{consistent} local types if 
projecting $\gtG$ onto any two participants yields local types whose \emph{partial projections} -- capturing their respective behaviours towards each other -- are dual.

As an illustration, consider the local types $\stT[\roleFmt{A}]$, $\stT[\roleFmt{B}]$, and 
$\stT[\roleFmt{C}]$ in~\eqref{plain:local:types}, projected from 
$\gtG[\text{p}]$ in~\eqref{global:type:plain}.   
 
For the pair $(\roleFmt{A}, \roleFmt{B})$, the corresponding partial projections

\smallskip 
\centerline{\(
\begin{array}{c}
\gtProj{\stT[\roleFmt{A}]}{\roleFmt{B}}\ = \ 
      \stIntSum{}{}{%
        \begin{array}{@{\hskip 0mm}l@{\hskip 0mm}}%
          \stChoice{\stLabFmt{add}}{\stFmtC{int}} \stSeq \stEnd %
          \\%
          \stChoice{\stLabFmt{sub}}{\stFmtC{int}} \stSeq \stEnd %
        \end{array}
     }%
      \quad  \text{and}     \quad     
\gtProj{\stT[\roleFmt{B}]}{\roleFmt{A}}
\ = \ 
     \stExtSum{}{}{%
        \begin{array}{@{\hskip 0mm}l@{\hskip 0mm}}%
          \stChoice{\stLabFmt{add}}{\stFmtC{int}} \stSeq \stEnd %
          \\%
          \stChoice{\stLabFmt{sub}}{\stFmtC{int}} \stSeq \stEnd %
        \end{array}
     }%
  \end{array}
\)}

\smallskip
\noindent
 are dual. The same holds for the remaining pairs; hence, the projections of $\gtG[\text{p}]$ are consistent.

Under plain merging, \emph{consistency} is sufficient to establish subject reduction. 
Plain merging enforces structural alignment of local types -- sharing the same labels, payloads, and mergeable continuations -- so that projection yields consistent local types, which are preserved under reduction. 
However, this guarantee does not extend to full merging:  %
{\bf full merging admits inconsistent local types}, thereby violating \textbf{(SR3)}. 

As a counterexample, consider the global type $\gtG[\text{w}]$ updated from 
$\gtG[\text{p}]$ in~\eqref{global:type:plain} as follows: 

 \begin{equation}
 \label{eq:counter_example_global_inconsistent}
    \gtG[\text{w}] =%
    \gtCommRaw{\roleFmt{A}}{\roleFmt{B}}{
      \begin{array}{l}%
        \gtCommChoice{\gtMsgFmt{add}}{\stFmtC{int}}{%
          \gtCommSingle{\roleFmt{B}}{\roleFmt{C}}{\gtMsgFmt{add}}{\stFmtC{int}}{%
            \gtCommSingle{\roleFmt{C}}{\roleFmt{A}}{forward}{\stFmtC{int}}{%
              \gtEnd%
            }%
          }%
        }%
        \\%
        \gtCommChoice{sub}{\stFmtC{int}}{%
          \gtCommSingle{\roleFmt{B}}{\roleFmt{C}}{forward}{\stFmtC{int}}{%
              \gtEnd%
          }%
        }%
      \end{array}
    }
    \end{equation}

\smallskip
\noindent
The local types projected from $\gtG[\text{w}]$ using full merging are: 

\begin{equation}
 \label{eq:counter_example_projection_inconsistent}
 \begin{array}{c}
    \stTii[\roleFmt{A}] =%
      \stIntSum{\roleFmt{B}}{}{%
        \begin{array}{@{\hskip 0mm}l@{\hskip 0mm}}%
          \stChoice{\stLabFmt{add(\stFmtC{int})}}{} \stSeq%
          \roleFmt{C}
          \stFmt{\&}
          \stLabFmt{forward(\stFmtC{int})}%
          \\%
          \stChoice{\stLabFmt{sub(\stFmtC{int})}}{}%
        \end{array}
      }%
      \quad 
       \stTii[\roleFmt{B}]  =%
      \stExtSum{\roleFmt{A}}{}{%
        \begin{array}{@{\hskip 0mm}l@{\hskip 0mm}}%
          \stChoice{\stLabFmt{add(\stFmtC{int})}}{} \stSeq%
          \stOut{\roleFmt{C}}{%
            \stChoice{\stLabFmt{add}}{\stFmtC{int}}%
          }{}%
          \\%
          \stChoice{\stLabFmt{sub}}{\stFmtC{int}} \stSeq%
          \stOut{\roleFmt{C}}{%
            \stChoice{\stLabFmt{forward(\stFmtC{int})}}{}%
          }{}%
        \end{array}
      }%
     \\[2mm] 
      \stTii[\roleFmt{C}] =%
      \stExtSum{\roleFmt{B}}{}{%
        \begin{array}{@{\hskip 0mm}l@{\hskip 0mm}}%
          \stChoice{\stLabFmt{add}}{\stFmtC{int}} \stSeq%
          \stOut{\roleFmt{A}}{%
            \stChoice{\stLabFmt{forward}}{\stFmtC{int}}%
          }{}%
          \\%
          \stChoice{\stLabFmt{forward}}{\stFmtC{int}}%
        \end{array}
      }%
  \end{array}
  \end{equation}
  
\smallskip
\noindent
The resulting local types %
fail to be consistent, since the partial projections $\gtProj{\stTii[\roleFmt{A}]}{\roleFmt{C}}$ and 
$\gtProj{\stTii[\roleFmt{C}]}{\roleFmt{A}}$ are \emph{undefined}. 
Intuitively, the interaction between 
$\roleFmt{C}$ and $\roleFmt{A}$ is branch-dependent on $\roleFmt{B}$. 
Such inter-role dependencies are not captured by the syntactic nature of projection and duality checks, and may therefore lead to inconsistency.

More specifically, the issue in the subject reduction proofs of~\cite{ParameterisedYDBH10,ParameterisedYDBH12,Denielou2012,CHEN2015708,TY2016} results from  their reliance on consistency as the underlying invariance assumption. Moreover, \cite[p. 28]{ParameterisedYDBH12} and \cite[Prop. 2]{CHEN2015708} explicitly claim that projecting a global type with full merging yields a consistent typing context. 
Therefore, a broader invariance condition is required for fully mergeable global types.

\paragraph{Solution: Association}
In this paper, we prove the Subject Reduction Theorem under full merging by applying an invariance notion, namely \emph{association}, which captures compatibility between a global type and its local types. 
Intuitively, association allows local types to safely conform to the global protocol, rather than requiring them to be exact projections -- a relaxation that is essential under full merging.

Formally,  association relates a global type and a typing context for a given multiparty session 
 via \emph{subtyping}. 
 The subtyping relation $\stSub$ on local types is typically used to enhance 
 expressiveness by enabling more flexible processes to be typed while preserving behavioural soundness. %

For instance, consider the global type $\gtG[\text{w}]$~\eqref{eq:counter_example_global_inconsistent} and 
its projection on $\roleFmt{B}$, $\stTii[\roleFmt{B}]$~\eqref{eq:counter_example_projection_inconsistent}, where 
$\roleFmt{B}$ branches on inputs from $\roleFmt{A}$. 
Under our subtyping discipline, branching is contravariant: 
a subtype of $\stTii[\roleFmt{B}]$ may accept additional labels, \eg \stFmt{\stLabFmt{multiplus}}, beyond 
\stFmt{\stLabFmt{add}} and \stFmt{\stLabFmt{sub}}.   
A process implementing $\roleFmt{B}$ can therefore support extra operations, 
\eg by offering a further branch labelled $\mpFmt{\mpLabFmt{multiplus}}$, and remains typable by 
$\stTii[\roleFmt{B}]$ via subsumption. %
In this way,  subtyping permits such local refinements without altering the  
communication pattern fixed by $\gtG[\text{w}]$; association formalises precisely this flexibility.  

 More precisely, a typing context 
 $\stEnvNew$ is \emph{associated} with a global type $\gtG$ for a session $\mpS$, 
 written $\stEnvAssoc{\gtG}{\stEnvNew}{\mpS}$, 
if for every role $\roleP$ of $\gtG$, the endpoint $\mpChanRole{\mpS}{\roleP}$ in $\stEnv$ 
has a type satisfying $\stEnvApp{\stEnvNew }{\mpChanRole{\mpS}{\roleP}} \stSub \gtProj{\gtG}{\roleP}$, 
and all other endpoints of that session are terminated (\ie assigned $\stEnd$). 
Moreover, we say that \emph{$\stEnvNew$ is associated} if,  for every session $\mpS$ occurring in $\stEnv$,  
there exists a global type $\gtG$ such that $\stEnvAssoc{\gtG}{\stEnvNew_s}{\mpS}$, where 
$\stEnvNew_{\mpS}$ denotes the restriction of $\stEnvNew$ to $\mpS$.

With association in place, we prove the following result:
\begin{quote}
\textbf{(SR)}\ 
\em Assume $\stJudgeNew{\mpEnvNew}{\stEnvNew}{\mpP}$
and $\stEnvNew$ is associated.  
If $\mpP \!\mpMove\! \mpPi$, then  
there exists $\stEnvNewi$ such that 
$\stJudgeNew{\mpEnvNew}{\stEnvNewi}{\mpPi}$
with $\stEnvNewi$ associated. 
\end{quote} 
This is achieved by demonstrating a \emph{sound} and \emph{complete}  
operational correspondence 
between global and local types 
with respect to association.   
Finally, by applying the standard subsumption rule, which allows typing contexts to be widened, 
we conclude that the typing system equipped with rule~\eqref{eq:global} satisfies the Subject Reduction Theorem.

\paragraph{\bf Outline}  \textbf{\S\ref{sec:processes}} introduces a multiparty session $\pi$-calculus, including its syntax and operational semantics. The calculus builds on that of~\cite{POPL19LessIsMore}, extended with values, expressions, and conditional statements.
\vspace{-5.5pt}
\begin{itemize}[leftmargin=0pt,itemsep=-1.5pt]
\item[] \textbf{\S\ref{sec:sessiontypes}} presents a multiparty session type theory. 
\textbf{\S\ref{sec:gtype:syntax}} defines the syntax of global and local types, as well as projection and subtyping.  \textbf{\S\ref{sec:gtype:lts-gt}} and \textbf{\S\ref{sec:gtype:lts-context}} provide the semantics of global types and typing contexts (sets of local types), respectively. 
\textbf{\S\ref{sec:gtype:relating}} introduces the \emph{association relation} $\stEnvAssoc{\gtG}{\stEnv}{\mpS}$, which relates a global type $\gtG$ with a typing context $\stEnv$ on a session $\mpS$ via projection and subtyping. 
The relation establishes an operational correspondence between global types and typing contexts~(\Cref{thm:gtype:proj-sound,thm:gtype:proj-comp}). 
\textbf{\S\ref{sec:gtype:counterexample_projection}}
motivates the need for association through an illustrative example. 
\textbf{\S\ref{sec:gtype:pbp}} demonstrates that association ensures key typing context properties -- communication safety, deadlock-freedom, and liveness~(\Cref{cor:allproperties}).  
\textbf{\S\ref{sec:sync:relation}} clarifies the relationships among these properties~(\Cref{lem:liveness-safety-implic}). 

\item[] \textbf{\S\ref{sec:typesystem}} develops a typing system for our multiparty session $\pi$-calculus. 
\textbf{\S\ref{sec:type-system:tyrules}} formalises the typing rules. 
\textbf{\S\ref{sec:subject_reduction}}  establishes subject reduction: first with respect to association~(\Cref{lem:subject-reduction}), and subsequently using projection~(\Cref{lem:final-subject-reduction}), with type safety following as a corollary~(\Cref{cor:type-safety}). 
\textbf{\S\ref{sec:session_fidelity}} demonstrates session fidelity, also known as protocol conformance, \ie 
that well-typed processes behave according to their session types, in an analogous  way: initially via association (\Cref{lem:session-fidelity}), and thereafter by projection~(\Cref{lem:final-session-fidelity}).
\textbf{\S\ref{sec:typed-process-property}} shows that process properties -- deadlock-freedom and liveness -- are guaranteed by construction~(\Cref{lem:stenv-proc-properties}). 
\end{itemize}
\vspace{-5.5pt}
We discuss related work in \textbf{\S\ref{sec:related}} and conclude 
in \textbf{\S\ref{sec:conclusion}}. 
Detailed proofs are provided in the appendix. 

\paragraph{Extensions and Refinements of~\citet{DBLP:series/lncs/YoshidaH24}} 
This work extends~\citet{DBLP:series/lncs/YoshidaH24}, which introduced association as a proof technique for establishing type soundness in MPST.  
We advance this line by clarifying certain misunderstandings and challenges surrounding type soundness proofs in the MPST community, making these explicit in the Introduction. 
A dedicated section~(\Cref{sec:gtype:counterexample_projection}) further illustrates the necessity of association within the framework. 
On the technical side, we present a full multiparty session $\pi$-calculus that extends~\cite{DBLP:series/lncs/YoshidaH24} with constructs such as expressions and conditionals. Our system adopts a distinct subtyping discipline and typing judgements, yielding a substantially different yet more expressive type system. The definition of association is refined to align with the subtyping order, while the session restriction rule~(\inferrule{\iruleMPGlobalRes} in~\Cref{fig:typing_rules_all},\Cref{sec:typesystem}) is reformulated to use projection directly rather than association. In this setting, we formalise subject reduction and session fidelity through projection -- results not developed in~\cite{DBLP:series/lncs/YoshidaH24} -- in addition to the association-based theorems. 
Moreover,  the definition of process liveness~(\Cref{def:proc-properties} in~\Cref{sec:typed-process-property}) 
is strengthened by incorporating  an additional side condition. 
Finally, we include a comprehensive discussion of related work~(\Cref{sec:related}) and provide complete proofs.

\paragraph{\bf  Comparison with~\citet{POPL19LessIsMore}}
Since \cite{POPL19LessIsMore} serves as a key benchmark for this work, 
we summarise below the %
differences between the two frameworks, providing a clear and quick reference.
\begin{itemize}[leftmargin=*,itemsep=0pt]
\item Top-down vs. bottom-up methodology: the main difference lies in the underlying paradigm.  
\citet{POPL19LessIsMore} develop a bottom-up MPST theory that is independent of global types and instead founded on a  parametric safety invariant.  In contrast, 
this work follows a top-down MPST approach, in which correctness is ensured by construction through global types.  This leads to distinct technical structures and guarantees, as follows. 
\begin{itemize}[leftmargin=*,itemsep=0pt]
\item Global types and projection: the proposed framework provides explicit global types together with their syntax and semantics, and supports end-point projection with full merging (\Cref{sec:gtype:syntax,sec:gtype:lts-gt});  such mechanisms are absent from the theory in~\cite{POPL19LessIsMore}. 

\item Core invariant: type soundness in this work (established via the subject reduction theorem, \Cref{lem:subject-reduction}) is based on a global-to-local association invariant induced by projection and subtyping, whereas~\cite{POPL19LessIsMore} relies on preserving a parametric typing-context safety invariant that is independent of global types (\cite[Theorem 4.6]{POPL19LessIsMore}).

\item Behavioural property guarantees: in this framework, once a process is typed under a typing context associated with a global type, key behavioural properties, such as session fidelity, communication safety, deadlock-freedom, and liveness, follow by construction (\Cref{sec:session_fidelity,sec:typed-process-property}). 
In contrast, the guarantees in~\cite{POPL19LessIsMore}, established in \cite[\S 5.1, 5.2 and 5.5]{POPL19LessIsMore},  depend on the particular instantiations of the safety invariant (\cite[\S 5.3 and 5.4]{POPL19LessIsMore}). 
To support this approach, \cite{POPL19LessIsMore} provides a model-checking–based mechanism (\cite[\S 6]{POPL19LessIsMore}) to verify specific invariant instantiations at the type level, from which the corresponding process-level guarantees are derived. 
\end{itemize} 

\item Technical correction: the definition of process liveness in~\cite[Def. 5.1]{POPL19LessIsMore} contains a subtle issue:  
it allows the liveness condition to hold without requiring executable communication. 
We address this by refining the definition (\Cref{def:proc-properties}) to include an explicit side condition and presenting a counterexample in \Cref{sec:typed-process-property}. 

\item Minor differences (without impact on the theoretical results). 
\begin{itemize}[leftmargin=*,itemsep=0pt]
\item Multiparty session $\pi$-calculus: the calculus presented in~\Cref{sec:processes} 
is an extension of that in~\cite[\S 2.1]{POPL19LessIsMore}, incorporating values, expressions, and conditional statements. 

\item Basic types: to support values and expressions,  basic types are introduced in \Cref{sec:gtype:syntax}, which are absent in~\cite{POPL19LessIsMore}.

\item Subtyping: the subtyping discipline formalised in ~\Cref{def:subtyping} 
follows the \emph{process-oriented} approach \cite{DBLP:journals/toplas/CarboneHY12,DBLP:conf/concur/DemangeonH11,DBLP:conf/tlca/MostrousY09,ChenDY14,ChenDSY17},  
in which branching is contravariant and selection covariant. 
This contrasts with the \emph{channel-oriented} subtyping order of~\cite[Def. 2.5]{POPL19LessIsMore}, 
where the variance is reversed. 
A detailed comparison of these two approaches  is provided in~\cite{Gay2016}. 
Our association method is parametric in the chosen subtyping discipline and applies to \emph{both}.

\item Typing contexts, judgements, and rules: the typing contexts (\Cref{def:mpst-env}) are extended to account for expression variables and basic types; consequently, separate judgements (\Cref{sec:type-system:tyrules}) are applied to the typing of expressions and processes, with process typing formulated differently from that in~\cite{POPL19LessIsMore}. The typing rules (\Cref{fig:typing_rules_all}) are reformulated accordingly, resulting in a more expressive typing discipline.

\item Typing context liveness:  the notion of typing context liveness adopted here (\Cref{def:stenv-live}) is stronger than the liveness condition in~\cite[Fig. 5]{POPL19LessIsMore} and is closer  to $\text{liveness}^{+}$.

\item Synchronous vs. asynchronous MPST: \citet{POPL19LessIsMore} extend their theory to asynchronous MPST and establish a corresponding subject reduction result  (\cite[\S 7]{POPL19LessIsMore}), whereas this work focuses on the synchronous setting.

\end{itemize}
\end{itemize}

\section{Multiparty Session $\pi$-Calculus}
\label{sec:processes}
This section presents the syntax of the synchronous multiparty session $\pi$-calculus, 
and provides a formalisation of its operational semantics. 

\paragraph{Syntax of Processes in Multiparty Session $\pi$-Calculus} 
A \emph{session} is a sequence of interactions, typically 
including send and receive operations, 
performed by a set of \emph{roles}~(\emph{participants}) in a communication protocol. %
The multiparty session $\pi$-calculus 
models the behaviour of processes that interact
using multiparty channels. 

We use the following basic notations: \emph{basic values}, denoted by $\mpV, \mpVi, \mpV[i], \ldots$, 
\emph{expressions}, denoted by $\mpE, \mpEi, \mpE[i], \ldots$, 
\emph{expression variables}, denoted by $\mpFmt{x}, \mpFmt{x'}, \mpFmt{x_i}, \ldots$, 
\emph{channels}, denoted by $\mpC, \mpCi, \mpC[i], \ldots$, 
\emph{channel variables}, denoted by $\mpFmt{y}, \mpFmt{y'}, \mpFmt{y_i}, \ldots$, 
\emph{roles}, denoted by $\roleP, \rolePi, \roleQ, \roleQi, \roleP[i], \ldots$, 
\emph{sessions}, denoted by $\mpS, \mpSi, \mpS[i], \ldots$, 
\emph{message labels}, denoted by $\mpLab, \mpLabi, \mpLab[i], \ldots$, 
\emph{processes}, denoted by $\mpP, \mpPi, \mpQ, \mpQi, \mpP[i], \ldots$, 
and 
\emph{process variables}, denoted by $\mpX, \mpXi, \mpY, \mpYi, \mpX[i], \ldots$.

\begin{definition}[Syntax of Multiparty Session $\pi$-Calculus]%
\label{def:mpst-syntax-terms}
The \emph{multiparty session $\pi$-calculus} syntax is defined as follows:%

\smallskip
\centerline{\(
\begin{array}{r@{\hskip 2mm}c@{\hskip 2mm}l@{\hskip 5mm}l}
  \textstyle%
  \mpV 
  &\coloncolonequals&
  \mpNat \bnfsep \mpInt \bnfsep \mpTrue \bnfsep \mpFalse \bnfsep \mpString{} \bnfsep \mpUnit{} \bnfsep \cdots
  & \mbox{\footnotesize(natural number, integer, boolean, string, or unit, \ldots)}
  \\[.5mm]
  \mpE 
  &\coloncolonequals& 
  \mpFmt{x} \bnfsep \mpV \bnfsep \mpSucc{\mpE} \bnfsep \mpNeg{\mpE} \bnfsep
  \\
 &&  \neg \mpE \bnfsep \mpE \otimes \mpEi \bnfsep \mpE < \mpEi 
  \bnfsep \cdots 
  & \mbox{\footnotesize(expression variable, basic value, or expression term)}
  \\[.5mm]
  \mpC
  &\coloncolonequals&%
  \mpFmt{y} \bnfsep \mpChanRole{\mpS}{\roleP}%
  & \mbox{\footnotesize(channel variable or channel for session $\mpS$ with role $\roleP$)}
  \\[.5mm]
   \mpFmt{z}
  &\coloncolonequals&
  \mpFmt{x} \bnfsep \mpFmt{y}
  & \mbox{\footnotesize(expression variable or channel variable)}
  \\[.5mm]
  \mpD
  &\coloncolonequals&
  \mpE \bnfsep \mpC
  & \mbox{\footnotesize(expression, channel variable, or channel with role)}
  \\[.5mm]
  \mpW
  &\coloncolonequals&
  \mpE \bnfsep \mpChanRole{\mpS}{\roleP}
  & \mbox{\footnotesize(expression or channel with role)}
  \\[.5mm]
   \mpU
  &\coloncolonequals&
  \mpV \bnfsep \mpChanRole{\mpS}{\roleP}
  & \mbox{\footnotesize(value or channel with role)}
  \\[.5mm]
  \mpP, \mpQ
  &\coloncolonequals&%
  \mpNil
    \bnfsep \mpRes{\mpS}{\mpP}%
  &
  \mbox{\footnotesize(inaction, restriction)}
  \\[.3mm]
  &&
  \mpSel{\mpC}{\roleQ}{\mpLab}{\mpD}{\mpP}
  &
  \mbox{\footnotesize(selection towards role $\roleQ$)}
  \\[.3mm]
  &&
  \mpBranch{\mpC}{\roleQ}{i \in I}{\mpLab[i]}{z_i}{\mpP[i]}{}%
  &
  \mbox{\footnotesize(branching from role $\roleQ$ with an index set $I \neq \emptyset$)}
  \\[.3mm]
  &&
  \mpDefAbbrev{\mpJustDef{\mpX}{\mpFmt{x_1}, \ldots, \mpFmt{x_n}, \mpFmt{y_1}, \ldots, \mpFmt{y_m}}{P}}{\mpQ}%
  &
  \mbox{\footnotesize(process definition)}%
  \\[.3mm]
  &&
  \mpCallSmall{\mpX}{\mpE[1], \ldots, \mpE[n], \mpC[1], \ldots,  \mpC[m]}
  &
  \mbox{\footnotesize (process call)}
  \\[.3mm]
   && 
   \mpIf{\mpE}{\mpP}{\mpQ}
   &  
   \mbox{\footnotesize(conditional)}
   \\[.3mm]
  &&
  \mpP \mpPar \mpQ
  \bnfsep
   \mpErr
   &
   \mbox{\footnotesize(parallel composition, error)}
\end{array}
\)}

\smallskip
\noindent%
Restriction, branching, and process definitions and declarations act as
binders, as expected. 
$\fc{\mpP}$ denotes the set of \emph{free channels with roles} in $\mpP$, 
$\fev{\mpP}$ denotes the set of \emph{free expression variables} in $\mpP$, 
and $\fcv{\mpP}$ denotes 
the set of \emph{free channel variables} in $\mpP$. 
 We adopt a form of Barendregt convention: bound sessions and process 
 variables are assumed pairwise distinct, and different from free ones. 
We write $\mpBigPar{i \in I}{\mpP[i]}$ for the parallel composition of processes $\mpP[i]$.
\end{definition}

The syntax of our session $\pi$-calculus~(\Cref{def:mpst-syntax-terms}) is mostly
standard~\citep{POPL19LessIsMore}, with extensions for expressions and 
``if\ldots then\ldots else'' statements. 
A basic value $\mpV$ can be a natural number $\mpNat$,   
an integer $\mpInt$, 
a boolean $\mpTrue$ or $\mpFalse$, 
a string $\mpString{}$, 
a unit $\mpUnit{}$~(often omitted for brevity), 
or any other specific tailored value. 
An expression $\mpE$ can be an expression variable 
$\mpFmt{x}$, a basic value $\mpV$, or 
a term built from expressions by applying operators, \eg 
$\text{\texttt{succ}}, \text{\texttt{neg}}, \neg, \otimes, <$.  
A channel $\mpC$ can be either a channel variable $\mpFmt{y}$ or 
a \emph{channel with role}~(\aka \emph{session endpoint}) $\mpChanRole{\mpS}{\roleP}$, 
a multiparty communication
endpoint whose user plays role $\roleP$ in the session $\mpS$.

A process acts as a communication agent within a session, 
representing the behaviour and actions of a role in the session. 
 \emph{Inaction} $\mpNil$ represents a terminated process~(often omitted for brevity). 
\emph{Session restriction} $\mpRes{\mpS}{\mpP}$ 
declares a new session $\mpS$, 
with its scope restricted to the process $\mpP$. 
\emph{Selection} (\aka \emph{internal choice})
$\mpSel{\mpC}{\roleQ}{\mpLab}{\mpD}{\mpP}$
sends a message $\mpLab$ with payload
$\mpD$ to role $\roleQ$ via endpoint $\mpC$, 
where $\mpD$ may be either an expression or a channel. 
\emph{Branching} (\aka \emph{external choice})
$\mpBranch{\mpC}{\roleQ}{i \in I}{\mpLab[i]}{z_i}{\mpP[i]}{}$
expects to receive a message $\mpLab[i]$
(for some $i \in I$) %
from role $\roleQ$ via endpoint $\mpC$,
and then continues
as $\mpP[i]$. 

The \emph{conditional} process $\mpIf{\mpE}{\mpP}{\mpQ}$ 
represents the internal choice between processes $\mpP$ and $\mpQ$, 
with the selection of the branch determined by the \emph{evaluation} of expression $\mpE$~(as explained later in the semantics). 
\emph{Process definition} $\mpDefAbbrev{\mpJustDef{\mpX}{\mpFmt{x_1}, \ldots, \mpFmt{x_n}, \mpFmt{y_1}, \ldots, \mpFmt{y_m}}{P}}{\mpQ}$ and 
\emph{process call} $\mpCallSmall{\mpX}{\mpE[1], \ldots, \mpE[n], \mpC[1], \ldots,  \mpC[m]}$ capture recursion: 
the call invokes $\mpX$ by expanding it into $\mpQ$ and replacing its formal parameters with the actual arguments.
\emph{Parallel composition} $\mpP \mpPar \mpQ$ denotes two processes 
 capable of concurrent execution and potential communication. 
Finally, $\mpErr$ represents the \emph{error} process.  

\begin{figure}[t]
  \centerline{\(
    \begin{array}{@{\hskip 0mm}c@{\hskip 0mm}}
    \eval{\mpV}{\mpV} 
    \quad 
    \eval{\mpSucc{\mpNat}}{(\mpNat + 1)}
    \quad 
    \eval{\mpNeg{\mpInt}}{(-\mpInt)}
    \quad 
    \eval{\neg \mpTrue}{\mpFalse}
    \quad 
    \eval{\neg \mpFalse}{\mpTrue}
     \\[1.5mm]%
  \inference[]{\eval{\mpE[1]}{\mpV}}{\eval{\mpE[1] \otimes \mpE[2]}{\mpV}} 
  \quad 
   \inference[]{\eval{\mpE[2]}{\mpV}}{\eval{\mpE[1] \otimes \mpE[2]}{\mpV}}   
   \quad 
   \eval{\mpInt_{1} < \mpInt_{2}}{\left\{%
    \begin{array}{@{}l@{\hskip 5mm}l@{}}
      \mpTrue
      &\text{\small %
        if\, $\mpInt_{1} < \mpInt_{2}$}%
      \\[1mm]%
     \mpFalse
      &\text{\small%
      otherwise
      }
    \end{array}
    \right.
}
        \end{array}
    \)}%
\caption{%
Expression evaluation. 
}%
  \label{fig:expression_eval}%
\end{figure}

\begin{figure}[t]
  \centerline{\(%
    \begin{array}{c}
    \begin{array}{rl@{}l}
     \inferrule{\iruleMPRedCommExp}& %
      \mpBranch{\mpChanRole{\mpS}{\roleP}}{\roleQ}{i \in I}{%
        \mpLab[i]}{z_i}{\mpP[i]}{}%
      \,\mpPar\,%
      \mpSel{\mpChanRole{\mpS}{\roleQ}}{\roleP}{\mpLab[k]}{%
       \mpE
      }{\mpQ}%
      \;\;\mpMove\;\;%
      \mpP[k]\subst{\mpFmt{z_k}}{\mpV}
      \,\mpPar\,%
      \mpQ%
      &
      \text{\small{}$k \!\in\! I$ and $\eval{\mpE}{\mpV}$}%
      \\[2mm]
     \inferrule{\iruleMPRedCommChannel}& %
      \mpBranch{\mpChanRole{\mpS}{\roleP}}{\roleQ}{i \in I}{%
        \mpLab[i]}{z_i}{\mpP[i]}{}%
      \,\mpPar\,%
      \mpSel{\mpChanRole{\mpS}{\roleQ}}{\roleP}{\mpLab[k]}{%
       \mpChanRole{\mpSi}{\roleR}
      }{\mpQ}%
      \;\;\mpMove\;\;%
      \mpP[k]\subst{\mpFmt{z_k}}{\mpChanRole{\mpSi}{\roleR}}
      \,\mpPar\,%
      \mpQ%
   &
      \text{\small{}$k \!\in\! I$}%
      \\[2mm]%
      \inferrule{\iruleMPRedCall}& %
      \mpDef{\mpX}{\mpFmt{x_1}, \ldots, \mpFmt{x_n}, \mpFmt{y_1}, \ldots, \mpFmt{y_m}}{\mpP}{(%
        \mpCall{\mpX}{%
          \mpE[1], \ldots,%
          \mpE[n], \mpChanRole{\mpS[1]}{\roleP[1]}, \ldots, 
          \mpChanRole{\mpS[m]}{\roleP[m]}
        }%
        \,\mpPar\,%
        \mpQ%
        )%
      }%
      \\%
      &\hspace{10mm}%
      \;\mpMove\;%
      \mpDef{\mpX}{\mpFmt{x_1}, \ldots, \mpFmt{x_n}, \mpFmt{y_1}, \ldots, \mpFmt{y_m}}{\mpP}{
      }%
      \\
      &
      \hspace{16mm}
      (%
        \mpP\subst{\mpFmt{x_1}}{\mpV[1]}%
        \cdots%
        \subst{\mpFmt{x_n}}{\mpV[n]}
        \subst{\mpFmt{y_1}}{\mpChanRole{\mpS[1]}{\roleP[1]}}
        \cdots
        \subst{\mpFmt{y_m}}{\mpChanRole{\mpS[m]}{\roleP[m]}}%
        \,\mpPar\,%
          \mpQ%
          )%
      &
       \text{\small{}$\forall i \!\in\! \setenum{1, \ldots, n}: \eval{\mpE[i]}{\mpV[i]}$}%
      \\[2mm]%
      \inferrule{\iruleMPRedCondTrue} & 
    \mpIf{e}{\mpP}{\mpQ}
      \;\;\mpMove\;\;%
      \mpP
 &
      \text{\small{}$\eval{\mpE}{\mpTrue}$}
	\\[2mm] 
    \inferrule{\iruleMPRedCondFalse} & 
    \mpIf{\mpE}{\mpP}{\mpQ}
      \;\;\mpMove\;\;%
      \mpQ
     &
      \text{\small{}$\eval{\mpE}{\mpFalse}$}
      \\[2mm]
      \inferrule{\iruleMPRedCtx}&%
      \mpP \mpMove \mpPi%
      \;\;\text{implies}\;\;%
      \mpCtxApp{\mpCtx}{\mpP} \mpMove \mpCtxApp{\mpCtx}{\mpPi}%
      &
      \\[2mm]%
      \inferrule{\iruleMPErrLabel}&%
      \mpBranch{\mpChanRole{\mpS}{\roleP}}{\roleQ}{i \in I}{%
        \mpLab[i]}{z_i}{\mpP[i]}{}%
      \,\mpPar\,%
      \mpSel{\mpChanRole{\mpS}{\roleQ}}{\roleP}{\mpLab}{%
        \mpW%
      }{\mpQ}%
      \;\;\mpMove\;\;%
      \mpErr%
      &
      \text{\small{}$\forall i \!\in\! I: \mpLab[i] \!\neq\! \mpLab$}%
      \\[2mm]%
    \inferrule{\iruleMPRedCongr} &
    \mpPi \equiv \mpP \mpMove \mpQ \equiv \mpQi
    \;\;\text{implies}\;\;
    \mpPi \mpMove \mpQi  
    &
    \end{array}
   \end{array}
   \)}
     \caption{%
    Operational semantics of multiparty session $\pi$-calculus.}
  \label{fig:mpst-pi-semantics}%
\end{figure}
\begin{figure}[t]
  \centerline{\(
    \begin{array}{@{\hskip 0mm}c@{\hskip 0mm}}
        \mpP \mpPar \mpQ
        \equiv
        \mpQ \mpPar \mpP
      \quad %
        \mpFmt{(\mpP \mpPar \mpQ) \mpPar \mpR}%
        \equiv%
        \mpP \mpPar \mpFmt{(\mpQ \mpPar \mpR)}%
      \quad%
        \mpP \mpPar \mpNil%
        \equiv%
        \mpP%
 \quad%
        \mpRes{\mpS}{\mpNil}%
        \equiv%
        \mpNil%
 \\[2mm]%
        \mpRes{\mpS}{%
          \mpRes{\mpSi}{%
            \mpP%
          }%
        }%
        \equiv%
        \mpRes{\mpSi}{%
          \mpRes{\mpS}{%
            \mpP%
          }%
        }%
    \qquad 
        \mpRes{\mpS}{(\mpP \mpPar \mpQ)}%
        \equiv%
        \mpP \mpPar \mpRes{\mpS}{\mpQ}%
        \;\;%
        \text{\small{}if\, $\mpS \!\not\in\! \fc{\mpP}$}%
      \\[2mm]
        \mpDefAbbrev{\mpDefD}{\mpNil}%
        \equiv%
        \mpNil%
      \qquad%
        \mpDefAbbrev{\mpDefD}{\mpRes{\mpS}{\mpP}}%
        \,\equiv\,%
        \mpRes{\mpS}{(
          \mpDefAbbrev{\mpDefD}{\mpP}%
          )}%
        \;\;%
        \text{\small{}if\, $\mpS \!\not\in\! \fc{\mpDefD}$}%
      \\[2mm]%
        \mpDefAbbrev{\mpDefD}{(\mpP \mpPar \mpQ)}%
        \,\equiv\,%
        \mpFmt{(\mpDefAbbrev{\mpDefD}{\mpP})} \mpPar \mpQ%
        \;\;
        \text{\small{}if\, $\dpv{\mpDefD} \cap \fpv{\mpQ} = \emptyset$}%
      \\[2mm]%
        \mpDefAbbrev{\mpDefD}{%
          (\mpDefAbbrev{\mpDefDi}{\mpP})%
        }%
        \;\equiv\;%
        \mpDefAbbrev{\mpDefDi}{%
          (\mpDefAbbrev{\mpDefD}{\mpP})%
        }%
      \\%
      \text{\small{}%
        if\, %
        \(%
        (\dpv{\mpDefD} \cup \fpv{\mpDefD}) \cap \dpv{\mpDefDi}%
        \,=\,%
        (\dpv{\mpDefDi} \cup \fpv{\mpDefDi}) \cap \dpv{\mpDefD}%
        \,=\,%
        \emptyset%
        \)%
      }%
    \end{array}
    \)}%
\caption{%
 Standard structural congruence rules, 
    where $\fpv{\mpDefD}$ is the set of \emph{free process variables} in $\mpDefD$, 
    and $\dpv{\mpDefD}$ is the set of \emph{declared process
variables} in $\mpDefD$.
}%
  \label{fig:mpst-pi-congruence}%
\end{figure}

\paragraph{Operational Semantics of Multiparty Session $\pi$-Calculus}%
\label{sec:session-calculus:semantics}
The value of an expression is computed as shown in~\Cref{fig:expression_eval}, where $\eval{\mpE}{\mpV}$ denotes that an expression $\mpE$ evaluates to a value $\mpV$. Note that~\Cref{fig:expression_eval} can be extended to include additional expressions. 
The successor operation $\text{\texttt{succ}}$ is defined for natural numbers, 
negation $\text{\texttt{neg}}$ and comparison $<$  for integers, 
and logical negation $\neg$ for boolean values. 
The nondeterministic choice $\mpE[1] \otimes \mpE[2]$ evaluates to 
either the value of $\mpE[1]$ or the value of $\mpE[2]$.

We provide the operational semantics of our multiparty 
session $\pi$-calculus in~\Cref{def:mpst-pi-semantics}, 
using a standard \emph{structural congruence} $\equiv$ 
defined  in~\Cref{fig:mpst-pi-congruence}.  %

\begin{definition}[Semantics of Multiparty Session $\pi$-Calculus]
  \label{def:mpst-proc-context}%
  \label{def:mpst-pi-reduction-ctx}%
  \label{def:mpst-pi-semantics}%
  \label{def:mpst-pi-error}%
  \label{def:calculus_semantics}
  A \emph{reduction context} $\mpCtx$ is %
  defined as: 
  
  \smallskip
  \centerline{\( 
  \mpCtx \;\coloncolonequals\;%
  \mpCtx \mpPar \mpP%
  \bnfsep%
  \mpRes{\mpS}{\mpCtx}%
  \bnfsep%
  \mpDefAbbrev{\mpDefD}{\mpCtx}%
  \bnfsep%
  \mpCtxHole
  \)}
  
  \smallskip
  \noindent
  The \emph{reduction} relation $\mpMove$ on processes is inductively defined %
  by the rules in~\cref{fig:mpst-pi-semantics}. %
  A process $\mpP$ is said to \emph{have an error} if $\mpP = \mpCtxApp{\mpCtx}{\mpErr}$ 
  for some reduction context $\mpCtx$. 
A relation $\mpCtxMove$ on reduction contexts is defined by 
$\mpCtx \,\mpCtxMove\, \mpCtxi  \text{ iff } 
 \forall \mpP.  \, \mpCtxApp{\mpCtx}{\mpP} \mpMove \mpCtxApp{\mpCtxi}{\mpP}$. 
  \end{definition}

Our operational semantics for the multiparty session $\pi$-calculus retains the standard rules from~\citep{POPL19LessIsMore}, while incorporating additional rules for 
conditionals and value exchange.  
A reduction context $\mpCtx$ %
denotes  a process with a single hole $\mpCtxHole$, occurring in place of a subterm $\mpP$. 
In addition, a relation $\mpCtxMove$ on reduction contexts is introduced to capture 
context-level reductions that hold uniformly for all process instantiations; it will be used in the definition of process liveness in~\Cref{sec:typed-process-property}. 
The reflexive-transitive closures of $\mpMove$ and $\mpCtxMove$ are denoted by $\mpMoveStar$  and $\mpCtxMoveStar$, respectively. 

Rules~\inferrule{\iruleMPRedCommExp} and~\inferrule{\iruleMPRedCommChannel} 
describe communication on session $\mpS$ involving 
receiver $\roleP$ and sender $\roleQ$, where a value or an endpoint is exchanged, respectively, 
provided that  the transmitted message $\mpLab[k]$ is 
handled by the receiver~($k \in I$).  In case of a message label mismatch, rule \inferrule{\iruleMPErrLabel} 
is invoked to trigger an $\mpErr$or. 

Rules~\inferrule{\iruleMPRedCondTrue} and~\inferrule{\iruleMPRedCondFalse} 
pertain to conditionals, which are self-explanatory. 
Rule \inferrule{\iruleMPRedCall} initiates the expansion of process definitions when called. 
Additionally, rules \inferrule{\iruleMPRedCtx} and \inferrule{\iruleMPRedCongr} facilitate the reduction of processes 
under reduction contexts and modulo structural congruence~($\equiv$), respectively.

\begin{example}[Syntax and Semantics of Session $\pi$-Calculus, adapted from~\citep{BSYZ2022}]
\label{ex:cal_syntax_semantics}
  Processes $\mpP$ and $\mpQ$ below communicate over session $\mpS$: 
  $\mpP$ uses endpoint $\mpChanRole{\mpS}{\roleP}$ to
  send endpoint $\mpChanRole{\mpS}{\roleR}$ to role $\roleQ$, while 
  $\mpQ$ uses endpoint $\mpChanRole{\mpS}{\roleQ}$ to receive it, 
  and then sends a message to role $\roleP$ via
  $\mpChanRole{\mpS}{\roleR}$.

\smallskip
\centerline{\(
\begin{array}{c}
\mpP = \mpSel{\mpChanRole{\mpS}{\roleP}}{\roleQ}
      {\mpLabi}{\mpChanRole{\mpS}{\roleR}}{%
        \mpBranchSingle{\mpChanRole{\mpS}{\roleP}}{\roleR}{\mpLab}{z}{\mpNil}
} \qquad
\mpQ = \mpBranchSingle{\mpChanRole{\mpS}{\roleQ}}{\roleP}{\mpLabi}{z}{
  \mpSel{z}{\roleP}{\mpLab}{42}{\mpNil}
}{}
\end{array}
\)}

\smallskip
\noindent
By \Cref{def:mpst-pi-semantics}, successful reductions yield:

\smallskip
\centerline{\(
  \begin{array}{rl}
  &\mpRes{\mpS}{(\mpP \mpPar \mpQ)} 
  \\
  =
  &
  \mpRes{\mpS}{(
    \mpSel{\mpChanRole{\mpS}{\roleP}}{\roleQ}{\mpLabi}{\mpChanRole{\mpS}{\roleR}}{%
        \mpBranchSingle{\mpChanRole{\mpS}{\roleP}}{\roleR}{\mpLab}{z}{\mpNil}}
    \mpPar
    \mpBranchSingle{\mpChanRole{\mpS}{\roleQ}}{\roleP}{\mpLabi}{z}{\mpSel{z}{\roleP}{\mpLab}{42}{\mpNil}}
  )}
  \\
  \mpMove
  &
  \mpRes{\mpS}{(
    \mpBranchSingle{\mpChanRole{\mpS}{\roleP}}{\roleR}{\mpLab}{z}{\mpNil}{}
    \mpPar
    \mpSel{\mpChanRole{\mpS}{\roleR}}{\roleP}{\mpLab}{42}{\mpNil}%
  )}
  \;\mpMove\;
  \mpRes{\mpS}{(\mpNil \mpPar \mpNil)} \equiv \mpNil
  \end{array}
\)}
\end{example}

\begin{example}[OAuth Process, extended from~\citep{POPL19LessIsMore}]
\label{ex:process_oauth}
The following process interacts on session $\mpS$ 
using channels with role 
$\mpChanRole{\mpS}{\roleFmt{s}}$, %
$\mpChanRole{\mpS}{\roleFmt{c}}$, $\mpChanRole{\mpS}{\roleFmt{a}}$, 
corresponding respectively to roles $\roleFmt{s}$, $\roleFmt{c}$, and $\roleFmt{a}$.
For brevity,  irrelevant message payloads are omitted. %

\medskip%
\centerline{\(%
  \mpRes{\mpS}{\left(%
    \mpP[\roleFmt{s}] \mpPar \mpP[\roleFmt{c}] \mpPar \mpP[\roleFmt{a}]%
  \right)}%
  \quad \text{where:}
  \left\{%
  \scalebox{0.9}{%
  \begin{minipage}{0.75\linewidth}
  \(%
  \begin{array}{@{\hskip 0mm}r@{\hskip 1mm}c@{\hskip 1mm}l}
    \mpP[\roleFmt{s}] &=&%
    \mpSel{\mpChanRole{\mpS}{\roleFmt{s}}}{\roleFmt{c}}{\mpLabFmt{cancel}}{}{}%
    \\[1mm]%
    \mpP[\roleFmt{c}] &=&%
    \mpBranchRaw{\mpChanRole{\mpS}{\roleFmt{c}}}{\roleFmt{s}}{\!\left\{%
    \begin{array}{l}
      \mpChoice{login}{}{%
        \mpSel{\mpChanRole{\mpS}{\roleFmt{c}}}{\roleFmt{a}}{passwd}{\mpString{XYZ}}{}%
      }\\
      \mpChoice{cancel}{}{%
        \mpSel{\mpChanRole{\mpS}{\roleFmt{c}}}{\roleFmt{a}}{quit}{}{}%
      }, \, 
      \mpChoice{fail}{}{%
        \mpSel{\mpChanRole{\mpS}{\roleFmt{c}}}{\roleFmt{a}}{fatal}{}{}%
      }%
      \end{array}
      \right\}%
    }%
    \\[3mm]%
    \mpP[\roleFmt{a}] &=&%
    \mpBranchRaw{\mpChanRole{\mpS}{\roleFmt{a}}}{\roleFmt{c}}{\!\left\{%
		    \begin{array}{l}
      \mpChoice{passwd}{z}{%
        \mpSel{\mpChanRole{\mpS}{\roleFmt{a}}}{\roleFmt{s}}{auth}{\mpString{secret}}{}%
      }\\
      \mpChoice{quit}{}{}, \, 
      \mpChoice{fatal}{}{}%
		\end{array}
      \right\}%
    }%
  \end{array}
  \)%
  \end{minipage}
  }%
  \right.%
\)}%

\medskip%
\noindent%
Here, %
$\mpRes{\mpS}{%
  \left(\mpP[\roleFmt{s}] \mpPar \mpP[\roleFmt{c}] \mpPar \mpP[\roleFmt{a}]\right)%
}$ %
is %
the parallel composition of processes %
$\mpP[\roleFmt{s}]$, $\mpP[\roleFmt{c}]$, $\mpP[\roleFmt{a}]$ %
in the scope of session $\mpS$. %
In $\mpP[\roleFmt{s}]$, %
``$\mpSel{\mpChanRole{\mpS}{\roleFmt{s}}}{\roleFmt{c}}{\mpLabFmt{cancel}}{}{}$'' %
means using $\mpChanRole{\mpS}{\roleFmt{s}}$ %
to send $\mpLabFmt{cancel}$ to $\roleFmt{c}$. %
Process $\mpP[\roleFmt{c}]$ %
uses $\mpChanRole{\mpS}{\roleFmt{c}}$ to receive\, %
$\mpLabFmt{login}$, 
$\mpLabFmt{cancel}$, or $\mpLabFmt{fatal}$ 
from $\roleFmt{s}$;  %
then, %
in the first case, it uses $\mpChanRole{\mpS}{\roleFmt{c}}$ %
to send $\mpLabFmt{passwd}$ to $\roleFmt{a}$; %
in the second case, %
it uses $\mpChanRole{\mpS}{\roleFmt{c}}$ %
to send $\mpLabFmt{quit}$ to $\roleFmt{a}$; 
in the third case, 
it uses $\mpChanRole{\mpS}{\roleFmt{c}}$ %
to send $\mpLabFmt{fatal}$ to $\roleFmt{a}$. 
By \Cref{def:mpst-pi-semantics}, %
we have the reductions: %

\smallskip%
\centerline{\(
  \mpRes{\mpS}{\left(%
  \mpP[\roleFmt{s}] \mpPar \mpP[\roleFmt{c}] \mpPar \mpP[\roleFmt{a}]%
  \right)}%
  \;\mpMove\;%
  \mpRes{\mpS}{\left(%
    \mpNil \,\mpPar\,%
    \mpSel{\mpChanRole{\mpS}{\roleFmt{c}}}{\roleFmt{a}}{quit}{}{}%
    \,\mpPar\, \mpP[\roleFmt{a}]%
  \right)}%
  \;\mpMove\;%
  \mpRes{\mpS}{\left(%
    \mpNil \!\mpPar\! \mpNil \!\mpPar\! \mpNil%
  \right)}%
  \;\equiv\; \mpNil
\)}%
\end{example}

\section{Multiparty Session Types}
\label{sec:sessiontypes}
This section introduces multiparty session types. In~\Cref{sec:gtype:syntax}, 
we provide an extensive exploration of global and local types, 
including their syntax, projection, and subtyping. 
In~\Cref{sec:gtype:lts-gt}, we establish a Labelled Transition System (LTS) semantics for global types, 
and in~\Cref{sec:gtype:lts-context}, for typing contexts. 
 The operational relationship between these semantics is explained in~\Cref{sec:gtype:relating} via association. 
In~\Cref{sec:gtype:counterexample_projection}, we show the necessity of association:  %
an example illustrates that projection alone does not preserve this operational correspondence. 
Finally, in~\Cref{sec:gtype:pbp}, 
we demonstrate  that a typing context associated with a global type  ensures safety, deadlock-freedom, and liveness.

\subsection{Global and Local Types}
\label{sec:gtype:syntax}
The Multiparty Session Type (MPST) theory utilises \emph{global types} to 
provide a comprehensive overview of communication between 
\emph{roles}, such as $\roleP, \roleQ, \roleS, \roleT, \ldots$, belonging to a set $\roleSet$.
In contrast, it employs \emph{local types}, derived via \emph{projection} from a global type,  
to describe the communication of an \emph{individual} role from a local perspective. 
The syntax for global and local types is presented in~\cref{fig:syntax-mpst}, 
where most constructs are standard~\citep{POPL19LessIsMore}.

\begin{figure}[t]%
  \centerline{\(
    \begin{array}{r@{\quad}c@{\quad}l@{\quad}l}
     \tyGround & \bnfdef & \tyNat \bnfsep \tyInt \bnfsep \tyBool \bnfsep %
     \tyString \bnfsep \tyUnit \bnfsep \ldots
        & \text{\small Basic types} \\
      \stS & \bnfdef & \tyGround \bnfsep \stT
        & \text{\small Basic type or Session type} \\
      \gtG & \bnfdef &
        \gtComm{\roleP}{\roleQ}{i \in I}{\gtLab[i]}{\tyS[i]}{\gtG[i]}
        &
        \text{\small Transmission} \\
        & \bnfsep & \gtRec{\gtRecVar}{\gtG} \quad \bnfsep \quad \gtRecVar 
        \quad \bnfsep \quad \gtEnd
        &
        \text{\small Recursion, Type variable, Termination}  \\[1ex]
      \stT
        & \bnfdef & \stExtSum{\roleP}{i \in I}{\stChoice{\stLab[i]}{\stS[i]} \stSeq \stT[i]} 
          & \text{\small External choice} \\
        & \bnfsep & \stIntSum{\roleP}{i \in I}{\stChoice{\stLab[i]}{\stS[i]} \stSeq \stT[i]}
          & \text{\small Internal choice}\\
        & \bnfsep & \stRec{\stRecVar}{\stT} \quad \bnfsep \quad \stRecVar 
        \quad \bnfsep \quad \stEnd
        &
        \text{\small Recursion, Type variable, Termination}
    \end{array}
  \)}
\caption{Syntax of types.} 
  \label{fig:syntax-global-type}%
  \label{fig:syntax-local-type}%
  \label{fig:syntax-mpst}
\end{figure}

\vspace{1.5mm}
\noindent
{\bf Basic types},  taken from a set $\tyGroundSet$, 
describe the types of values, including natural numbers, integers, booleans, %
strings, units, and others.  %

\vspace{1.5mm}
\noindent
{\bf Global types}, represented as  
$\gtG, \gtGi, \gtG[i], \ldots$, describe the high-level behaviour of all roles. 
The set of roles 
in a global
type $\gtG$ is denoted by $\gtRoles{\gtG}$.  
Each syntactic construct of global types is explained as follows:

\begin{itemize}
\item  
$\gtComm{\roleP}{\roleQ}{i \in
I}{\gtLab[i]}{\tyS[i]}{\gtG[i]}$: a \emph{transmission} of a message from role $\roleP$ to role $\roleQ$, 
consisting of a label $\gtLab[i]$, a payload of type $\tyS[i]$~(either a basic or local type), 
and a continuation $\gtG[i]$, where $i$ is taken from a non-empty index
set $I$. %
The labels
$\gtLab[i]$ must be pairwise distinct, and self receptions are excluded (\ie
$\roleP \neq \roleQ$).

\item $\gtRec{\gtRecVar}{G}$: a \emph{recursive} global type, subject to 
contractive requirements~\citep[\S 21.8]{PierceTAPL}, meaning that  
each recursion variable $\gtRecVar$ is bound within a $\gtRec{\gtRecVar}{\ldots}$ 
and is guarded. 
\item $\gtEnd$: a \emph{terminated} global type, omitted when unambiguous.
\end{itemize}
The set of free variables in $\gtG$, denoted by $\gtFv{\gtG}$, is defined as usual.

\vspace{1.5mm}
\noindent
{\bf Local types} 
(or {\bf session types}), represented as  $\stT, \stU, \stTi, \stT[i], \stUi, \stU[i], \ldots$, describe
the behaviour of a single role. Each syntactic construct of local types is elucidated below: 
\begin{itemize}
\item 
$\stIntSum{\roleP}{i \in I}{\stChoice{\stLab[i]}{\stS[i]} \stSeq \stT[i]}$: 
an \emph{internal choice}~(\emph{selection}), indicating 
that the \emph{current} role is expected to \emph{send} a message to role $\roleP$. 

\item $\stExtSum{\roleP}{i \in I}{\stChoice{\stLab[i]}{\stS[i]} \stSeq \stT[i]}$: 
an \emph{external choice} (\emph{branching}), indicating 
that the \emph{current} role is expected to \emph{receive} a message  from role $\roleP$. 

\item  $\stRec{\stRecVar}{\stT}$: a \emph{recursive} local type, following a pattern analogous to 
recursive global types. 

\item $\stEnd$: a \emph{termination}, omitted when unambiguous. 
\end{itemize}
Similar to global types, we require pairwise-distinct,
non-empty labels in local types.

\vspace{1.5mm}
\noindent
{\bf Projection } %
In the  top-down approach of MPST, local types are obtained 
by projecting a global type onto roles.  \emph{Projection} is a \emph{partial} function 
that yields the local type associated with a participating role in a global type. 
Specifically, it takes a global type $\gtG$ and a role $\roleP$,  returning the corresponding local type. 
Our definition of projection, provided in~\Cref{def:global-proj} below, follows the standard approach~\citep{POPL19LessIsMore}.

\begin{definition}[Global Type Projection]%
  \label{def:global-proj}%
  \label{def:local-type-merge}%
  The \emph{projection of a global type $\gtG$ onto a role $\roleP$}, %
  written $\gtProj{\gtG}{\roleP}$,~is:

\smallskip
  \centerline{\(%
  \begin{array}{c}
    \gtProj{\left(%
      \gtCommSmall{\roleQ}{\roleR}
      {i \in I}{\gtLab[i]}{\stS[i]}{\gtG[i]}%
      \right)}{\roleP}%
    =\!%
    \left\{%
    \begin{array}{@{}l@{\hskip 5mm}l@{}}
      \stIntSum{\roleR}{i \in I}{ %
        \stChoice{\stLab[i]}{\stS[i]} \stSeq (\gtProj{\gtG[i]}{\roleP})%
      }%
      &\text{\small %
        if\, $\roleP = \roleQ$%
      }%
      \\[2mm]%
      \stExtSum{\roleQ}{i \in I}{%
        \stChoice{\stLab[i]}{\stS[i]} \stSeq (\gtProj{\gtG[i]}{\roleP})%
      }%
      &\text{\small%
        if\, $\roleP = \roleR$
      }
      \\[2mm]%
      \stMerge{i \in I}{\gtProj{\gtG[i]}{\roleP}}%
      &%
      \text{\small%
        if\, $\roleP \neq \roleQ$ \,and \,$\roleP \neq \roleR$
      }%
    \end{array}
    \right.
    \\[8mm]%
    \gtProj{(\gtRec{\gtRecVar}{\gtG})}{\roleP}%
    \;=\;%
    \left\{%
    \begin{array}{@{\hskip 0.5mm}l@{\hskip 3mm}l@{}}
      \stRec{\stRecVar}{(\gtProj{\gtG}{\roleP})}%
      &%
      \text{\small%
        if\,
        $\roleP \in \gtRoles{\gtG}$ \,or\,
        $\gtFv{\gtRec{\gtRecVar}{\gtG}} \neq \emptyset$
      }%
      \\%
      \stEnd%
      &%
      \text{\small%
        otherwise}
    \end{array}
    \right.%
    \quad\qquad%
    \begin{array}{@{}r@{\hskip 1mm}c@{\hskip 1mm}l@{}}
      \gtProj{\gtRecVar}{\roleP}%
      &=&%
      \stRecVar%
      \\%
      \gtProj{\gtEnd}{\roleP}%
      &=&%
      \stEnd%
    \end{array}
  \end{array}
  \)}%
  
  \smallskip%
\noindent%
   where
  $\stMerge{}{}$ is %
  the \emph{merge operator for session types} %
  (\emph{full merging}), defined as:

  \smallskip%
  \centerline{\(%
  \begin{array}{c}%
    \textstyle%
     \stIntSum{\roleP}{i \in I}{\stChoice{\stLab[i]}{\stS[i]} \stSeq \stT[i]}%
    \,\stBinMerge\,%
    \stIntSum{\roleP}{i \in I}{\stChoice{\stLab[i]}{\stS[i]} \stSeq \stTi[i]}%
    \;=\;%
    \stIntSum{\roleP}{i \in I}{\stChoice{\stLab[i]}{\stS[i]} \stSeq (\stT[i]
    \stBinMerge \stTi[i])}%
    \\[1mm]%
    \stRec{\stRecVar}{\stT} \,\stBinMerge\, \stRec{\stRecVar}{\stTi}%
    \,=\,%
    \stRec{\stRecVar}{(\stT \stBinMerge \stTi)}%
    \quad%
    \stRecVar \,\stBinMerge\, \stRecVar%
    \,=\,%
    \stRecVar%
    \quad%
    \stEnd \,\stBinMerge\, \stEnd%
    \,=\,%
    \stEnd%
    \\[1mm]
    \stExtSum{\roleP}{i \in I}{\stChoice{\stLab[i]}{\stS[i]} \stSeq \stT[i]}%
    \,\stBinMerge\,%
    \stExtSum{\roleP}{\!j \in J}{\stChoice{\stLab[j]}{\stS[j]} \stSeq \stTi[j]}%
   \;=\;  \stExtSum{\roleP}{\!k \in I \cup J}{\stChoice{\stLab[k]}{\stS[k]} \stSeq \stTii[k]}
    \hspace{2mm} 
    \text{where}  \hspace{1.5mm} 
   \stTii[k]
    \;=\;%
    \left\{%
    \begin{array}{@{\hskip 0.5mm}l@{\hskip 3mm}l@{}}
     \stT[k]  \stBinMerge \stTi[k]
      &%
      \text{\footnotesize%
        if\, 
       $\stFmt{k} \in \stFmt{I \cap J}$
      }%
      \\%
       \stT[k]
      &%
      \text{\footnotesize%
        if\, 
        $\stFmt{k} \in \stFmt{I \setminus J}$
      }%
      \\%
     \stTi[k]
      &%
      \text{\footnotesize%
        if\, 
        $\stFmt{k} \in \stFmt{J \setminus I}$
      }
    \end{array}
    \right.%
  \end{array}
  \)}%
\end{definition}

Note that some global types cannot be projected onto all of their participants. 
This occurs when certain global types describe inherently meaningless protocols, 
leading to undefined merging operations, as illustrated 
in~\Cref{ex:merge}.

If a global type $\gtG$ begins with a transmission from role $\roleP$ to role $\roleQ$, projecting it onto 
role $\roleP$~(resp. $\roleQ$) results in an internal (resp. external) choice, provided that the continuation 
of each branching of $\gtG$ is also projectable. When projecting $\gtG$ onto other participants $\roleR$~($\roleR \neq \roleP$ and $\roleR \neq \roleQ$), a merge operator, as defined in~\Cref{def:global-proj} and exemplified in~\Cref{ex:merge},  
is used to ensure that the projections of all continuations are ``compatible''. 
If a global type $\gtG$ is a termination or a type variable, projecting it onto any role results in a termination or type variable, respectively. 
Finally, the projection of a recursive global type $\gtRec{\gtRecVar}{\gtG}$ preserves its recursive structure when projected onto a role in $\gtG$, or if $\gtRec{\gtRecVar}{\gtG}$ contains no free type variables; 
otherwise, the projection yields termination.

\begin{example}[Types of OAuth 2.0, from~\citep{POPL19LessIsMore}]
\label{ex:types_oath}
Consider a protocol  from OAuth 2.0~\citep{OAUTH}: 
the $\roleFmt{s}$ervice sends %
to the $\roleFmt{c}$lient \emph{either} a request 
to $\gtMsgFmt{login}$, 
\emph{or} $\gtMsgFmt{cancel}$; %
in the first case, %
$\roleFmt{c}$ continues by sending $\gtMsgFmt{passwd}$ %
(carrying a $\stFmtC{str}$ing) %
to the $\roleFmt{a}$uthorisation server, %
who in turn sends $\gtMsgFmt{auth}$ to $\roleFmt{s}$ %
(with a $\stFmtC{bool}$ean, %
telling whether the client is authorised), %
and the session $\gtEnd$s; %
in the second case, $\roleFmt{c}$ sends $\gtMsgFmt{quit}$ to $\roleFmt{a}$, %
and the session $\gtEnd$s. This protocol can be represented by 
the global type $\gtExChoice$: 
  
  \smallskip
  \centerline{\(
    \gtExChoice =%
    \gtCommRaw{\roleServer}{\roleClient}{
      \begin{array}{l}%
        \gtCommChoice{\gtMsgFmt{login}}{}{%
          \gtCommSingle{\roleClient}{\roleAuth}{\gtMsgFmt{passwd}}{\stFmtC{str}}{%
            \gtCommSingle{\roleAuth}{\roleServer}{auth}{\stFmtC{bool}}{%
              \gtEnd%
            }%
          }%
        }%
        \\%
        \gtCommChoice{cancel}{}{%
          \gtCommSingle{\roleClient}{\roleAuth}{quit}{}{%
              \gtEnd%
          }%
        }%
      \end{array}
    }
    \)}

\smallskip
\noindent%
Following the MPST top-down methodology, 
$\gtExChoice$  is projected onto three local types (one for each role
$\roleFmt{s}$, $\roleFmt{c}$, $\roleFmt{a}$):

\smallskip
  \centerline{\(
  \begin{array}{c}
    \stT[\roleFmt{s}] =%
      \stIntSum{\roleClient}{}{%
        \begin{array}{@{\hskip 0mm}l@{\hskip 0mm}}%
          \stChoice{\stLabFmt{login}}{} \stSeq%
          \roleFmt{a}
          \stFmt{\&}
          \stLabFmt{auth(\stFmtC{bool})}%
          \\%
          \stChoice{\stLabFmt{cancel}}{}%
        \end{array}
      }%
      \quad 
       \stT[\roleFmt{c}]  =%
      \stExtSum{\roleServer}{}{%
        \begin{array}{@{\hskip 0mm}l@{\hskip 0mm}}%
          \stChoice{\stLabFmt{login}}{} \stSeq%
          \stOut{\roleFmt{a}}{%
            \stChoice{\stLabFmt{passwd}}{\stFmtC{str}}%
          }{}%
          \\%
          \stChoice{\stLabFmt{cancel}}{} \stSeq%
          \stOut{\roleFmt{a}}{%
            \stChoice{\stLabFmt{quit}}{}%
          }{}%
        \end{array}
      }%
      \\[2mm] 
      \stT[\roleFmt{a}] =%
      \stExtSum{\roleFmt{c}}{}{%
        \begin{array}{@{\hskip 0mm}l@{\hskip 0mm}}%
          \stChoice{\stLabFmt{passwd}}{\stFmtC{str}} \stSeq%
          \stOut{\roleServer}{%
            \stChoice{\stLabFmt{auth}}{\stFmtC{bool}}%
          }{}%
          \\%
          \stChoice{\stLabFmt{quit}}{}%
        \end{array}
      }%
  \end{array}
  \)}
  
  \smallskip
  \noindent
 Here,  $\stT[\roleFmt{s}]$ represents the 
 interface of $\roleFmt{s}$ in $\gtExChoice$: %
it must send ($\stIntC$) to $\roleFmt{c}$ either %
$\stChoice{\stLabFmt{login}}{}$ %
or $\stChoice{\stLabFmt{cancel}}{}$; %
in the first case, %
$\roleFmt{s}$ must then %
receive ($\stExtC$) %
message $\stChoice{\stLabFmt{auth}}{\stFmtC{bool}}$ from $\roleFmt{a}$, %
and the session ends; %
otherwise, %
in the second case, %
the session just ends. %
Types $\stT[\roleFmt{c}]$ and $\stT[\roleFmt{a}]$ follow the same intuition.
\end{example}

\begin{example}[Merge and Projection, originated from~\citep{MFYZ2021}]
\label{ex:merge}
Two external choice~(branching) types from the same role with disjoint labels can be 
merged into a type
carrying both labels, 
\eg 

\smallskip
\centerline{\(
\roleFmt{A} 
 \stFmt{\&}
  \stLabFmt{greet(\stFmtC{str})} 
  \stBinMerge 
  \roleFmt{A} 
 \stFmt{\&}
  \stLabFmt{farewell(\stFmtC{bool})} 
   = 
    \stExtSum{\roleFmt{A}}{}{
  \begin{array}{@{}l@{}}
   \stLabFmt{greet(\stFmtC{str})},  
   \stLabFmt{farewell(\stFmtC{bool})} 
  \end{array}
  }
 \)}
 
 \smallskip
 \noindent
 However, this does not apply to internal choices~(selections), \eg 
 $\roleFmt{A} 
 \stFmt{\oplus}
  \stLabFmt{greet(\stFmtC{str})} 
  \stBinMerge 
  \roleFmt{A} 
 \stFmt{\oplus}
 \stLabFmt{farewell(\stFmtC{bool})}$ is undefined.

Two external choices from different roles cannot be merged. Same 
for internal choices;  \eg %
 $\roleFmt{A} 
 \stFmt{\oplus}
  \stLabFmt{greet(\stFmtC{str})} 
  \stBinMerge 
  \roleFmt{B} 
 \stFmt{\oplus}
 \stLabFmt{greet(\stFmtC{str})}
   $
   is undefined. 

Furthermore, two external choices from the same role with same labels but different payloads cannot 
be merged;  \eg %
  $\roleFmt{A} 
 \stFmt{\&}
  \stLabFmt{greet(\stFmtC{str})} 
  \stBinMerge 
  \roleFmt{A} 
 \stFmt{\&}
 \stLabFmt{greet(\stFmtC{bool})}
   $ is undefined. This also applies to internal choices. %

Additionally, two local types with same prefixes 
but unmergeable continuations 
cannot be merged;  \eg %
$\roleFmt{A} 
 \stFmt{\oplus}
  \stLabFmt{greet(\stFmtC{str})} 
  \stBinMerge 
  \roleFmt{A} 
 \stFmt{\oplus}
 \stLabFmt{greet(\stFmtC{str})}
   \stSeq
   \roleFmt{B} 
 \stFmt{\&}
 \stLabFmt{farewell(\stFmtC{bool})}
 $ is undefined as 
 $\stEnd
  \stBinMerge 
  \roleFmt{B} 
 \stFmt{\&}
 \stLabFmt{farewell(\stFmtC{bool})}
   $ is not mergeable. 
   
   Consider a global type:
  
  \smallskip
  \centerline{\(
\gtG = \gtCommRaw{\roleFmt{A}}{\roleFmt{B}}{
    \begin{array}{@{}l@{}}
    \gtCommChoice{\gtMsgFmt{greet}}{\stFmtC{str}}{
    \gtCommSingle{\roleFmt{C}}{\roleFmt{A}}
    {\gtMsgFmt{farewell}}{\stFmtC{bool}}{
    }}
   \\
     \gtCommChoice{\gtMsgFmt{farewell}}{\stFmtC{bool}}{
    \gtCommSingle{\roleFmt{C}}{\roleFmt{A}}{\gtMsgFmt{greet}}{\stFmtC{str}}{
    }
    }
    \end{array}
    }
   \)}

\smallskip
\noindent
$\gtG$ cannot be projected onto role $\roleFmt{C}$ 
since:

\smallskip
\centerline{\(
\begin{array}{lll}
\gtProj{\gtG}{\roleFmt{C}} 
&= &
\gtProj{\gtCommSingle{\roleFmt{C}}{\roleFmt{A}}{\gtMsgFmt{farewell}}{\stFmtC{bool}}{
  }}{\roleFmt{C}} 
   \,\stBinMerge\, 
  \gtProj{\gtCommSingle{\roleFmt{C}}{\roleFmt{A}}{\gtMsgFmt{greet}}{\stFmtC{str}}{
  }}{\roleFmt{C}} 
  \\
  &= &
  \roleFmt{A} 
 \stFmt{\oplus}
  \stLabFmt{farewell(\stFmtC{bool})} 
  \stBinMerge 
  \roleFmt{A} 
 \stFmt{\oplus}
  \stLabFmt{greet(\stFmtC{str})} 
   \text{\;\;(\textbf{undefined})}
   \end{array}
   \)}

\smallskip
\noindent
$\gtProj{\gtG}{\roleFmt{C}}$ is undefined because in $\gtG$, depending on 
whether $\roleFmt{A}$ and $\roleFmt{B}$ transmit $\gtMsgFmt{greet}$ or 
$\gtMsgFmt{farewell}$, $\roleFmt{C}$ is expected to send either 
$\gtMsgFmt{farewell}$ or 
$\gtMsgFmt{greet}$ to $\roleFmt{A}$. 
However, since $\roleFmt{C}$ is not involved in the interactions between $\roleFmt{A}$ and 
$\roleFmt{B}$, $\roleFmt{C}$ is not aware of which message to send; 
that is $\gtG$ does not provide a valid specification for $\roleFmt{C}$. 
 \end{example}

\vspace{1.5mm}
\noindent
{\bf Subtyping } %
We introduce a \emph{subtyping} relation $\stSub$ on local types, as defined in~\Cref{def:subtyping}. 
Intuitively, subtyping captures safe substitutability: when a local type $\stT$ is a subtype of $\stTi$, 
the behaviour specified  by $\stT$ and $\stTi$ is compatible; consequently, 
a process typed by 
$\stT$  can be safely used wherever one typed by $\stTi$ is expected, without introducing communication errors. %

\begin{definition}[Subtyping]
\label{def:subtyping}
  The \emph{session subtyping relation $\stSub$} is coinductively defined: 

\smallskip
\centerline{\(
\begin{array}{c}
\cinference[\iruleStSubGround]{
}{
  \tyGround \stSub \tyGround
}
\qquad 
\cinference[\iruleStSubIn]{
  \forall i \in I
  &
  \tyS[i] \stSub \tySi[i]
  &
  \stT[i] \stSub \tyTi[i]
}{
  \stExtSum{\roleP}{i \in I\cup{}J}{\stChoice{\stLab[i]}{\tyS[i]} \stSeq \stT[i]}%
  \stSub
  \stExtSum{\roleP}{i \in I}{\stChoice{\stLab[i]}{\tySi[i]} \stSeq \stTi[i]}%
}
\\[2ex]
\cinference[\iruleStSubEnd]{}{
  \stEnd \stSub \stEnd
}
\qquad
\cinference[\iruleStSubOut]{
  \forall i \in I
  &
  \tySi[i] \stSub \tyS[i]
  &
  \stT[i] \stSub \stTi[i]
}{
  \stIntSum{\roleP}{i \in I}{\stChoice{\stLab[i]}{\tyS[i]} \stSeq \stT[i]}
  \stSub
  \stIntSum{\roleP}{i \in I \cup J}{\stChoice{\stLab[i]}{\tySi[i]} \stSeq \stTi[i]}
}
\\[2ex]
\cinference[\iruleStSubRecL]{
  \stT{}[\stRec{\stRecVar}{\stT}/\stRecVar] \stSub \stTi
}{
  \stRec{\stRecVar}{\stT} \stSub \stTi
}
\qquad
\cinference[\iruleStSubRecR]{
  \stT \stSub \stTi{}[\stRec{\stRecVar}{\stTi}/\stRecVar]
}{
  \stT \stSub \stRec{\stRecVar}{\stTi}
}
\end{array}
\)}
\end{definition}

Rule \inferrule{\iruleStSubGround} states that any basic type is its own subtype. 
Rule \inferrule{\iruleStSubOut} allows the subtype of an internal choice to encompass a narrower range of message labels while permitting the sending of more generic payloads. 
Conversely, rule \inferrule{\iruleStSubIn} dictates 
that the subtype of an external choice must support a broader set of input message labels 
and less generic payloads. 
Rule \inferrule{\iruleStSubEnd} specifies that the type $\stEnd$ is its own subtype. 
Finally, rules \inferrule{\iruleStSubRecL} and \inferrule{\iruleStSubRecR} 
define the relationship between recursive types up to their unfolding.

\begin{example}[Subtyping]
\label{ex:subtyping}
  Recall the local types $\stT[\roleFmt{s}]$,  $\stT[\roleFmt{c}]$, and $\stT[\roleFmt{a}]$ from~\Cref{ex:types_oath}: 
  
  \smallskip
  \centerline{\(
  \begin{array}{c}
    \stT[\roleFmt{s}] =%
      \stIntSum{\roleClient}{}{%
        \begin{array}{@{\hskip 0mm}l@{\hskip 0mm}}%
          \stChoice{\stLabFmt{login}}{} \stSeq%
          \roleFmt{a}
          \stFmt{\&}
          \stLabFmt{auth(\stFmtC{bool})}%
          \\%
          \stChoice{\stLabFmt{cancel}}{}%
        \end{array}
      }%
      \quad 
       \stT[\roleFmt{c}]  =%
      \stExtSum{\roleServer}{}{%
        \begin{array}{@{\hskip 0mm}l@{\hskip 0mm}}%
          \stChoice{\stLabFmt{login}}{} \stSeq%
          \stOut{\roleFmt{a}}{%
            \stChoice{\stLabFmt{passwd}}{\stFmtC{str}}%
          }{}%
          \\%
          \stChoice{\stLabFmt{cancel}}{} \stSeq%
          \stOut{\roleFmt{a}}{%
            \stChoice{\stLabFmt{quit}}{}%
          }{}%
        \end{array}
      }%
      \\[2mm] 
      \stT[\roleFmt{a}] =%
      \stExtSum{\roleFmt{c}}{}{%
        \begin{array}{@{\hskip 0mm}l@{\hskip 0mm}}%
          \stChoice{\stLabFmt{passwd}}{\stFmtC{str}} \stSeq%
          \stOut{\roleServer}{%
            \stChoice{\stLabFmt{auth}}{\stFmtC{bool}}%
          }{}%
          \\%
          \stChoice{\stLabFmt{quit}}{}%
        \end{array}
      }%
  \end{array}
  \)}

\smallskip
\noindent
  Additionally, consider the following local types: 

\centerline{\(
\begin{array}{c}
    \stTi[\roleFmt{s}] = 
  \stEnvMap{\mpChanRole{\mpS}{\roleFmt{s}}}{\roleFmt{c} \stFmt{\oplus} \stLabFmt{cancel} }  
  \qquad
   \stTi[\roleFmt{c}] = \stEnvMap{\mpChanRole{\mpS}{\roleFmt{c}}}{\stExtSum{\roleServer}{}{%
        \begin{array}{@{\hskip 0mm}l@{\hskip 0mm}}%
          \stChoice{\stLabFmt{login}}{} \stSeq%
          \stOut{\roleFmt{a}}{%
            \stChoice{\stLabFmt{passwd}}{\stFmtC{str}}
          }{}%
          \\%
          \stChoice{\stLabFmt{cancel}}{} \stSeq%
          \stOut{\roleFmt{a}}{%
            \stChoice{\stLabFmt{quit}}{}
          }{}
          \\
             \stChoice{\stLabFmt{fail}}{} \stSeq%
          \stOut{\roleFmt{a}}{%
            \stChoice{\stLabFmt{fatal}}{}}{}
        \end{array}
      }}
   \\
    \stTi[\roleFmt{a}] =  \stEnvMap{\mpChanRole{\mpS}{\roleFmt{a}}}{\stExtSum{\roleFmt{c}}{}{%
		  \begin{array}{l}
          \stChoice{\stLabFmt{passwd}}{\stFmtC{str}} \stSeq%
          \stOut{\roleServer}{%
            \stChoice{\stLabFmt{auth}}{\stFmtC{bool}}%
          }{}\\
          \stChoice{\stLabFmt{quit}}{} \\
           \stChoice{\stLabFmt{fatal}}{}     
   \end{array}
      }}
      \end{array}
\)}
  
  \smallskip
  \noindent
  It holds that $\stTi[\roleFmt{s}] \stSub \stT[\roleFmt{s}]$, because a subtype is allowed to offer fewer options in an internal choice, while $\stTi[\roleFmt{c}] \stSub \stT[\roleFmt{c}]$ and  $\stTi[\roleFmt{a}] \stSub \stT[\roleFmt{a}]$, as a subtype in an  external choice may accept more options. 
  \end{example}

\subsection{Semantics of Global Types}
\label{sec:gtype:lts-gt}
We now present a Labelled Transition System (LTS) semantics for global types. 
First, we introduce the transition labels in~\Cref{def:mpst-env-reduction-label},
which are also used in the LTS semantics of typing contexts, discussed later in~\Cref{sec:gtype:lts-context}.

\begin{definition}[Transition Labels]
  \label{def:mpst-env-reduction-label}%
  \label{def:mpst-label-subject}%
  Let $\stEnvAnnotGenericSym$ %
  be a transition label of the form:

\medskip
 \centerline{\(
  \begin{array}{rcll}
  \stEnvAnnotGenericSym &\bnfdef&
    \stEnvInAnnotSmall{\roleP}{\roleQ}{\stChoice{\stLab}{\stS}}&
    \text{(in session $\mpS$, $\roleP$ receives $\stChoice{\stLab}{\stS}$ from $\roleQ$)}
    \\
    &\bnfsep&\stEnvOutAnnotSmall{\roleP}{\roleQ}{\stChoice{\stLab}{\stS}}&
    \text{(in session $\mpS$, $\roleP$ sends $\stChoice{\stLab}{\stS}$ to $\roleQ$)}
    \\
    &\bnfsep&\stEnvCommAnnotSmall{\roleP}{\roleQ}{\stLab}&
   \text{(in session $\mpS$, message $\stLab$ is transmitted from $\roleP$ to $\roleQ$)} 
  \end{array}
  \)}

\medskip
\noindent
The subject(s) of a transition label, written
\,$\ltsSubject{\stEnvAnnotGenericSym}$,\, are defined as follows:

\medskip
\centerline{\(
  \begin{array}{rcl}
    \ltsSubject{\stEnvInAnnotSmall{\roleP}{\roleQ}{\stChoice{\stLab}{\stS}}}
    &=&
    \ltsSubject{\stEnvOutAnnotSmall{\roleP}{\roleQ}{\stChoice{\stLab}{\stS}}}
    \;=\;
    \setenum{\roleP} \\
    \ltsSubject{\stEnvCommAnnotSmall{\roleP}{\roleQ}{\stLab}}
    &=&
    \setenum{\roleP, \roleQ}
  \end{array}
\)}%
\end{definition}

The label $\stEnvCommAnnotSmall{\roleP}{\roleQ}{\stLab}$ denotes a
synchronising communication between $\roleP$ and $\roleQ$ via a message label 
$\stLab$; %
the subject of this label includes \emph{both} roles.
The labels
$\stEnvOutAnnotSmall{\roleP}{\roleQ}{\stChoice{\stLab}{\stS}}$ 
and
$\stEnvInAnnotSmall{\roleP}{\roleQ}{\stChoice{\stLab}{\stS}}$
describe sending and receiving actions, respectively, and are specifically used in the 
typing context semantics in~\Cref{sec:gtype:lts-context}.

 We proceed to give a Labelled Transition System (LTS) semantics to a global type $\gtG$ in~\Cref{def:gtype:lts-gt}.

\begin{figure}[t]
 \centerline{\(
\begin{array}{c}

  \inference[\iruleGtMoveRec]{
    \gtG{}[\gtRec{\gtRecVar}{\gtG}/\gtRecVar] 
    \,\gtMove[\stEnvAnnotGenericSym]\, 
    \gtGi}
  {
    \gtRec{\gtRecVar}{\gtG} 
    \,\gtMove[\stEnvAnnotGenericSym]\, 
    \gtGi
  }
  \qquad 
  \inference[\iruleGtMoveComm]{
    j \in I
  }{
    \gtCommSmall{\roleP}{\roleQ}{i \in I}{\gtLab[i]}{\stS[i]}{\gtGi[i]}
    \,\gtMove[\ltsSendRecv{\mpS}{\roleP}{\roleQ}{\gtLab[j]}]\, 
    \gtGi[j]
  }
  \\[2ex]

  \inference[\iruleGtMoveCtx]{
    \forall i \in I: 
    \gtGi[i] 
    \,\gtMove[\stEnvAnnotGenericSym]\, 
    \gtGii[i] 
    &
    \ltsSubject{\stEnvAnnotGenericSym} \cap \setenum{\roleP, \roleQ} =
    \emptyset
  }{
      \gtCommSmall{\roleP}{\roleQ}{i \in I}{\gtLab[i]}{\stS[i]}{\gtGi[i]}
   \,\gtMove[\stEnvAnnotGenericSym]\, 
      \gtCommSmall{\roleP}{\roleQ}{i \in I}{\gtLab[i]}{\stS[i]}{\gtGii[i]}   
  }
\end{array}
\)}
\caption{Global type transition rules.}
\label{fig:gtype:red-rules}
\end{figure}

\begin{definition}[Global Type Semantics]
  \label{def:gtype:lts-gt}
  The global type transition %
  $\gtMove[\stEnvAnnotGenericSym]$ is inductively
  defined by the rules in~\cref{fig:gtype:red-rules}. 
  We use 
  $\gtG \,\gtMove\, 
  \gtGi$
  if there
  exists $\stEnvAnnotGenericSym$ such that
  $\gtG
  \,\gtMove[\stEnvAnnotGenericSym]\, 
  \gtGi$; 
  we write
  $\gtG \,\gtMove$
  if there
  exists $\gtGi$ such that
  $\gtG
  \,\gtMove\,  
  \gtGi$, 
  and $\gtNotMove{\gtG}$ for its negation (\ie there is no $\gtGi$ 
  such that  $\gtG
  \,\gtMove\,  
  \gtGi$). 
  Finally, 
 $\gtMoveStar$ denotes the transitive and reflexive closure of
  $\gtMove$.
\end{definition}

\Cref{fig:gtype:red-rules} depicts  the standard global type transition %
rules~\citep{DBLP:conf/icdcit/YoshidaG20}. 
Rule $\inferrule{\iruleGtMoveComm}$ models communication between two roles, 
while rule $\inferrule{\iruleGtMoveRec}$ addresses recursion.
Finally, rule $\inferrule{\iruleGtMoveCtx}$ permits the transition %
of a global type that is 
causally independent of its prefix, provided that all continuations can 
perform the transition %
with that label.  
This causal independence is indicated by the subjects of the label 
being disjoint from the prefix of the global type. For example, consider the global type 
$\gtG = \gtCommSingle{\roleP}{\roleQ}
    {\gtLab[1]}{}{
    \gtCommSingle{\roleR}{\roleFmt{u}}
    {\gtLab[2]}{}{\gtEnd
   }}$.    
   Given that the subjects of the label $\stEnvCommAnnotSmall{\roleR}{\roleFmt{u}}{\gtLab[2]}$ 
   are disjoint from $\roleP$ and $\roleQ$, \ie 
    $\ltsSubject{\stEnvCommAnnotSmall{\roleR}{\roleFmt{u}}{\gtLab[2]}} \cap \{\roleP, \roleQ\} = \emptyset$, 
    and $\gtCommSingle{\roleR}{\roleFmt{u}}{\gtLab[2]}{}{\gtEnd} \,\gtMove[\stEnvCommAnnotSmall{\roleR}{\roleFmt{u}}{\gtLab[2]}]\, \gtEnd$, rule $\inferrule{\iruleGtMoveCtx}$ allows $\gtG$ to perform the transition 
    $\gtG \,\gtMove[\stEnvCommAnnotSmall{\roleR}{\roleFmt{u}}{\gtLab[2]}]\, \gtCommSingle{\roleP}{\roleQ}{\gtLab[1]}{}{\gtEnd}$. %

\subsection{Semantics of Typing Context}
\label{sec:gtype:lts-context}

Following the introduction of global type semantics, 
we present an LTS semantics for \emph{typing contexts}, which are collections of local types. 
The formal definition of a typing context is provided in \Cref{def:mpst-env}, 
with its transition %
rules in \Cref{def:mpst-env-reduction}.

\begin{definition}[Typing Contexts]%
  \label{def:mpst-env}%
  \label{def:mpst-env-closed}%
  \label{def:mpst-env-comp}%
  \label{def:mpst-env-subtype}%
  \label{def:typing_context}
  $\mpEnvNew$ denotes a partial mapping
 from expression variables to basic types,  and from process variables to %
  sequences of  basic types and session types,
  while $\stEnvNew$ denotes a partial mapping from channels to session types. 
  Their syntax is defined as:

  \smallskip
  \centerline{\(%
  \mpEnvNew
  \;\;\coloncolonequals\;\;
  \mpEnvEmpty
  \bnfsep \mpEnvNew \mpEnvComp\, \mpEnvMap{\mpFmt{x}}{\tyGround} 
  \bnfsep
  \mpEnvNew \mpEnvComp\, \mpEnvMap{\mpX}{\tyGround[1], \ldots, \tyGround[n], \stT[1],\ldots,\stT[m]} 
   \quad\quad
  \stEnvNew
  \,\coloncolonequals\,
  \stEnvEmpty
  \bnfsep
   \stEnvNew \stEnvComp \stEnvMap{\mpC}{\stT}
  \)}%
  
  \smallskip
\noindent
  The \emph{context composition} $\stEnvNew[1] \stEnvComp \stEnvNew[2]$ 
  is defined iff $\dom{\stEnvNew[1]} \cap \dom{\stEnvNew[2]} = \emptyset$. %
  We write $\bigcup_{i \in I}\stEnv[i]$ for the composition of contexts $\stEnv[i]$. 
  We write  $\mpS \!\in\! \stEnvNew$ 
  iff 
  $\exists \roleP: \mpChanRole{\mpS}{\roleP} \!\in\! \dom{\stEnvNew}$
  (\ie session $\mpS$ occurs in $\stEnvNew$). %
  Conversely, we write 
  $\mpS \!\not\in\! \stEnvNew$ 
  iff 
  $\forall \roleP: \mpChanRole{\mpS}{\roleP} \!\not\in\! \dom{\stEnvNew}$
  (\ie session $\mpS$ does not occur in~$\stEnvNew$). %
  We write $\dom{\stEnvNew} = \setenum{\mpS}$ iff $\forall \mpC \!\in\! \dom{\stEnvNew}$ there exists 
  $\roleP$ such that $\mpC = \mpChanRole{\mpS}{\roleP}$~(\ie $\stEnvNew$ only contains session~$\mpS$). 
  We write
  $\stEnvNew \!\stSub\! \stEnvNewi$
  iff
  $\dom{\stEnvNew} \!=\! \dom{\stEnvNewi}$
  and
  $\forall \mpC \!\in\! \dom{\stEnvNew}{:}\,
  \stEnvApp{\stEnvNew}{\mpC} \!\stSub\! \stEnvApp{\stEnvNewi}{\mpC}$. 
  We write $\stEnvNew[\mpS]$ iff $\dom{\stEnvNew[\mpS]} = \setenum{\mpS}$, $\dom{\stEnvNew[\mpS]} \!\subseteq\! \dom{\stEnvNew}$, 
  and $\forall \mpChanRole{\mpS}{\roleP} \!\in\! \dom{\stEnvNew}: \stEnvApp{\stEnvNew}{\mpChanRole{\mpS}{\roleP}} = \stEnvApp{\stEnvNew[\mpS]}{\mpChanRole{\mpS}{\roleP}}$~(\ie restriction of  $\stEnvNew$ to session 
  $\mpS$). 
\end{definition}

\begin{figure}[t]
\centerline{\(%
  \begin{array}{c}
    \inference[\iruleTCtxOut]{%
      k \in I%
    }{%
      \stEnvMap{%
        \mpChanRole{\mpS}{\roleP}%
      }{%
        \stIntSum{\roleQ}{i \in I}{\stChoice{\stLab[i]}{\stS[i]} \stSeq \stT[i]}%
      }%
      \,\stEnvMoveOutAnnot{\roleP}{\roleQ}{\stChoice{\stLab[k]}{\stS[k]}}\,%
      \stEnvMap{%
        \mpChanRole{\mpS}{\roleP}%
      }{\stT[k]}%
    }%
  \quad 
    \inference[\iruleTCtxIn]{%
      k \in I%
    }{%
      \stEnvMap{%
        \mpChanRole{\mpS}{\roleP}%
      }{%
        \stExtSum{\roleQ}{i \in I}{\stChoice{\stLab[i]}{\stS[i]} \stSeq \stT[i]}%
      }%
      \,\stEnvMoveInAnnot{\roleP}{\roleQ}{\stChoice{\stLab[k]}{\stS[k]}}\,%
      \stEnvMap{%
        \mpChanRole{\mpS}{\roleP}%
      }{\stT[k]}%
    }%
    \\[2mm]
    \inference[\iruleTCtxCom]{%
      \stEnvNew[1]%
      \stEnvMoveOutAnnot{\roleP}{\roleQ}{\stChoice{\stLab}{\stS}}%
      \stEnvNewi[1]%
      &%
      \stEnvNew[2]%
      \stEnvMoveInAnnot{\roleQ}{\roleP}{\stChoice{\stLab}{\stSi}}%
      \stEnvNewi[2]%
      &%
      \stSi \!\stSub\! \stS%
    }{%
      \stEnvNew[1] \stEnvComp \stEnvNew[2]%
      \,\stEnvMoveCommAnnot{\mpS}{\roleP}{\roleQ}{\stLab}\,%
      \stEnvNewi[1] \stEnvComp \stEnvNewi[2]%
    }%
    \quad 
    \inference[\iruleTCtxRec]{%
      \stEnvMap{%
        \mpChanRole{\mpS}{\roleP}%
      }{%
        \stT\subst{\stRecVar}{\stRec{\stRecVar}{\stT}}%
      }%
      \stEnvMoveGenAnnot \stEnvNewi%
    }{%
      \stEnvMap{%
        \mpChanRole{\mpS}{\roleP}%
      }{%
        \stRec{\stRecVar}{\stT}%
      }%
      \stEnvMoveGenAnnot \stEnvNewi%
    }%
    \quad 
    \inference[\iruleTCtxCong]{%
      \stEnvNew \stEnvMoveGenAnnot \stEnvNewi%
    }{%
      \stEnvNew \stEnvComp \stEnvMap{\mpC}{\stT}
      \stEnvMoveGenAnnot%
      \stEnvNewi \!\stEnvComp \stEnvMap{\mpC}{\stT}
    }%
     \end{array}
  \)}%
    \caption{Typing context transition rules.}
  \label{fig:gtype:tc-red-rules}
\end{figure}

\begin{definition}[Typing Context Semantics]
  \label{def:mpst-env-reduction}%
  \label{def:typing_context_reduction}
  The \emph{typing context transition\; $\stEnvMoveGenAnnot$} %
  \;is inductively defined by the rules in~\cref{fig:gtype:tc-red-rules}. %
We write $\stEnvNew \!\stEnvMoveGenAnnot$ if there exists $\stEnvNewi$ 
such that $\stEnvNew \!\stEnvMoveGenAnnot\! \stEnvNewi$. %
 We define two \emph{reductions} $\stEnvNew \!\stEnvMoveWithSession[\mpS]\! \stEnvNewi$ and 
 $\stEnvNew \!\stEnvMove\! \stEnvNewi$, as follows:
\begin{itemize}
 \item $\stEnvNew \!\stEnvMoveWithSession[\mpS]\! \stEnvNewi$ holds iff 
    $\stEnvNew \stEnvMoveGenAnnot \stEnvNewi$
    with $\stEnvAnnotGenericSym = \ltsSendRecv{\mpS}{\roleP}{\roleQ}{\stLab}$ 
    for any $\roleP, \roleQ \in \roleSet$~(recall that $\roleSet$ is the set of all roles): 
    this means that $\stEnvNew$ can progress via message transmission on session $\mpS$, 
    involving any roles $\roleP$ and $\roleQ$.  
    We write $\stEnvNew \!\stEnvMoveWithSession[\mpS]$ %
    iff $\stEnvNew \!\stEnvMoveWithSession[\mpS]\! \stEnvNewi$ for some $\stEnvNewi$, %
    and $\stEnvNotMoveWithSessionP[\mpS]{\stEnvNew}$ for its negation 
    (\ie  there is no $\stEnvNewi$ such that %
    $\stEnvNew \!\stEnvMoveWithSession[\mpS]\!  \stEnvNewi$),   %
    and we denote $\stEnvMoveWithSessionStar[\mpS]$ as the reflexive and transitive closure of 
    $\stEnvMoveWithSession[\mpS]$; %
    \item $\stEnvNew \!\stEnvMove\! \stEnvNewi$ holds iff %
    $\stEnvNew \stEnvMoveWithSession[\mpS] \stEnvNewi$ for some $\mpS$: 
    this means that $\stEnvNew$ can progress via message transmission on any session.  
    We write $\stEnvMoveP{\stEnvNew}$ iff $\stEnvNew \!\stEnvMove\! \stEnvNewi$ for some $\stEnvNewi$, %
    and $\stEnvNotMoveP{\stEnvNew}$ for its negation,  %
  and we denote $\stEnvMoveStar$ %
    as the reflexive and transitive closure of $\stEnvMove$.  
  \end{itemize}
\end{definition}

\Cref{fig:gtype:tc-red-rules} presents rules for 
typing context transitions, %
which are similar to those in~\citep{POPL19LessIsMore}. 
Rule \inferrule{\iruleTCtxOut} (resp. \inferrule{\iruleTCtxIn}) 
states that a typing context entry can execute an output (resp. input) transition. 
Rule \inferrule{\iruleTCtxCom} 
ensures synchronised matching of input and output transitions, with payload compatibility through subtyping; 
consequently, the context progresses via a message transmission label 
$\stEnvCommAnnotSmall{\roleP}{\roleQ}{\stLab}$. Rule \inferrule{\iruleTCtxRec} handles recursion, while 
rule \inferrule{\iruleTCtxCong} lifts transitions %
to larger contexts.

\subsection{Relating Semantics between Global Types and Typing Contexts}
\label{sec:gtype:relating}

With the 
LTS semantics for global types~(\Cref{def:gtype:lts-gt}) 
and typing contexts~(\Cref{def:mpst-env-reduction}) defined, 
we establish a relationship between these semantics 
using the projection operator $\gtProj[]{}{}$~(\Cref{def:global-proj}) 
and the subtyping relation~$\stSub$~(\Cref{def:subtyping}).

\begin{definition}[Association of Global Types and Typing Contexts]
\label{def:assoc}
\label{def:association}
  A typing context $\stEnvNew$ is associated with %
  a global type $\gtG$ for %
  a multiparty session $\mpS$, 
  written %
  $\stEnvAssoc{\gtG}{\stEnvNew}{\mpS}$, 
  iff  $\stEnvNew$ 
  can be split into two disjoint (possibly empty) sub-contexts 
  $\stEnvNew = \stEnvNew[\,\gtProj{}{\stSub}] \stEnvComp \stEnvNew[\stEnd]$ where: 
  \begin{enumerate}
  \item $\stEnvNew_{\,\gtProj{}{\stSub}}$ contains subtypes of projections of $\gtG$: 
  $\dom{\stEnvNew[\,\gtProj{}{\stSub}]}
        =
        \setcomp{\mpChanRole{\mpS}{\roleP}}{\roleP \in \gtRoles{\gtG}}$, and 
        $ \forall \roleP \in \gtRoles{\gtG}: \stEnvApp{\stEnvNew[\,\gtProj{}{\stSub}]}{\mpChanRole{\mpS}{\roleP}} \stSub \gtProj{\gtG}{\roleP}$; 
 \item $\stEnvNew[\stEnd]$ contains only terminated endpoints: either $\stEnvNew[\stEnd] = \emptyset$, or $\dom{\stEnv[\stEnd]} = \mpS $ and $\forall \mpChanRole{\mpS}{\roleQ}  \in \dom{\stEnvNew[\stEnd]}: \stEnvApp{\stEnvNew[\stEnd]}{\mpChanRole{\mpS}{\roleQ}} = \stEnd$. 
        \end{enumerate}
\end{definition}

The association $\stEnvAssoc{\gtFmt{\cdot}}{\stFmt{\cdot}}{\mpFmt{\cdot}}$ is a
binary relation over typing contexts $\stEnvNew$ and global types $\gtG$,
parameterised by  multiparty sessions $\mpS$. 
 The association relation requires that,  
 for each role $\roleP$ in the global type, its corresponding entry in the
typing context~($\stEnvApp{\stEnvNew}{\mpChanRole{\mpS}{\roleP}}$) must be a subtype~(\Cref{def:subtyping}) 
of the projection of the global type
onto that role~($\gtProj{\gtG}{\roleP}$). 

\begin{remark}
It is evident that a typing context
$\stEnvNew =
  \setenum{\stEnvMap{\mpChanRole{\mpS}{\roleP}}{\gtProj{\gtG}{\roleP}}}_{%
    \roleP \in \gtRoles{\gtG}}$, which contains the projections of all roles in $\gtG$, 
    is associated with the global type $\gtG$ for $\mpS$. 
    \hfill $\blacktriangleleft$
\end{remark}
    
\begin{example}[Association]
\label{ex:assoc}
Recall the global type $\gtG[\text{auth}]$ and its projected local types 
$\stT[\roleFmt{s}], \stT[\roleFmt{c}], \stT[\roleFmt{a}]$ from \Cref{ex:types_oath}, as well as the local types 
$\stTi[\roleFmt{s}], \stTi[\roleFmt{c}], \stTi[\roleFmt{a}]$ from~\Cref{ex:subtyping}. 
Consider the typing context:

\smallskip
\centerline{\(
\stEnvNew[\text{auth}] = \stEnvNew[\text{auth}_{\roleFmt{s}}] \stEnvComp \stEnvNew[\text{auth}_{\roleFmt{c}}] 
\stEnvComp
\stEnvNew[\text{auth}_{\roleFmt{a}}]
\)}

\smallskip
\noindent
where $\stEnvNew[\text{auth}_{\roleFmt{s}}]  = \stEnvMap{\mpChanRole{\mpS}{\roleFmt{s}}}{\stTi[\roleFmt{s}]}$, $\stEnvNew[\text{auth}_{\roleFmt{c}}]  = \stEnvMap{\mpChanRole{\mpS}{\roleFmt{s}}}
{\stTi[\roleFmt{c}]}$, and 
$\stEnvNew[\text{auth}_{\roleFmt{a}}]  = \stEnvMap{\mpChanRole{\mpS}{\roleFmt{s}}}
{\stTi[\roleFmt{a}]}$.

Intuitively, $\stEnvNew[\text{auth}]$ is associated with $\gtG[\text{auth}]$, as $\roleFmt{s}$ sends only 
$\stLabFmt{cancel}$, while $\roleFmt{c}$ and $\roleFmt{a}$ expect 
to receive additional messages $\stLabFmt{fail}$ and $\stLabFmt{fatal}$, respectively. 
Indeed,  the entries in $\stEnvNew[\text{auth}]$ adhere to the communication behaviour patterns of each role  
of $\gtG[\text{auth}]$, 
though with fewer output messages and more 
input ones.

We can formally verify the association of $\stEnvNew[\text{auth}]$ with $\gtG[\text{auth}]$ for session $\mpS$ by: 
\begin{itemize}
\item %
$\gtRoles{\gtG[\text{auth}]} = \setenum{\roleFmt{s}, 
\roleFmt{c}, \roleFmt{a}}$ and 
$\dom{\stEnvNew[\text{auth}]} = \setenum{\mpChanRole{\mpS}{\roleFmt{s}}, \mpChanRole{\mpS}{\roleFmt{c}}, \mpChanRole{\mpS}{\roleFmt{a}}}$ %
\item $\stEnvApp{\stEnvNew[\text{auth}]}{\mpChanRole{\mpS}{\roleFmt{s}}} =  \stTi[\roleFmt{s}] 
\stSub 
\gtProj{\gtG[\text{auth}]}{\roleFmt{s}} =  \stT[\roleFmt{s}] $

\item $\stEnvApp{\stEnvNew[\text{auth}]}{\mpChanRole{\mpS}{\roleFmt{c}}} =  \stTi[\roleFmt{c}] 
\stSub 
\gtProj{\gtG[\text{auth}]}{\roleFmt{c}} =  \stT[\roleFmt{c}] $

\item $\stEnvApp{\stEnvNew[\text{auth}]}{\mpChanRole{\mpS}{\roleFmt{a}}} =  \stTi[\roleFmt{a}] 
\stSub 
\gtProj{\gtG[\text{auth}]}{\roleFmt{a}} =  \stT[\roleFmt{a}] $

\end{itemize}

\end{example}

We demonstrate the \emph{operational correspondence} between a global type 
and any \emph{associated} typing context through 
two main theorems: 
\Cref{thm:gtype:proj-sound} shows that the transition behaviour of a global type corresponds to that of 
its associated typing context, while  
\Cref{thm:gtype:proj-comp} illustrates that each possible transition %
of a typing context is reflected by 
an action in the transitions %
of the associated global type.

\begin{restatable}[Soundness of Association]{theorem}{thmProjSoundness}%
  \label{thm:gtype:proj-sound}
  \label{thm:soundness_association}
  Given associated global type $\gtG$ and typing context $\stEnvNew$ for session~$\mpS$: 
  $\stEnvAssoc{\gtG}{\stEnvNew}{\mpS}$. 
  If $\gtG \,\gtMove[\stEnvAnnotGenericSym]\, \gtGi$ where 
 $\stEnvAnnotGenericSym =   \stEnvCommAnnotSmall{\roleP}{\roleQ}{\stLab}$, 
  then there exist  $\stLabi$, $\stEnvAnnotGenericSymi$, $\stEnvNewi$, 
  and $\gtGii$, such that 
  $\stEnvAnnotGenericSymi = \stEnvCommAnnotSmall{\roleP}{\roleQ}{\stLabi}$, 
  $\gtG \,\gtMove[\stEnvAnnotGenericSymi]\, \gtGii$, 
  $\stEnvAssoc{\gtGii}{\stEnvNewi}{\mpS}$, and $\stEnvNew \stEnvMoveAnnot{\stEnvAnnotGenericSymi} \stEnvNewi$. 
\end{restatable}
\begin{proof}
  By induction on global type transitions (\Cref{def:gtype:lts-gt}).
\end{proof}

\begin{restatable}[Completeness of Association]{theorem}{thmProjCompleteness}%
  \label{thm:gtype:proj-comp}
  \label{thm:completeness_association}
  Given 
  associated global type $\gtG$ and typing context $\stEnvNew$ for session~$\mpS$:  
  $\stEnvAssoc{\gtG}{\stEnvNew}{\mpS}$. 
  If $\stEnvNew \stEnvMoveGenAnnot \stEnvNewi$ where 
  $\stEnvAnnotGenericSym = \stEnvCommAnnotSmall{\roleP}{\roleQ}{\stLab}$, %
  then there exists $\gtGi$ such that
  $\stEnvAssoc{\gtGi}{\stEnvNewi}{\mpS}$
  and
  $\gtG  \,\gtMove[\stEnvAnnotGenericSym]\, 
    \gtGi$.
\end{restatable}
\begin{proof}
  By induction on typing context transitions (\Cref{def:mpst-env-reduction}).
\end{proof}

\begin{example}[Soundness and Completeness of Association]
\label{ex:sound_comp}
Consider the associated global type $\gtG[\text{auth}]$
and typing context $\stEnvNew[\text{auth}]$ in~\Cref{ex:assoc}. 
We have the global type transition $\gtG[\text{auth}] \,\,\gtMove[\ltsSendRecv{\mpS}{\roleFmt{s}}{\roleFmt{c}}{\gtMsgFmt{login}}]$, and,  by 
the soundness of association~(\Cref{thm:gtype:proj-sound}), 
there exist  $\stEnvAnnotGenericSym = 
\ltsSendRecv{\mpS}{\roleFmt{s}}{\roleFmt{c}}{\gtMsgFmt{cancel}}$,  
$\gtGi[\text{auth}]$, and $\stEnvNewi[\text{auth}]$ 
such that 

\smallskip
\centerline{\( 
\gtG[\text{auth}] \,\,\gtMove[\ltsSendRecv{\mpS}{\roleFmt{s}}{\roleFmt{c}}{\gtMsgFmt{cancel}}]\,\,  \gtGi[\text{auth}]
= \gtCommSingle{\roleFmt{c}}{\roleFmt{a}}
    {\gtMsgFmt{quit}}{}{} \, \text{ and } \,
    \stEnvNew[\text{auth}] \stEnvMoveCommAnnot{\mpS}{\roleFmt{s}}{\roleFmt{c}}{\stLabFmt{cancel}} 
\stEnvNewi[\text{auth}] = \stEnvNewi[\text{auth}_\roleFmt{s}] \stEnvComp \stEnvNewi[\text{auth}_\roleFmt{c}]
 \stEnvComp \stEnvNewi[\text{auth}_\roleFmt{a}]
\)}

\smallskip 
\noindent
where: 

\smallskip
\centerline{\(
\begin{array}{lll}
\stEnvNewi[\text{auth}_\roleFmt{s}] &=& \stEnvMap{\mpChanRole{\mpS}{\roleFmt{s}}}
{\stEnd
}
\qquad \quad 
\stEnvNewi[\text{auth}_\roleFmt{c}] = \stEnvMap{\mpChanRole{\mpS}{\roleFmt{c}}}
{ \roleFmt{a}
 \stFmt{\oplus}
  \stLabFmt{quit} }
\\
\stEnvNewi[\text{auth}_\roleFmt{a}] &=& 
\stEnvMap{\mpChanRole{\mpS}{\roleFmt{a}}}{\stExtSum{\roleFmt{c}}{}{%
          \stChoice{\stLabFmt{passwd}}{\stFmtC{str}} \stSeq%
          \stOut{\roleServer}{%
            \stChoice{\stLabFmt{auth}}{\stFmtC{bool}}%
          }{}, 
          \stChoice{\stLabFmt{quit}}{}, 
           \stChoice{\stLabFmt{fatal}}{}     
      }}
      \end{array}
\)}

\smallskip 
\noindent
By~\Cref{def:assoc}, it is straightforward to check that  $\stEnvAssoc{\gtGi[\text{auth}]}{\stEnvNewi[\text{auth}]}{\mpS}$.
Further, $\stEnvNewi[\text{auth}]$ admits the transition: %

\smallskip
\centerline{\(
\stEnvNewi[\text{auth}] \stEnvMoveCommAnnot{\mpS}{\roleFmt{c}}{\roleFmt{a}}{\stLabFmt{quit}} 
\stEnvNewii[\text{auth}] = 
\stEnvMap{\mpChanRole{\mpS}{\roleFmt{s}}}{\stEnd}, 
\stEnvMap{\mpChanRole{\mpS}{\roleFmt{c}}}{\stEnd},  
\stEnvMap{\mpChanRole{\mpS}{\roleFmt{a}}}{\stEnd}
\)}

\smallskip
\noindent
which, by the completeness of association~(\Cref{thm:gtype:proj-comp}), relates to the transition: 
$\gtGi[\text{auth}] \,\,\gtMove[\ltsSendRecv{\mpS}{\roleFmt{c}}{\roleFmt{a}}{\gtMsgFmt{quit}}]\,\,  
\gtEnd$, with $\stEnvAssoc{\gtEnd}{\stEnvNewii[\text{auth}]}{\mpS}$. 

\end{example}

\begin{remark}[Sufficiency of Soundness Theorem]
  Inquisitive readers may question why we opted to  formulate a 
  soundness theorem that does not directly mirror the completeness theorem, 
  as in common existing literature~\citep{ICALP13CFSM}.
 This decision stems from our use of the  subtyping~(particularly 
  rule \inferrule{\iruleStSubOut}).   
  A local type in the typing context may offer fewer branches for selection 
  compared to the projected local type, 
  leading to transmission actions in the global type that remain uninhabited.

  For example, consider the global type $\gtG[\text{auth}]$ 
  and its associated typing context $\stEnvNew[\text{auth}]$ from~\Cref{ex:assoc}. 
 While the global type $\gtG[\text{auth}]$ can transition via   
$\ltsSendRecv{\mpS}{\roleFmt{s}}{\roleFmt{c}}{\gtMsgFmt{login}}$, 
no corresponding transition is available for the associated typing context $\stEnvNew[\text{auth}]$. 

Importantly, the message $\stLabi$ appearing in the soundness theorem is not arbitrary: 
it must correspond to a branch enabled by the global type and realisable by the associated local types, 
although it need not coincide with the particular label $\stLab$. Consequently, soundness of   
association requires the existence of a \emph{compatible} typing context transition, 
rather than preservation of the identical message transmission. 

 Despite this formulation, the soundness theorem is \emph{sufficient} to ensure the key properties
    guaranteed by association, including safety, deadlock-freedom, and liveness,
  as demonstrated  in~\cref{sec:gtype:pbp}.
  \hfill $\blacktriangleleft$
  \end{remark}

\subsection{Association vs Projection: Necessity of Association}
\label{sec:gtype:counterexample_projection}
As introduced in~\Cref{sec:intro}, the association defined in~\Cref{sec:gtype:relating} is applied as the invariant in the proof of subject reduction, where typing contexts used to type processes are incrementally constructed along transitions  while maintaining the association. A natural question, therefore, concerns the necessity of association: why cannot  projection be used directly in its place?

The difficulty lies in the fact that projection is not preserved under transitions.  More specifically,  if the conditions 
$\stEnvAssoc{\gtG}{\stEnv}{\mpS}$ and 
$\stEnvAssoc{\gtGi}{\stEnvi}{\mpS}$ in~\Cref{thm:gtype:proj-sound,thm:gtype:proj-comp} are replaced with 
$\stEnv = \setenum{\stEnvMap{\mpChanRole{\mpS}{\roleR}}{\gtProj{\gtG}{\roleR}}}_{\roleR \in \gtRoles{\gtG}}$ and $\stEnvi = \setenum{\stEnvMap{\mpChanRole{\mpS}{\roleR}}{\gtProj{\gtGi}{\roleR}}}_{\roleR \in \gtRoles{\gtG}}$, so that typing contexts are derived directly from projections, the operational correspondence no longer holds. The following example illustrates this issue.

Consider the global type $\gtG[\times]$: 
	\begin{align*}
		\gtCommRaw{\roleFmt{A}}{\roleFmt{B}}{
			\begin{array}{l}
				\gtCommChoice{add}{\stFmtC{int}}{\gtCommRaw{\roleFmt{B}}{\roleFmt{C}}{
					\begin{array}{l}
				\gtCommChoice{add}{\stFmtC{int}}{\gtEnd} \\
				\gtCommChoice{forward}{\stFmtC{int}}{\gtEnd}
			\end{array}
			}
			}\\
			\gtCommChoice{sub}{\stFmtC{int}}{\gtCommRaw{\roleFmt{B}}{\roleFmt{C}}{
					\begin{array}{l}
				\gtCommChoice{sub}{\stFmtC{int}}{\gtEnd}\\
				\gtCommChoice{forward}{\stFmtC{int}}{\gtEnd}
			\end{array}
		}}
		\end{array}
	} 
	\end{align*}
	and the typing context $\stEnv[\times]$: 
	\begin{align*}
		\begin{array}{c}
		\begin{array}{l}
		\stEnvMap{\mpChanRole{\mpS}{\roleFmt{A}}}{\stIntSum{\roleFmt{B}}{}{
			\begin{array}{l}
				\stChoice{\stLabFmt{add}}{\stFmtC{int}} \stSeq \stEnd \\
				\stChoice{\stLabFmt{sub}}{\stFmtC{int}} \stSeq \stEnd
			\end{array}
	}}
	\\[1.5em] 
	\stEnvMap{\mpChanRole{\mpS}{\roleFmt{C}}}{\stExtSum{\roleFmt{B}}{}{
			\begin{array}{l}
				\stChoice{\stLabFmt{add}}{\stFmtC{int}} \stSeq \stEnd \\
				\stChoice{\stLabFmt{sub}}{\stFmtC{int}} \stSeq \stEnd \\
				\stChoice{\stLabFmt{forward}}{\stFmtC{int}} \stSeq \stEnd
			\end{array}
	}}
	\end{array}
	\quad 
	\stEnvMap{\mpChanRole{\mpS}{\roleFmt{B}}}{\stExtSum{\roleFmt{A}}{}{
			\begin{array}{l}
				\stChoice{\stLabFmt{add}}{\stFmtC{int}} 
				\stSeq \stIntSum{\roleFmt{C}}{}{
			\begin{array}{l}
				\stChoice{\stLabFmt{add}}{\stFmtC{int}} \stSeq \stEnd \\
				\stChoice{\stLabFmt{forward}}{\stFmtC{int}} \stSeq \stEnd
			\end{array}}
				\\
				\stChoice{\stLabFmt{sub}}{\stEnd} 
				\stSeq \stIntSum{\roleFmt{C}}{}{
			\begin{array}{l}
				\stChoice{\stLabFmt{sub}}{\stFmtC{int}} \stSeq \stEnd \\
				\stChoice{\stLabFmt{forward}}{\stFmtC{int}} \stSeq \stEnd 
			\end{array}}
			\end{array}
	\!}}
	\end{array}
	\end{align*}
	
It is straightforward to verify that $\stEnv[\times]$ is a projection of $\gtG[\times]$ onto session $\mpS$. 
	There are two possible transitions between $\roleFmt{A}$ and $\roleFmt{B}$ in $\gtG[\times]$:  
	either $\gtMsgFmt{add}$ or $\gtMsgFmt{sub}$. 
Since both cases are symmetrical, we focus on the transition labeled $\gtMsgFmt{add}$, 
	i.e. $\stEnvAnnotGenericSym = \ltsSendRecv{\mpS}{\roleFmt{A}}{\roleFmt{B}}{\gtMsgFmt{add}}$.
	Consequently, $\gtG[\times]$ reduces to 
	\begin{align*}
		\gtGi[\times] = \gtCommRaw{\roleFmt{B}}{\roleFmt{C}}{
			\begin{array}{l}
				\gtCommChoice{add}{\stFmtC{int}}{\gtEnd} \\
				\gtCommChoice{forward}{\stFmtC{int}}{\gtEnd}
			\end{array}
			}
	\end{align*}
	The typing context obtained through projection from $\gtGi[\times]$ is: 
	\begin{align*}
		\stEnvi[\times] =~&\mpChanRole{\mpS}{\roleFmt{A}}: \stEnd \stEnvComp \;
		\mpChanRole{\mpS}{\roleFmt{B}}: \stIntSum{\roleFmt{C}}{}{
			\begin{array}{l}
				\stChoice{\stLabFmt{add}}{\stFmtC{int}} \stSeq \stEnd \\
				\stChoice{\stLabFmt{forward}}{\stFmtC{int}} \stSeq \stEnd
			\end{array}
		}\stEnvComp \;
		\mpChanRole{\mpS}{\roleFmt{C}}: \stExtSum{\roleFmt{B}}{}{
			\begin{array}{l}
				\stChoice{\stLabFmt{add}}{\stFmtC{int}} \stSeq \stEnd \\
				\stChoice{\stLabFmt{forward}}{\stFmtC{int}} \stSeq \stEnd
			\end{array}
		}
	\end{align*}
	However,  $\stEnv[\times] \stEnvMoveGenAnnot \stEnvi[\times]$ does not hold. 
	The only possible transition of $\stEnv[\times]$ with  
	$\stEnvAnnotGenericSym = \ltsSendRecv{\mpS}{\roleFmt{A}}{\roleFmt{B}}{\gtMsgFmt{add}}$ leads to:  
		\begin{align*}
		\stEnvii[\times] =~&\mpChanRole{\mpS}{\roleFmt{A}}: \stEnd\stEnvComp \;
		\mpChanRole{\mpS}{\roleFmt{B}}: \stIntSum{\roleFmt{C}}{}{
			\begin{array}{l}
				\stChoice{\stLabFmt{add}}{\stFmtC{int}} \stSeq \stEnd \\
				\stChoice{\stLabFmt{forward}}{\stFmtC{int}} \stSeq \stEnd 
			\end{array}
		}\stEnvComp \;
		\mpChanRole{\mpS}{\roleFmt{C}}: \stExtSum{\roleFmt{B}}{}{
			\begin{array}{l}
				\stChoice{\stLabFmt{add}}{\stFmtC{int}} \stSeq \stEnd \\
				\stChoice{\stLabFmt{sub}}{\stFmtC{int}} \stSeq \stEnd \\
				\stChoice{\stLabFmt{forward}}{\stFmtC{int}} \stSeq \stEnd 
			\end{array}
		}
	\end{align*}

Thus, direct projection cannot, in general, be substituted for association in~\Cref{thm:gtype:proj-sound}, and 
a similar limitation arises in~\Cref{thm:gtype:proj-comp}. 
Association therefore provides the required transition-preserving relation between global types and typing contexts, acting as the intermediate invariant for  subject reduction.

\subsection{Typing Context Properties Guaranteed via Association}
\label{sec:gtype:pbp}

The design of multiparty session type theory 
provides substantial advantages in guaranteeing  
desirable properties. 
Processes that adhere to the local types obtained from projections are 
inherently \emph{correct by construction}. 
This subsection highlights three key properties: 
\emph{communication safety}, \emph{deadlock-freedom}, 
and \emph{liveness}, 
which are ensured by typing contexts associated with global types.

\paragraph*{\bf{Communication Safety}}
\label{sec:type-system-safety}
We begin by introducing communication safety for typing contexts -- a behavioural property that ensures each role exchanges compatible messages, preventing label mismatches.

\begin{definition}[Typing Context Safety]
\label{def:mpst-env-safe}%
\label{def:safety_typing_context}
  Given a session $\mpS$, we say that
  $\predP$ is an \emph{$\mpS$-safety property} of typing contexts %
  iff, whenever $\predPApp{\stEnvNew}$, we have:

\smallskip
  \noindent%
  \begin{tabular}{@{\;\;}r@{\hskip 2mm}l}
    \inferrule{\iruleSafeComm}%
    &%
    $\stEnvMoveAnnotP{\stEnvNew}{\stEnvOutAnnot{\roleP}{\roleQ}{\stChoice{\stLab}{\stS}}}$
    \,and\,
    $\stEnvMoveAnnotP{\stEnvNew}{\stEnvInAnnot{\roleQ}{\roleP}{\stChoice{\stLabi}{\stSi}}}$
    \;\;implies\;\; %
    $\stEnvMoveAnnotP{\stEnvNew}{\stEnvCommAnnotSmall{\roleP}{\roleQ}{\stLab}}$;
    \\%
   \inferrule{\iruleSafeRec}%
    &%
    $\stEnvNew = \stEnvNewi \stEnvComp\, 
    \stEnvMap{%
        \mpChanRole{\mpS}{\roleP}%
      }{%
        \stRec{\stRecVar}{\stT}%
      }$ 
    \;\;implies\;\; %
    $\predPApp{%
      \stEnvNewi \stEnvComp\, \stEnvMap{%
        \mpChanRole{\mpS}{\roleP}%
      }{%
        \stT\subst{\stRecVar}{\stRec{\stRecVar}{\stT}}%
      }%
    }$;
    \\%
    \inferrule{\iruleMoveSession}%
    &%
    $\stEnvNew \stEnvMoveWithSession[\mpS] \stEnvNewi$
    \;\;implies\;\; %
    $\predPApp{\stEnvNewi}$.
  \end{tabular}

  \medskip
  \noindent%
  \upshape
  We say \emph{$\stEnvNew$ is $\mpS$-safe}, %
  written $\stEnvSafeSessP{\mpS}{\stEnvNew}$, %
  if $\predPApp{\stEnvNew}$  holds %
  for some $\mpS$-safety property $\predP$. %
  We say \emph{$\stEnvNew$ is safe}, %
  written $\stEnvSafeP{\stEnvNew}$, %
  if $\predPApp{\stEnvNew}$ holds %
  for some property $\predP$ which is $\mpS$-safe 
  for all sessions $\mpS$ occurring in $\dom{\stEnvNew}$.
\end{definition}

Our safety property is derived from a fundamental feature of generalised MPST systems~\citep[Def. 4.1]{POPL19LessIsMore}. 
 According to~\Cref{def:mpst-env-safe}, safety is a \emph{coinductive} property~\citep{SangiorgiBiSimCoInd}. 
 This implies that,  
  for a given a session $\mpS$, $\mpS$-safe is the largest $\mpS$-safety property, 
  including the union of all $\mpS$-safety properties. 
 To demonstrate the $\mpS$-safety of a typing context $\stEnvNew$, 
 we need to identify a property $\predP$ such that $\stEnvNew \in \predP$, 
 and then prove that $\predP$ qualifies as an $\mpS$-safety property. 
Specifically, if such a $\predP$ exists, it can be formulated as a set containing $\stEnvNew$ 
and all its reductums (via reduction $\stEnvMoveWithSessionStar[\mpS]$). 
We then verify whether all elements of $\predP$ satisfy each clause of~\Cref{def:mpst-env-safe}.

As specified by 
clause~\inferrule{\iruleSafeComm}, 
whenever two roles $\roleP$ and $\roleQ$ 
attempt communication, the receiving role $\roleQ$ must 
accommodate all output messages of the
sending role $\roleP$ with compatible payload types, ensuring the feasibility of the communication. 
Clause~\inferrule{\iruleSafeRec} unfolds %
any recursive entry, while 
clause \inferrule{\iruleMoveSession}
states that any typing context $\stEnvNewi$ reachable from $\stEnvNew$ through reduction on
session $\mpS$, must also belong to $\predP$, indicating that $\stEnvNewi$ is $\mpS$-safe as well. 
Note that any entry $\stEnvMap{\mpChanRole{\mpS}{\roleP}}{\stEnd}$ in $\stEnvNew$  
satisfies all clauses.

\begin{example}[Typing Context Safety]
\label{ex:ctx_safety}
Consider the typing context $\stEnvNew[\text{auth}]$ from~\Cref{ex:assoc}. 
We show that $\stEnvNew[\text{auth}]$ is $\mpS$-safe by inspecting 
its transitions. %
Specifically,  we have %

\smallskip
\centerline
{\(
\stEnvNew[\text{auth}]
\stEnvMoveCommAnnot{\mpS}{\roleFmt{s}}{\roleFmt{c}}{\stLabFmt{cancel}}
\cdot
\stEnvMoveCommAnnot{\mpS}{\roleFmt{c}}{\roleFmt{a}}{\stLabFmt{quit}}
\stEnvNewii[\text{auth}] = 
\stEnvMap{\mpChanRole{\mpS}{\roleFmt{s}}}{\stEnd} 
\stEnvComp 
\stEnvMap{\mpChanRole{\mpS}{\roleFmt{c}}}{\stEnd}  
\stEnvComp
\stEnvMap{\mpChanRole{\mpS}{\roleFmt{a}}}{\stEnd}
\)} 

\smallskip
\noindent
where  
each transition %
complies with all clauses of~\Cref{def:mpst-env-safe}.

The typing context $\stEnvNew[A] =  
\stEnvMap{\mpChanRole{\mpS}{\roleP}}
{\roleQ
 \stFmt{\oplus}
  \stLab[1]
   \stSeq
   \roleR
 \stFmt{\oplus}
  \stLab[3]
}
\stEnvComp 
\stEnvMap{\mpChanRole{\mpS}{\roleQ}}
{ \roleP
 \stFmt{\&}
 \stLab[2]
 }
 \stEnvComp 
\stEnvMap{\mpChanRole{\mpS}{\roleR}}
{ \roleP
 \stFmt{\&}
 \stLab[4]
}$ is \emph{not} $\mpS$-safe. 
Any property $\predP$ containing such a typing
context is \emph{not} a $\mpS$-safety property, as it violates 
\inferrule{\iruleSafeComm} of~\Cref{def:mpst-env-safe}: 
$\stEnvMoveAnnotP{\stEnvNew[A]}{\stEnvOutAnnot{\roleP}{\roleQ}
{\stChoice{\stLab[1]}{}}}$ and 
    $\stEnvMoveAnnotP{\stEnvNew[A]}{\stEnvInAnnot{\roleQ}{\roleP}
    {\stChoice{\stLab[2]}{}}}$, but 
    $\stEnvMoveAnnotP{\stEnv[A]}{\stEnvCommAnnotSmall{\roleP}{\roleQ}
   {\stLab[1]}}$ 
    does not hold.  

Finally, let's consider the typing context $\stEnvNew[B] = 
\stEnvMap{\mpChanRole{\mpS}{\roleP}}
{\roleQ
 \stFmt{\oplus}
  \stLabFmt{m(\stFmtC{real})}} 
  \stEnvComp 
  \stEnvMap{\mpChanRole{\mpS}{\roleQ}}
{\roleP
 \stFmt{\&}
  \stLabFmt{m(\stFmtC{int})}} 
  $. 
  $\stEnvNew[B]$ is \emph{not} $\mpS$-safe, as any property $\predP$ containing $\stEnvNew[B]$ 
contradicts \inferrule{\iruleSafeComm}: 
$\stEnvMoveAnnotP{\stEnvNew[B]}{\stEnvOutAnnot{\roleP}{\roleQ}
{\stChoice{\stLabFmt{m}}{\stFmtC{real}}}}$ and 
    $\stEnvMoveAnnotP{\stEnvNew[B]}{\stEnvInAnnot{\roleQ}{\roleP}
    {\stChoice{\stLabFmt{m}}{\stFmtC{int}}}}$, but 
    $\stEnvMoveAnnotP{\stEnvNew[B]}{\stEnvCommAnnotSmall{\roleP}{\roleQ}
   {\stLabFmt{m}}}$ 
    is not uphold due to $\stFmtC{int} \not \stSub  \stFmtC{real}$.  

\end{example}

\paragraph*{\bf{Deadlock-Freedom}}
\label{sec:type-system-df}
The property of deadlock-freedom ensures that a typing context does not get ``stuck'' during reduction -- that is, it either always has a transition available or has reached a terminal state.

\begin{definition}[Deadlock-Free Typing Contexts]
  \label{def:stenv-deadlock-free}
Given a session $\mpS$, a typing context $\stEnvNew$ is \emph{$\mpS$-deadlock-free}, 
written $\stEnvDFSessP{\mpS}{\stEnvNew}$,  
iff %
$\stEnvNew \!\stEnvMoveWithSessionStar[\mpS]\! \stEnvNotMoveWithSessionP[\mpS]{\stEnvNewi}$ 
implies $\forall \mpChanRole{\mpS}{\roleP} \!\in\! \dom{\stEnvNewi}: 
\stEnvApp{\stEnvNewi}{\mpChanRole{\mpS}{\roleP}}  =  %
\stEnd$. 
\end{definition}

Notably, 
a typing context that reduces infinitely adheres to deadlock-freedom, 
as it consistently undergoes further reductions. 
Alternatively, when a terminal typing context is reached, 
all entries must successfully terminate with $\stEnd$. 
Consequently, a deadlock-free typing context either continues to reduce perpetually or terminates successfully.

\begin{example}[Typing Context Deadlock-Freedom]
\label{ex:ctx_df}
 The typing 
 context $\stEnvNew[\text{auth}]$ from~\Cref{ex:assoc} is $\mpS$-deadlock-free, 
as any terminal typing context $\stEnvNewi[\text{auth}]$, reached from $\stEnvNew[\text{auth}]$ 
 via 
$\stEnvNew[\text{auth}] \!\stEnvMoveWithSessionStar[\mpS]\! 
\stEnvNotMoveWithSessionP[\mpS]{\stEnvNewi[\text{auth}]}$, contains only $\stEnd$ entries, 
which can be easily verified. 

The typing 
context 
$\stEnvNew[C] = \stEnvMap{\mpChanRole{\mpS}{\roleP}}
{ \roleQ
 \stFmt{\oplus}
 \stLab[1] 
 \stSeq
\roleR
 \stFmt{\&}
 \stLab[3] 
 } \stEnvComp 
 \stEnvMap{\mpChanRole{\mpS}{\roleQ}}
{ \roleR
 \stFmt{\oplus}
 \stLab[2] 
 \stSeq
\roleP
 \stFmt{\&}
 \stLab[1] 
 } \stEnvComp 
 \stEnvMap{\mpChanRole{\mpS}{\roleR}}
{ \roleP
 \stFmt{\oplus}
 \stLab[3] 
 \stSeq
\roleQ
 \stFmt{\&}
 \stLab[2] 
 }$ 
 is $\mpS$-safe but \emph{not} $\mpS$-deadlock-free, as 
 its inputs and outputs, despite being dual, are arranged in the incorrect order. 
Specifically, there are no possible reductions for $\stEnvNew[C]$, \ie  
$\stEnvNotMoveWithSessionP[\mpS]{\stEnvNew[C]}$, 
while none of the entries in $\stEnvNew[C]$ is a termination. 

Finally,  the typing 
context 
$
\stEnvNew[D] = \stEnvMap{\mpChanRole{\mpS}{\roleP}} 
{\stIntSum{\roleQ}{}{
\stLab[1]
\stSeq 
\roleR
 \stFmt{\oplus}
 \stLab[2], 
 \stLab[3]
}}  \stEnvComp 
 \stEnvMap{\mpChanRole{\mpS}{\roleQ}}
{ \stExtSum{\roleP}{}{
\stLab[1], 
 \stLab[2]
}
 } \stEnvComp 
 \stEnvMap{\mpChanRole{\mpS}{\roleR}}
{
\roleP
 \stFmt{\&}
 \stLab[2] 
 }$
 is $\mpS$-deadlock-free but \emph{not} $\mpS$-safe. While it consistently reaches a 
 successful termination by transmitting $\stLab[1]$ between $\roleP$ and $\roleQ$, 
 $\roleQ$ is unable to receive the message $\stLab[3]$ when $\roleP$ attempts to send it, due to a message mismatch. 
 \end{example}

\paragraph*{\bf{Liveness}}
\label{sec:type-system-live}
The liveness property ensures that every pending internal  or external 
choice eventually gets triggered through a message transmission. 
It relies on fairness -- specifically, the \emph{strong fairness of components}~\citep[Fact 2]{VanGlabbeekLICS2021} -- to guarantee that every enabled message transmission is eventually executed. 
Definitions of fair and live paths for typing contexts are provided in~\Cref{def:stenv-fairness}, and these paths are used to formalise the liveness for typing contexts in~\Cref{def:stenv-live}.

\begin{definition}[Fair, Live Paths]%
  \label{def:stenv-fairness}
  A \emph{path} %
  is %
  a
  possibly infinite sequence of typing contexts %
  $(\stEnvNew[n])_{n \in N}$, %
  where $N = \{0,1,2,\ldots\}$ is a (finite or infinite) set of consecutive natural numbers, %
  and, $\forall n, n+1 \!\in\! N$, $\stEnvNew[n] \stEnvMove \stEnvNew[n+1]$. 
  
  We say that a path $(\stEnvNew[n])_{n \in N}$ is \emph{fair for session $\mpS$} iff, %
  $\forall n \!\in\! N$:
    $\stEnvMoveAnnotP{\stEnvNew[n]}{\ltsSendRecv{\mpS}{\roleP}{\roleQ}{\stLab}}$ %
    implies $\exists k, \stLabi$ %
    such that $N \!\ni\! k \ge n$, %
    and $\stEnvNew[k] \stEnvMoveCommAnnot{\mpS}{\roleP}{\roleQ}{\stLabi} \stEnvNew[k+1]$.

  We say that a path $(\stEnvNew[n])_{n \in N}$ is \emph{live for session $\mpS$} iff, %
  $\forall n \!\in\! N$:
  \begin{enumerate}[label={(L\arabic*)}, leftmargin=*, nosep]
  \item
  \label{item:liveness:send}%
    $\stEnvMoveAnnotP{\stEnvNew[n]}{\stEnvOutAnnot{\roleP}{\roleQ}{\stChoice{\stLab}{\stS}}}$ %
    implies $\exists k, \stLabi$ %
    such that $N \ni k \ge n$ %
    and $\stEnvNew[k] \stEnvMoveCommAnnot{\mpS}{\roleP}{\roleQ}{\stLabi}
      \stEnvNew[k+1]$;
  \item
  \label{item:liveness:rcv}%
    $\stEnvMoveAnnotP{\stEnvNew[n]}{\stEnvInAnnot{\roleQ}{\roleP}{\stChoice{\stLab}{\stS}}}$ %
    implies $\exists k, \stLabi$ %
    such that $N \ni k \ge n$ %
    and $\stEnvNew[k] \!\stEnvMoveCommAnnot{\mpS}{\roleP}{\roleQ}{\stLabi}\! \stEnvNew[k+1]$. 
  \end{enumerate}
\end{definition}

A path is a (possibly infinite) sequence of transitions of a typing context. 
A path is fair for session $\mpS$ if, along the path,  
every enabled message transmission  
is eventually performed on $\mpS$. 
Similarly, a path is live for session $\mpS$ if, along the path, 
every pending internal or external choice is eventually triggered on $\mpS$.

\begin{definition}[Live Typing Contexts]
  \label{def:stenv-live}
  \label{def:typing-ctx-live}
Given a session $\mpS$, a typing context $\stEnvNew$ is \emph{$\mpS$-live},  
written $\stEnvLiveSessP{\mpS}{\stEnvNew}$, iff %
all paths starting with $\stEnvNew$ that are fair for session $\mpS$ are also live for $\mpS$.
\end{definition}

A typing context $\stEnvNew$ is $\mpS$-live if it consistently generates a live path for $\mpS$ under fairness.

\begin{example}[Fairness and Typing Context Liveness, originated from~\citep{GPPSY2023,POPL19LessIsMore}]
\label{ex:ctx_liveness}
Consider the typing context: 

\smallskip
\centerline{\(
\stEnvNew[E] = 
\stEnvMap{\mpChanRole{\mpS}{\roleP}}
{ \rolePi
 \stFmt{\oplus}
 \stLab[1] 
 }
 \stEnvComp
 \stEnvMap{\mpChanRole{\mpS}{\rolePi}}
{ \roleP
 \stFmt{\&}
 \stLab[1] 
 }
 \stEnvComp
  \stEnvMap{\mpChanRole{\mpS}{\roleQ}}
{\stRec{\stFmt{{\mathbf{t}}_{\roleQ}}}{
\roleQi
 \stFmt{\oplus}
 \stLab[2]}}
 \stSeq 
 \stFmt{{\mathbf{t}}_{\roleQ}}
 \stEnvComp
  \stEnvMap{\mpChanRole{\mpS}{\roleQi}}
{\stRec{\stFmt{{\mathbf{t}}_{\roleQi}}}{
\roleQ
 \stFmt{\&}
 \stLab[2]}}
 \stSeq 
 \stFmt{{\mathbf{t}}_{\roleQi}} 
\)}

\smallskip
\noindent
which is $\mpS$-safe and $\mpS$-deadlock-free. 
There is an infinite path starting with $\stEnvNew[E]$, where $\roleQ$ and $\roleQi$ 
continue communicating in session $\mpS$, 
while $\roleP$ and $\rolePi$ never trigger a transition to interact, \ie 

\smallskip
\centerline{\(
\stEnvNew[E] \,\stEnvMoveCommAnnot{\mpS}{\roleQ}{\roleQi}{\stLab[2]}\, \stEnvNew[E] 
\,\stEnvMoveCommAnnot{\mpS}{\roleQ}{\roleQi}{\stLab[2]}\, \cdots
\)}

\smallskip
\noindent
Such a  path is \emph{not} fair for $\mpS$ because 
although message transmission between $\roleP$ and $\rolePi$ is enabled, it will never occur. 
Alternatively, along any fair path of $\stEnvNew[E]$, 
all inputs and outputs are eventually fired on $\mpS$, indicating that $\stEnvNew[E]$ is $\mpS$-live. 

Now consider another typing context: 

\smallskip
\centerline{\(
\stEnvNewi[E] = 
 \stEnvMap{\mpChanRole{\mpS}{\roleP}}
{\stRec{\stFmt{{\mathbf{t}}_{\roleP}}}{
{ \stExtSum{\roleQ}{}{
  \begin{array}{@{}l@{}}
   \stLab[1]
   \stSeq
   \stFmt{{\mathbf{t}}_{\roleP}}, 
   \stLab[2]
   \stSeq
   \stFmt{{\mathbf{t}}_{\roleP}}
  \end{array}}}}}
\stEnvComp
\stEnvMap{\mpChanRole{\mpS}{\roleQ}}
{\stRec{\stFmt{{\mathbf{t}}_{\roleQ}}}{
{ \stIntSum{\roleP}{}{
  \begin{array}{@{}l@{}}
   \stLab[1]
   \stSeq
   \stFmt{{\mathbf{t}}_{\roleQ}}, 
   \stLab[2]
   \stSeq
   \roleR
   \stFmt{\oplus}
   \stLab[2]
   \stSeq
   \stFmt{{\mathbf{t}}_{\roleQ}}
  \end{array}}}}}
 \stEnvComp
  \stEnvMap{\mpChanRole{\mpS}{\roleR}}
{\stRec{\stFmt{{\mathbf{t}}_{\roleR}}}{
\roleQ
 \stFmt{\&}
 \stLab[2]}}
 \stSeq 
 \stFmt{{\mathbf{t}}_{\roleR}}
\)}

\smallskip
\noindent
which is $\mpS$-safe and $\mpS$-deadlock-free. 
$\stEnvNewi[E]$ has fair and live paths, where in $\mpS$, 
 $\stLab[2]$ is transmitted from $\roleQ$ to $\roleP$, and then to $\roleR$. 
 However, 
 there is also a fair path, 
 where in $\mpS$, $\stLab[1]$ is consistently  transmitted from 
 $\roleQ$ to $\roleP$: 
 
 \smallskip
 \centerline{\(
 \stEnvNewi[E] \,\stEnvMoveCommAnnot{\mpS}{\roleQ}{\roleP}{\stLab[1]}\, \stEnvNewi[E] 
\,\stEnvMoveCommAnnot{\mpS}{\roleQ}{\roleP}{\stLab[1]}\, \cdots
\)}

\smallskip
\noindent 
 In this case, $\roleR$ indefinitely awaits an input~($\stLab[2]$ from $\roleQ$) that will never arrive. 
 The path is fair, as the message to $\roleR$ is not enabled -- that is, 
 the action $\stEnvCommAnnotSmall{\roleQ}{\roleR}{\stLab[2]}$ is never selected, so its absence does not violate fairness. 
 Therefore, this path is fair for $\mpS$ but not live, meaning that $\stEnvNewi[E]$ is not $\mpS$-live.
  
  Finally, consider the typing context $\stEnvNewii[E] = 
\stEnvMap{\mpChanRole{\mpS}{\roleP}} 
{\stIntSum{\roleQ}{}{
\stLab[1], 
 \stLab[2]
}}  \stEnvComp 
 \stEnvMap{\mpChanRole{\mpS}{\roleQ}}
{ \stExtSum{\roleP}{}{
\stLab[1], 
 \stLab[3]
}
 }$, which 
 is $\mpS$-live. 
 There is only one path starting from $\stEnvNewii[E]$, \ie 
 
 \smallskip
 \centerline{\(
 \stEnvNewii[E] \,\stEnvMoveCommAnnot{\mpS}{\roleP}{\roleQ}{\stLab[1]}\, 
 \stEnvMap{\mpChanRole{\mpS}{\roleP}} 
{\stEnd}  \stEnvComp 
 \stEnvMap{\mpChanRole{\mpS}{\roleQ}}
{ \stEnd
 }
 \)}
 
 \smallskip
 \noindent
 which is fair and live for $\mpS$; however it is not $\mpS$-safe. 
\end{example}

\paragraph{\bf{Properties by Association}}
To conclude, \Cref{cor:allproperties} asserts that a typing context associated with a global type 
is constructed to guarantee the properties of safety, deadlock-freedom, and liveness.

\begin{restatable}[Safety, Deadlock-Freedom, and Liveness by Association]{theorem}{colProjAll}%
  \label{cor:allproperties}
Let $\gtG$ be a global type, $\stEnvNew$ a typing context, and $\mpS$ a session.  
If $\stEnvNew$ is associated with $\gtG$ for %
$\mpS$:  
 $\stEnvAssoc{\gtG}{\stEnvNew}{\mpS}$, 
  then $\stEnvNew$ is $\mpS$-safe, $\mpS$-deadlock-free, and $\mpS$-live.
\end{restatable}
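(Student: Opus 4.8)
The plan is to derive each of the three properties from the operational correspondence of \Cref{thm:gtype:proj-sound} and \Cref{thm:gtype:proj-comp}, preceded by one preliminary observation. If $\stEnvAssoc{\gtG}{\stEnv}{\mpS}$ and $\stEnv \stEnvMoveWithSessionStar[\mpS] \stEnvi$, then $\stEnvAssoc{\gtGi}{\stEnvi}{\mpS}$ for some $\gtGi$ with $\gtG \gtMoveStar \gtGi$ (immediate by iterating \Cref{thm:gtype:proj-comp}); moreover a context step on a session other than $\mpS$ rewrites only non-$\mpS$ entries and hence preserves $\stEnvAssoc{\gtG}{\cdot}{\mpS}$ with the same $\gtG$. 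Since $\mpS$-safety, $\mpS$-deadlock-freedom and $\mpS$-liveness are all closed under "pass to an arbitrary $\stEnvMoveWithSessionStar[\mpS]$-reductum", it therefore suffices to establish each property at $\stEnv$ itself, the reductum clauses of \Cref{def:stenv-deadlock-free} and \Cref{def:stenv-live} being discharged by the invariance above.

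\textbf{Safety.} I would take as candidate the $\mpS$-indexed property $\predP = \setcomp{\stEnvi}{\exists\, \gtGi : \stEnvAssoc{\gtGi}{\stEnvi}{\mpS}}$ and verify the three clauses of \Cref{def:mpst-env-safe}; as $\stEnv \in \predP$ this gives $\stEnvSafeSessP{\mpS}{\stEnv}$. Clause \inferrule{\iruleMoveSession} is exactly \Cref{thm:gtype:proj-comp}. Clause \inferrule{\iruleSafeRec} holds because $\stSub$ is closed under unfolding of recursive types (\inferrule{\iruleStSubRecL}, \inferrule{\iruleStSubRecR}), so replacing an entry $\stRec{\stRecVar}{\stT}$ by its unfolding preserves each $\gtProj{\gtGi}{\roleP} \stSub \stEnvApp{\stEnvi}{\mpChanRole{\mpS}{\roleP}}$ and hence the association. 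The substantive clause is \inferrule{\iruleSafeComm}: from "$\roleP$'s entry can output $\stLab$ to $\roleQ$" and "$\roleQ$'s entry can input from $\roleP$" I use the shape of projection (\Cref{def:global-proj}): after unfolding leading recursion, $\gtProj{\gtGi}{\roleP}$ must be an internal choice towards $\roleQ$, and being a \emph{subtype} of $\roleP$'s entry (\inferrule{\iruleStSubOut}) it contains the label $\stLab$ with a payload at least as large; whether $\gtGi$'s head transmission is $\roleP \to \roleQ$ directly or a causally independent one (in which case $\roleP$'s and $\roleQ$'s projections are assembled with the merge operator, which keeps $\roleP \to \roleQ$ hereditarily at the top), $\gtProj{\gtGi}{\roleQ}$ is the matching external choice from $\roleP$ carrying \emph{all} of $\roleP$'s labels, so $\roleQ$'s entry — a supertype (\inferrule{\iruleStSubIn}) — accepts $\stLab$ with a payload supertyping $\roleP$'s; thus \inferrule{\iruleTCtxCom} applies and $\stEnvMoveAnnotP{\stEnvi}{\stEnvCommAnnotSmall{\roleP}{\roleQ}{\stLab}}$.

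\textbf{Deadlock-freedom and liveness.} For deadlock-freedom, let $\stEnv \stEnvMoveWithSessionStar[\mpS] \stEnvi$ with $\stEnvNotMoveWithSessionP[\mpS]{\stEnvi}$; by invariance $\stEnvAssoc{\gtGi}{\stEnvi}{\mpS}$. If $\gtGi \neq \gtEnd$, then (using contractiveness, unfolding leading recursion) $\gtGi$ exposes a transmission, which reduces by \inferrule{\iruleGtMoveComm}, so \Cref{thm:gtype:proj-sound} yields an $\mpS$-labelled step of $\stEnvi$ — contradicting $\stEnvNotMoveWithSessionP[\mpS]{\stEnvi}$; hence $\gtGi = \gtEnd$, $\gtRoles{\gtGi} = \emptyset$, and by \Cref{def:assoc} every session-$\mpS$ entry of $\stEnvi$ is $\stEnd$, i.e.\ $\stEnvDFSessP{\mpS}{\stEnv}$. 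For liveness it suffices, by invariance, to show every path from an associated $\stEnv$ that is fair for $\mpS$ is live for $\mpS$; note such a path makes progress on $\mpS$ whenever an $\mpS$-transmission is enabled, so steps on other sessions cannot starve $\mpS$. Along the path I maintain the association $\stEnvAssoc{\gtG[n]}{\stEnv[n]}{\mpS}$ and, given a pending output $\stEnvMoveAnnotP{\stEnv[n]}{\stEnvOutAnnot{\roleP}{\roleQ}{\stChoice{\stLab}{\stS}}}$, argue (as in the safety clause) that it corresponds to a transmission $\roleP \to \roleQ$ scheduled in $\gtG[n]$; since $\roleP$'s entry can only change by $\roleP$ performing a transmission, either the output eventually fires — which is \ref{item:liveness:send} — or it stays pending forever, in which case that scheduled transmission is never realised even though, by \Cref{thm:gtype:proj-comp} and \Cref{thm:gtype:proj-sound}, the global type's residual eventually has it as a (commuting) head and it becomes enabled in some $\stEnv[m]$, so $\mpS$-fairness forces it — a contradiction. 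The symmetric argument for receivers discharges \ref{item:liveness:rcv}, yielding $\stEnvLiveSessP{\mpS}{\stEnv}$.

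\textbf{Main obstacle.} The delicate point pervading the \inferrule{\iruleSafeComm} clause and the liveness argument is the \emph{slack between an entry of $\stEnv$ and the projection it supertypes}: by \Cref{def:subtyping} an entry may offer fewer selections and more branches than $\gtG$ prescribes, and projections onto non-participating roles are built with the merge operator of \Cref{def:global-proj}. Pinning down precisely how the head of the (recursion-unfolded, causally-commuted) current global type constrains the shapes of all relevant entries — so that "$\roleP$ can send" genuinely entails a $\roleP \to \roleQ$ transmission somewhere in $\gtG$ and "$\roleQ$ can receive from $\roleP$" genuinely matches both its label and, via subtyping transitivity, its payload — is, I expect, the crux of the proof; it rests on a careful case analysis over the projection clauses combined with the causal-independence rule \inferrule{\iruleGtMoveComm}.
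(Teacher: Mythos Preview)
Your proposal is correct and follows essentially the same approach as the paper: the three properties are proved separately via the operational correspondence theorems, with safety established by taking the set of associated contexts as the coinductive candidate, deadlock-freedom by global-type progress plus soundness, and liveness by contradiction using a case analysis on whether the current global type's head is the pending transmission or a causally independent one. The ``main obstacle'' you single out is exactly what the paper isolates as a standalone lemma (Matching Communication Under Projection, \Cref{lem:comm-match}, together with the inversion lemmas \Cref{lem:inv-proj} and \Cref{lem:proj:inversion}); your inline description of that case analysis matches theirs. One incidental remark: since $\stEnvAssoc{\gtG}{\stEnv}{\mpS}$ forces $\dom{\stEnv}$ to contain only session-$\mpS$ endpoints, your concern about steps on other sessions in the liveness argument is vacuous here, though harmless.
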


\begin{example}[Typing Context Properties Guaranteed by Association]
\label{ex:prop_by_proj}
The typing context $\stEnvNew[\text{auth}]$, described in~\Cref{ex:assoc},  
is associated with $\gtG[\text{auth}]$ 
for session $\mpS$, \ie 
$\stEnvAssoc{\gtG[\text{auth}]}{\stEnvNew[\text{auth}]}{\mpS}$. 
Consequently, $\stEnvNew[\text{auth}]$ possesses the desirable properties of 
being $\mpS$-safe, $\mpS$-deadlock-free~(as demonstrated in~\Cref{ex:ctx_safety} and~\Cref{ex:ctx_df}, respectively), 
and $\mpS$-live.
\end{example}

\subsection{Relationships Between Typing Context Properties}
\label{sec:sync:relation}%

We now examine both the relationships between 
typing context properties, and how they relate to association, 
as formalised in~\Cref{lem:liveness-safety-implic}.

\begin{restatable}{theorem}{lemCtxSafetyImplications}
  \label{lem:liveness-safety-implic}%
  For any typing context $\stEnvNew$ and session $\mpS$, the following 
  statements 
  are valid:

  \smallskip%
  \noindent%
  \begin{minipage}{0.6\linewidth}
    \begin{enumerate}[leftmargin=*,nosep,label=(\arabic*),ref=\arabic*]
    \item
    \label{item:liveness-safety-implic:df-safe}%
      $\stEnvDFSessP{\mpS}{\stEnvNew} \mathrel{{\notImpliedBy}\,{\notImplies}}%
      \stEnvSafeSessP{\mpS}{\stEnvNew}$;%
      
    \item
    \label{item:liveness-safety-implic:live-safe}%
      $\stEnvLiveSessP{\mpS}{\stEnvNew} \mathrel{{\notImpliedBy}\,{\notImplies}}%
      \stEnvSafeSessP{\mpS}{\stEnvNew}$;%
      
    \item
    \label{item:liveness-safety-implic:live-df}%
      $\stEnvLiveSessP{\mpS}{\stEnvNew} \mathrel{{\notImpliedBy}\!\!{\implies}}%
      \stEnvDFSessP{\mpS}{\stEnvNew}$;%

    \item
    \label{item:liveness-safety-implic:asso-safe}%
      $\exists \gtG: \stEnvAssoc{\gtG}{\stEnvNew}{\mpS}$ %
      $\mathrel{{\notImpliedBy}\!\!{\implies}}$ %
      $\stEnvSafeSessP{\mpS}{\stEnvNew}$;%
      
    \item
    \label{item:liveness-safety-implic:asso-df}%
      $\exists \gtG: \stEnvAssoc{\gtG}{\stEnvNew}{\mpS}$ %
      $\mathrel{{\notImpliedBy}\!\!{\implies}}$ %
      $\stEnvDFSessP{\mpS}{\stEnvNew}$;%
      
    \item
    \label{item:liveness-safety-implic:asso-live}%
      $\exists \gtG: \stEnvAssoc{\gtG}{\stEnvNew}{\mpS}$ %
      $\mathrel{{\notImpliedBy}\!\!{\implies}}$ %
      $\stEnvLiveSessP{\mpS}{\stEnvNew}$.%
    \end{enumerate}
  \end{minipage}
  \;%
  \begin{minipage}{0.35\linewidth}
    \includegraphics[width=1\linewidth]{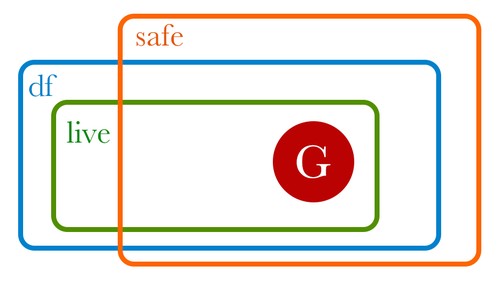}%
  \end{minipage}
\end{restatable}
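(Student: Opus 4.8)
The plan is to separate the six claims into the positive implications and the separating counterexamples. The forward directions of~\ref{item:liveness-safety-implic:asso-safe}, \ref{item:liveness-safety-implic:asso-df}, and~\ref{item:liveness-safety-implic:asso-live}, i.e.\ that $\exists\gtG: \stEnvAssoc{\gtG}{\stEnv}{\mpS}$ entails $\mpS$-safety, $\mpS$-deadlock-freedom, and $\mpS$-liveness respectively, are immediate: I fix a witness $\gtG$ with $\stEnvAssoc{\gtG}{\stEnv}{\mpS}$ and apply \Cref{cor:allproperties}, which yields all three properties at once. The only implication that needs genuine work is the forward direction of~\ref{item:liveness-safety-implic:live-df}, namely $\stEnvLiveSessP{\mpS}{\stEnv} \implies \stEnvDFSessP{\mpS}{\stEnv}$.

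For that direction I will argue by contraposition against \Cref{def:stenv-deadlock-free}. Suppose $\stEnv \stEnvMoveWithSessionStar[\mpS] \stEnvi$ with $\stEnvNotMoveWithSessionP[\mpS]{\stEnvi}$, but some entry $\stEnvApp{\stEnvi}{\mpChanRole{\mpS}{\roleP}} \neq \stEnd$. By the contractiveness requirement on recursion, finitely many applications of \inferrule{\iruleTCtxRec} unfold this entry to an internal or external choice, so composing with \inferrule{\iruleTCtxOut} or \inferrule{\iruleTCtxIn} and \inferrule{\iruleTCtxCong} shows that $\stEnvi$ has a pending action, say $\stEnvMoveAnnotP{\stEnvi}{\stEnvOutAnnot{\roleP}{\roleQ}{\stChoice{\stLab}{\stS}}}$ (the input case is symmetric). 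Now take any maximal path $(\stEnv[n])_{n \in N}$ with $\stEnv[0] = \stEnvi$; since $\stEnvNotMoveWithSessionP[\mpS]{\stEnvi}$ and, by \inferrule{\iruleTCtxCong}, reductions on sessions other than $\mpS$ leave the session-$\mpS$ entries untouched, every $\stEnv[n]$ is still stuck on $\mpS$ and still exhibits that pending action. Hence no transmission label $\ltsSendRecv{\mpS}{\roleP}{\roleQ}{\stLab}$ is ever enabled along the path, so it is \emph{vacuously} fair for $\mpS$ (\Cref{def:stenv-fairness}), yet the pending $\roleP$-to-$\roleQ$ action is never matched by a communication, so clause~\ref{item:liveness:send} fails and the path is not live for $\mpS$ --- contradicting $\stEnvLiveSessP{\mpS}{\stEnv}$. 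I expect this to be the main obstacle: it requires carefully reconciling the definitions of path, fairness and liveness with contractiveness, and in particular the observation that a context stuck on $\mpS$ with a non-$\stEnd$ entry always retains a pending output/input that fairness alone does not force to fire.

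The remaining claims are separations, for which I will reuse the running examples. For~\ref{item:liveness-safety-implic:df-safe}: $\stEnv[C]$ of \Cref{ex:ctx_df} is $\mpS$-safe but not $\mpS$-deadlock-free, and $\stEnv[D]$ of \Cref{ex:ctx_df} is $\mpS$-deadlock-free but not $\mpS$-safe. For~\ref{item:liveness-safety-implic:live-safe}: $\stEnvii[E]$ of \Cref{ex:ctx_liveness} is $\mpS$-live but not $\mpS$-safe, and $\stEnvi[E]$ of \Cref{ex:ctx_liveness} is $\mpS$-safe but not $\mpS$-live. For the non-converse of~\ref{item:liveness-safety-implic:live-df}, the same $\stEnvi[E]$ is $\mpS$-deadlock-free but not $\mpS$-live. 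Finally, the non-converses of~\ref{item:liveness-safety-implic:asso-safe}--\ref{item:liveness-safety-implic:asso-live} follow by the contrapositive of \Cref{cor:allproperties}: $\stEnv[C]$ is $\mpS$-safe but cannot be associated with any $\gtG$ (otherwise it would be $\mpS$-deadlock-free); $\stEnv[D]$ is $\mpS$-deadlock-free but cannot be associated with any $\gtG$ (otherwise it would be $\mpS$-safe); and $\stEnvii[E]$ is $\mpS$-live but cannot be associated with any $\gtG$ (again, otherwise it would be $\mpS$-safe). Combining the implications with these separations gives exactly the six statements, in accordance with the containment picture accompanying the theorem.
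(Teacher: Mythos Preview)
Your proposal is correct and follows essentially the same architecture as the paper: the forward directions of \ref{item:liveness-safety-implic:asso-safe}--\ref{item:liveness-safety-implic:asso-live} are dispatched by \Cref{cor:allproperties}, the implication in \ref{item:liveness-safety-implic:live-df} is argued by contradiction from a stuck reductum with a non-$\stEnd$ entry, and the separations come from the running examples.

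Two minor differences are worth noting. For \ref{item:liveness-safety-implic:live-df}, the paper takes the simplest possible witness path, namely the singleton $(\stEnvi[n])_{n\in\{0\}}$ with $\stEnvi[0]=\stEnvi$: since $\stEnvNotMoveWithSessionP[\mpS]{\stEnvi}$, fairness for $\mpS$ is vacuous and the pending action immediately violates liveness. Your maximal-path argument is also correct but does a bit more work than needed. For the non-converses of \ref{item:liveness-safety-implic:asso-safe}--\ref{item:liveness-safety-implic:asso-live}, the paper invokes the dedicated \Cref{ex:ctx_not-assoc} ($\stEnv[F]$, which is $\mpS$-live yet not associated because of a recursion variable in payload position), whereas you instead reuse $\stEnv[C]$, $\stEnv[D]$, $\stEnvii[E]$ and appeal to the contrapositive of \Cref{cor:allproperties}. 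Your route is a nice economy --- it avoids introducing a fresh example --- while the paper's route has the advantage of exhibiting a single context that is simultaneously safe, deadlock-free, and live yet still outside the associated set, which is exactly what the red region $\mathbb{G}$ in the diagram is meant to illustrate.
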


In the diagram, the ``safe'' set~(resp. ``df'' set, ``live'' set ) 
contains all typing contexts that 
are \linebreak 
$\mpS$-safe~(resp. $\mpS$-deadlock-free, $\mpS$-live). 
The red set %
$\mathbb{\gtG}$ encompasses 
all typing contexts associated with  some global type for~$\mpS$. 
The negated implications stated in~\Cref{lem:liveness-safety-implic} 
are illustrated in~\Cref{ex:ctx_df}, \Cref{ex:ctx_liveness}, and 
\Cref{ex:ctx_not-assoc}, respectively.

\begin{example}[Non-Associated Typing Context, adapted from~\citep{Bernardi2016LMCS,POPL19LessIsMore}]
\label{ex:ctx_not-assoc}
The typing context \linebreak
$\stEnvNew[F] = \stEnvMap{%
    \mpChanRole{\mpS}{\roleP}%
  }{%
   \stT
  }%
  \stEnvComp%
  \stEnvMap{%
    \mpChanRole{\mpS}{\roleQ}%
  }{%
    \roleP \stFmt{\&}
   \stLabFmt{\stLab}(\stT) 
    \stSeq \stEnd%
  }$, 
  with 
  $\stT%
  = \stRec{\stRecVar}{\stOut{\roleQ}{\stLab}{\stRecVar}} \stSeq \stEnd$~(from \citep[Ex.\,1.2]{Bernardi2016LMCS}), 
  is $\mpS$-safe, $\mpS$-deadlock-free, and $\mpS$-live, 
  but \emph{not} associated with any global type for session $\mpS$. 
  This is due to the occurrence of the recursion variable $\stRecVar$ as a payload in $\stT$,  
  which is not allowed by~\Cref{def:subtyping} since the subtyping relation provides no rules for recursion variables;  consequently,  the subtyping condition in~\Cref{def:assoc} is not satisfied.  
\end{example}

\section{Multiparty Session Typing System}
\label{sec:typesystem}
This section presents a type system for the multiparty session $\pi$-calculus~(defined in~\Cref{sec:processes}). 
In~\Cref{sec:type-system:tyrules}, we introduce the typing rules. 
\Cref{sec:subject_reduction,sec:session_fidelity} demonstrate the main properties of typed processes: \emph{subject reduction} and \emph{session fidelity}. 
Finally,~\Cref{sec:typed-process-property} shows how process properties such as  
deadlock-freedom 
and liveness 
can be guaranteed by construction.

\subsection{Typing Rules}%
\label{sec:type-system:tyrules}

\begin{figure}[t!]
  \centerline{\(
  \begin{array}{c} 
  \mpEnvEntails{\mpEnvNew}{\mpNat}{\tyNat}
  \quad 
   \mpEnvEntails{\mpEnvNew}{\mpInt}{\tyInt}
   \quad 
    \mpEnvEntails{\mpEnvNew}{\mpTrue}{\tyBool}
    \quad 
    \mpEnvEntails{\mpEnvNew}{\mpFalse}{\tyBool}
    \quad 
    \mpEnvEntails{\mpEnvNew}{\mpString{}}{\tyString}
    \quad 
    \mpEnvEntails{\mpEnvNew}{\mpUnit{}}{\tyUnit}
     \quad 
     \mpEnvEntails{\mpEnvNew \mpEnvComp \mpFmt{x}\mathbin{\!:\!}\tyGround}{\mpFmt{x}}{\tyGround}
  \\[2mm]
   \inference[]{\mpEnvEntails{\mpEnvNew}{\mpE}{\tyNat}}{\mpEnvEntails{\mpEnvNew}{\mpSucc{\mpE}}{\tyNat}}
   \quad 
   \inference[]{\mpEnvEntails{\mpEnvNew}{\mpE}{\tyInt}}{\mpEnvEntails{\mpEnvNew}{\mpNeg{\mpE}}{\tyNat}}
   \quad 
   \inference[]{\mpEnvEntails{\mpEnvNew}{\mpE}{\tyBool}}{\mpEnvEntails{\mpEnvNew}{\neg \mpE}{\tyBool}}
  \quad
   \inference[]{\mpEnvEntails{\mpEnvNew}{\mpE[1]}{\tyInt}
   &
   \mpEnvEntails{\mpEnvNew}{\mpE[2]}{\tyInt}
   }{\mpEnvEntails{\mpEnvNew}{\mpE[1] < \mpE[2]}{\tyBool}}
   \quad 
    \inference[]{\mpEnvEntails{\mpEnvNew}{\mpE[1]}{\tyGround}
   &
   \mpEnvEntails{\mpEnvNew}{\mpE[2]}{\tyGround}
   }{\mpEnvEntails{\mpEnvNew}{\mpE[1] \otimes \mpE[2]}{\tyGround}}
   \\[1em]
  \hdashline
  \\
  \inference[\iruleMPEnd]
  {\forall \mpC \!\in\! \dom{\stEnvNew} 
&
\stEnvApp{\stEnvNew}{\mpC} = \stEnd}
  { \stEnvEndP{\stEnv}}
  \qquad 
\inference[\iruleMPNil]{
\stEnvEndP{\stEnv}
    }{
      \stJudgeNew{\mpEnvNew}{\stEnvNew}{\mpNil}
      }
\qquad 
       \inference[\iruleMPSub]{
      \stJudgeNew{\mpEnvNew}{\stEnvNew}{\mpP} 
      & 
      \stEnvNew \stSub \stEnvNewi
    }{
     \stJudgeNew{\mpEnvNew}{\stEnvNewi}{\mpP}
    }
    \\[2mm]
    \inference[\iruleMPCall]{
         \forall i \!\in\! 1 \ldots n
         &
          \mpEnvEntails{\mpEnvNew}{\mpE[i]}{\tyGround[i]}
        &
        \stEnvEndP{\stEnv}
    }{
      \stJudgeNew{\mpEnvNew \mpEnvComp \mpEnvMap{\mpX}{\tyGround[1], \ldots, \tyGround[n], \stT[1],\ldots,\stT[m]}}{
        \stEnvNew \stEnvComp 
        \stEnvMap{\mpC[1]}{\stT[1]} \stEnvComp \ldots \stEnvComp \stEnvMap{\mpC[m]}{\stT[m]}
      }{
        \mpCall{\mpX}{\mpE[1],\ldots,\mpE[n], \mpC[1], \ldots, \mpC[m]}
      }
    }
    \\[3mm]
    \inference[\iruleMPDef]{
      \begin{array}{c}
        \stJudgeNew{
          \mpEnvNew \mpEnvComp
          \mpEnvMap{\mpX}{\tyGround[1],\ldots,\tyGround[n], \stT[1], \ldots, \stT[m]}
          \mpEnvComp
          \mpEnvMap{\mpFmt{x_1}}{\tyGround[1]}
          \mpEnvComp \ldots \mpEnvComp \mpEnvMap{\mpFmt{x_n}}{\tyGround[n]}
        }{
          \stEnvMap{y_1}{\stT[1]}
          \stEnvComp \ldots \stEnvComp
          \stEnvMap{y_m}{\stT[m]}
        }{
          \mpP
        }
        \\[0.5mm]
        \stJudgeNew{
          \mpEnvNew \mpEnvComp
          \mpEnvMap{\mpX}{\tyGround[1],\ldots,\tyGround[n], \stT[1], \ldots, \stT[m]}
        }{
          \stEnvNew
        }{
          \mpQ
        }
        \end{array}
    }{
      \stJudge{\mpEnvNew}{
        \stEnvNew
      }{
        \mpDef{\mpX}{
          \mpEnvMap{\mpFmt{x_1}}{\tyGround[1]}, 
          \ldots,
          \mpEnvMap{\mpFmt{x_n}}{\tyGround[n]}, 
          \stEnvMap{y_1}{\stT[1]}, 
          \ldots, 
          \stEnvMap{y_m}{\stT[m]}
        }{\mpP}{\mpQ}
      }
    }
    \\[3mm]
      \inference[\iruleMPPar]{
      \stJudgeNew{\mpEnvNew}{
        \stEnvNew[1]
      }{
        \mpP[1]
      }
      &
      \stJudgeNew{\mpEnvNew}{
        \stEnvNew[2]
      }{
        \mpP[2]
      }%
    }{
      \stJudgeNew{\mpEnvNew}{
        \stEnvNew[1] \stEnvComp \stEnvNew[2]
      }{
        \mpP[1] \mpPar \mpP[2]
      }
    }
    \qquad 
    \inference[\iruleMPIf]{
    \mpEnvEntails{\mpEnvNew}{\mpE}{\tyBool}
    &
    \stJudgeNew{\mpEnvNew}{\stEnvNew}{\mpP}
    &
    \stJudgeNew{\mpEnvNew}{\stEnvNew}{\mpQ}
     }{
     \stJudgeNew{\mpEnvNew}{\stEnvNew}{\mpIf{\mpE}{\mpP}{\mpQ}}
     }
     \\[3mm]
    \inference[\iruleMPBranch]{
        \forall i \!\in\! I
        &
        \left\{
        \begin{array}{c}
        \stJudgeNew{\mpEnvNew \mpEnvComp \stEnvMap{z_i}{\tyGround}}{
          \stEnvNew \stEnvComp
                     \stEnvMap{\mpC}{\stT[i]}
        }{
          \mpP[i]
        } \,\,\,\, \text{if $\stS[i] = \tyGround$}
        \\
       \stJudgeNew{\mpEnvNew}{
          \stEnvNew \stEnvComp \stEnvMap{z_i}{\stT} \stEnvComp 
                     \stEnvMap{\mpC}{\stT[i]}
        }{
          \mpP[i]
        } \,\,\,\, \text{if $\stS[i] = \stT$}
        \end{array}
      \right\}
    }{
      \stJudgeNew{\mpEnvNew}{
        \stEnvNew \stEnvComp \stEnvMap{\mpC}{\stExtSum{\roleQ}{i \in I}{\stChoice{\stLab[i]}{\tyS[i]} \stSeq \stT[i]}}
      }{
        \mpBranch{\mpC}{\roleQ}{i \in I}{\mpLab[i]}{z_i}{\mpP[i]}{}
      }
    }
      \\[3mm]
    \inference[\iruleMPSelV]{
      \mpEnvEntails{\mpEnvNew}{\mpE}{
      \tyGround
      }
      &
      \stJudgeNew{\mpEnvNew}{
        \stEnvNew \stEnvComp \stEnvMap{\mpC}{\stT}
      }{
        \mpP
      }
    }{
      \stJudgeNew{\mpEnvNew}{
        \stEnvNew \stEnvComp \stEnvMap{\mpC}{\stIntSum{\roleQ}{}{\stChoice{\stLab}{\tyGround} \stSeq \stT}}
      }{
        \mpSel{\mpC}{\roleQ}{\mpLab}{\mpE}{\mpP}
      }
    }
   \quad  
    \inference[\iruleMPSelD]{
      \stJudgeNew{\mpEnvNew}{
        \stEnvNew \stEnvComp \stEnvMap{\mpC}{\stT}
      }{
        \mpP
      }
    }{
      \stJudgeNew{\mpEnvNew}{
        \stEnvNew \stEnvComp \stEnvMap{\mpC}{\stIntSum{\roleQ}{}{\stChoice{\stLab}{\stTi} \stSeq \stT}} \stEnvComp 
        \stEnvMap{\mpCi}{\stTi}
      }{
        \mpSel{\mpC}{\roleQ}{\mpLab}{\mpCi}{\mpP}
      }
    }
   \\[3mm]  
       \inference[\iruleMPGlobalRes]{
       \stEnvNewi =
  \setenum{\stEnvMap{\mpChanRole{\mpS}{\roleP}}{\gtProj{\gtG}{\roleP}}}_{%
    \roleP \in \gtRoles{\gtG}}
      &
      \mpS \!\not\in\! \stEnvNew
      &
      \stJudgeNew{\mpEnvNew}{
        \stEnvNew \stEnvComp \stEnvNewi %
      }{
        \mpP
      }
    }{
      \stJudgeNew{\mpEnvNew}{
        \stEnvNew 
      }{
        \mpRes{\stEnvMap{\mpS}{\stEnvNewi}}\mpP
      }
    }
  \end{array}
  \)}%
  \caption{
    Typing rules for expressions (top) and processes (bottom). 
  }
  \label{fig:mpst-rules}
  \label{fig:typing_rules}
  \label{fig:typing_rules_all}
\end{figure}

 Two kinds of typing contexts, as introduced in~\Cref{def:mpst-env}, are used in our type system:
$\mpEnvNew$, which assigns a sequence of basic types and session types 
to each process variable $\mpX$, as well as a basic type to each expression variable $\mpFmt{x}$; 
and $\stEnvNew$, which maps %
channels to session types. 
$\mpEnvNew$ is utilised in judgements for expressions, while 
both $\mpEnvNew$ and $\stEnvNew$ are 
jointly applied in judgements for processes. 
The typing judgements are formulated as:

\medskip
\centerline
{\(
\mpEnvEntails{\mpEnvNew}{\mpE}{\tyGround} \quad \text{and} \quad
\stJudgeNew{\mpEnvNew}{\stEnvNew}{\mpP}
\)}

\medskip
\noindent%
The judgement for expressions is standard: given the expression variables and basic types in $\mpEnvNew$, 
$\mpE$ is of basic type $\tyGround$. For processes, based on the types assigned 
to expressions and process variables in $\mpEnvNew$, 
$\mpP$ uses its channels \emph{linearly} as specified by $\stEnvNew$.

The typing system is defined inductively by the rules depicted in~\Cref{fig:typing_rules_all}. %
 The typing rules for expressions are straightforward, while we focus on elaborating the rules for processes.

Rule \inferrule{\iruleMPEnd} 
introduces a predicate $\stEnvEndP{\cdot}$ on typing
contexts, which denotes the termination of all endpoints (note that $\stEnvEndP{\stEnvEmpty}$ holds).  
This predicate is used in
\inferrule{\iruleMPNil} to type an inactive process $\mpNil$. 
Rule \inferrule{\iruleMPSub} incorporates subtyping within typing contexts. 
Rules~\inferrule{\iruleMPCall} and~\inferrule{\iruleMPDef} handle 
recursive process calls and declarations, respectively. 
Rule~\inferrule{\iruleMPPar} divides the typing context \emph{linearly} into two parts,
each used to type one of the sub-processes.  
Rule~\inferrule{\iruleMPIf} is used for typing conditionals. 
In rule~\inferrule{\iruleMPBranch}, we distinguish whether the payload in each branch is a basic type or a session type. 
Similarly, for typing selections, we apply two rules,  \inferrule{\iruleMPSelV} and \inferrule{\iruleMPSelD},  
to differentiate the payload types. 
Finally, rule \inferrule{\iruleMPGlobalRes} uses 
a typing context derived from a global type via projection to enforce session restriction. 
Note that association can be applied in place of projection in this rule, as a more general approach. %

\begin{example}[Typed Process]
\label{ex:typed_proc}
Consider the processes $\mpP$ and $\mpQ$ from~\Cref{ex:cal_syntax_semantics}, along with 
a type context $\stEnv[H] = \stEnvMap{\mpChanRole{\mpS}{\roleP}}{\stT[H_{\roleP}]}
\stEnvComp 
\stEnvMap{\mpChanRole{\mpS}{\roleQ}}{\stT[H_{\roleQ}]} 
\stEnvComp 
\stEnvMap{\mpChanRole{\mpS}{\roleR}}{\stT[H_{\roleR}]}$, where 
$\stT[H_{\roleP}] = 
\roleQ
 \stFmt{\oplus}
  \stChoice{\stLabi}{\stT[H_{\roleR}]} 
   \stSeq
   \roleR
 \stFmt{\&}
 \stChoice{\stLab}{\stFmtC{int}} 
 \stSeq 
 \stEnd, 
\stT[H_{\roleQ}] = 
\roleP
 \stFmt{\&}
  \stChoice{\stLabi}{\stT[H_{\roleR}]} 
  \stSeq 
  \stEnd, 
 \stT[H_{\roleR}]  = \roleP
 \stFmt{\oplus}
  \stChoice{\stLab}{\stFmtC{int}} 
  \stSeq 
  \stEnd$. 
  The context  $\stEnv[H]$ can type the process $\mpP \mpPar \mpQ$ through the following 
  derivation:
 
 \smallskip
  \centerline{\(%
   \begin{array}{c}
     \inference[\iruleMPPar]{
      \inference[\iruleMPSelD]{%
         \inference[\iruleMPBranch]{
          \inference[\iruleMPNil]{ 
          {
          \stEnvEndP{
           \stEnvMap{\mpChanRole{\mpS}{\roleP}}{\stEnd}
          }}}
          {
          \stJudge{\mpEnvMap{\mpFmt{z}}{\stFmtC{int}}}{\stEnvMap{\mpChanRole{\mpS}{\roleP}}{\stEnd}
          }{\mpNil}}
          }{
         \stJudge{\mpEnvEmpty}{%
            \stEnvMap{\mpChanRole{\mpS}{\roleP}}
            {\roleR 
             \stFmt{\&}
 \stChoice{\stLab}{\stFmtC{int}} 
 \stSeq 
 \stEnd
             }%
          }{%
             \mpBranchSingle{\mpChanRole{\mpS}{\roleP}}{\roleR}{\mpLab}{z}{\mpNil}{}    
          }
          }%
      }%
      {%
        \stJudge{\mpEnvEmpty}{%
          \stEnvMap{\mpChanRole{\mpS}{\roleP}}{\stT[H_{\roleP}]}%
          \stEnvComp%
          \stEnvMap{\mpChanRole{\mpS}{\roleR}}{\stT[H_{\roleR}]}%
        }{%
          \mpP%
        }%
      }
      &%
      \hspace{-1mm}%
      \inference[\iruleMPBranch]{%
         \inference[\iruleMPSelV]{
          \mpEnvEntails{\mpEnvEmpty}{42}{\tyInt}
          \quad  
          \inference[\iruleMPNil]{ 
          {
          \stEnvEndP{
           \stEnvMap{\mpChanRole{\mpS}{\roleQ}}{\stEnd} \stEnvComp 
           \stEnvMap{\mpFmt{z}}{\stEnd}
          }}}
          {
          \stJudge{\mpEnvEmpty}{\stEnvMap{\mpChanRole{\mpS}{\roleQ}}{\stEnd} \stEnvComp 
          \stEnvMap{\mpFmt{z}}{\stEnd}
          }{\mpNil}}
          }{
         \stJudge{\mpEnvEmpty}{%
             \stEnvMap{\mpChanRole{\mpS}{\roleQ}}{\stEnd}%
             \stEnvComp 
              \stEnvMap{\mpFmt{z}}
            {\stT[H_{\roleR}]
             }
          }{%
             \mpSel{z}{\roleP}{\mpLab}{42}{\mpNil}  
          }
          }%
      }%
      {%
        \stJudge{\mpEnvEmpty}{%
          \stEnvMap{\mpChanRole{\mpS}{\roleQ}}{\stT[H_{\roleQ}]}%
        }{%
          \mpQ%
        }%
      }%
    }{%
      \stJudgeNew{\mpEnvEmpty}{%
        \stEnv[H]%
      }{%
        \mpP \mpPar \mpQ%
      }%
    }
   \end{array}
  \)}%

\medskip
\noindent  
 Moreover, 
 since $\stEnv[H]$ is obtained by projecting a global type 
 $\gtG[H] =  \gtCommSingle{\roleP}{\roleQ}{\gtLabi}{\roleP 
 \stFmt{\oplus} \stLabFmt{m(\stFmtC{int})}}{\gtCommSingle{\roleR}{\roleP}{\gtLab}{\stFmtC{int}}{\gtEnd}}$ onto roles in session $\mpS$, 
 \ie $\stEnv[H] =  \setenum{\stEnvMap{\mpChanRole{\mpS}{\roleP}}{\gtProj{\gtG[H]}{\roleP}}}_{%
    \roleP \in \gtRoles{\gtG[H]}}$, 
it follows from \inferrule{\iruleMPGlobalRes} that 
 the process $\mpP \mpPar \mpQ$ is closed under $\stEnv[H]$, \ie $\stJudge{\mpEnvEmpty}{\stEnvEmpty}{\mpRes{\stEnvMap{\mpS}{\stEnv[H]}}\mpP \mpPar \mpQ}$.    
 
\end{example}

\begin{example}[Typed Process of OAuth]
\label{ex:typed_process_auth}
Recall the typing contexts $\stEnv[\text{auth}_{\roleFmt{s}}]$, $\stEnv[\text{auth}_{\roleFmt{c}}]$, and  
 $\stEnv[\text{auth}_{\roleFmt{a}}]$ from \Cref{ex:assoc}, and the processes
$\mpP[\roleFmt{s}]$, $\mpP[\roleFmt{c}]$, and $\mpP[\roleFmt{a}]$ from~\Cref{ex:process_oauth}. 
These contexts enable the typing of the respective processes. 
Therefore, the context $\stEnv[\text{auth}]$ (from~\Cref{ex:assoc}) 
can type the process $\mpP[\roleFmt{s}] \mpPar \mpP[\roleFmt{c}] \mpPar \mpP[\roleFmt{a}]$. 

Furthermore, the typing context 
$\stEnvMap{\mpChanRole{\mpS}{\roleFmt{s}}}{\stT[\roleFmt{s}]}$, using 
the local type $\stT[\roleFmt{s}]$ from~\Cref{ex:types_oath}, can type the process $\mpP[\roleFmt{s}]$, 
since 
$\stEnv[\text{auth}_{\roleFmt{s}}] \stSub \stEnvMap{\mpChanRole{\mpS}{\roleFmt{s}}}{\stT[\roleFmt{s}]}$~(as demonstrated 
in~\Cref{ex:assoc}), 
and $\stJudge{\mpEnvEmpty}{\stEnv[\text{auth}_{\roleFmt{s}}]}{\mpP[\roleS]}$. 
           Similarly, $\stEnvMap{\mpChanRole{\mpS}{\roleFmt{c}}}{\stT[\roleFmt{c}]}$ and 
           $\stEnvMap{\mpChanRole{\mpS}{\roleFmt{a}}}{\stT[\roleFmt{a}]}$
 type 
$\mpP[\roleFmt{c}]$ and $\mpP[\roleFmt{a}]$, respectively. Thus, 
by \inferrule{\iruleMPGlobalRes}, 
 $\mpP[\roleFmt{s}] \mpPar \mpP[\roleFmt{c}] \mpPar \mpP[\roleFmt{a}]$ is closed under 
 $\setenum{\stEnvMap{\mpChanRole{\mpS}{\roleP}}{\gtProj{\gtG[\text{auth}]}{\roleP}}}_{%
    \roleP \in \gtRoles{\gtG[\text{auth}]}}$~(where $\gtG[\text{auth}]$ is from~\Cref{ex:types_oath}). 
  \end{example}

\subsection{Subject Reduction}
\label{sec:subject_reduction}

Subject reduction (type soundness) ensures the preservation of well-typedness under reduction: 
if a process $\mpP$ is typable and $\mpP \!\mpMove\! \mpPi$, then $\mpPi$ remains typable. 
Consequently, no reduction sequence starting from a well-typed process can reach an untypable state.

We present a subject reduction result in~\Cref{lem:subject-reduction}, 
where $\mpP$ is constructed from a global type via association~(\Cref{def:assoc}), 
\ie it is typed by a context associated with the global type.  
\Cref{lem:subject-reduction} serves as the fundamental technical result, establishing such a property as an invariant preserved throughout reduction.

\begin{restatable}[Subject Reduction via Association]{theorem}{lemSubjectReduction}
  \label{lem:subject-reduction}%
  Assume $\stJudgeNew{\mpEnvNew}{\stEnvNew}{\mpP}$ where $\forall \mpS \in \stEnvNew: \exists \gtG[\mpS]: \stEnvAssoc{\gtG[\mpS]}{\stEnvNew[\mpS]}{\mpS}$. 
  If $\mpP \!\mpMove\! \mpPi$, then 
  $\exists \stEnvNewi$
  such that 
  $\stEnvNew \!\stEnvMoveStar\! \stEnvNewi$, 
  $\stJudgeNew{\mpEnvNew}{\stEnvNewi}{\mpPi}$, and $\forall \mpS \in \stEnvNewi: 
  \exists \gtGi[\mpS]:  \gtG[\mpS] \,\gtMoveStar\, \gtGi[\mpS]$ and 
  $\stEnvAssoc{\gtGi[\mpS]}{\stEnvNewi[\mpS]}{\mpS}$. 
\end{restatable}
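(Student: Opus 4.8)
The plan is to prove Subject Reduction by induction on the derivation of the reduction $\mpP \mpMove \mpPi$ (the rules in~\Cref{fig:mpst-pi-semantics}), with an inner case analysis on the last typing rule applied, using a suite of inversion lemmas for the typing judgement (one per syntactic form of process, relating the typing context to the shape of the type assigned to the active channel). Since the interesting reductions happen underneath a session restriction introduced by~\inferrule{\iruleMPGlobalRes}, the real work is a \emph{strengthened} statement about typing judgements with an explicit associated typing context for the relevant session, so that the structural-congruence and context cases go through by the induction hypothesis. First I would establish the standard auxiliary results: inversion of typing, a substitution lemma (typing is preserved when substituting a value/channel of the right type for a variable, which handles~\inferrule{\iruleMPRedComm} and~\inferrule{\iruleMPRedCall}), and a congruence lemma ($\mpP \equiv \mpPi$ and $\stJudge{\mpEnv}{\stEnv}{\mpP}$ imply $\stJudge{\mpEnv}{\stEnv}{\mpPi}$, possibly after reordering the context), together with the observation that $\stEnvMoveStar$ is closed under context composition (so that a reduction in one session's sub-context lifts to the whole $\stEnv$).

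The core case is~\inferrule{\iruleMPRedComm}: a branching $\mpChanRole{\mpS}{\roleP}$ from $\roleQ$ in parallel with a selection $\mpChanRole{\mpS}{\roleQ}$ towards $\roleP$ synchronising on label $\mpLab[k]$. Typing must route through~\inferrule{\iruleMPGlobalRes} for $\mpS$, so $\stEnv[\mpS]$ (possibly after a preliminary $\stEnvMoveStar$ unfolding recursive entries and applying~\inferrule{\iruleStSubRecL}/\inferrule{\iruleStSubRecR} to expose head connectives) contains entries $\stEnvMap{\mpChanRole{\mpS}{\roleP}}{\stExtSum{\roleQ}{i\in I}{\ldots}}$ and $\stEnvMap{\mpChanRole{\mpS}{\roleQ}}{\stIntSum{\roleP}{j\in J}{\ldots}}$. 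By~\inferrule{\iruleMPGlobalRes} we have $\stEnvAssoc{\gtG[\mpS]}{\stEnv[\mpS]}{\mpS}$, and by~\inferrule{\iruleStSubIn}/\inferrule{\iruleStSubOut} inversion the selected label $\mpLab[k]$ lies in both index sets with compatible payloads, so $\stEnv[\mpS]$ performs the communication step $\stEnvMoveCommAnnot{\mpS}{\roleQ}{\roleP}{\stLab[k]}$ via rules~\inferrule{\iruleTCtxOut}, \inferrule{\iruleTCtxIn}, \inferrule{\iruleTCtxCom}, and the surrounding congruence rules of~\Cref{fig:gtype:tc-red-rules}; call the result $\stEnvi[\mpS]$. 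Then~\Cref{thm:gtype:proj-comp} (Completeness of Association) delivers a $\gtGi[\mpS]$ with $\gtG[\mpS] \gtMove[\stEnvAnnotGenericSym] \gtGi[\mpS]$ and $\stEnvAssoc{\gtGi[\mpS]}{\stEnvi[\mpS]}{\mpS}$ — this is the step that crucially exploits the association machinery of~\S~\ref{sec:gtype:relating} rather than re-proving a bespoke projection-well-formedness invariant. Re-typing $\mpP[k]\subst{\mpFmt{x_k}}{\mpW} \mpPar \mpQ$ under the reduced context then follows from the substitution lemma and~\inferrule{\iruleMPPar}, and re-applying~\inferrule{\iruleMPGlobalRes} with the new $\gtGi[\mpS]$ closes the session.

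The remaining cases are lighter: \inferrule{\iruleMPRedCall} uses the $n$-ary substitution lemma and the fact that~\inferrule{\iruleMPDef} keeps $\mpEnvMap{\mpX}{\ldots}$ in scope; \inferrule{\iruleMPErrLabel} cannot arise under a well-typed process built from associated contexts, because $\mpS$-safety guaranteed by~\Cref{cor:allproperties} rules out the label mismatch $\forall i\in I: \mpLab[i]\neq\mpLab$ (so this case is vacuous, or rather discharged by appeal to safety); \inferrule{\iruleMPRedCongr} uses the congruence lemma on both sides; and \inferrule{\iruleMPRedCtx} is the induction step, where the reduction context $\mpCtx$ is $\mpCtx\mpPar\mpP$, $\mpRes{\mpS}{\mpCtx}$, or $\mpDefAbbrev{\mpDefD}{\mpCtx}$ — in the restriction case we peel off~\inferrule{\iruleMPGlobalRes}, apply the IH to get $\stEnv\stEnvComp\stEnvi \stEnvMoveStar \stEnv''\stEnvComp\stEnvii$ with the association preserved for the peeled session and all others, and reassemble. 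I expect the main obstacle to be the bookkeeping of \emph{which} session reduces and ensuring that a reduction internal to $\stEnv[\mpS]$ genuinely lifts to a $\stEnvMoveStar$ on the full $\stEnv$ while leaving all \emph{other} sessions' associations untouched; a secondary subtlety is handling the subtyping slack (\inferrule{\iruleMPSub}, and the ``weakness'' noted in the Remark) — the typing context may offer fewer selection branches than the projection, so one must be careful that the step taken by the process is always matched by a step of $\stEnv[\mpS]$, which is exactly what the completeness direction~\Cref{thm:gtype:proj-comp} is designed to give, but the preliminary recursion-unfolding $\stEnvMoveStar$ prefix needs to be threaded through cleanly.
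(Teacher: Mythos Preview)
Your plan follows essentially the same induction on the reduction derivation as the paper, and identifies the same key ingredients (typing inversion, substitution, subject congruence, completeness of association for re-establishing~\inferrule{\iruleMPGlobalRes}).  Two technical points in the \inferrule{\iruleMPRedComm} case need fixing, though.

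First, in that case the endpoints $\mpChanRole{\mpS}{\roleP},\mpChanRole{\mpS}{\roleQ}$ are \emph{free} in $\mpP$ and their types lie directly in $\stEnv$; the derivation does not go through \inferrule{\iruleMPGlobalRes}.  The association $\stEnvAssoc{\gtG[\mpS]}{\stEnv[\mpS]}{\mpS}$ is already part of the theorem's hypothesis, so there is no ``strengthened statement'' to extract --- the statement as given \emph{is} the strengthened one.  The restriction rule only surfaces in the \inferrule{\iruleMPRedCtx} sub-case $\mpRes{\mpS}{\mpCtx}$, exactly as you describe later.

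Second, the typing-context LTS has no standalone ``unfolding'' transition: rule \inferrule{\iruleTCtxRec} is fused into the output/input/comm transitions, so your ``preliminary $\stEnvMoveStar$ unfolding recursive entries'' does not exist.  Relatedly, ``subtyping inversion'' alone does \emph{not} give you that the selected label $\mpLab[k]$ appears in both actual entries of $\stEnv$: inversion of \inferrule{\iruleStSubIn} tells you the receiver's entry $\unfoldOne{\stEnvApp{\stEnv}{\mpChanRole{\mpS}{\roleP}}}$ may have \emph{fewer} labels than the branching process.  What forces $k$ into the receiver's actual label set is the matching-communication-under-projection lemma (\Cref{lem:comm-match}), applied to the associated $\stEnv$ (or, as the paper does, to a supertype $\stEnvii \stSup \stEnv$ built from the typing-inversion types, which is still associated by \Cref{lem:stenv-assoc-sub}).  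The paper then needs a small ``pullback'' lemma (\Cref{lem:stenv-assoc-reduction-sub-gen}: $\stEnv \stSub \stEnvii \stEnvMoveGenAnnot \stEnviii$ with $\stEnv$ associated implies $\stEnv \stEnvMoveGenAnnot \stEnvi \stSub \stEnviii$) plus narrowing to transport the typing back down to a reduct of $\stEnv$ itself.  You can avoid the supertype/pullback detour by invoking \Cref{lem:comm-match} directly on $\stEnv$, but you must name that lemma rather than attributing the step to subtyping inversion.
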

\begin{proof}
By induction on the derivation of $\mpP \!\mpMove\! \mpPi$~(\Cref{def:mpst-pi-semantics}). 
\qedhere
\end{proof}

Furthermore, since a typing context obtained from a global type via projection 
is a supertype of the context associated with it, 
and a process can be typed using a broader context, 
\Cref{thm:subject_reduction} -- our key subject reduction result -- demonstrates that  
well-typedness is maintained for processes constructed from global types via projection.  
This result recovers  the standard projection-based form of subject reduction and reinforces compatibility with global types as a behavioural invariant through reductions,   ensuring correctness by construction in alignment with the MPST top-down methodology.

\begin{restatable}[Subject Reduction]{theorem}{lemSubjectReductionFinal}
\label{lem:final-subject-reduction}
\label{thm:subject_reduction}
\label{thm:subject-reduction}
 Assume $\stJudgeNew{\mpEnvNew}{\stEnvNew}{\mpP}$,  
  where $\forall \mpS \in \stEnv: \exists \gtG[\mpS]: 
  \stEnv[\mpS] = \setenum{\stEnvMap{\mpChanRole{\mpS}{\roleP}}{\gtProj{\gtG[\mpS]}{\roleP}}}_{%
   \roleP \in \gtRoles{\gtG[\mpS]}}$. 
 If $\mpP \!\mpMove\! \mpPi$, then 
  $\exists \stEnvi$
  such that  $\stJudgeNew{\mpEnvNew}{\stEnvNewi}{\mpPi}$, and 
  $\forall \mpS \in \stEnvi: \exists \gtGi[\mpS]: \gtG[\mpS] \,\gtMoveStar\, \gtGi[\mpS]$ and $\stEnvi[\mpS] = \setenum{\stEnvMap{\mpChanRole{\mpS}{\roleP}}{\gtProj{\gtGi[\mpS]}{\roleP}}}_{%
    \roleP \in \gtRoles{\gtGi[\mpS]}}$. 
\end{restatable}
\begin{proof}
By the definition of association~(\Cref{def:assoc}), subject reduction via association~(\Cref{lem:subject-reduction}), 
and the application of \inferrule{\iruleMPSub} as necessary. 
\qedhere
\end{proof}

As a corollary of~\Cref{lem:final-subject-reduction}, well-typed processes communicate without $\mpErr$ors.

\begin{restatable}[Type Safety]{corollary}{lemTypeSafety}
  \label{cor:type-safety}
  Assume $\stJudgeNew{\emptyset}{\emptyset}{\mpP}$.
  If $\mpP \!\mpMoveStar\! \mpPi$, then %
  $\mpPi$ has no error.%
\end{restatable}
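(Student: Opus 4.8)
The plan is to derive \Cref{cor:type-safety} as a straightforward corollary of \Cref{lem:subject-reduction} (Subject Reduction), following the standard pattern for type safety proofs. First I would observe that the empty typing context $\stEnvEmpty$ vacuously satisfies the side condition of Subject Reduction: since there are no sessions $\mpS \in \stEnvEmpty$, the quantified statement ``$\forall \mpS \in \stEnv: \exists \gtG[\mpS]: \stEnvAssoc{\gtG[\mpS]}{\stEnv[\mpS]}{\mpS}$'' holds trivially. Hence from $\stJudge{\stEnvEmpty}{\stEnvEmpty}{\mpP}$ we can apply \Cref{lem:subject-reduction} repeatedly.

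The main argument proceeds by induction on the length of the reduction sequence $\mpP \mpMoveStar \mpPi$. In the base case $\mpP = \mpPi$, so we must show that a process typed under empty contexts has no error. For the inductive step, suppose $\mpP \mpMove \mpPii \mpMoveStar \mpPi$. By \Cref{lem:subject-reduction}, there exists $\stEnvii$ with $\stEnvEmpty \stEnvMoveStar \stEnvii$, $\stJudge{\stEnvEmpty}{\stEnvii}{\mpPii}$, and the association side condition preserved for $\stEnvii$. Since $\stEnvEmpty$ cannot reduce, we have $\stEnvii = \stEnvEmpty$, so $\stJudge{\stEnvEmpty}{\stEnvEmpty}{\mpPii}$, and the induction hypothesis applies to the shorter sequence $\mpPii \mpMoveStar \mpPi$, yielding that $\mpPi$ has no error.

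It therefore remains to prove the key lemma underlying the base case: \emph{if $\stJudge{\mpEnv}{\stEnv}{\mpP}$ then $\mpP$ has no error}, i.e.\ there is no reduction context $\mpCtx$ with $\mpP = \mpCtxApp{\mpCtx}{\mpErr}$. I would prove this by induction on the typing derivation (or equivalently on the structure of $\mpP$, inspecting the rules in \Cref{fig:mpst-rules}). The point is that $\mpErr$ is simply not in the image of any typing rule: there is no typing rule whose conclusion types $\mpErr$, and the reduction-context constructors ($\mpCtx \mpPar \mpP$, $\mpRes{\mpS}{\mpCtx}$, $\mpDefAbbrev{\mpDefD}{\mpCtx}$) correspond to rules \inferrule{\iruleMPPar}, \inferrule{\iruleMPGlobalRes}, \inferrule{\iruleMPDef} whose premises require the subterm filling the hole to itself be typable; by induction that subterm is not $\mpCtxApp{\mpCtxi}{\mpErr}$ for any $\mpCtxi$, and in particular is not $\mpErr$ itself when $\mpCtx = \mpCtxHole$.

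The main obstacle here is essentially bookkeeping rather than conceptual difficulty: one must be careful that the association side condition in \Cref{lem:subject-reduction} is threaded correctly through the induction — but as noted it is trivially true for the empty context and stays trivially true since $\stEnvEmpty$ admits no transitions, so the ``$\exists \gtGi[\mpS]$'' clause is never actually invoked. The only genuinely substantive input is \Cref{lem:subject-reduction} itself, which is assumed. I would also remark in passing that the result extends from empty contexts to any $\mpP$ typable under contexts satisfying the association side condition, but the stated corollary only needs the empty case.
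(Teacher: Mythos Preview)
Your proposal is correct and follows essentially the same approach as the paper's proof: induction on the length of the reduction sequence, invoking \Cref{lem:subject-reduction} to preserve well-typedness, and observing that $\mpErr$ is not typeable so a well-typed process cannot contain it. You spell out more of the bookkeeping than the paper does (that the association side condition is vacuous for $\stEnvEmpty$, that $\stEnvEmpty$ admits no reductions so the typing context stays empty, and an explicit structural argument for why no typing rule can yield $\mpErr$ inside a reduction context), but the underlying argument is the same.
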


\begin{example}[Subject Reduction]
\label{ex:subject_reduction}
Recall the typed process $\mpP \,\mpPar\, \mpQ$,  
the typing context $\stEnv[H]$, and the global type $\gtG[H]$ from~\Cref{ex:typed_proc}. 
Using rule \inferrule{\iruleMPRedCommChannel}, 
 the process $\mpP \,\mpPar\, \mpQ$ reduces to 
 
 \smallskip
 \centerline{\(
  \mpPi \mpPar \mpQi = \mpBranchSingle{\mpChanRole{\mpS}{\roleP}}{\roleR}{\mpLab}{x}{\mpNil}{}
    \mpPar
    \mpSel{\mpChanRole{\mpS}{\roleR}}{\roleP}{\mpLab}{42}{\mpNil}
    \)}
    
    \smallskip
    \noindent 
 The typing context
  $\stEnv[H]$, with $\stEnvAssoc{\gtG[H]}{\stEnv[H]}{\mpS}$,   
 transitions via rule \inferrule{\iruleTCtxCom} to 
 
 \smallskip
 \centerline{\(
 \stEnvi[H] = \stEnvMap{\mpChanRole{\mpS}{\roleP}}{
     \roleR 
             \stFmt{\&}
 \stChoice{\stLab}{\stFmtC{int}}
 \stSeq 
 \stEnd}
    \stEnvComp 
    \stEnvMap{\mpChanRole{\mpS}{\roleQ}}{\stEnd}
    \stEnvComp 
    \stEnvMap{\mpChanRole{\mpS}{\roleR}}{\roleP 
             \stFmt{\oplus}
 \stChoice{\stLab}{\stFmtC{int}}
 \stSeq 
 \stEnd
 }\)}
 
 \smallskip
 \noindent
 which types $\mpPi \,\mpPar\, \mpQi$ and is associated with the global 
 type 
 $\gtGi[H] = \gtCommSingle{\roleR}{\roleP}{\gtLab}{\stFmtC{int}}{\gtEnd}$, obtained by the transition 
 $\gtG[H] \,\gtMove[\ltsSendRecv{\mpS}{\roleFmt{p}}{\roleFmt{q}}{\gtLabi}]\, \gtGi[H]$.  
  Moreover, the typing context $\stEnvii[H] = 
 \setenum{\stEnvMap{\mpChanRole{\mpS}{\roleP}}{\gtProj{\gtGi[H]}{\roleP}}}_{%
    \roleP \in \gtRoles{\gtGi[H]}}$ also types $\mpPi \,\mpPar\, \mpQi$. 

Note that the typing context $\setenum{\stEnvMap{\mpChanRole{\mpS}{\roleP}}{\gtProj{\gtG[\text{auth}]}{\roleP}}}_{%
    \roleP \in \gtRoles{\gtG[\text{auth}]}}$ and the process $\mpP[\roleFmt{s}] \mpPar \mpP[\roleFmt{c}] \mpPar \mpP[\roleFmt{a}]$ 
from~\Cref{ex:typed_process_auth} also adhere to subject reduction, thereby ensuring type safety.

\end{example}

\subsection{Session Fidelity}
\label{sec:session_fidelity}

\emph{Session fidelity} asserts the converse implication concerning subject reduction: 
if a process $\mpP$ is typed by $\stEnv$, and $\stEnv$ can reduce along session $\mpS$,  
then $\mpP$ can replicate at least one of the reductions performed by $\stEnv$~(though not necessarily all, 
as $\stEnv$ may over-approximate the behaviour of $\mpP$). 

However, this property does \emph{not} hold universally for all well-typed processes. 
For example, a well-typed process may diverge due to unguarded recursion~(\eg $\mpDef{\mpX}{\ldots}{X}{\mpX}$), or may deadlock due to intricate interleavings of communications across multiple sessions~\citep{CDYP2015}.

To address this, and in line  with~\citep{POPL19LessIsMore} and most existing work on session types, %
we establish session fidelity specifically for processes featuring guarded recursion 
and implementing a single multiparty session, realised as a parallel composition of one sub-process per role. 
The guarantees for process properties given in~\Cref{sec:typed-process-property} are demonstrated under the same guarded-recursion, single-session assumption. 

Note that alternative approaches exist for handling interleaved multiparty sessions with delegation~\cite{DBLP:conf/coordination/PadovaniVV14,CDYP2015,DBLP:journals/scp/HeuvelP22,DBLP:conf/ecoop/CicconeDP22,DBLP:journals/jlap/CicconeDP24}. For example,  a decentralised analysis is proposed in~\cite{DBLP:journals/scp/HeuvelP22} to ensure session fidelity and deadlock freedom for implementations with interleaved and delegated multiparty sessions. In addition, the type system introduced in~\cite{DBLP:conf/ecoop/CicconeDP22,DBLP:journals/jlap/CicconeDP24} guarantees termination of well-typed processes under fairness assumptions in the presence of session chaining, nesting, interleaving, delegation, and dynamic session creation. 

The formalisation of our session fidelity, as provided in \Cref{lem:session-fidelity,thm:session-fidelity} below, builds upon the concepts introduced in \Cref{def:unique-role-proc}.

\begin{definition}[from~\citep{POPL19LessIsMore}]
  \label{def:guarded-definitions}
  \label{def:unique-role-proc}
  Assume $\stJudge{\mpEnvEmpty}{\stEnv}{\mpP}$. We say that $\mpP$:
\begin{enumerate}[label={(\arabic*)}, leftmargin=*, nosep]
  \item\label{item:guarded-definitions:stmt}%
    \emph{has guarded definitions}
    if and only if 
    in each process definition in $\mpP$ of the form
    \linebreak
    $\mpDef{\mpX}{
      \mpEnvMap{x_1}{\tyGround[1]}, \ldots, 
      \mpEnvMap{x_n}{\tyGround[n]}, \ldots, 
      \stEnvMap{y_1}{\stT[1]},...,\stEnvMap{y_m}{\stT[m]}}
      {\mpQ}{\mpPi}$,
    for all $i \in 1...m$, 
    a call
    $\mpCall{\mpY}{...,y_i,...}$
    can only occur in $\mpQ$
    as a subterm of 
    $\mpBranch{y_i}{\roleQ}{j \in J}{\mpLab[j]}{z_j}{\mpP[j]}{}$
     or 
    $\mpSel{y_i}{\roleQ}{\mpLab}{\mpD}{\mpPii}{}$
    (\ie after using $y_i$ for input or output);
  \item\label{item:unique-role-proc:stmt}%
    \emph{only plays role $\roleP$ in $\mpS$, by $\stEnv$} if and only if:
    \!\upshape (i)
    $\mpP$ has guarded definitions;\; %
    \!\upshape (ii)
    $\fev{\mpP} \!=\! \emptyset$;\;
    \!\upshape (iii)
    $\fcv{\mpP} \!=\! \emptyset$;\;
    \!\upshape (iv) 
      $\stEnv \!=\!
      \stEnv[0] \stEnvComp \stEnvMap{\mpChanRole{\mpS}{\roleP}}{\stT}$
      with
      $\stT \!\stNotSub\! \stEnd$
      and
      $\stEnvEndP{\stEnv[0]}$; 
    \!\upshape (v) 
      for all subterms
      $\mpRes{\stEnvMap{\mpSi}{\stEnvi}}{\mpPi}$
      in $\mpP$,
      $\stEnvEndP{\stEnvi}$.
  \end{enumerate}
  We say ``\emph{$\mpP$ only plays role $\roleP$ in $\mpS$}''
  iff 
  $\exists\stEnv: \stJudge{\mpEnvEmpty}{\stEnv}{\mpP}$,
  and item~\ref{item:unique-role-proc:stmt} holds.
\end{definition}

In~\Cref{def:guarded-definitions}, item~\ref{item:guarded-definitions:stmt} 
formalises guarded recursion for processes, while 
item~\ref{item:unique-role-proc:stmt} %
defines  a process that plays
exactly \emph{one} role on \emph{one} session. 
It is evident that 
an ensemble of such processes
cannot deadlock
by waiting for each other across multiple sessions. 

\begin{example}[Playing Only Role]
\label{ex:play_role}
Consider the processes $\mpP, \mpQ$, and the typing context 
$\stEnv[H]$ from~\Cref{ex:typed_proc}. 
Observe that $\mpP$ does not only play either $\roleP$ or  $\roleR$ in $\mpS$. 
This is because $\mpP$ can only be typed 
by a context of the form 
$\stEnvMap{\mpChanRole{\mpS}{\roleP}}{\stT[\roleP]} 
\stEnvComp 
\stEnvMap{\mpChanRole{\mpS}{\roleR}}{\stT[\roleR]}$, where 
neither $\stEnvEndP{\stEnvMap{\mpChanRole{\mpS}{\roleP}}{\stT[\roleP]}}$ nor 
$\stEnvEndP{\stEnvMap{\mpChanRole{\mpS}{\roleR}}{\stT[\roleR]}}$ holds, 
thus violating item~\ref{item:unique-role-proc:stmt} of~\Cref{def:unique-role-proc}. 
Conversely,  $\mpQ$ only plays role $\roleQ$ in $\mpS$, 
by $\stEnvMap{\mpChanRole{\mpS}{\roleQ}}{\stT[H_{\roleQ}]}$, 
with all required conditions satisfied.

Additionally,  the processes $\mpP[\roleFmt{s}]$, $\mpP[\roleFmt{c}]$, 
and $\mpP[\roleFmt{a}]$ from~\Cref{ex:typed_process_auth} 
only play roles $\roleFmt{s}$, $\roleFmt{c}$, and $\roleFmt{a}$, respectively,  
in $\mpS$, which can be easily verified. %
\end{example}

We now formalise our session fidelity properties~(\Cref{lem:session-fidelity,thm:session-fidelity}).
Similar to subject reduction in~\Cref{sec:subject_reduction}, \Cref{lem:session-fidelity} utilises a typing context associated with a global type for a specific session $\mpS$ to type the process, while~\Cref{thm:session-fidelity}, as a corollary, applies a typing context projected from a global type,  with both asserting that a process constructed from a global type preserves its structure after reductions.

\begin{restatable}[Session Fidelity via Association]{theorem}{lemSessionFidelity}
  \label{lem:session-fidelity}%
   \label{lem:session-fidelity-association}
  Assume $\stJudge{\mpEnvEmpty\!}{\!\stEnv}{\!\mpP}$, %
  with %
 $\stEnvAssoc{\gtG}{\stEnv}{\mpS}$, 
$\mpP \equiv \mpRes{\mpS[1]}{\ldots \mpRes{\mpS[n]}{\mpBigPar{\roleP \in I}{\mpP[\roleP]}}}$, %
  and $\stEnv = \bigcup_{\roleP \in I}\stEnv[\roleP]$ %
  such that for each \,$\mpP[\roleP]$: %
  (1) $\stJudge{\mpEnvEmpty\!}{\stEnv[\roleP]}{\!\mpP[\roleP]}$, 
  and
  (2) either $\mpP[\roleP] \equiv \mpNil$, %
  or $\mpP[\roleP]$ only plays $\roleP$ in $\mpS$, by $\stEnv[\roleP]$. %
  Then, 
    $\stEnv \!\stEnvMoveWithSession[\mpS]$ implies 
    $\exists \stEnvi, \gtGi, \mpPi$ %
    such that 
    $\stEnv \!\stEnvMoveWithSession[\mpS]\! \stEnvi$, 
    $\gtG \,\gtMove\, \gtGi$, 
    $\mpP \mpMoveStar\! \mpPi$,  %
    and 
    $\stJudge{\mpEnvEmpty\!}{\!\stEnvi}{\mpPi}$, %
    with 
    $\stEnvAssoc{\gtGi}{\stEnvi}{\mpS}$, %
   $\mpPi \equiv \mpRes{\mpSi[1]}{\ldots \mpRes{\mpSi[m]}{\mpBigPar{\roleP \in I}{\mpPi[\roleP]}}}$, %
    and $\stEnvi = \bigcup_{\roleP \in I}\stEnvi[\roleP]$ %
    such that for each $\mpPi[\roleP]$:
    (1) $\stJudge{\mpEnvEmpty\!}{\stEnvi[\roleP]}{\!\mpPi[\roleP]}$, %
    and
    (2) either $\mpPi[\roleP] \equiv \mpNil$, 
    or $\mpPi[\roleP]$ only plays $\roleP$ in $\mpS$, by~$\stEnvi[\roleP]$. %
\end{restatable}
\begin{proof}
By induction on the derivation of $\stEnv \!\stEnvMoveWithSession[\mpS]$~(\Cref{def:typing_context_reduction}). 
\qedhere 
\end{proof}

\begin{restatable}[Session Fidelity]{theorem}{thmSessionFidelity}
\label{lem:final-session-fidelity}
\label{thm:session-fidelity}
 Assume $\stJudge{\mpEnvEmpty\!}{\!\setenum{\stEnvMap{\mpChanRole{\mpS}{\roleP}}{\gtProj{\gtG}{\roleP}}}_{%
   \roleP \in \mpFmt{I}}}{\!\mpP}$, %
  with $\gtRoles{\gtG} \subseteq \mpFmt{I}$ and %
 $\mpP \equiv$ \linebreak 
 $ \mpRes{\mpS[1]}{\ldots \mpRes{\mpS[n]}{\mpBigPar{\roleP \in I}{\mpP[\roleP]}}}$ %
  such that for each \,$\mpP[\roleP]$: %
  (1) $\stJudge{\mpEnvEmpty\!}{\stEnvMap{\mpChanRole{\mpS}{\roleP}}{\gtProj{\gtG}{\roleP}}}{\!\mpP[\roleP]}$, 
  and
  (2) either $\mpP[\roleP] \equiv \mpNil$, %
  or $\mpP[\roleP]$ only plays $\roleP$ in $\mpS$, by $\stEnvMap{\mpChanRole{\mpS}{\roleP}}{\gtProj{\gtG}{\roleP}}$. %
  Then, 
    $\gtG \,\gtMove$ implies 
    $\exists \gtGi, \mpPi$ %
    such that 
    $\gtG \,\gtMove\, \gtGi$, 
    $\mpP \mpMoveStar\! \mpPi$,  %
    and 
    $\stJudge{\mpEnvEmpty\!}{\!\setenum{\stEnvMap{\mpChanRole{\mpS}{\roleP}}{\gtProj{\gtGi}{\roleP}}}_{%
   \roleP \in \mpFmt{I}}}{\mpPi}$, %
    with $\gtRoles{\gtGi} \subseteq I$ and 
    $\mpPi \equiv \mpRes{\mpSi[1]}{\ldots \mpRes{\mpSi[m]}{\mpBigPar{\roleP \in I}{\mpPi[\roleP]}}}$ %
    such that for each $\mpPi[\roleP]$:
    (1) $\stJudge{\mpEnvEmpty\!}{\stEnvMap{\mpChanRole{\mpS}{\roleP}}{\gtProj{\gtGi}{\roleP}}}{\!\mpPi[\roleP]}$, %
    and
    (2) either $\mpPi[\roleP] \equiv \mpNil$, 
    or $\mpPi[\roleP]$ only plays $\roleP$ in $\mpS$, by $\stEnvMap{\mpChanRole{\mpS}{\roleP}}{\gtProj{\gtGi}{\roleP}}$. %

\end{restatable}
\begin{proof}
By the definition of association~(\Cref{def:assoc}), the soundness of association~(\Cref{thm:gtype:proj-sound}), and session fidelity via association~(\Cref{lem:session-fidelity}). 
\qedhere 
\end{proof}

\begin{example}[Guarded Definitions in Session Fidelity, from~\citep{POPL19LessIsMore}] 
\label{ex:guarded_def} 
According to rule \inferrule{\iruleMPDef} in~\Cref{fig:mpst-rules}, %
an unguarded definition  %
$\mpJustDef{\mpX}{\stEnvMap{y}{\stT}}{\mpCall{\mpX}{y}}$ %
can be typed with \emph{any} $\stT$. %
Therefore, we have: %

\smallskip%
\centerline{\(%
  \stJudge{\mpEnvEmpty}{
      \stEnvMap{\mpChanRole{\mpS}{\roleP}}{\roleQ \stFmt{\oplus}{\stLabFmt{m}}}
      \stEnvComp %
     \stEnvMap{\mpChanRole{\mpS}{\roleQ}}{\roleP \stFmt{\&}{\stLabFmt{m}}}
 }
{%
    \mpDefAbbrev{%
      \mpJustDef{\mpX}{\stEnvMap{y}{\stOut{\roleQ}{\stLab}{}}}{%
        \mpCall{\mpX}{y}%
      }%
    }{%
      \mpCall{X}{\mpChanRole{\mpS}{\roleP}}
      \mpPar%
      \mpBranchSingle{\mpChanRole{\mpS}{\roleQ}}{\roleP}{\mpLab}{}{}
  }
  }
\)}

\smallskip
\noindent%
The unguarded process above reduces vacuously by infinitely invoking $\mpX$, %
without aligning with any typing context reduction. %
This highlights the necessity of guarded definitions for session fidelity~(\Cref{lem:session-fidelity,thm:session-fidelity}). 
\end{example}

Observe that the process $\mpP$ from~\Cref{ex:typed_proc} 
does not satisfy \ref{item:unique-role-proc:stmt} of~\Cref{def:unique-role-proc}, 
while $\mpQ$ does, as shown in~\Cref{ex:play_role}. 
However, as demonstrated in~\Cref{ex:typed_proc},  $\mpP \,\mpPar\, \mpQ$ adheres to session fidelity. 
Readers might question the necessity of enforcing the strict condition in session fidelity that requires a process to play exactly one role in a single session. The following example highlights its importance.

\begin{example}[Playing Only Role in Session Fidelity]
\label{ex:play_role_session_fidelity}
Take the processes $\mpP[z] = \mpSel{\mpChanRole{\mpS}{\roleFmt{A}}}{\roleFmt{B}}{\mpLab}{}{\mpSel{\mpChanRole{\mpS}{\roleFmt{C}}}{\roleFmt{D}}{\mpLab}{}{}}$, $\mpQ[z] = 
\mpBranchSingle{\mpChanRole{\mpS}{\roleFmt{D}}}{\roleFmt{C}}{\mpLab}{}{\mpBranchSingle{\mpChanRole{\mpS}{\roleFmt{B}}}{\roleFmt{A}}{\mpLab}{}{}}$, 
and the typing context 
$ \stEnv[z] = \stEnvMap{\mpChanRole{\mpS}{\roleFmt{A}}}{\roleFmt{B} \stFmt{\oplus}{\stLabFmt{m}}}
      \stEnvComp %
     \stEnvMap{\mpChanRole{\mpS}{\roleFmt{B}}}{\roleFmt{A} \stFmt{\&}{\stLabFmt{m}}}
     \stEnvComp 
     \stEnvMap{\mpChanRole{\mpS}{\roleFmt{C}}}{\roleFmt{D} \stFmt{\oplus}{\stLabFmt{m}}}
      \stEnvComp %
     \stEnvMap{\mpChanRole{\mpS}{\roleFmt{D}}}{\roleFmt{C} \stFmt{\&}{\stLabFmt{m}}}
     $, which can be obtained by projecting the global type $\gtG[z] = \gtCommSingle{\roleFmt{A}}{\roleFmt{B}}{\gtLab}{}{\gtCommSingle{\roleFmt{C}}{\roleFmt{D}}{\gtLab}{}{}}$ onto $\mpS$. 
  It is trivial to verify that neither $\mpP[z]$ nor  
     $\mpQ[z]$ satisfies~\ref{item:unique-role-proc:stmt} of~\Cref{def:unique-role-proc}. 
     Moreover, it is evident that  $\mpP[z] \,\mpPar\, \mpQ[z]$ does not satisfy session fidelity, as the process cannot reduce further, whereas $\stEnv[z]$ remains reducible.      
     \end{example}

\subsection{Properties of Typed Processes}
\label{sec:typed-process-property}

We finalise this section by showcasing that processes constructed from global types guarantee desirable run-time properties,
including \emph{deadlock-freedom} and \emph{liveness}, as formalised in~\Cref{def:proc-properties}.  
Deadlock-freedom indicates that if a process cannot reduce, it consists only of inactive sub-processes~($\mpNil$), while 
liveness (\aka ``lock-freedom"~\cite{KobayashiS10Hybrid,Padovani14ICALP}) ensures that every pending input or output action of a process can eventually engage in communication, \ie  there exists a future execution in which it synchronises with the corresponding dual.

\begin{definition}[Runtime Process Properties]
  \label{def:proc-properties}%
  \label{def:proc-deadlock-free}
  \label{def:proc-term}%
  \label{def:proc-nterm}%
  \label{def:proc-df}%
  \label{def:proc-liveness}%
  We say $\mpP$ is:
  \begin{enumerate}[label={(\arabic*)},leftmargin=*,nosep]
  \item
  \label{item:proc-properties:df}
  \emph{deadlock-free} %
    iff %
    $\mpP \!\mpMoveStar\! \mpNotMoveP{\mpPi}$
    implies $\mpPi \equiv \mpNil$; 
        \item
    \label{item:proc-properties:live}
    \emph{live} %
    iff $\mpP \!\mpMoveStar\! \mpPi \!\equiv\! \mpCtxApp{\mpCtx}{\mpQ}$ 
    implies:
    \begin{enumerate}[leftmargin=*,nosep]
    \item\label{item:process-liveness:sel}%
      if  
      $\mpQ = \mpSel{\mpChanRole{\mpS}{\roleP}}{\roleQ}{\mpLab}{\mpW}{\mpQi}$,  
      then %
      there exist $\mpCtxi, I, {\setenum{\mpLab[i]}}_{i \in I}, {\setenum{z_i}}_{i \in I}, {\setenum{\mpR[i]}}_{i \in I}$ such that $\mpLab \in \setenum{\mpLab[i]}_{i \in I}$ and 
      $\mpCtx \mpCtxMoveStar 
      \mpCtxApp{\mpCtxi}{\mpCtxHole \mpPar  \mpBranch{\mpChanRole{\mpS}{\roleQ}}{\roleP}{i \in I}{\mpLab[i]}{z_i}{\mpR[i]}{}}$; 
    \item\label{item:process-liveness:branch}
      if 
      $\mpQ = \mpBranch{\mpChanRole{\mpS}{\roleP}}{\roleQ}{i \in I}{\mpLab[i]}{z_i}{\mpQi[i]}{}$,   
      then there exist $\mpCtxi, k \in I, \mpW, \mpR$ such that 
      $\mpCtx \mpCtxMoveStar \mpCtxApp{\mpCtxi}{\mpCtxHole \mpPar 
      \mpSel{\mpChanRole{\mpS}{\roleQ}}{\roleP}{\mpLab[k]}{\mpW}{\mpR}}$.
    \end{enumerate}
  \end{enumerate}
\end{definition}

\begin{remark}[Correction of Process Liveness in~\citet{POPL19LessIsMore}] 
 In~\cite[Def. 5.1]{POPL19LessIsMore}, a process $\mpP$ is defined to be live iff for every $\mpPi$ such that 
 $\mpP \!\mpMoveStar\! \mpPi \!\equiv\! \mpCtxApp{\mpCtx}{\mpQ}$, 
    \begin{itemize}[nosep]
    \item%
      if  
      $\mpQ = \mpSel{\mpC}{\roleQ}{\mpLab}{\mpW}{\mpQi}$,  
      then %
      $\exists \mpCtxi: \mpPi \mpMoveStar \mpCtxApp{\mpCtxi}{\mpQi}$;  %
    \item%
      if 
      $\mpQ = \mpBranch{\mpC}{\roleQ}{i \in I}{\mpLab[i]}{x_i}{\mpQi[i]}{}$,   
      then 
      $\exists \mpCtxi, k \in I, \mpU: \mpPi \mpMoveStar \mpCtxApp{\mpCtxi}{\mpQi[k]\subst{x_k}{\mpU}}$.
    \end{itemize}

\smallskip
\noindent
In this formulation, the witness context $\mpCtxi$ is unrestricted and need not be related to the original context
$\mpCtx$. 
Consequently,  the liveness condition may be satisfied by selecting an arbitrary context that 
enables the desired reduction.

For example, consider the process $ \mpP =  \mpSel{\mpChanRole{\mpS}{\roleP}}{\roleQ}{\mpLab}{5}{\mpNil}$,  
which is \emph{not} live since no matching input action exists. 
Under the above definition, however, %
one may take $\mpCtxi = \mpCtxHole \mpPar \mpP$ to obtain 
$\mpP \mpMoveStar \mpNil \mpPar \mpP = \mpCtxApp{\mpCtxi}{\mpNil}$, making the liveness notion hold vacuously, despite the absence of any communication.   %

\Cref{def:proc-properties} addresses this issue by requiring $\mpCtx \mpCtxMoveStar \mpCtxi$, thereby enforcing a structural relationship between $\mpCtxi$ and $\mpCtx$. 
This constraint ensures that, under this  liveness formalisation, the execution of an output or input action 
arises from reductions of the process itself, rather than from an arbitrarily extended environment.

Note that the process liveness property adopted in existing work (\eg \cite{DBLP:conf/ecoop/LagaillardieNY22,BSYZ2022,DBLP:conf/esop/BrunD23}) relies on 
the definition in~\cite{POPL19LessIsMore}, and may therefore require corresponding revision. 
\hfill $\blacktriangleleft$
\end{remark}

Finally, we illustrate how a process, typed with a typing context obtained from a global type via projection, ensures both deadlock-freedom and liveness.

\begin{restatable}[Process Deadlock-Freedom, Liveness]{theorem}{lemProcessPropertiesVerif}%
  \label{lem:stenv-proc-properties}
  \label{lem:deadlock-freedom}%
 Assume $\stJudge{\mpEnvEmpty\!}{\!\setenum{\stEnvMap{\mpChanRole{\mpS}{\roleP}}{\gtProj{\gtG}{\roleP}}}_{%
   \roleP \in \gtRoles{\gtG}}}{\!\mpP}$, %
  where 
  $\mpP \equiv \mpBigPar{\roleP \in  \gtRoles{\gtG}}{\mpP[\roleP]}$ %
  and for each $\mpP[\roleP]$, $\stJudge{\mpEnvEmpty\!}{\stEnvMap{\mpChanRole{\mpS}{\roleP}}{\gtProj{\gtG}{\roleP}}}{\!\mpP[\roleP]}$.
  Further, assume that each $\mpP[\roleP]$
  is either $\mpNil$ (up to $\equiv$), %
  or only plays $\roleP$ in $\mpS$, by $\stEnvMap{\mpChanRole{\mpS}{\roleP}}{\gtProj{\gtG}{\roleP}}$. %
  Then, $\mpP$ is deadlock-free and live. 
\end{restatable}

\begin{example}[Typed Process Properties]
\label{ex:proc_properties}
The process $\mpP[\roleFmt{s}] \mpPar \mpP[\roleFmt{c}] \mpPar \mpP[\roleFmt{a}]$ 
from~\Cref{ex:typed_process_auth} %
is both deadlock-free and live, as can be easily verified by applying either~\Cref{def:proc-properties} or~\Cref{lem:stenv-proc-properties}.
\end{example}

\section{Related Work}%
\label{sec:related}

In this section, we explore related work on multiparty session types (MPST). 
We begin by discussing top-down frameworks, with emphasis on their proof methodologies for type soundness (subject reduction), while also including work on the projectability and implementability of global types.  
We then cover developments based on the bottom-up approach. Additionally, we discuss mechanisations of MPST, focusing on projection-based settings. This overview is intended to trace the development of these approaches rather than to provide an exhaustive survey, and therefore not all related work is included.

\subsection{Top-Down Multiparty Session Types}
\label{sec:related:topdown}

The classic top-down MPST framework, with its notions of 
\emph{global types} and \emph{projections}, 
was first introduced in~\citet{HYC08} and fully developed in~\citet{HYC16}, %
where \emph{linearity conditions} 
were incorporated to ensure the well-formedness of global types and the \emph{projectability} of local types. %
Subsequently, \citet{BCDDDY08} proposed %
a simplified MPST system without type-level channel declarations, 
which has since  been widely adopted in most works, both theoretical and practical, 
including ours. Later, \citet{CarboneMSY15,CarboneLMSW16} and \citet{Caires2016Binary} investigated the logical foundations of MPST under restricted classes of global types.

We now classify some related works according to  their treatment of \emph{projection}, \emph{consistency}, and \emph{association}, as well as the correctness of their proofs for the subject reduction theorem.

\newcounter{CounterRelatedTaxonomy}%
\renewcommand{\theCounterRelatedTaxonomy}{\alph{CounterRelatedTaxonomy}}%
\newcommand{\RelTax}[1]{%
  \refstepcounter{CounterRelatedTaxonomy}\label{#1}%
  \textbf{(\theCounterRelatedTaxonomy)}%
}%
\newcommand{\refRelTax}[1]{\textbf{(\ref{#1})}}%
\begin{center}
  \small%
  \setlength\tabcolsep{1mm}%
\begin{tabular}{@{\hskip 0mm}c|c|c|c|c@{\hskip 0mm}}
   \textbf{Papers} &%
  \textbf{Projection} & \textbf{Consistency} & \textbf{Association} & \textbf{Subject Reduction}%
  \\[1mm]%
  \RelTax{item:related-plain}%
  \begin{minipage}{0.2\linewidth}
    ~\cite{HYC08,HYC16,BCDDDY08,CarboneMSY15,CarboneLMSW16,Coppo2015GentleIntroMAPST}%
  \end{minipage}
  &%
  $\leq$ plain & yes & no & correct %
  \\[1mm]
  \hline%
  \RelTax{item:related-full}%
  \begin{minipage}{0.2\linewidth}
   ~\cite{ParameterisedYDBH10,ParameterisedYDBH12,Denielou2012,CHEN2015708,TY2016}%
  \end{minipage}
  &%
  $\geq$ full & no &no & flawed %
  \\[1mm]
  \hline%
  \RelTax{item:related-full-consistent}%
  \begin{minipage}{0.2\linewidth}
    ~\cite{SDHY2017,TY2017}%
  \end{minipage}
  &%
  full & yes (required) & no & correct %
  \\[1mm]
  \hline%
  \RelTax{item:related-association}%
  \begin{minipage}{0.2\linewidth}
    ~\cite{BHYZ2023,lmcs2025,DBLP:series/lncs/YoshidaH24,DBLP:conf/ecoop/HouLY24}%
  \end{minipage}
  &%
  full & no & yes & correct 
  \\[1mm]
  \hline%
  \RelTax{item:related-other}%
  \begin{minipage}{0.2\linewidth}
    ~\cite{DGJPY15,GJPSY2019,Caires2016Binary,DBLP:conf/ecoop/LagaillardieNY22}%
  \end{minipage}
  &%
  full & no & no & correct 
\end{tabular}
\end{center}

\textbf{Row~\refRelTax{item:related-plain}} %
lists works that use \emph{plain} (or stricter) %
global type projection, 
which guarantees \emph{consistency}. 
Consequently, their proofs of type soundness, \ie subject reduction,  are correct, as they rely on consistency as the invariant $\varphi$ to be satisfied in \textbf{(SR3)} in~\cref{sec:intro}. 
However, this plain projection is overly restrictive, excluding many valid protocols -- even a simple one 
such as $\gtG[\text{f}]$ in~\cref{sec:intro} -- at the cost of guaranteeing consistency.

\textbf{Row~\refRelTax{item:related-full}} %
lists works that adopt \emph{full} (or more flexible) %
global type projection, %
originally introduced in~\citet{ParameterisedYDBH10} %
to support a wider range of protocols. %
These works overlook the consistency requirement,  %
and, as shown in \Cref{sec:intro}, subject reduction proofs %
that rely on full projection (without consistency) %
are flawed. %

To ``fix''  these issues within MPST theory, 
the works in \textbf{row \refRelTax{item:related-full-consistent}}   
enforce consistency while keeping full projection, which restricts typability and thus falls back into over-restriction.
In contrast, our proposed proof technique, based on the \emph{association} relation under full projection, 
establishes subject reduction within the top-down MPST framework without losing expressivity. 

\textbf{Row~\refRelTax{item:related-association}} specifies works that 
employ \emph{association} relations similar to the one used in our work, thereby 
guaranteeing subject reduction. 
The notion of association was first  proposed in~\citet{BHYZ2023} and further developed in~\citet{lmcs2025} within an MPST framework with crash-stop failures, 
and later extended in~\citet{DBLP:conf/ecoop/HouLY24} to incorporate time and failure handling. 
In these settings, the association relation is used to establish a sound and complete operational correspondence between global and local semantics, ensuring that key global type properties are preserved in local types through projection. 
Our work, however, develops this idea into the foundation of a new invariant-based proof approach to subject reduction.  
Moreover, the framework of~\cite{BHYZ2023,lmcs2025} is limited to a \emph{single-session} type system with 
first-order session types (\ie without channel passing), and both~\cite{BHYZ2023,lmcs2025} and~\cite{DBLP:conf/ecoop/HouLY24} assume asynchronous communication and adopt a channel-oriented subtyping discipline. As a result, their association relations differ substantially from ours, which is formulated for a synchronous model with a distinct subtyping direction and different applicability.

Additionally, there are several MPST works that fall within the top-down framework but are \emph{neither} based on the classic projection+consistency approach \emph{nor}  employ the association relation as in our work, as shown in \textbf{row~\refRelTax{item:related-other}}. %
\citet{DGJPY15}, with a journal version in~\citet{GJPSY2019}, introduce a  single-session type system without channel passing, similar to that of~\citet{BHYZ2023,lmcs2025}. 
Rooted in global types and their projections, it does \emph{not} require consistency. Such a system is  strictly subsumed by our framework, which, in contrast, supports higher-order types and multiple interleaved sessions. Moreover, their subject reduction proof strategy proceeds by reasoning directly on global types and their semantics, whereas ours relies on the operational correspondence between global types and typing contexts via association, making explicit use of typing contexts in the proof.

\citet{Caires2016Binary} develop %
a theory of multiparty session types %
encoded into binary sessions, with a type system based on linear logic from~\citet{CairesP10} and~\citet{Wadler12}. %
A related multiparty-to-binary session decomposition %
was later studied in~\citet{SDHY2017},  with a crucial difference: 
in \cite{SDHY2017}, consistency is a \emph{necessary} requirement~(formalised in their Theorem 6.3), %
whereas in \cite{Caires2016Binary} it is not, despite supporting full projection and merging.  
This distinction arises because 
the decomposition in \cite{Caires2016Binary} %
introduces a centralised \emph{medium process} (similar to the \emph{arbiter} in \citet{CarboneLMSW16}) %
that receives and forwards all messages %
between processes playing different roles, whereas the decomposition in \cite{SDHY2017} %
preserves the peer-to-peer nature of MPST interactions. This suggests that, when decomposing multiparty choreographies into linear binary interactions, consistency is necessary if and only if no centralised medium process is introduced. Our present work supports binary sessions as a special case, without requiring either consistency or medium processes.

Finally, \citet{DBLP:conf/ecoop/LagaillardieNY22} present an MPST framework with affine communication channels and implicit/explicit cancellation mechanisms, where subject reduction is formalised in the style of~\citet{POPL19LessIsMore}, with typing-context safety as the key invariant. 
Since this property is guaranteed by projection, it suffices to establish correctness. Our approach, by contrast, adopts a different invariant -- association -- and explicitly incorporates both global types and projection into the theorem, aligning more closely with the top-down formulation.  

\paragraph{Projectability and Implementability of Global Types} 
A complementary line of top-down work focuses on the projectability and implementability of global types, addressing the existence and behavioural correctness of projections. 
In this setting, the implementability problem asks whether there exist local specifications for all roles such that
their composition is deadlock-free and generates exactly the executions specified by the global type.

\citet{DBLP:conf/ecoop/Stutz23} establishes decidability of implementability for a class of MPSTs via a reduction to safe realisability of globally cooperative high-level message sequence charts. 
\citet{DBLP:conf/cav/LiSWZ23} introduce an automata-theoretic projection framework that separates synthesis from implementability checking, yielding a practical sound and complete projection operator for general MPSTs with a PSPACE decision procedure. 
\citet{DBLP:conf/esop/StutzD25} further generalise this approach by extending the projection algorithm of~\cite{DBLP:conf/cav/LiSWZ23} to Protocol State Machines, a highly expressive formalism for global protocol specifications, and deriving Communicating State Machines as local specifications.

In contrast, the works discussed earlier in this section that employ projection -- including our own -- do not focus on completeness or decidability of projection. Instead, these approaches typically rely on syntactic projection operators  and study how type soundness can be ensured in the presence of a fixed projection.

\subsection{Bottom-Up Multiparty Session Types}

The first attempt to develop a theory of MPST based on a bottom-up method, rather than the top-down approach with global types and projections, was presented in~\citet{ScalasY18}. Subsequently, \citet{POPL19LessIsMore} introduced a general MPST typing system that does not rely on global types: it ensures desired process properties by checking the corresponding properties of typing contexts. 
These properties are specified in modal $\mu$-calculus formulae and verified using the mCRL2 checker from~\citet{DBLP:conf/tacas/BunteGKLNVWWW19}. 
Unlike the top-down approach, it is not constrained by the projectability or implementability of global types, thereby capturing the full class of well-behaved processes while preserving type soundness. 
However, \citet{DBLP:journals/pacmpl/UdomsrirungruangY25} observe that this gain in typability comes at a higher computational cost. 
In asynchronous MPST with unbounded FIFO queues, the choice of methodology becomes even more crucial: type checking is \emph{undecidable} under the bottom-up approach, as shown by~\citet{POPL19LessIsMore}, but remains decidable in the top-down approach we adopt,  thanks to the decidability of end-point projection, as presented in~\citet{HYC08,HYC16} and \citet{POPL19LessIsMore}.

Over time, this line of work has also inspired a number of extensions.  For example,  \citet{DBLP:conf/ecoop/00020DG21} adapt the bottom-up approach to model actor systems with explicit connection types from~\citet{HY2017}; \citet{BSYZ2022} apply it to account for crash-stop failures;  
\citet{DBLP:conf/esop/BrunD23} broaden it  to address a wider range of fault models;  \citet{DBLP:conf/esop/BrunFD25} extend it into a session-typed multiparty process calculus with replication and first-class roles; 
and~\citet{DBLP:conf/esop/GiuntiY25} develop it further with iso-recursion. 

\subsection{Mechanisations in Multiparty Session Types}

Due to the complexity of MPST theories, numerous mechanisations have been developed across different frameworks to ensure correctness rigorously. To focus the discussion, we restrict attention to works that
employ mechanised proofs in MPST frameworks involving projection.

\citet{DBLP:conf/pldi/Castro-Perez0GY21} introduce Zooid, a domain-specific language embedded in Rocq for certified multiparty communication.  Zooid formalises the syntaxes of global and local types inductively, together with a coinductive tree-based representation supporting a semantic interpretation. 
By defining an unravelling relation from types to trees and a projection operation from global to local specifications, they mechanise the result that projection is preserved under unravelling. 
Moreover, Zooid provides mechanisation proofs of sound and complete correspondence between the labelled transition systems of global and local types, in terms of execution-trace equivalence, thereby ensuring properties 
such as deadlock-freedom and protocol compliance. Additionally, the syntax and semantics of processes, together with a typing system, are formalised, and type preservation is mechanised.

\citet{DBLP:conf/itp/TiroreBC23}, with a journal version in~\citet{DBLP:journals/jar/TiroreBC25}, 
define a computable projection function from global types to local types and provide a Rocq mechanisation of its 
soundness and completeness with respect to a coinductive tree semantics. 
Their work addresses limitations of existing computable projections by introducing a function that is equally expressive as its coinductive, non-computable counterpart, while remaining decidable. However, the mechanisation focuses on projection, independently of any process or typing calculus; consequently, properties such as subject reduction, progress, and type safety for processes are not addressed.

\citet{DBLP:journals/pacmpl/JacobsBK22a} extend earlier work on a compiler for a functional language with binary session types by~\citet{DBLP:conf/esop/TassarottiJ017}, based on a simplified variant of the GV (``Good Variation'') system by~\citet{DBLP:journals/jfp/GayV10}, to MPGV, which enriches a linear lambda calculus with multiparty sessions and supports participant redirection and dynamic thread spawning. Their type system incorporates both global and local types, with projection operations similar to those of~\citet{DBLP:conf/pldi/Castro-Perez0GY21}.  
All guarantees for well-typed processes in MPGV are formally mechanised in Rocq, including type safety, session fidelity, global progress, deadlock-freedom, and leak-freedom.

Additionally, several closely related mechanisation works address subject reduction. 
\citet{DawitThesis} provides a Rocq formalisation of subject reduction for the multiparty session $\pi$-calculus of~\citet{HYC08,HYC16}, including session initialisation and delegation. The type system is based on 
channel-explicit global and local types, with projections derived from~\citet{DBLP:conf/itp/TiroreBC23,DBLP:journals/jar/TiroreBC25}. Channel-explicit types require additional linearity checks to guarantee the projectability of global types.  
In follow-up work, \citet{DBLP:conf/ecoop/TiroreBC25} extend these results by formalising proofs of communication safety and safety preservation in Rocq.

\citet{DBLP:conf/itp/EkiciKY25} develop a Rocq mechanisation of subject reduction and progress for MPST using coinductive reasoning over type trees derived from global and local types. Their approach exploits structural properties of these trees to refine projection accuracy in the presence of the precise subtyping discipline introduced by ~\citet{GJPSY2019}, integrating subtyping into type checking and thereby extending the expressiveness of the type system.

All the mechanisations discussed above rely on plain (or stricter) merging. 

Finally, regarding the mechanisation of implementability of global protocols,  
\citet{DBLP:conf/itp/LiW25} present a Rocq formalisation of the precise implementability
characterisation by~\citet{DBLP:conf/cav/LiSWZ23}, discussed earlier in~\Cref{sec:related:topdown}.   
Their work unifies distinct frameworks, simplifies existing proof arguments, and makes explicit the construction of canonical implementations. The mechanisation further reveals a subtle issue in the semantics of infinite behaviours and shows that the implementability characterisation extends to protocols with infinitely many participants.

\section{Conclusion}
\label{sec:conclusion}
This paper addresses a recent concern in the multiparty session type community: namely, that the top-down approach with mergeability is unsound, or more strongly, 
that global types themselves are inherently problematic, and thus correctness-by-construction cannot be achieved from them. To clarify this, we introduce a new notion, \emph{association}, which relates global types to sets of local types, and apply it to develop a general proof technique for establishing type soundness in top-down MPST frameworks. With this method, we show that a sound typing system can indeed be obtained via endpoint projection with mergeability, thereby demonstrating that global types provide a clear and principled foundation for proving type soundness. Moreover, we establish that the top-down typing discipline, supported by endpoint projection, guarantees type safety, deadlock-freedom, and liveness of session processes by construction.

Future work includes applying the association-based proof technique to MPST with alternative subtyping disciplines and projection methods, as well as to extended variants such as those incorporating probabilistic behaviour or more advanced fault-tolerant models. These directions will further evaluate the robustness and generality of the association method, and provide a unified foundation for establishing type soundness across a broader spectrum of MPST frameworks. In addition, we plan to verify the correctness of our approach through mechanisation, particularly with respect to full merging.

\section*{Acknowledgments}
\noindent 
We thank the reviewers for their detailed and constructive comments and suggestions.  We are particularly grateful to the reviewers, as well as to Jake Masters and Joe Paulus, for prompting a correction to the definition of process liveness (Definition 4.2). 
This work was partially supported by EPSRC grants EP/T006544/2, EP/T014709/2, EP/Y005244/1,
 EP/V000462/1, EP/X015955/1, EP/Z0005801/1; 
 Horizon EU TaRDIS 101093006 (UKRI No.~10066667); Advanced Research and Invention Agency~(ARIA) Safeguarded AI;  and a grant from the Simons Foundation.

\bibliographystyle{elsarticle-harv} 

\bibliography{main,popl19}

\newpage 

\appendix

\renewcommand{\thesection}{\Alph{section}}

\section{Proofs for~\Cref{sec:sessiontypes}}
\label{ch:association_proofs}

\subsection{Roles}
\begin{definition}[Roles in Global Types]
\label{app:def:role_set}
The set of roles in a global type $\gtG$, denoted by $\gtRoles{\gtG}$, is defined inductively as: 

\smallskip
\centerline{\(
\begin{array}{c}
    \gtRoles{\gtComm{\roleP}{\roleQ}{i \in I}{\gtLab[i]}{\tyGround[i]}{\gtG[i]}
    } = 
     \setenum{\roleP, \roleQ} \cup \bigcup\limits_{i \in I}{\gtRoles{\gtG[i]}}
     \\[1em]
        \gtRoles{\gtEnd} = \gtRoles{\gtRecVar} = \emptyset
        \qquad 
        \gtRoles{\gtRec{\gtRecVar}{\gtG}}
     = 
     \gtRoles{\gtG\subst{\gtRecVar}{\gtRec{\gtRecVar}{\gtG}}}
  \end{array}
 \)}

\end{definition}

\subsection{Unfolding}

\begin{definition}[Type Unfolding]
\label{app:def:unfolding}
The \emph{unfolding} of a global type $\gtG$, written $\unfoldOne{\gtG}$,  is defined as: 

\smallskip
\centerline{\(
\unfoldOne{\gtRec{\gtRecVar}{\gtG}} = \gtG\subst{\gtRecVar}{\gtRec{\gtRecVar}{\gtG}} \qquad 
\unfoldOne{\gtG} = \gtG \quad \text{if } \gtG \neq \gtRec{\gtRecVar}{\gtGi}
\)}

\smallskip
\noindent
The \emph{unfolding} of a local type $\stT$, written $\unfoldOne{\stT}$, is defined similarly: 

\smallskip
\centerline{\(
\unfoldOne{\stRec{\stRecVar}{\stT}} = \stT\subst{\stRecVar}{\stRec{\stRecVar}{\stT}} \qquad 
\unfoldOne{\stT} = \stT \quad \text{if } \stT \neq \stRec{\stRecVar}{\stTi}
\)}
\end{definition}

\begin{proposition}
\label{app:lem:unfold-no-rec}
  For a closed, well-guarded global type $\gtG$, $\unfoldOne{\gtG}$ can only be of
  form $\gtEnd$, or $\gtComm{\roleP}{\roleQ}{}{\cdots}{}{}$. 
  For a closed, well-guarded local type $\stT$, $\unfoldOne{\stT}$ can only be
  of form $\stEnd$, $\stIntSum{\roleP}{}{\cdots}$, or
  $\stExtSum{\roleP}{}{\cdots}$.
\end{proposition}
\begin{proof}
  $\gtRecVar$ will not appear since we require closed types.
  $\gtRec{\gtRecVar}{\gtG'}\subst{\gtRecVar}{\gtRec{\gtRecVar}{\gtG'}} \neq
  \gtRec{\gtRecVar}{\gtG'}$ since we require well-guarded types (recursive
  types are contractive).
  Similar argument for local types. 
  \qedhere
\end{proof}

\begin{proposition}
\label{app:prop:roles_unfold}
$\gtRoles{\gtG} = \gtRoles{\unfoldOne{\gtG}}$. 
\end{proposition}
\begin{proof}
Follows directly from~\Cref{app:def:role_set,app:def:unfolding}. 
\qedhere
\end{proof}

\subsection{Subtyping}
\label{sec:app:subtyping}
\begin{lemma}[Subtyping is Reflexive]
\label{app:lem:subtyping:reflexive}
  For any closed, well-guarded local type $\stT$, $\stT \stSub \stT$ holds.
\end{lemma}
\begin{proof}
We construct a relation $R = \setenum{(\stT, \stT)}$.  
It is trivial to show that $R$ satisfies all clauses 
of~\Cref{def:subtyping}, and hence,  $R \subseteq \stSub$. 
\qedhere
\end{proof}

\begin{lemma}[Subtyping is Transitive]
	\label{app:lem:subtyping:transitive}
For any closed, well-guarded local types $\stT[1]$, $\stT[2]$, $\stT[3]$, 
if $\stT[1] \stSub \stT[2]$ and $\stT[2] \stSub \stT[3]$, then $\stT[1] \stSub \stT[3]$ holds. 
\end{lemma}
\begin{proof}
By constructing a relation 
$R = \setcomp{(\stT[1], \stT[3])}{\exists \stT[2] \text{ such that } \stT[1] \stSub \stT[2] \text{ and } \stT[2] \stSub \stT[3]}$, and showing that 
$R \subseteq \stSub$. 
\qedhere
\end{proof}

\begin{lemma}[Subtyping with Unfolding]
	\label{app:lem:subtyping:unfolding}
  For any closed, well-guarded local type $\stT$, 
  \begin{enumerate*}[label=(\arabic*)]
    \item $\unfoldOne{\stT} \stSub \stT$, and 
    \item $\stT \stSub \unfoldOne{\stT}$.
  \end{enumerate*}
\end{lemma}
\begin{proof}
  \begin{enumerate}[label=(\arabic*)]
    \item By $\inferrule{\iruleStSubRecR}$ if $\stT = \stRec{\stRecVar}{\stTi}$. Otherwise, 
	    by reflexivity~(\Cref{app:lem:subtyping:reflexive}).
    \item By $\inferrule{\iruleStSubRecL}$ if $\stT = \stRec{\stRecVar}{\stTi}$. Otherwise, 
	    by reflexivity~(\Cref{app:lem:subtyping:reflexive}).
	    \qedhere
  \end{enumerate}
\end{proof}

\begin{lemma}[Inversion of Subtyping]
	\label{app:lem:subtyping:inversion}
  ~
  \begin{enumerate}
    \item If
      $
       \stIntSum{\roleP}{i \in I}{\stChoice{\stLab[i]}{\stS[i]} \stSeq \stT[i]}
      \stSub \stU$, then
      $\unfoldOne{\stU} =
       \stIntSum{\roleP}{j \in J}{\stChoice{\stLabi[j]}{\stSi[j]} \stSeq
       \stTi[j]}
      $, and $I \subseteq J$,
      and $\forall i \in I: \stLab[i] = \stLabi[i], \stSi[i] \stSub 
      \stS[i]$ and $\stT[i] \stSub \stTi[i]$.
    \item If
      $
       \stExtSum{\roleP}{i \in I}{\stChoice{\stLab[i]}{\stS[i]} \stSeq \stT[i]}
      \stSub \stU$, then
      $\unfoldOne{\stU} =
       \stExtSum{\roleP}{j \in J}{\stChoice{\stLabi[j]}{\stSi[j]} \stSeq
       \stTi[j]}
      $, and $J \subseteq I$,
      and $\forall i \in J: \stLab[i] = \stLabi[i], \stS[i] \stSub
      \stSi[i]$ and $\stT[i] \stSub \stTi[i]$.
    \item If
      $\stU \stSub
       \stIntSum{\roleP}{i \in I}{\stChoice{\stLab[i]}{\stS[i]} \stSeq \stT[i]}
      $, then
      $\unfoldOne{\stU} =
       \stIntSum{\roleP}{j \in J}{\stChoice{\stLabi[j]}{\stSi[j]} \stSeq
       \stTi[j]}
      $, and $J \subseteq I$,
      and $\forall i \in I: \stLab[i] = \stLabi[i], \stS[i] \stSub 
      \stSi[i]$ and $\stTi[i] \stSub \stT[i]$.
    \item If
      $\stU \stSub
       \stExtSum{\roleP}{i \in I}{\stChoice{\stLab[i]}{\stS[i]} \stSeq \stT[i]}
      $, then
      $\unfoldOne{\stU} =
       \stExtSum{\roleP}{j \in J}{\stChoice{\stLabi[j]}{\stSi[j]} \stSeq
       \stTi[j]}
      $, and $I \subseteq J$,
      and $\forall i \in J: \stLab[i] = \stLabi[i], \stSi[i] \stSub
      \stS[i]$ and $\stTi[i] \stSub \stT[i]$.
  \end{enumerate}
\end{lemma}
\begin{proof}
	By~\Cref{app:lem:subtyping:unfolding}, the transitivity of subtyping (\Cref{app:lem:subtyping:transitive}), \Cref{app:lem:unfold-no-rec}, 
	and the definition of subtyping (\Cref{def:subtyping}), in particular rules \inferrule{\iruleStSubIn} and \inferrule{\iruleStSubOut}. 
	\qedhere
\end{proof}

\begin{lemma}
	\label{app:lem:subtyping:merging}
  Given a collection of mergeable local types $\stT[i]$ ($i \in I$).
  For all $j \in I$, $\stMerge{i \in I}{\stT[i]} \stSub \stT[j]$ holds.
\end{lemma}
\begin{proof}
By constructing a relation 
$R =  \setcomp{(\stMerge{i \in I}{\stT[i]}, \stT[j])}{j \in I}$, and showing that  
$R \subseteq \stSub$. 
\qedhere
\end{proof}

\begin{lemma}
\label{app:lem:subtyping:merging_bound}
  Given a collection of mergeable local types $\stT[i]$ ($i \in I$).
  If for all $i \in I$, $\stT[i] \stSub \stU$ for some local type $\stU$,
  then $\stMerge{i \in I}{\stT[i]} \stSub \stU$.
\end{lemma}
\begin{proof}
By constructing a relation 
$R =  \setenum{(\stMerge{i \in I}{\stT[i]}, \stU)}$, and showing that 
$R \subseteq \stSub$. 
\qedhere
\end{proof}

\begin{lemma}
\label{app:lem:subtyping:merging_bound_upper}
  Given a collection of mergeable local types $\stT[i]$ ($i \in I$).
  If for all $i \in I$, $\stU \stSub \stT[i]$ for some local type $\stU$,
  then $\stU \stSub \stMerge{i \in I}{\stT[i]}$.
\end{lemma}
\begin{proof}
By constructing a relation 
$R =  \setenum{(\stU, \stMerge{i \in I}{\stT[i]})}$, and showing that 
$R \subseteq \stSub$. 
(As $\stU \stSub \stT[i]$ holds for all $i \in I$, $\stU$ contains every needed external choice. For the external choices holds that they need to be the same in all $\stT[i]$ and hence $\stU$ has a subset of them.)
\qedhere
\end{proof}

\begin{lemma}
\label{lem:subtype:merge-subty}
Given two collections of mergeable local types $\stS[i],  \stT[i]$ ($i \in I$).
 If for all $i \in I$, $\stS[i] \stSub \stT[i]$, then $\stMerge{i \in I} {\stS[i]} \stSub \stMerge{i \in I}{\stT[i]}$.
\end{lemma}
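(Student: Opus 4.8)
\textbf{Proof plan for \Cref{lem:subtype:merge-subty}.}
The plan is to follow exactly the same recipe used throughout this subsection: exhibit a concrete candidate relation and check it is a subtyping relation by verifying the closure conditions of \Cref{def:subtyping}. Concretely, I would define
\[
  R \;=\; \setcomp{\Big(\stMerge{i \in I}{\stU[i]},\; \stMerge{i \in I}{\stT[i]}\Big)}{\forall i \in I:\; \stU[i] \stSub \stT[i],\ \text{and } \{\stU[i]\}_{i\in I},\{\stT[i]\}_{i\in I}\ \text{each mergable}}
\]
together with (the reflexive closure of, or a union with) the identity on closed well-guarded local types, so that residual pairs arising from payloads and continuations are already related. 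I would then show $R \cup {\stSub} \subseteq {\stSub}$ by checking each clause of \Cref{def:subtyping} against an arbitrary pair $(\stMerge{i\in I}{\stU[i]}, \stMerge{i\in I}{\stT[i]})$.

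The key steps, in order: first, use \Cref{lem:unfold-no-rec} and \Cref{lem:proj-unfold}-style reasoning (really just \Cref{lem:unfold-no-rec}, plus the fact that merge commutes with the recursion/end cases by \Cref{def:global-proj}) to reduce to the case where every $\stU[i]$ and $\stT[i]$ is unfolded, so each is an external choice, an internal choice, or $\stEnd$; merge is only defined when all the $\stU[i]$ agree on this shape and all the $\stT[i]$ agree, and since $\stU[i]\stSub\stT[i]$ forces compatible shapes (external/external, internal/internal, or $\stEnd/\stEnd$), both merges have the same shape. Second, handle the $\stEnd$ case trivially via \inferrule{\iruleStSubEnd}. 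Third, the external-choice case: here $\stMerge{i\in I}{\stU[i]}$ is a branching on the union $\bigcup_i J_i^{\stU}$ of label sets and $\stMerge{i\in I}{\stT[i]}$ on $\bigcup_i J_i^{\stT}$; since each $\stU[i]\stSub\stT[i]$ gives $J_i^{\stU}\subseteq J_i^{\stT}$ with pointwise payload sub/super-typing and continuation subtyping, one checks $\bigcup_i J_i^{\stU}\subseteq\bigcup_i J_i^{\stT}$ and that for each common label the merged payload and merged continuation are again related — the continuations being merges of the per-$i$ continuations, hence in $R$, and the payloads being basic types (no recursion variables as payloads, as noted in \Cref{ex:ctx_not-assoc}) on which $\tyGroundSub$ behaves monotonically. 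Fourth, the internal-choice case is dual and in fact simpler, since merge on internal choices requires all label sets to coincide. Then apply \inferrule{\iruleStSubIn} (resp. \inferrule{\iruleStSubOut}) to close the clause, possibly preceded by \inferrule{\iruleStSubRecL}/\inferrule{\iruleStSubRecR} to account for the unfolding step, using \Cref{lem:unfold-subtyping} to bridge.

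I expect the main obstacle to be the bookkeeping in the external-choice case, namely tracking how the three-way case split in the definition of $\stBinMerge$ on external choices (label in $I\cap J$, in $I\setminus J$, in $J\setminus I$) interacts with the analogous split for the $\stT[i]$, and arguing that on every label $k$ that survives in $\stMerge{i\in I}{\stU[i]}$ the corresponding $\stT$-side continuation is a merge over precisely the set of indices that also have label $k$ — so that the residual continuation pair still lies in $R$. A clean way to manage this is to generalise the candidate relation to allow merges over \emph{different} index sets on the two sides (i.e. pairs $(\stMerge{i\in I'}{\stU[i]}, \stMerge{i\in I}{\stT[i]})$ with $I'\subseteq I$), which makes it closed under the continuation-extraction step; the payload and recursion cases remain routine given \Cref{lem:reflexive-subtyping}, \Cref{lem:transitive-subtyping}, and \Cref{lem:unfold-subtyping}.
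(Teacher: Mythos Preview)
Your approach is essentially the same as the paper's: construct a candidate relation containing the pair of merges and show it is closed under the clauses of \Cref{def:subtyping}; the paper's proof is the one-line sketch ``by constructing a relation $R = \setenum{(\stMerge{i \in I}{\stU[i]}, \stMerge{i \in I}{\stT[i]})}$ and showing $R \subseteq {\stSub}$'', and your plan fleshes out exactly that. One small slip: payloads need not be basic types (the syntax allows $\stS$ to be a session type), but this does not affect your argument since merge requires identical payloads on matching labels, so the residual payload obligation is just the $\stS^{\stU}_k \stSub \stS^{\stT}_k$ you already get from the per-$i$ hypothesis.
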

\begin{proof}
 By constructing a relation $R = \setenum{(\stMerge{i \in I}{\stS[i]}, \stMerge{i \in I}{\stT[i]})}$, and showing 
that $R \subseteq \stSub$. 
\qedhere
\end{proof}

\begin{lemma}[Subtyping of Projection with Unfolding]
\label{app:lem:subtyping_unfolding_projection}
For any closed, well-guarded global type $\gtG$ and role $\roleP$, 
$\gtProj[]{\unfoldOne{\gtG}}{\roleP} \stSub \gtProj[]{\gtG}{\roleP}$ and $\gtProj[]{\gtG}{\roleP} \stSub \gtProj[]{\unfoldOne{\gtG}}{\roleP}$. 
\end{lemma}
\begin{proof}
We proceed by cases on $\gtG$. 
\begin{itemize}
\item Case $\gtG \neq \gtRec{\gtRecVar}{\gtGi}$: by~\Cref{app:def:unfolding}, $\unfoldOne{\gtG} = \gtG$. 
The thesis follows by reflexivity of subtyping~(\Cref{app:lem:subtyping:reflexive}). 
\item Case $\gtG = \gtRec{\gtRecVar}{\gtGi}$: by~\Cref{app:def:unfolding}, $\unfoldOne{\gtG} = \gtGi\subst{\gtRecVar}{\gtRec{\gtRecVar}{\gtGi}}$. 
The proof follows by induction on the structure of $\gtGi$.  
\begin{itemize}
\item Case $\gtGi = \gtEnd$: by the definition of projection~(\Cref{def:global-proj}) and the subtyping rule~\inferrule{\iruleStSubEnd}. 
\item Case $\gtGi =  \gtComm{\roleQ}{\roleR}{i \in I}{\gtLab[i]}{\tyS[i]}{\gtGii[i]}$: subcases are considered on $\roleP$.
\begin{itemize}
\item Subcase $\roleP = \roleQ$ or $\roleP = \roleR$: by the definition of projection~(\Cref{def:global-proj}), the induction hypothesis, 
and the definition of subtyping (\Cref{def:subtyping}). 
\item $\roleP \neq \roleQ$ and $\roleP \neq \roleR$: by the definition of projection~(\Cref{def:global-proj}),  the induction hypothesis,  
the definition of subtyping~(\Cref{def:subtyping}), 
and~\Cref{lem:subtype:merge-subty}. 
\qedhere
\end{itemize} 
\end{itemize}
\end{itemize}
\end{proof}

\subsection{Properties of Global Type and Typing Context Transitions}
\begin{lemma}
\label{app:lem:gt-lts-unfold}
  $\gtG \,\gtMove[\stEnvAnnotGenericSym]\,
  \gtGi$ \;iff\; $\unfoldOne{\gtG} 
  \,\gtMove[\stEnvAnnotGenericSym]\, \gtGi$.
\end{lemma}
\begin{proof}
  By inverting or applying $\inferrule{\iruleGtMoveRec}$ when necessary. 
  \qedhere
\end{proof}

\begin{lemma}[Progress of Global Types]
\label{app:lem:gtype:progress}
  If $\gtG \neq
  \gtEnd$~(where $\gtG$ is a projectable
  global type), 
  then there exists $\gtGi$ such that
  $\gtG \,\gtMove\, 
  \gtGi$.
\end{lemma}
\begin{proof}
  By \Cref{app:lem:gt-lts-unfold}, we only consider unfoldings.
  \begin{itemize}%
\item Case $\unfoldOne{\gtG} = \gtEnd$: the premise does not hold; 
\item Case $\unfoldOne{\gtG} = \gtComm{\roleP}{\roleQ}{i \in I}{\gtLab[i]}{\tyS[i]}{\gtG[i]}$: apply \inferrule{\iruleGtMoveComm} to reduce the global type. 
\qedhere 
\end{itemize}
\end{proof}

\begin{lemma}
\label{lem:global_end}
If $\roleP \notin \gtRoles{\gtG}$, then $\gtProj{\gtG}{\roleP} = \stEnd$.  
\end{lemma}
\begin{proof}
By induction on the structure of $\gtG$: 
\begin{itemize}
\item Case $\gtG = \gtEnd$: trivial as $\gtProj{\gtEnd}{\roleP} = \stEnd$. 
\item Case $\gtG =   \gtComm{\roleQ}{\roleR}{i \in I}{\gtLab[i]}{\tyS[i]}{\gtG[i]}$: 
we have $\roleP \neq \roleQ$, $\roleP \neq \roleR$, and $\forall i \in I: \roleP \notin \gtRoles{\gtG[i]}$. 
 The thesis holds by $\gtProj{\gtG}{\roleP} = \stMerge{i \in I}{\gtProj{\gtG[i]}{\roleP}}$ and the induction hypothesis that 
$\forall i \in I: \gtProj{\gtG[i]}{\roleP} = \stEnd$. 
\item Case $\gtG = \gtRec{\gtRecVar}{\gtGi}$: assume $\gtFv{\gtRec{\gtRecVar}{\gtGi}} \neq \emptyset$.  
By induction hypothesis, $\gtProj{\gtG}{\roleP} = \stRec{\stRecVar}{\gtProj{\gtGi}{\roleP}} = \stRec{\stRecVar}{\stEnd}$, which is an unguarded recursion -- a contradiction. Therefore, the thesis holds directly, as $\gtProj{ \gtRec{\gtRecVar}{\gtGi}}{\roleP} = \stEnd$.
\qedhere  
\end{itemize}
\end{proof}

\begin{lemma}
\label{lem:stenv-red:trivial-1-new}
  If $\stEnv
  \stEnvMoveGenAnnot \stEnvi$, 
  then 
   $\dom{\stEnv} = \dom{\stEnvi}$. 
\end{lemma}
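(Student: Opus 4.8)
The plan is to prove \Cref{lem:stenv-red:trivial-1-new} by induction on the derivation of $\stEnv \stEnvMoveGenAnnot \stEnvi$, following the typing context reduction rules in \cref{fig:gtype:tc-red-rules}. The claim is that each reduction preserves the domain of the typing context, so I would simply inspect each rule and verify that the endpoint/variable being rewritten stays in the domain with the same name, only its associated type changing.

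First, for the base cases \inferrule{\iruleTCtxOut} and \inferrule{\iruleTCtxIn}: here a single entry $\stEnvMap{\mpChanRole{\mpS}{\roleP}}{\cdots}$ reduces to $\stEnvMap{\mpChanRole{\mpS}{\roleP}}{\stT[k]}$, so the domain is the singleton $\setenum{\mpChanRole{\mpS}{\roleP}}$ on both sides, and the claim holds immediately. Next, for \inferrule{\iruleTCtxRec}, the entry $\stEnvMap{\mpChanRole{\mpS}{\roleP}}{\stRec{\stRecVar}{\stT}}$ reduces by first unfolding to $\stEnvMap{\mpChanRole{\mpS}{\roleP}}{\stT\subst{\stRecVar}{\stRec{\stRecVar}{\stT}}}$ and then applying the induction hypothesis to the premise derivation $\stEnvMap{\mpChanRole{\mpS}{\roleP}}{\stT\subst{\stRecVar}{\stRec{\stRecVar}{\stT}}} \stEnvMoveGenAnnot \stEnvi$; since unfolding does not change the domain (still $\setenum{\mpChanRole{\mpS}{\roleP}}$) and the IH gives $\dom{\stEnvMap{\mpChanRole{\mpS}{\roleP}}{\stT\subst{\stRecVar}{\stRec{\stRecVar}{\stT}}}} = \dom{\stEnvi}$, the result follows. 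For the congruence rules \inferrule{\iruleTCtxCongBasic} and \inferrule{\iruleTCtxCong}, the reduction $\stEnv \stEnvComp \stEnvMap{\mpC}{\stT} \stEnvMoveGenAnnot \stEnvi \stEnvComp \stEnvMap{\mpC}{\stT}$ is driven by a premise $\stEnv \stEnvMoveGenAnnot \stEnvi$, so by the IH $\dom{\stEnv} = \dom{\stEnvi}$, and adjoining the same extra entry $\stEnvMap{\mpC}{\stT}$ (resp.\ $\stEnvMap{\mpFmt{x}}{\tyGround}$) to both sides preserves domain equality. Finally, for \inferrule{\iruleTCtxCom}, the reduction $\stEnv[1] \stEnvComp \stEnv[2] \stEnvMoveCommAnnot{\mpS}{\roleP}{\roleQ}{\stLab} \stEnvi[1] \stEnvComp \stEnvi[2]$ uses premises $\stEnv[1] \stEnvMoveOutAnnot{\roleP}{\roleQ}{\cdots} \stEnvi[1]$ and $\stEnv[2] \stEnvMoveInAnnot{\roleQ}{\roleP}{\cdots} \stEnvi[2]$, to each of which the IH applies, giving $\dom{\stEnv[1]} = \dom{\stEnvi[1]}$ and $\dom{\stEnv[2]} = \dom{\stEnvi[2]}$; hence $\dom{\stEnv[1] \stEnvComp \stEnv[2]} = \dom{\stEnv[1]} \cup \dom{\stEnv[2]} = \dom{\stEnvi[1]} \cup \dom{\stEnvi[2]} = \dom{\stEnvi[1] \stEnvComp \stEnvi[2]}$.

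This proof is entirely routine; there is no real obstacle. The only point requiring a word of care is the \inferrule{\iruleTCtxRec} case, where one must observe that replacing a recursive type by its one-step unfolding leaves the domain of the singleton context untouched before invoking the induction hypothesis — but this is immediate since both $\stRec{\stRecVar}{\stT}$ and $\stT\subst{\stRecVar}{\stRec{\stRecVar}{\stT}}$ are assigned to the same channel $\mpChanRole{\mpS}{\roleP}$. Everything else is a direct structural check that each rule rewrites types in place without adding or removing channel/variable bindings.

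\begin{proof}
  By induction on the derivation of $\stEnv \stEnvMoveGenAnnot \stEnvi$~(\Cref{def:mpst-env-reduction}), using the rules in~\cref{fig:gtype:tc-red-rules}.
  \begin{itemize}[left=0pt, topsep=0pt]
  \item Cases \inferrule{\iruleTCtxOut} and \inferrule{\iruleTCtxIn}: both $\stEnv$ and $\stEnvi$ are singleton contexts over the same endpoint $\mpChanRole{\mpS}{\roleP}$, so $\dom{\stEnv} = \setenum{\mpChanRole{\mpS}{\roleP}} = \dom{\stEnvi}$.
  \item Case \inferrule{\iruleTCtxRec}: here $\stEnv = \stEnvMap{\mpChanRole{\mpS}{\roleP}}{\stRec{\stRecVar}{\stT}}$ and the premise is $\stEnvMap{\mpChanRole{\mpS}{\roleP}}{\stT\subst{\stRecVar}{\stRec{\stRecVar}{\stT}}} \stEnvMoveGenAnnot \stEnvi$. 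Since $\dom{\stEnv} = \setenum{\mpChanRole{\mpS}{\roleP}} = \dom{\stEnvMap{\mpChanRole{\mpS}{\roleP}}{\stT\subst{\stRecVar}{\stRec{\stRecVar}{\stT}}}}$, and the induction hypothesis yields $\dom{\stEnvMap{\mpChanRole{\mpS}{\roleP}}{\stT\subst{\stRecVar}{\stRec{\stRecVar}{\stT}}}} = \dom{\stEnvi}$, we conclude $\dom{\stEnv} = \dom{\stEnvi}$.
  \item Cases \inferrule{\iruleTCtxCongBasic} and \inferrule{\iruleTCtxCong}: the premise is $\stEnv[0] \stEnvMoveGenAnnot \stEnvi[0]$ with $\stEnv = \stEnv[0] \stEnvComp \stEnvMap{\mpC}{\stT}$ (resp.\ with $\stEnvMap{\mpFmt{x}}{\tyGround}$) and $\stEnvi = \stEnvi[0] \stEnvComp \stEnvMap{\mpC}{\stT}$. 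By the induction hypothesis $\dom{\stEnv[0]} = \dom{\stEnvi[0]}$, hence $\dom{\stEnv} = \dom{\stEnv[0]} \cup \setenum{\mpC} = \dom{\stEnvi[0]} \cup \setenum{\mpC} = \dom{\stEnvi}$.
  \item Case \inferrule{\iruleTCtxCom}: the premises are $\stEnv[1] \stEnvMoveOutAnnot{\roleP}{\roleQ}{\stChoice{\stLab}{\stS}} \stEnvi[1]$ and $\stEnv[2] \stEnvMoveInAnnot{\roleQ}{\roleP}{\stChoice{\stLab}{\stSi}} \stEnvi[2]$, with $\stEnv = \stEnv[1] \stEnvComp \stEnv[2]$ and $\stEnvi = \stEnvi[1] \stEnvComp \stEnvi[2]$. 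By the induction hypothesis, $\dom{\stEnv[1]} = \dom{\stEnvi[1]}$ and $\dom{\stEnv[2]} = \dom{\stEnvi[2]}$; therefore $\dom{\stEnv} = \dom{\stEnv[1]} \cup \dom{\stEnv[2]} = \dom{\stEnvi[1]} \cup \dom{\stEnvi[2]} = \dom{\stEnvi}$.
  \end{itemize}
\end{proof}
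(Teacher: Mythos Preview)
Your proof is correct and follows exactly the approach indicated by the paper, which simply states ``By induction on typing context reductions.'' You have spelled out each case of that induction in detail, and all of them are handled correctly.
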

\begin{proof}
By induction on typing context transitions. 
\qedhere 
\end{proof}

\begin{lemma}
\label{lem:stenv-red:trivial-2}
  If $\stEnv
  \stEnvMoveGenAnnot \stEnvi$ and $\dom{\stEnv} =
  \setenum{\mpS}$, then 
    for all  $\mpChanRole{\mpS}{\roleP} \in \dom{\stEnv}$ with 
      $\roleP \notin \ltsSubject{\stEnvAnnotGenericSym}$, %
      $\stEnvApp{\stEnv}{\mpChanRole{\mpS}{\roleP}} =
      \stEnvApp{\stEnvi}{\mpChanRole{\mpS}{\roleP}}$.
\end{lemma}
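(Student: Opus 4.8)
\textbf{Proof Plan for \Cref{lem:stenv-red:trivial-2}.}

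The plan is to proceed by induction on the derivation of the typing context reduction $\stEnv \stEnvMoveGenAnnot \stEnvi$ (\Cref{def:mpst-env-reduction}), using the rules of \Cref{fig:gtype:tc-red-rules}. Since we assume $\dom{\stEnv} = \setenum{\mpS}$, every entry in $\stEnv$ is of the form $\stEnvMap{\mpChanRole{\mpS}{\roleQ}}{\stT}$ for some role $\roleQ$; in particular the basic-type congruence rule \inferrule{\iruleTCtxCongBasic} cannot apply at the top, so the relevant cases are \inferrule{\iruleTCtxOut}, \inferrule{\iruleTCtxIn}, \inferrule{\iruleTCtxCom}, \inferrule{\iruleTCtxRec}, and \inferrule{\iruleTCtxCong}.

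For the base cases \inferrule{\iruleTCtxOut} and \inferrule{\iruleTCtxIn}: here $\stEnv = \stEnvMap{\mpChanRole{\mpS}{\roleP}}{\stS}$ is a singleton, the label has $\ltsSubject{\stEnvAnnotGenericSym} = \setenum{\roleP}$, and there simply is no $\mpChanRole{\mpS}{\roleQ} \in \dom{\stEnv}$ with $\roleQ \notin \setenum{\roleP}$, so the claim holds vacuously. For \inferrule{\iruleTCtxCom}, we have $\stEnv = \stEnv[1] \stEnvComp \stEnv[2]$ with $\stEnv[1] \stEnvMoveOutAnnot{\roleP}{\roleQ}{\ldots} \stEnvi[1]$ and $\stEnv[2] \stEnvMoveInAnnot{\roleQ}{\roleP}{\ldots} \stEnvi[2]$; the label is $\stEnvCommAnnotSmall{\roleP}{\roleQ}{\stLab}$ with subject $\setenum{\roleP,\roleQ}$. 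By the earlier base cases (or a small sub-induction), $\stEnv[1]$ and $\stEnv[2]$ are themselves singletons whose unique endpoints are $\mpChanRole{\mpS}{\roleP}$ and $\mpChanRole{\mpS}{\roleQ}$ respectively, so again there is no channel outside the subject, and the claim is vacuous. For \inferrule{\iruleTCtxRec}, the context is $\stEnvMap{\mpChanRole{\mpS}{\roleP}}{\stRec{\stRecVar}{\stT}}$, again a singleton; the claim follows vacuously (or directly from the induction hypothesis applied to the unfolded entry, noting the domain is unchanged by \Cref{lem:stenv-red:trivial-1-new}).

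The one genuinely inductive case is \inferrule{\iruleTCtxCong}: $\stEnv = \stEnvii \stEnvComp \stEnvMap{\mpC}{\stT}$ and $\stEnvi = \stEnvii' \stEnvComp \stEnvMap{\mpC}{\stT}$ with $\stEnvii \stEnvMoveGenAnnot \stEnvii'$. Since $\dom{\stEnv} = \setenum{\mpS}$, we have $\mpC = \mpChanRole{\mpS}{\roleR}$ for some $\roleR$, and $\dom{\stEnvii} = \setenum{\mpS}$ as well, so the induction hypothesis applies to $\stEnvii$: for every $\mpChanRole{\mpS}{\roleP'} \in \dom{\stEnvii}$ with $\roleP' \notin \ltsSubject{\stEnvAnnotGenericSym}$, $\stEnvApp{\stEnvii}{\mpChanRole{\mpS}{\roleP'}} = \stEnvApp{\stEnvii'}{\mpChanRole{\mpS}{\roleP'}}$. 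For the freshly-framed endpoint $\mpChanRole{\mpS}{\roleR}$, its image is literally $\stT$ in both $\stEnv$ and $\stEnvi$, so equality holds regardless of whether $\roleR$ is in the subject. Combining these, the conclusion holds for all of $\dom{\stEnv}$. I do not expect a serious obstacle here: the only point requiring mild care is confirming that the ``leaf'' reductions (\inferrule{\iruleTCtxOut}, \inferrule{\iruleTCtxIn}, \inferrule{\iruleTCtxCom}) always act on singletons (for the first two) or on a pair whose endpoints are exactly the two subjects (for \inferrule{\iruleTCtxCom}), which is immediate from inspecting the rule shapes, so the ``off-subject'' quantifier ranges over an empty set in each base case.
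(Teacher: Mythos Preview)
Your approach---induction on the derivation of $\stEnv \stEnvMoveGenAnnot \stEnvi$---is exactly what the paper does (its entire proof reads ``By induction on typing context reductions''), and your case analysis is essentially correct.

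One point needs adjustment: in the \inferrule{\iruleTCtxCom} case you assert that $\stEnv[1]$ and $\stEnv[2]$ are necessarily singletons. They are not. The premises $\stEnv[1] \stEnvMoveOutAnnot{\roleP}{\roleQ}{\ldots} \stEnvi[1]$ and $\stEnv[2] \stEnvMoveInAnnot{\roleQ}{\roleP}{\ldots} \stEnvi[2]$ may themselves be derived via \inferrule{\iruleTCtxCong} (or \inferrule{\iruleTCtxRec}) stacked over \inferrule{\iruleTCtxOut}/\inferrule{\iruleTCtxIn}, so $\stEnv[1]$ could contain, say, $\mpChanRole{\mpS}{\roleR}$ with $\roleR \notin \setenum{\roleP,\roleQ}$. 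The fix is immediate and in keeping with your induction: simply apply the induction hypothesis to each premise (both are sub-derivations, and both have domain contained in $\setenum{\mpS}$). For $\stEnv[1]$ the subject is $\setenum{\roleP}$, so every $\mpChanRole{\mpS}{\roleR} \in \dom{\stEnv[1]}$ with $\roleR \neq \roleP$ is unchanged; similarly for $\stEnv[2]$ with subject $\setenum{\roleQ}$. Since the conclusion's subject is $\setenum{\roleP,\roleQ}$, any $\roleR \notin \setenum{\roleP,\roleQ}$ lying in either component is covered. With that correction, the argument goes through.
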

\begin{proof}
By induction on typing context transitions.  
\qedhere 
\end{proof}

\begin{lemma}[Inversion of Typing Context Transition]
\label{lem:stenv-red:inversion-basic}
~
  \begin{enumerate}
    \item
      If $
        \stEnv
        \stEnvMoveOutAnnot{\roleP}{\roleQ}{\stChoice{\stLab[k]}{\stS[k]}}
        \stEnvi
      $, then 
      $
        \unfoldOne{\stEnvApp{\stEnv}{\mpChanRole{\mpS}{\roleP}}} =
        \stIntSum{\roleQ}{i \in I}{\stChoice{\stLab[i]}{\stS[i]} \stSeq \stT[i]}%
      $, $k \in I$, and 
      $
        \stEnvApp{\stEnvi}{\mpChanRole{\mpS}{\roleP}} = \stT[k]$; 
    \item
      If $
        \stEnv
        \stEnvMoveInAnnot{\roleQ}{\roleP}{\stChoice{\stLab[k]}{\stS[k]}}
        \stEnvi
      $, then 
      $
        \unfoldOne{\stEnvApp{\stEnv}{\mpChanRole{\mpS}{\roleQ}}} =
        \stExtSum{\roleP}{i \in I}{\stChoice{\stLab[i]}{\stS[i]} \stSeq \stT[i]}%
      $, $k \in I$, and 
      $
        \stEnvApp{\stEnvi}{\mpChanRole{\mpS}{\roleQ}} = \stT[k]
      $.  
      \item 
       If $\stEnv \stEnvMoveCommAnnot{\mpS}{\roleP}{\roleQ}{\stLab}  \stEnvi$,  then 
      $
        \unfoldOne{\stEnvApp{\stEnv}{\mpChanRole{\mpS}{\roleP}}} =
        \stIntSum{\roleQ}{i \in I}{\stChoice{\stLab[i]}{\stS[i]} \stSeq \stT[i]}%
      $,   $
        \unfoldOne{\stEnvApp{\stEnv}{\mpChanRole{\mpS}{\roleQ}}} =
        \stExtSum{\roleP}{j \in J}{\stChoice{\stLabi[\!\!j]}{\stSi[j]} \stSeq \stTi[j]}%
      $, and there exist $k \in I$ and $l \in J$ such that  $\stLab[k] = \stLabi[\!\!l] = \stLab$, 
      $\stSi[l] \stSub   \stS[k]$, with 
      $
        \stEnvApp{\stEnvi}{\mpChanRole{\mpS}{\roleP}} = \stT[k]
      $ and $
        \stEnvApp{\stEnvi}{\mpChanRole{\mpS}{\roleQ}} = \stTi[l]
      $. 
  \end{enumerate}
\end{lemma}
\begin{proof}
By applying and inverting \inferrule{\iruleTCtxOut}, \inferrule{\iruleTCtxIn}, and \inferrule{\iruleTCtxRec}~(when necessary).  
\qedhere 
\end{proof}

\begin{lemma}[Determinism of Typing Context Transition]
\label{lem:stenv-red:det}
  If $\stEnv \stEnvMoveGenAnnot \stEnvi$ \,and\, $\stEnv \stEnvMoveGenAnnot
  \stEnvii$, then $\stEnvi = \stEnvii$.
\end{lemma}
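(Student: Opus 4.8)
The plan is to show that the target $\stEnvi$ of a transition $\stEnv \stEnvMoveGenAnnot \stEnvi$ is completely determined by the source $\stEnv$ together with the label $\stEnvAnnotGenericSym$; since $\stEnvii$ arises from a transition of $\stEnv$ carrying the \emph{same} label, this immediately forces $\stEnvi = \stEnvii$. First I would invoke \Cref{lem:stenv-red:trivial-1-new} to get $\dom{\stEnvi} = \dom{\stEnv} = \dom{\stEnvii}$, so it suffices to compare the two maps entry by entry, and I would partition $\dom{\stEnv}$ according to the subjects $\ltsSubject{\stEnvAnnotGenericSym}$ of the label.

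\textbf{Entries outside the subject.} For every entry of $\stEnv$ whose channel does not carry a role in $\ltsSubject{\stEnvAnnotGenericSym}$ (in particular every basic-typed variable), the transition leaves that entry unchanged. This is the frame property already recorded in \Cref{lem:stenv-red:trivial-2}; the hypothesis $\dom{\stEnv}=\setenum{\mpS}$ stated there is not needed for this direction, and the same induction on the rules of \Cref{def:mpst-env-reduction} goes through in general (only \inferrule{\iruleTCtxCongBasic} and \inferrule{\iruleTCtxCong} concern such entries, and they copy them verbatim). Hence $\stEnvApp{\stEnvi}{\mpC} = \stEnvApp{\stEnv}{\mpC} = \stEnvApp{\stEnvii}{\mpC}$ for all such $\mpC$.

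\textbf{Entries at the subject.} Here I would use \Cref{lem:stenv-red:inversion-basic}. If $\stEnvAnnotGenericSym$ is an output label with subject $\roleP$ on session $\mpS$, part~1 of that lemma forces $\unfoldOne{\stEnvApp{\stEnv}{\mpChanRole{\mpS}{\roleP}}}$ to be an internal choice towards the co-role, the label picks out one branch $k$ (its label is carried by $\stEnvAnnotGenericSym$ and branch labels are pairwise distinct, so $k$ is unique), and $\stEnvApp{\stEnvi}{\mpChanRole{\mpS}{\roleP}}$ is exactly the continuation of branch $k$; the same holds for $\stEnvii$, so the two agree at $\mpChanRole{\mpS}{\roleP}$. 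An input label is handled symmetrically via part~2. If $\stEnvAnnotGenericSym = \stEnvCommAnnotSmall{\roleP}{\roleQ}{\stLab}$, the last rule in the derivation is \inferrule{\iruleTCtxCom} (possibly under \inferrule{\iruleTCtxCong}/\inferrule{\iruleTCtxCongBasic}); inverting it splits $\stEnv$ into an output side containing $\mpChanRole{\mpS}{\roleP}$ and an input side containing $\mpChanRole{\mpS}{\roleQ}$, and applying \Cref{lem:stenv-red:inversion-basic} to the two premises shows $\stEnvApp{\stEnvi}{\mpChanRole{\mpS}{\roleP}}$ and $\stEnvApp{\stEnvi}{\mpChanRole{\mpS}{\roleQ}}$ are the branch-$k$ continuations of $\unfoldOne{\stEnvApp{\stEnv}{\mpChanRole{\mpS}{\roleP}}}$ and $\unfoldOne{\stEnvApp{\stEnv}{\mpChanRole{\mpS}{\roleQ}}}$, with $k$ determined by $\stLab$ on the sender side. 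These values depend only on $\stEnv$ and $\stLab$: the chosen split, the interleaving of congruence rules, and the sent/received payload types (related only up to $\stSub$, not equal) never enter the resulting context. So $\stEnvii$ produces the same values, and combining both groups yields $\stEnvi = \stEnvii$.

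\textbf{Main obstacle.} The only non-routine point is the communication case: a single $\stEnvCommAnnotSmall{\roleP}{\roleQ}{\stLab}$-transition can be derived in several ways — different partitions of the bystander entries between the two sides, different interleavings of \inferrule{\iruleTCtxCong}, and payloads matched only by subtyping — so a bare induction on the derivation does not close. The way around this is precisely to route everything through the derivation-independent description of the modified entries supplied by \Cref{lem:stenv-red:inversion-basic}, after which determinism is immediate. The output, input, and recursion cases, together with the frame property, are straightforward inductions on \Cref{def:mpst-env-reduction}.
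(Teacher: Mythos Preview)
Your proposal is correct and is a careful, detailed expansion of what the paper records only as a one-line ``By induction on typing context reductions.'' The paper gives no further detail, so your entry-by-entry argument via \Cref{lem:stenv-red:trivial-1-new}, the frame property, and \Cref{lem:stenv-red:inversion-basic} is a perfectly valid way to cash out that induction; in particular your observation that the \inferrule{\iruleTCtxCom} case admits multiple derivations and is best handled through the derivation-independent description from \Cref{lem:stenv-red:inversion-basic} is exactly the kind of care the paper's one-liner elides.
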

\begin{proof}
By induction on typing context transitions. 
\qedhere 
\end{proof}

\begin{lemma}
\label{lem:stenv-red:trivial-5}
~
\begin{enumerate}
\item 
If $\stEnv
        \stEnvMoveOutAnnot{\roleP}{\roleQ}{\stChoice{\stLab[k]}{\stS[k]}}
        \stEnvi
      $, then for any channel with role $\mpC \in \dom{\stEnv}$ with $\mpC \neq \mpChanRole{\mpS}{\roleP}$, 
      $\stEnvApp{\stEnv}{\mpC} = \stEnvApp{\stEnvi}{\mpC}$. 
 
\item If $
        \stEnv
        \stEnvMoveInAnnot{\roleQ}{\roleP}{\stChoice{\stLab[k]}{\stS[k]}} 
        \stEnvi
      $, then for any channel with role $\mpC \in \dom{\stEnv}$ 
      with $\mpC \neq \mpChanRole{\mpS}{\roleQ}$, 
      $\stEnvApp{\stEnv}{\mpC} = \stEnvApp{\stEnvi}{\mpC}$. 
      
\item  If 
$\stEnv \stEnvMoveCommAnnot{\mpS}{\roleP}{\roleQ}{\stLab}  \stEnvi$,  
 then for any channel with role $\mpC \in \dom{\stEnv}$ 
 with $\mpC \neq \mpChanRole{\mpS}{\roleP}$ and 
 $\mpC \neq \mpChanRole{\mpS}{\roleQ}$, 
        $\stEnvApp{\stEnv}{\mpC} = \stEnvApp{\stEnvi}{\mpC}$.
\end{enumerate}
\end{lemma}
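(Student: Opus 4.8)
The statement to prove is \Cref{lem:stenv-red:trivial-5}, which asserts that typing context reductions localise: an output action changes only the sender's entry, an input action changes only the receiver's entry, and a communication action changes only the two involved entries.

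\textbf{Overall approach.} The plan is to reduce all three parts to the ``frame'' structure of the typing context reduction rules in \Cref{fig:gtype:tc-red-rules}, using the already-established inversion and domain-preservation lemmas. Parts (1) and (2) follow by straightforward induction on the derivation of the labelled reduction $\stEnvMoveGenAnnot$, while part (3) follows by combining part (1) and part (2) via the structure of rule \inferrule{\iruleTCtxCom}. I would actually observe that parts (1) and (2) are already essentially contained in \Cref{lem:stenv-red:trivial-2}: for a context $\stEnv$ with $\dom{\stEnv} = \setenum{\mpS}$, that lemma says every channel with role outside $\ltsSubject{\stEnvAnnotGenericSym}$ keeps its type, and $\ltsSubject{\stEnvOutAnnot{\roleP}{\roleQ}{\stChoice{\stLab}{\stS}}} = \setenum{\roleP}$ by \Cref{def:mpst-label-subject}. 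However, since the present lemma does not restrict $\dom{\stEnv}$ to a single session, I would instead give a direct induction that does not need that hypothesis.

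\textbf{Key steps for (1).} I would induct on the derivation of $\stEnv \stEnvMoveOutAnnot{\roleP}{\roleQ}{\stChoice{\stLab[k]}{\stS[k]}} \stEnvi$. The only rules that can conclude an output-labelled transition are \inferrule{\iruleTCtxOut} (base case), \inferrule{\iruleTCtxRec}, \inferrule{\iruleTCtxCongBasic}, and \inferrule{\iruleTCtxCong}. In the \inferrule{\iruleTCtxOut} case, $\stEnv$ is a singleton $\stEnvMap{\mpChanRole{\mpS}{\roleP}}{\stIntSum{\roleQ}{i\in I}{\cdots}}$ and $\stEnvi$ is $\stEnvMap{\mpChanRole{\mpS}{\roleP}}{\stT[k]}$, so there is no $\mpC \neq \mpChanRole{\mpS}{\roleP}$ in the domain and the claim holds vacuously; \Cref{lem:stenv-red:trivial-1-new} guarantees $\dom{\stEnv} = \dom{\stEnvi}$. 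In the \inferrule{\iruleTCtxRec} case, the context is again a singleton on $\mpChanRole{\mpS}{\roleP}$ and the claim is vacuous. In the \inferrule{\iruleTCtxCong} (resp.\ \inferrule{\iruleTCtxCongBasic}) case, $\stEnv = \stEnv[0] \stEnvComp \stEnvMap{\mpC[0]}{\stT[0]}$ and $\stEnvi = \stEnvi[0] \stEnvComp \stEnvMap{\mpC[0]}{\stT[0]}$ with $\stEnv[0] \stEnvMoveOutAnnot{\roleP}{\roleQ}{\stChoice{\stLab[k]}{\stS[k]}} \stEnvi[0]$; the induction hypothesis handles every $\mpC \in \dom{\stEnv[0]}$ with $\mpC \neq \mpChanRole{\mpS}{\roleP}$, and the freshly framed entry $\mpC[0]$ is literally unchanged, so the claim extends to all of $\dom{\stEnv}$. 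Part (2) is entirely symmetric, replacing \inferrule{\iruleTCtxOut} with \inferrule{\iruleTCtxIn} and the roles $\roleP$/$\roleQ$ accordingly.

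\textbf{Key steps for (3) and the main obstacle.} For part (3), I would invert the communication step: $\stEnv \stEnvMoveCommAnnot{\mpS}{\roleP}{\roleQ}{\stLab} \stEnvi$ must come from rule \inferrule{\iruleTCtxCom} possibly wrapped in congruence rules \inferrule{\iruleTCtxCong}/\inferrule{\iruleTCtxCongBasic}, so $\stEnv = \stEnv[1] \stEnvComp \stEnv[2] \stEnvComp \stEnv[3]$ and $\stEnvi = \stEnvi[1] \stEnvComp \stEnvi[2] \stEnvComp \stEnv[3]$ where $\stEnv[1] \stEnvMoveOutAnnot{\roleP}{\roleQ}{\stChoice{\stLab}{\stS}} \stEnvi[1]$, $\stEnv[2] \stEnvMoveInAnnot{\roleQ}{\roleP}{\stChoice{\stLab}{\stSi}} \stEnvi[2]$, and $\stEnv[3]$ is untouched framing. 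Applying part (1) to the first sub-reduction shows the only entry of $\stEnv[1]$ that can change is $\mpChanRole{\mpS}{\roleP}$, and applying part (2) to the second shows the only entry of $\stEnv[2]$ that can change is $\mpChanRole{\mpS}{\roleQ}$; entries in $\stEnv[3]$ are unchanged. Hence any $\mpC \neq \mpChanRole{\mpS}{\roleP}, \mpChanRole{\mpS}{\roleQ}$ has $\stEnvApp{\stEnv}{\mpC} = \stEnvApp{\stEnvi}{\mpC}$. The main obstacle, such as it is, is the bookkeeping around the congruence rules: one must be careful that the decomposition of $\stEnv$ and the matching up of framed entries is done consistently, and that the inversion of the interleaved \inferrule{\iruleTCtxCong} applications genuinely yields the claimed three-way split with $\inferrule{\iruleTCtxCom}$ at the core --- this is the kind of routine-but-fiddly reasoning that \Cref{lem:stenv-red:inversion-basic} and \Cref{lem:stenv-red:trivial-1-new} are designed to streamline, so I would lean on them rather than re-derive the shapes by hand.
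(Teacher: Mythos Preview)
Your proposal is correct and takes essentially the same approach as the paper: the paper's proof is the single line ``By induction on typing context reductions,'' and you have simply spelled out that induction in detail, including the appropriate case split for part~(3) via \inferrule{\iruleTCtxCom} and the congruence rules.
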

\begin{proof}
By induction on typing context transitions.  
\qedhere 
\end{proof}

\subsection{Properties of Association}
\begin{lemma}
\label{app:lem:unfolding_projection_local_subtyping}
  $\stT \stSub \gtProj{\gtG}{\roleP}$ if and only if $\stT \stSub \gtProj{\unfoldOne{\gtG}}{\roleP}$
\end{lemma}
\begin{proof}
Follows directly from~\Cref{app:lem:subtyping_unfolding_projection} and 
transitivity of subtyping~(\Cref{app:lem:subtyping:transitive}). 
\qedhere 
\end{proof}

\begin{proposition}
	\label{app:prop:assoc:global_unfold} %
  $\stEnvAssoc{\gtG}{\stEnv}{\mpS}$
  \,if and only if\,
  $\stEnvAssoc{\unfoldOne{\gtG}}{\stEnv}{\mpS}$.
\end{proposition}
\begin{proof}
Direct from~\Cref{def:assoc}, \Cref{app:prop:roles_unfold}, and~\Cref{app:lem:unfolding_projection_local_subtyping}. 
 \qedhere
\end{proof}

\begin{lemma}[Relating Terminations]
	\label{app:lem:assoc:termination}
If $\gtG = \gtEnd$ and $\stEnvAssoc{\gtG}{\stEnv}{\mpS}$, 
then 
$\forall \mpChanRole{\mpS}{\roleP} \in \dom{\stEnv}: \stEnvApp{\stEnv}{\mpChanRole{\mpS}{\roleP}} = \stEnd$. 
\end{lemma}
\begin{proof}
By the definition of association~(\Cref{def:assoc}), we know that 
$\stEnv = \stEnv[\gtG] \stEnvComp \stEnv[\stEnd]$, where, by the hypothesis $\gtG = \gtEnd$, 
$\dom{\stEnv[\gtG]} = \emptyset$. Hence, $\stEnv =  \stEnv[\stEnd]$, which is the thesis. 
\qedhere 
\end{proof}

\begin{lemma}[Inversion of Projection]
	\label{app:lem:inversion_projection} %
  Given a local type $\stT$, which is a subtype of projection from a global type
  $\gtG$ on a role $\roleP$, 
  \ie $\stT \stSub \gtProj{\gtG}{\roleP}$,
  then:
  \begin{enumerate}%
    \item
      If\;
      $\unfoldOne{\stT}
      =
      \stIntSum{\roleQ}{i \in I}{\stChoice{\stLab[i]}{\stS[i]} \stSeq
        \stT[i]}$, then either
        \begin{enumerate}
        \item
          $\unfoldOne{\gtG} =
            \gtComm{\roleP}{\roleQ}{i \in I'}{\gtLabi[i]}{\tySi[i]}{\gtG[i]}$,
          where $I \subseteq I'$, and for all $i \in I: \stLab[i] = \gtLabi[i]$, 
          $\stSi[i] \stSub \tyS[i]$, and 
	  $\stT[i] \stSub \gtProj{\gtG[i]}{\roleP}$; 
          or,
        \item
          $\unfoldOne{\gtG} =
            \gtComm{\roleS}{\roleT}{j \in
            J}{\gtLabi[j]}{\tySi[j]}{\gtG[j]}$,
          where for all $j \in J: \stT \stSub \gtProj{\gtG[j]}{\roleP}$,
          with $\roleP \neq \roleS$ and $\roleP \neq \roleT$.
        \end{enumerate}
    \item
      If\;
      $\unfoldOne{\stT}
      =
      \stExtSum{\roleQ}{i \in I}{\stChoice{\stLab[i]}{\stS[i]} \stSeq
        \stT[i]}$, then either
      \begin{enumerate}
        \item
          $\unfoldOne{\gtG} =
            \gtComm{\roleQ}{\roleP}{i \in I'}{\gtLabi[i]}{\tySi[i]}{\gtG[i]}$,
          where $I' \subseteq I$, and for all $i \in I': \stLab[i] = \gtLabi[i]$, 
           $\tyS[i] \stSub \stSi[i]$, and  
	   $\stT[i] \stSub \gtProj{\gtG[i]}{\roleP}$; 
          or,
        \item
          $\unfoldOne{\gtG} =
            \gtComm{\roleS}{\roleT}{j \in
            J}{\gtLabi[j]}{\tySi[j]}{\gtG[j]}$,
          where for all $j \in J:  \stT \stSub \gtProj{\gtG[j]}{\roleP}$,
          with $\roleP \neq \roleS$ and $\roleP \neq \roleT$.
      \end{enumerate}
    \item
      If\;
      $\unfoldOne{\stT} = \stEnd$, then $\roleP \notin \gtRoles{\gtG}$. 
  \end{enumerate}
\end{lemma}
\begin{proof}
By~\Cref{app:lem:subtyping:unfolding,app:lem:unfolding_projection_local_subtyping}, transitivity of subtyping~(\Cref{app:lem:subtyping:transitive}), and the definition of global type projection~(\Cref{def:global-proj}). For the subcases (b) in 
items (1) and (2), apply~\Cref{app:lem:subtyping:merging} additionally. 
\qedhere 
\end{proof}

\begin{lemma}[Matching Communication Under Projection]
	\label{app:lem:match_comm_projection} %
  If two local types $\stT, \stU$ are subtypes %
  of an internal choice and an
  external choice with matching roles, obtained via projection from a global
  type $\gtG$, \ie
  $  
    \unfoldOne{\stT} 
    =
      \stIntSum{\roleQ}{i \in I_{\roleP}}{\stChoice{\stLab[i]}{\stS[i]} \stSeq \stT[i]} 
  \stSub \gtProj{\gtG}{\roleP}$
  and
  $  
    \unfoldOne{\stU}
    =
    \stExtSum{\roleP}{j \in I_{\roleQ}}{\stChoice{\stLabi[j]}{\stSi[j]} \stSeq \stTi[j]} 
  \stSub \gtProj{\gtG}{\roleQ}$, then
      $I_{\roleP} \subseteq I_{\roleQ}$, and
      $\forall i \in I_{\roleP}: \stLab[i] = \stLabi[i]$ and $\stSi[i] \stSub \stS[i]$.
\end{lemma}
\begin{proof}
  By induction on items (1) and (2) of \Cref{app:lem:inversion_projection} simultaneously.
  \begin{enumerate}[label=(\alph*)]
    \item
      We have $\unfoldOne{\gtG} = \gtComm{\roleP}{\roleQ}{i \in I}{\gtLabii[i]}{\stSii[i]}{\gtG[i]}$,
      $I_{\roleP} \subseteq I \subseteq I_{\roleQ}$,
      $\forall i \in I_{\roleP}: \stLab[i] = \gtLabii[i]$ and $\stSii[i] \stSub \stS[i]$, and
      $\forall i \in I: \gtLabii[i] = \stLabi[i]$ and $\stSi[i] \stSub \stSii[i]$.
      We have $I_{\roleP} \subseteq I_{\roleQ}$~(by transitivity of $\subseteq$),
      and $\forall i \in I_{\roleP}: \stLab[i] = \stLabi[i]$~(by transitivity of $=$) and $\stSi[i] \stSub \stS[i]$~(by transitivity of
      $\stSub$).
    \item
      We have $\unfoldOne{\gtG} = \gtComm{\roleS}{\roleT}
        {j \in J}{\gtLabii[j]}{\tySii[j]}{\gtG[j]}$,
      where for all $j \in J: \stT \stSub \gtProj{\gtG[j]}{\roleP}$,
      $\stU \stSub \gtProj{\gtG[j]}{\roleQ}$,
      $\setenum{\roleP, \roleQ} \cap \setenum{\roleS, \roleT} = \emptyset$.
      Apply induction on
      $\stT \stSub \gtProj{\gtG[j]}{\roleP}$ and
      $\stU \stSub \gtProj{\gtG[j]}{\roleP}$ on any $j \in J$.
      \qedhere
  \end{enumerate}
\end{proof}

\begin{restatable}[Inversion of Association]{lemma}{lemInvProj}
	\label{app:lem:inversion_association} %
  Let $\stEnvAssoc{\gtG}{\stEnv}{\mpS}$. 

  \begin{enumerate}
    \item
      If\;
      $\unfoldOne{\stEnvApp{\stEnv}{\mpChanRole{\mpS}{\roleP}}}
      =
       \stIntSum{\roleQ}{i \in I}{\stChoice{\stLab[i]}{\stS[i]} \stSeq
        \stT[i]}$, then either
        \begin{enumerate}
        \item
          $\unfoldOne{\gtG} =
            \gtComm{\roleP}{\roleQ}{i \in I'}{\gtLab[i]}{\tySi[i]}{\gtG[i]}$,
          where $I \subseteq I'$, and for all $i \in I: \stLab[i] = \gtLab[i]$, 
          $\stSi[i] \stSub \tyS[i]$, and 
	  $\stT[i] \stSub \gtProj{\gtG[i]}{\roleP}$; 
          or,
        \item
          $\unfoldOne{\gtG} =
            \gtComm{\roleS}{\roleT}{j \in
            J}{\gtLab[j]}{\tySi[j]}{\gtG[j]}$,
          where for all $j \in J: \stEnvApp{\stEnv}{\mpChanRole{\mpS}{\roleP}} \stSub \gtProj{\gtG[j]}{\roleP}$,
          with $\roleP \neq \roleS$ and $\roleP \neq \roleT$.
        \end{enumerate}
        
    \item\label{item:inversion:input}
      If\;
      $\unfoldOne{\stEnvApp{\stEnv}{\mpChanRole{\mpS}{\roleP}}}
      =
     \stExtSum{\roleQ}{i \in I}{\stChoice{\stLab[i]}{\stS[i]} \stSeq
        \stT[i]}$, then either
      \begin{enumerate}
        \item
          $\unfoldOne{\gtG} =
            \gtComm{\roleQ}{\roleP}{i \in I'}{\gtLab[i]}{\tySi[i]}{\gtG[i]}$,
          where $I' \subseteq I$, and for all $i \in I': \stLab[i] = \gtLab[i]$, 
           $\tyS[i] \stSub \stSi[i]$, and  
	   $\stT[i] \stSub \gtProj{\gtG[i]}{\roleP}$; 
          or,
        \item
          $\unfoldOne{\gtG} =
            \gtComm{\roleS}{\roleT}{j \in
            J}{\gtLab[j]}{\tySi[j]}{\gtG[j]}$,
          where for all $j \in J: \stEnvApp{\stEnv}{\mpChanRole{\mpS}{\roleP}} \stSub \gtProj{\gtG[j]}{\roleP}$,
          with $\roleP \neq \roleS$ and $\roleP \neq \roleT$.
      \end{enumerate}
       \item
       \label{item:inversion:end}
      If\;
      $\unfoldOne{\stEnvApp{\stEnv}{\mpChanRole{\mpS}{\roleP}}}
      =
     \stEnd$, then $\roleP \notin \gtRoles{\gtG}$. 
  \end{enumerate}
\end{restatable}
\begin{proof}
  Follows directly  from \Cref{def:assoc} %
   and
   \Cref{app:lem:inversion_projection}.
   \qedhere
\end{proof}

\begin{lemma}[Simultaneous Inversions of Association]
	\label{app:lem:inversion_association_simultaneous} %
  Let $\stEnvAssoc{\gtG}{\stEnv}{\mpS}$. 
  If\;
  $ \unfoldOne{\stEnvApp{\stEnv}{\mpChanRole{\mpS}{\roleP}}}
    =
    \stIntSum{\roleQ}{i \in I_{\roleP}}{\stChoice{\stLab[i]}{\stS[i]} \stSeq \stT[i]}
  $
  and
  $ \unfoldOne{\stEnvApp{\stEnv}{\mpChanRole{\mpS}{\roleQ}}}
    =
    \stExtSum{\roleP}{i \in I_{\roleQ}}{\stChoice{\stLab[i]}{\stSi[i]} \stSeq \stTi[i]}
  $, then either
  \begin{enumerate}
    \item
      $\unfoldOne{\gtG} = \gtComm{\roleP}{\roleQ}{i \in I}{\gtLab[i]}{\stSii[i]}{\gtG[i]}$,
      where $I_{\roleP} \subseteq I \subseteq I_{\roleQ}$,
      $\forall i \in I_{\roleP}: \stSii[i] \stSub \stS[i]$,
      $\forall i \in I: \stSi[i] \stSub \stSii[i]$, 
      $\forall i \in I_{\roleP}:
      \stT[i] \stSub \gtProj{\gtG[i]}{\roleP}$, and 
      $\forall i \in I:
      \stTi[i] \stSub \gtProj{\gtG[i]}{\roleQ}$;
      or,
    \item
      $\unfoldOne{\gtG} = \gtComm{\roleS}{\roleT}
        {j \in J}{\gtLab[j]}{\tySii[j]}{\gtG[j]}$,
      where for all $j \in J: 
      \stEnvApp{\stEnv}{\mpChanRole{\mpS}{\roleP}} \stSub \gtProj{\gtG[j]}{\roleP} 
     $,
      $
      \stEnvApp{\stEnv}{\mpChanRole{\mpS}{\roleQ}} \stSub \gtProj{\gtG[j]}{\roleQ}
      $, and 
      $\setenum{\roleP, \roleQ} \cap \setenum{\roleS, \roleT} = \emptyset$. 
  \end{enumerate}
\end{lemma}
\begin{proof}
By combining cases (1) and (2) from \Cref{app:lem:inversion_association}. %
  Note that case 1(a) is incompatible with case 2(b), since 2(b) requires that
  $\roleP \neq \roleS$.
  Similarly, case 1(b) is
  incompatible with case 2(a). 
  \qedhere
\end{proof}

\subsection{Completeness of Association}
\label{sec:app:completeness_association}
\thmProjCompleteness*

\begin{proof}
By induction on transitions of typing context $\stEnv \stEnvMoveGenAnnot \stEnvi$. Since $\stEnvAnnotGenericSym$ is 
of the form $\stEnvCommAnnotSmall{\roleP}{\roleQ}{\stLab}$, we only need to consider the following two cases. 
 \begin{itemize}[left=0pt, topsep=0pt]
 \item Case \inferrule{\iruleTCtxCom}: 

From the premise, we have: 
\begin{gather}
\stEnvAssoc{\gtG}{\stEnv}{\mpS}
 \\
 \stEnvAnnotGenericSym = \stEnvCommAnnotSmall{\roleP}{\roleQ}{\stLab}
 \\
 \stEnv = \stEnv[1] \stEnvComp \stEnv[2]
  \label{eq:stenvp-comm-type-ctx}
  \\
  \stEnv[1]
    \stEnvMoveOutAnnot{\roleP}{\roleQ}{\stChoice{\stLab}{\stS}}
    \stEnvi[1]
    \label{eq:stenvp-comm-send}
    \\
     \stEnv[2]
    \stEnvMoveInAnnot{\roleQ}{\roleP}{\stChoice{\stLab}{\stSi}}
    \stEnvi[2]
    \label{eq:stenvp-comm-rcv}
    \\
\stEnvi = \stEnvi[1] \stEnvComp \stEnvi[2]
\label{eq:stenvp-comm-type-cxt-cons}
\end{gather} 
By applying~\Cref{lem:stenv-red:inversion-basic} on \eqref{eq:stenvp-comm-send} and \eqref{eq:stenvp-comm-rcv}, 
we have %
\[ 
	\unfoldOne{\stEnvApp{\stEnv[1]}{\mpChanRole{\mpS}{\roleP}}} =
    \stIntSum{\roleQ}{i \in I_{\roleP}}{\stChoice{\stLab[
    i]}{\stS[i]} \stSeq \stT[i]}
    \quad \text{and} \quad %
	\unfoldOne{\stEnvApp{\stEnv[2]}{\mpChanRole{\mpS}{\roleQ}}} =
    \stExtSum{\roleP}{i \in I_{\roleQ}}{\stChoice{\stLab[
    i]}{\stSi[i]} \stSeq \stTi[i]}%
\]
  Moreover, by inverting $\stEnv \,\stEnvMoveCommAnnot{\mpS}{\roleP}{\roleQ}{\stLab}\, \stEnvi$, 
  $\exists k \in (I_{\roleP} \cap I_{\roleQ})$, such that $\stLab[k] = \stLab$
  and $\stSi[k] \stSub \stS[k]$.
  
We perform case analysis on \Cref{app:lem:inversion_association_simultaneous}. %
\begin{itemize}[left=0pt, topsep=0pt]
\item Case 1 of \Cref{app:lem:inversion_association_simultaneous}: %
	  we know
      $\unfoldOne{\gtG} = \gtComm{\roleP}{\roleQ}{i \in I}{\gtLab[i]}{\stSii[i]}{\gtG[i]}$
      where $I_{\roleP} \subseteq I \subseteq I_{\roleQ}$,
      $\forall i \in I_{\roleP}: \stSii[i] \stSub \stS[i]$,
      $\forall i \in I: \stSi[i] \stSub \stSii[i]$,
      $\forall i \in I_{\roleP}:
      \stT[i] \stSub \gtProj{\gtG[i]}{\roleP}$,
      and $\forall i \in I:
      \stTi[i] \stSub \gtProj{\gtG[i]}{\roleQ}$.
        
 Since $k \in (I_{\roleP} \cap I_{\roleQ})$ and $I_{\roleP} \subseteq I$, we
  have $k \in I$. 
 By applying \inferrule{\iruleGtMoveComm} (with
 \Cref{app:prop:assoc:global_unfold}),  %
  the result becomes
  $\gtG[k]$.

  We are left to show association. By
  \Cref{lem:stenv-red:inversion-basic,app:lem:inversion_association}, %
  we have $
    \stEnvApp{\stEnvi[1]}{\mpChanRole{\mpS}{\roleP}} = 
    \stT[k] \stSub
  \gtProj{\gtG[k]}{\roleP} 
  $ and $
    \stEnvApp{\stEnvi[2]}{\mpChanRole{\mpS}{\roleQ}} =
    \stTi[k] \stSub
    \gtProj{\gtG[k]}{\roleQ} 
  $.
  For other roles $\roleR \in \gtRoles{\gtG}$, we have $
    \stEnvApp{\stEnv}{\mpChanRole{\mpS}{\roleR}}
    =
    \stEnvApp{\stEnvi}{\mpChanRole{\mpS}{\roleR}}
  $ by \Cref{lem:stenv-red:trivial-2}, and 
  $\gtProj{\gtG}{\roleR} = 
    \stMerge{i \in I}(\gtProj{\gtG[i]}{\roleR}) 
    \stSub \gtProj{\gtG[k]}{\roleR} 
$ by \Cref{app:lem:subtyping:merging}.
  Since $\stEnvApp{\stEnv}{\mpChanRole{\mpS}{\roleR}} \stSub \gtProj{\gtG}{\roleR}$, 
  we can conclude that
  $\stEnvApp{\stEnvi}{\mpChanRole{\mpS}{\roleR}} \stSub
  \gtProj{\gtG[k]}{\roleR} 
  $ by transitivity of subtyping. 
   Additionally, for all other $\mpChanRole{\mpS}{\roleQi} \in \dom{\stEnv}$ 
   where $\roleQi \notin \gtRoles{\gtG}$, we have 
    $\stEnvApp{\stEnv}{\mpChanRole{\mpS}{\roleQi}} = \stEnvApp{\stEnvi}{\mpChanRole{\mpS}{\roleQi}} = \stEnd$ by 
    $\stEnvAssoc{\gtG}{\stEnv}{\mpS}$ and (3) of~\Cref{lem:stenv-red:trivial-5}.

\item Case 2 of \Cref{app:lem:inversion_association_simultaneous}: %
  we know $\unfoldOne{\gtG} =
  \gtComm{\roleS}{\roleT}%
    {j \in J}{\gtLab[j]}{\tySii[j]}{\gtG[j]}$,
  where for all $j \in J$,
  $ \stEnvApp{\stEnv}{\mpChanRole{\mpS}{\roleP}} \stSub \gtProj{\gtG[j]}{\roleP}$ and
  $\stEnvApp{\stEnv}{\mpChanRole{\mpS}{\roleQ}} \stSub \gtProj{\gtG[j]}{\roleQ}$, and 
  $\setenum{\roleP, \roleQ} \cap \setenum{\roleS, \roleT} = \emptyset$. 
  
  Take an arbitrary index $j \in J$, we construct a typing context $\stEnv[j]$
  such that $\stEnv[j] \stEnvMoveCommAnnot{\mpS}{\roleP}{\roleQ}{\stLab}
  \stEnvi[j]$ and
  $\stEnvAssoc{\gtG[j]}{\stEnv[j]}{\mpS}$.
 
 To construct $\stEnv[j]$, we consider sub-cases for all roles, and show that
  $\stEnvAssoc{\gtG[j]}{\stEnv[j]}{\mpS}$:

  \begin{itemize}[left=0pt, topsep=0pt]
      \item
      For role $\roleS$,
      we know
      from $\stEnvAssoc{\gtG}{\stEnv}{\mpS}$ that
      $ 
                 \stEnvApp{\stEnv}{\mpChanRole{\mpS}{\roleS}} 
		 \stSub 
        \stIntSum{\roleT}{j \in J}{%
          \stChoice{\stLab[j]}{\tySii[j]}
                 \stSeq (\gtProj{\gtG[j]}{\roleS})} 
        =
      \gtProj{\unfoldOne{\gtG}}{\roleS}
      $.
      By inverting $\inferrule{\iruleStSubOut}$ (applying
\Cref{app:lem:subtyping:unfolding} where necessary), we have
      $ \unfoldOne{\stEnvApp{\stEnv}{\mpChanRole{\mpS}{\roleS}}}
        =
        \stIntSum{\roleT}{j \in J_{\roleS}}{%
          \stChoice{\stLab[j]}{\stSiii[j]} \stSeq \stTii[j]
        }
      $, where $J_{\roleS} \subseteq J$,
      and $\forall j \in J_{\roleS}:
      \stTii[j] \stSub \gtProj{\gtG[j]}{\roleS}$.

      To construct $\stEnv[j]$, let
      $ \stEnvApp{\stEnv[j]}{\mpChanRole{\mpS}{\roleS}}
        =
        \stTii[j]
      $ if $j \in J_{\roleS}$ and
      $ \stEnvApp{\stEnv[j]}{\mpChanRole{\mpS}{\roleS}}
        =
        \gtProj{\gtG[j]}{\roleS}
      $ otherwise.
      In either case, we have
      $ 
        \stEnvApp{\stEnv[j]}{\mpChanRole{\mpS}{\roleS}} 
	\stSub \gtProj{\gtG[j]}{\roleS} 
       $, as required.
    
    \item
      For role $\roleT$,
      we know
      from $\stEnvAssoc{\gtG}{\stEnv}{\mpS}$ that
      $
                  \stEnvApp{\stEnv}{\mpChanRole{\mpS}{\roleT}} \stSub 
        \stExtSum{\roleS}{j \in J}{\stChoice{\stLab[j]}{\stSii[j]}
                 \stSeq (\gtProj{\gtG[j]}{\roleT})} 
        =
        \gtProj{\unfoldOne{\gtG}}{\roleT}
      $.
      By inverting $\inferrule{\iruleStSubIn}$ (applying
      \Cref{app:lem:subtyping:unfolding} where necessary), we have
      $ \unfoldOne{\stEnvApp{\stEnv}{\mpChanRole{\mpS}{\roleT}}}
        =
        \stExtSum{\roleS}{j \in J_{\roleT}}{%
          \stChoice{\stLab[j]}{\stSiii[j]} \stSeq \stTii[j]
        }
      $, where $J \subseteq J_{\roleT}$,
      and $\forall j \in J:
      \stTii[j] \stSub \gtProj{\gtG[j]}{\roleT}$.

      To construct $\stEnv[j]$, let
      $ \stEnvApp{\stEnv[j]}{\mpChanRole{\mpS}{\roleT}}
        =
        \stTii[j]
      $, and we have
      $\stEnvApp{\stEnv[j]}{\mpChanRole{\mpS}{\roleT}} \stSub
        \gtProj{\gtG[j]}{\roleT}$, as required.
      
      \item 
      For other roles $\roleR \in \gtRoles{\gtG}$ with $\roleR \notin
  \setenum{\roleS, \roleT}$, 
  their typing context
  entry do not change, \ie 
  $\stEnvApp{\stEnv[j]}{\mpChanRole{\mpS}{\roleR}}
  = \stEnvApp{\stEnv}{\mpChanRole{\mpS}{\roleR}}$.
 We have
  $
    \stEnvApp{\stEnv[j]}{\mpChanRole{\mpS}{\roleR}} =
    \stEnvApp{\stEnv}{\mpChanRole{\mpS}{\roleR}} \stSub
    \gtProj{\gtG}{\roleR} =
\stMerge{j \in J}(\gtProj{\gtG[j]}{\roleR}) \stSub
    \gtProj{\gtG[j]}{\roleR} 
 $ (applying \Cref{app:lem:subtyping:merging}).
  Notice that $\setenum{\roleP, \roleQ} \in \gtRoles{\gtG}$, so they are still able to
  perform the communication action
  $\stEnv[j] \stEnvMoveCommAnnot{\mpS}{\roleP}{\roleQ}{\stLab} \stEnvi[j]$.
  \end{itemize}

 We apply inductive hypothesis on $\stEnv[j]$, and obtain
  $\gtG[j] 
    \stepsto[\ltsSendRecv{\mpS}{\roleP}{\roleQ}{\gtLab}]
    \gtGi[j] $ and
  $ \stEnvAssoc{\gtGi[j]}{\stEnvi[j]}{\mpS}$.

  We can apply $\inferrule{\iruleGtMoveCtx}$ on
  $
      \gtComm{\roleS}{\roleT}%
        {j \in J}{\gtLab[j]}{\stSii[j]}{\gtG[j]} 
    \stepsto[\ltsSendRecv{\mpS}{\roleP}{\roleQ}{\gtLab}]
      \gtComm{\roleS}{\roleT}%
        {j \in J}{\gtLab[j]}{\stSii[j]}{\gtGi[j]}
  $.
  
  We now show $\stEnvAssoc{\gtGi}{\stEnvi}{\mpS}$,
  where
  $\gtGi =
    \gtComm{\roleS}{\roleT}%
      {j \in J}{\gtLab[j]}{\stSii[j]}{\gtGi[j]}$.

  For role $\roleS$, we know that
  $ \unfoldOne{\stEnvApp{\stEnv}{\mpChanRole{\mpS}{\roleS}}}
    =
    \stIntSum{\roleT}{j \in J_{\roleS}}{%
      \stChoice{\stLab[j]}{\stSiii[j]} \stSeq \stTii[j]
    }
  $, where $J_{\roleS} \subseteq J$,
  and $\forall j \in J_{\roleS}:
  \stTii[j] \stSub \gtProj{\gtG[j]}{\roleS}
  $ and $
    \stSii[j] \stSub \stSiii[j]
  $.
  Since $\roleS \notin
  \ltsSubject{\ltsSendRecv{\mpS}{\roleP}{\roleQ}{\gtLab}}$,
  we apply
  \Cref{lem:stenv-red:trivial-2} on $\stEnv$ and $\stEnv[j]$ for all
  $j \in J_{\roleS}$.
  For all $j \in J_{\roleS}$, 
  we have $
    \stTii[j]
    =
    \stEnvApp{\stEnv[j]}{\mpChanRole{\mpS}{\roleS}}
    =
    \stEnvApp{\stEnvi[j]}{\mpChanRole{\mpS}{\roleS}}
  $ (from \Cref{lem:stenv-red:trivial-2}) and $
    \stEnvApp{\stEnvi[j]}{\mpChanRole{\mpS}{\roleS}} \stSub \gtProj{\gtGi[j]}{\roleS}$
    (from inductive hypothesis).
    Therefore, we have $\stTii[j] \stSub \gtProj{\gtGi[j]}{\roleS}$.
  We now apply \Cref{lem:stenv-red:trivial-2} on $\stEnv$,
  which gives
  $ \unfoldOne{\stEnvApp{\stEnv}{\mpChanRole{\mpS}{\roleS}}}
    =
    \unfoldOne{\stEnvApp{\stEnvi}{\mpChanRole{\mpS}{\roleS}}}
    =
    \stIntSum{\roleT}{j \in J_{\roleS}}{%
      \stChoice{\stLab[j]}{\stSiii[j]} \stSeq \stTii[j]
    }
  $.
  We can now apply $\inferrule{\iruleStSubOut}$ to conclude $
    \stEnvApp{\stEnvi}{\mpChanRole{\mpS}{\roleS}} \stSub \gtProj{\gtGi}{\roleS}$.

  For role $\roleT$, we know that
  $ \unfoldOne{\stEnvApp{\stEnv}{\mpChanRole{\mpS}{\roleT}}}
    =
    \stExtSum{\roleS}{j \in J_{\roleT}}{%
      \stChoice{\stLab[j]}{\stSiii[j]} \stSeq \stTii[j]
    }
  $, where $J \subseteq J_{\roleT}$,
  and $\forall j \in J:
  \stTii[j] \stSub \gtProj{\gtG[j]}{\roleT} \stSub \stTii[j]
  $ and $
    \stSiii[j] \stSub \stSii[j]
  $.
  Since $\roleT \notin
  \ltsSubject{\ltsSendRecv{\mpS}{\roleP}{\roleQ}{\gtLab}}$,
  we apply
  \Cref{lem:stenv-red:trivial-2} on $\stEnv$ and $\stEnv[j]$ for all
  $j \in J$.
  For all $j \in J$, 
  we have $
    \stTii[j]
    =
    \stEnvApp{\stEnv[j]}{\mpChanRole{\mpS}{\roleT}}
    =
    \stEnvApp{\stEnvi[j]}{\mpChanRole{\mpS}{\roleT}}
  $ (from \Cref{lem:stenv-red:trivial-2}) and $ 
    \stEnvApp{\stEnvi[j]}{\mpChanRole{\mpS}{\roleT}}   
\stSub
     \gtProj{\gtGi[j]}{\roleT}
    $ (from inductive hypothesis).
    Therefore, we have $\stTii[j] \stSub \gtProj{\gtGi[j]}{\roleT}$.
  We now apply \Cref{lem:stenv-red:trivial-2} on $\stEnv$,
  which gives
  $ \unfoldOne{\stEnvApp{\stEnv}{\mpChanRole{\mpS}{\roleT}}}
    =
    \unfoldOne{\stEnvApp{\stEnvi}{\mpChanRole{\mpS}{\roleT}}}
    =
    \stExtSum{\roleS}{j \in J_{\roleT}}{%
      \stChoice{\stLab[j]}{\stSiii[j]} \stSeq \stTii[j]
    }
  $.
  We can now apply $\inferrule{\iruleStSubIn}$ to conclude $
   \stEnvApp{\stEnvi}{\mpChanRole{\mpS}{\roleT}} \stSub \gtProj{\gtGi}{\roleT} 
  $.

  For other roles $\roleR \in \gtRoles{\gtG}$ (where
  $\roleR \notin \setenum{\roleS, \roleT}$),
  we need to show
    $\stEnvApp{\stEnvi}{\mpChanRole{\mpS}{\roleR}}
\stSub 
  \gtProj{\gtGi}{\roleR}$.
   We know that $
    \gtProj{\gtGi}{\roleR} =
    \stMerge{j \in J}{\gtProj{\gtGi[j]}{\roleR}}
  $. 
  The inductive hypothesis gives
    $\stEnvApp{\stEnvi[j]}{\mpChanRole{\mpS}{\roleR}} 
   \stSub \gtProj{\gtGi[j]}{\roleR}
  $, and then 
we apply \Cref{app:lem:subtyping:merging_bound_upper} to obtain $
     \stEnvApp{\stEnvi[j]}{\mpChanRole{\mpS}{\roleR}} \stSub
	 \stMerge{j \in J}{\gtProj{\gtGi[j]}{\roleR}} 
    = \gtProj{\gtGi}{\roleR} 
  $.
  Note that $
    \stEnvApp{\stEnv[j]}{\mpChanRole{\mpS}{\roleR}} =
    \stEnvApp{\stEnv}{\mpChanRole{\mpS}{\roleR}}
  $ by construction.
  We now apply \Cref{lem:stenv-red:trivial-2} %
  on $\stEnv$ and all $\stEnv[j]$, which
  gives $\stEnvApp{\stEnvi[j]}{\mpChanRole{\mpS}{\roleR}} = \stEnvApp{\stEnv[j]}{\mpChanRole{\mpS}{\roleR}}$ and $\stEnvApp{\stEnv}{\mpChanRole{\mpS}{\roleR}} = \stEnvApp{\stEnvi}{\mpChanRole{\mpS}{\roleR}}$, 
  thus $\stEnvApp{\stEnvi[j]}{\mpChanRole{\mpS}{\roleR}} = \stEnvApp{\stEnvi}{\mpChanRole{\mpS}{\roleR}}$ for all $j$.
  Therefore, we have $
  \stEnvApp{\stEnvi}{\mpChanRole{\mpS}{\roleR}} 
  \stSub \gtProj{\gtGi}{\roleR}$.

\end{itemize}
 
 \item Case \inferrule{\iruleTCtxRec} (possibly with  \inferrule{\iruleTCtxCong}): 
 
By inductive hypothesis and~\Cref{app:prop:assoc:global_unfold}. %
 \qedhere
 \end{itemize}
\end{proof}

 \begin{restatable}{corollary}{colCompleteness}%
 \label{cor:completeness_association}
 Assume that for any session $\mpS \in \stEnv$, there exists a global 
  type $\gtG[\mpS]$ such that $\stEnvAssoc{\gtG[\mpS]}{\stEnv[\mpS]}{\mpS}$. 
  If $\stEnv \stEnvMove \stEnvi$, then for any $\mpS \in \stEnvi$, 
  there exists a global 
  type $\gtGi[\mpS]$ such that $\gtG[\mpS] \,\gtMoveStar\, \gtGi[\mpS]$ and $\stEnvAssoc{\gtGi[\mpS]}{\stEnvi[\mpS]}{\mpS}$.  
\end{restatable}
\begin{proof}
By~\Cref{def:typing_context_reduction}, we have that there exists a label 
$\ltsSendRecv{\mpS}{\roleP}{\roleQ}{\stLab}$ such that $\stEnv %
      \,\stEnvMoveCommAnnot{\mpS}{\roleP}{\roleQ}{\stLab}\,%
      \stEnvi$, and hence, by~\Cref{lem:stenv-red:trivial-1-new}, 
      $\dom{\stEnv} = \dom{\stEnvi}$.  
      We are left to show that for any $\stEnvi[\mpSi]$ with $\mpSi \in \stEnvi$, 
  there exists a global 
  type $\gtGi[\mpSi]$ such that $\gtG[\mpSi] \,\gtMoveStar\, \gtGi[\mpSi]$ and $\stEnvAssoc{\gtGi[\mpSi]}{\stEnvi[\mpSi]}{\mpSi}$. 
 \begin{itemize}%
  \item Case $\stEnvi[\mpS]$: since $\stEnv %
      \,\stEnvMoveCommAnnot{\mpS}{\roleP}{\roleQ}{\stLab}\,%
      \stEnvi$, it is trivial to have $\stEnv[\mpS] %
      \,\stEnvMoveCommAnnot{\mpS}{\roleP}{\roleQ}{\stLab}\,%
      \stEnvi[\mpS]$. Moreover, by $\stEnvAssoc{\gtG[\mpS]}{\stEnv[\mpS]}{\mpS}$ 
      and~\Cref{thm:completeness_association}, we have that there exists 
  $\gtGi[\mpS]$ such that  $\gtG[\mpS] \,\gtMove\, \gtGi[\mpS]$ and $\stEnvAssoc{\gtGi[\mpS]}{\stEnvi[\mpS]}{\mpS}$, as desired. 
  
  \item Case $\stEnvi[\mpSi]$ with $\mpSi \neq \mpS$: 
  we know from~\Cref{lem:stenv-red:trivial-5} that 
  for any $\mpChanRole{\mpSi}{\roleR} \in \dom{\stEnvi}$, 
  $\stEnvApp{\stEnv}{\unfoldOne{\mpChanRole{\mpSi}{\roleR}}} = \stEnvApp{\stEnvi}{\unfoldOne{\mpChanRole{\mpSi}{\roleR}}}$. 
Then, along with $\gtG[\mpSi] \,\gtMoveStar\, \gtG[\mpSi]$ and $\stEnvAssoc{\gtG[\mpSi]}{\stEnv[\mpSi]}{\mpSi}$, the thesis holds. 
\qedhere 
 \end{itemize}
 \end{proof}

\subsection{Soundness of Association}
\label{sec:app:soundness_association}
\thmProjSoundness*

\begin{proof}
   By induction on transitions of global type 
   \(
    \gtG %
    \,\gtMove[\stEnvAnnotGenericSym]\, %
    \gtGi
  \).
 \begin{itemize}[left=0pt, topsep=0pt]
 \item Case \inferrule{\iruleGtMoveComm}: 
 
 From the premise, we have: 
      \begin{gather}
        \stEnvAssoc{\gtG}{\stEnv}{\mpS}
        \label{eq:soundness:commu_assoc_pre}
        \\
        \gtG =
          \gtCommSmall{\roleP}{\roleQ}{i \in I}{\gtLab[i]}{\tyS[i]}{\gtG[i]}
          \\
       \stEnvAnnotGenericSym =
          \ltsSendRecv{\mpS}{\roleP}{\roleQ}{\gtLab[j]}
        \\
        j \in I
        \\
        \gtGi = \gtG[j] 
      \end{gather}
By association~\eqref{eq:soundness:commu_assoc_pre}, we have 
$\stEnvApp{\stEnv}{\mpChanRole{\mpS}{\roleP}} \stSub
\gtProj{\gtG}{\roleP} =  
\stIntSum{\roleQ}{i \in I}{
       \stChoice{\stLab[i]}{\stS[i]} \stSeq (\gtProj{\gtG[i]}{\roleP})
     } $, and \\
     $\stEnvApp{\stEnv}{\mpChanRole{\mpS}{\roleQ}} \stSub
\gtProj{\gtG}{\roleQ} =  
\stExtSum{\roleP}{i \in I}{
       \stChoice{\stLab[i]}{\stS[i]} \stSeq (\gtProj{\gtG[i]}{\roleP})
     } $. 
Then by~\Cref{app:lem:subtyping:inversion} and~\Cref{app:lem:match_comm_projection}, we have 
     $\stEnvApp{\stEnv}{\mpChanRole{\mpS}{\roleP}} = 
    \stIntSum{\roleQ}{i \in I_{\roleP}}{\stChoice{\stLabi[i]}{\stSi[i]} \stSeq \stTi[i]}$  and 
    $\stEnvApp{\stEnv}{\mpChanRole{\mpS}{\roleQ}} = 
    \stExtSum{\roleP}{i \in I_{\roleQ}}{\stChoice{\stLabii[i]}{\stSii[i]} \stSeq \stTii[i]}$, with  
    $I_{\roleP} \subseteq 
    I \subseteq I_{\roleQ}$, and for all $i \in I_{\roleP}: \gtLab[i] = \stLabi[i] = \stLabii[i]$, 
    $\stSii[i] \stSub \stSi[i]$,   
    $\stTi[i] \stSub \gtProj{\gtG[i]}{\roleP}$, and $\stTii[i] \stSub \gtProj{\gtG[i]}{\roleQ}$. 
    
    Now let us choose some 
     $k \in I_{\roleP}$ such that  
     $\stEnvAnnotGenericSymi = \ltsSendRecv{\mpS}{\roleP}{\roleQ}{\gtLab[k]}$ . 
     Furthermore,  we have $\gtG \,\gtMove[\stEnvAnnotGenericSymi]\, \gtGii$ 
     with $\gtGii = \gtG[k]$.  We are left to show that there exists $\stEnvi$ such that $\stEnv
  \stEnvMoveCommAnnot{\mpS}{\roleP}{\roleQ}{\stLab[k]} \stEnvi$ and 
      $\stEnvAssoc{\gtGii}{\stEnvi}{\mpS}$. 

We apply \inferrule{\iruleTCtxOut} on $\mpChanRole{\mpS}{\roleP}$ and
  \inferrule{\iruleTCtxIn} 
  on $\mpChanRole{\mpS}{\roleQ}$, which can be combined via
  \inferrule{\iruleTCtxCom}.
  By applying \inferrule{\iruleTCtxCong} and \inferrule{\iruleTCtxCongBasic} when needed, we have $\stEnv
  \stEnvMoveCommAnnot{\mpS}{\roleP}{\roleQ}{\stLab[k]} \stEnvi$, with  
 $\stEnvApp{\stEnvi}{\mpChanRole{\mpS}{\roleP}} =   \stTi[k]$, 
 $\stEnvApp{\stEnvi}{\mpChanRole{\mpS}{\roleQ}} =   \stTii[k]$, and 
  $\stEnvApp{\stEnvi}{\mpChanRole{\mpS}{\roleR}} =   \stEnvApp{\stEnv}{\mpChanRole{\mpS}{\roleR}}$ if 
  $\roleR \neq \roleP$ and $\roleR \neq \roleQ$.  
  
  Finally, we show that $\stEnvAssoc{\gtGii = \gtG[k]}{\stEnvi}{\mpS}$. 
  For $\roleP$, we have $\stTi[k] = \stEnvApp{\stEnvi}{\mpChanRole{\mpS}{\roleP}}
  \stSub \gtProj{\gtG[k]}{\roleP}$, and similar for $\roleQ$. 
   For $\roleR \neq \roleP \neq \roleQ \in \gtRoles{\gtG}$,
 it follows that 
 $
  \stEnvApp{\stEnv}{\mpChanRole{\mpS}{\roleR}} = 
  \stEnvApp{\stEnvi}{\mpChanRole{\mpS}{\roleR}} \stSub
  \stMerge{i \in I}{\gtProj{\gtG[i]}{\roleR}} =
  \gtProj{\gtG}{\roleR} 
 $.
  By \Cref{app:lem:subtyping:merging}, 
  $
    \stEnvApp{\stEnvi}{\mpChanRole{\mpS}{\roleR}}
    \stSub 
    \stMerge{i \in I}{\gtProj{\gtG[i]}{\roleR}} 
    \stSub 
    \gtProj{\gtG[k]}{\roleR} 
    $, and hence, 
    $\stEnvApp{\stEnvi}{\mpChanRole{\mpS}{\roleR}} \stSub
    \gtProj{\gtG[k]}{\roleR} $
    holds by the transitivity of $\stSub$. 
    For any $\rolePi \in \gtRoles{\gtG}$ where $\rolePi \notin \gtRoles{\gtG[k]}$, by~\Cref{lem:global_end}, $\gtProj{\gtG[k]}{\rolePi} = \stEnd$, implying that  $\stEnvApp{\stEnvi}{\mpChanRole{\mpS}{\rolePi}} = \stEnd$. 
    Additionally, for all other $\mpChanRole{\mpS}{\roleQi} \in \dom{\stEnv}$ where $\roleQi \notin \gtRoles{\gtG}$, we have 
    $\stEnvApp{\stEnv}{\mpChanRole{\mpS}{\roleQi}} = \stEnvApp{\stEnvi}{\mpChanRole{\mpS}{\roleQi}} = \stEnd$ by 
    $\stEnvAssoc{\gtG}{\stEnv}{\mpS}$ and (3) of~\Cref{lem:stenv-red:trivial-5}. 
    
\item Case \inferrule{\iruleGtMoveRec}:  
 
 By inductive hypothesis and $\inferrule{\iruleTCtxRec}$.
 
\item Case \inferrule{\iruleGtMoveCtx}: 

From the premise, we have: 
      \begin{gather}
        \stEnvAssoc{\gtG}{\stEnv}{\mpS}
        \label{eq:soundness:ctx_assoc_pre}
        \\
        \gtG = \gtCommSmall{\roleP}{\roleQ}{i \in I}{\gtLab[i]}{\tyS[i]}{\gtG[i]}
          \label{eq:soundness:ctx_global_pre}
          \\
           \forall i \in I: 
    \gtG[i] 
    \,\gtMove[\stEnvAnnotGenericSym]\, 
    \gtGi[i] 
    \\
     \ltsSubject{\stEnvAnnotGenericSym} \cap \setenum{\roleP, \roleQ} =
    \emptyset
    \\
    \gtGi = 
      \gtCommSmall{\roleP}{\roleQ}{i \in I}{\gtLab[i]}{\stS[i]}{\gtGi[i]}   
      \end{gather}
      
 Let $\stEnvAnnotGenericSym = \stEnvCommAnnotSmall{\roleR}{\roleU}{\stLab}$ 
 with $\setenum{\roleR, \roleU} \cap \setenum{\roleP, \roleQ} = \emptyset$. 
 
 By~\eqref{eq:soundness:ctx_assoc_pre} and~\eqref{eq:soundness:ctx_global_pre}, 
 it follows that for any role $\roleT$ with $\roleT \neq \roleP$ and $\roleT \neq \roleQ$,  
 $\stEnvApp{\stEnv}{\mpChanRole{\mpS}{\roleT}} \stSub \stMerge{i \in I}{\gtProj{\gtG[i]}{\roleT}}$. 
Let $j \in I$ be arbitrary. By~\Cref{app:lem:subtyping:merging} and transitivity of subtyping, we obtain 
 $\stEnvApp{\stEnv}{\mpChanRole{\mpS}{\roleT}} \stSub \gtProj{\gtG[j]}{\roleT}$.

  We define a typing context $\stEnv[j]$ by:  
    $\stEnvApp{\stEnv[j]}{\mpChanRole{\mpS}{\roleP}}
      = \gtProj{\gtG[j]}{\roleP}$,
    $\stEnvApp{\stEnv[j]}{\mpChanRole{\mpS}{\roleQ}}
      = \gtProj{\gtG[j]}{\roleQ}$,
    $\stEnvApp{\stEnv[j]}{\mpChanRole{\mpS}{\roleT}} =
      \stEnvApp{\stEnv}{\mpChanRole{\mpS}{\roleT}}$ for any other role $\roleT$. 
  It follows that  
      $\stEnvAssoc{\gtG[j]}{\stEnv[j]}{\mpS}$. 

Since $\gtG[j] \,\gtMove[\stEnvAnnotGenericSym]\, \gtGi[j]$ and $\stEnvAssoc{\gtG[j]}{\stEnv[j]}{\mpS}$,  by the induction hypothesis there exist 
$\stEnvAnnotGenericSymi = \stEnvCommAnnotSmall{\roleR}{\roleU}{\stLabi}$ and $\stEnvi[j]$ such that 
$\stEnv[j]  \stEnvMoveAnnot{\stEnvAnnotGenericSymi} \stEnvi[j]$. 

By inversion of typing context transition (\Cref{lem:stenv-red:inversion-basic}), together with 
$\stEnvApp{\stEnv[j]}{\mpChanRole{\mpS}{\roleR}} =
      \stEnvApp{\stEnv}{\mpChanRole{\mpS}{\roleR}}$ and 
      $\stEnvApp{\stEnv[j]}{\mpChanRole{\mpS}{\roleU}} =
      \stEnvApp{\stEnv}{\mpChanRole{\mpS}{\roleU}}$, 
      we obtain that     
      $
        \unfoldOne{\stEnvApp{\stEnv}{\mpChanRole{\mpS}{\roleR}}} =
        \stIntSum{\roleU}{k \in K}{\stChoice{\stLabi[\!k]}{\stSi[k]} \stSeq \stTi[k]}%
      $ and 
      $
        \unfoldOne{\stEnvApp{\stEnv}{\mpChanRole{\mpS}{\roleU}}} =
        \stExtSum{\roleR}{l \in L}{\stChoice{\stLabii[\!\!l]}{\stSii[l]} \stSeq \stTii[l]}%
      $, where there exist $k \in K$ and $l \in L$ such that  $\stLabi[\!k] = \stLabii[\!\!\!l] = \stLabi$ and 
      $\stSii[l] \stSub   \stSi[k]$. 
      
  Applying typing context transition rule \inferrule{\iruleTCtxCom} (and \inferrule{\iruleTCtxRec}, when necessary) to $\stEnv$, there exists $\stEnvi$ such that 
      $\stEnv  \stEnvMoveAnnot{\stEnvAnnotGenericSymi} \stEnvi$. 
      
      Finally,  by completeness of association~(\Cref{thm:gtype:proj-comp}), there exists $\gtGii$ such that 
      $\gtG 
    \,\gtMove[\stEnvAnnotGenericSymi]\, 
    \gtGii$ and  $\stEnvAssoc{\gtGii}{\stEnvi}{\mpS}$, as desired.  
    \qedhere

 \end{itemize}
\end{proof}

\subsection{Behavioural Properties Guaranteed by Association}
\label{app:properties_association}
\subsubsection{Safety by Association}

\begin{restatable}{lemma}{lemProjSafe}
  If  $\stEnvAssoc{\gtG}{\stEnv}{\mpS}$, 
  then $\stEnv$ is $\mpS$-safe.%
  \label{lem:safety_association}
\end{restatable}
\begin{proof}
  Let $\predP = \setcomp{\stEnvi}{
    \exists \gtGi:
    \gtG
    \,\gtMoveStar\, 
    \gtGi
    \text{~and~}
    \stEnvAssoc{\gtGi}{\stEnvi}{\mpS}
  }$.

  Take any $\stEnv \in \predP$, we show that $\stEnv$ satisfies all safety
  properties.
  By definition of $\predP$, there exists $\gtGi$ with
  $ \gtG 
     \,\gtMoveStar\, 
    \gtGi 
    \text{~and~}
    \stEnvAssoc{\gtGi}{\stEnv}{\mpS}
  $. 
  We only detail the case that $\gtGi \neq \gtEnd$ as 
  if $\gtGi = \gtEnd$, by~\Cref{app:lem:assoc:termination}, we have that $\stEnv = \stEnv[\stEnd]$, which satisfies all clauses of~\Cref{def:safety_typing_context}.

 \begin{description}[leftmargin=1.15cm, labelindent=1cm, align=right]
    \item[\inferrule{\iruleSafeComm}]
     Since (by hypothesis) 
      $\stEnvMoveAnnotP{\stEnv}{\stEnvOutAnnot{\roleP}{\roleQ}{\stChoice{\stLab}{\stS}}}$ 
      \,and\,   $\stEnvMoveAnnotP{\stEnv}{\stEnvInAnnot{\roleQ}{\roleP}{\stChoice{\stLabi}{\stSi}}}$, 
      by~\Cref{lem:stenv-red:inversion-basic}, we have  
      $
        \unfoldOne{\stEnvApp{\stEnv}{\mpChanRole{\mpS}{\roleP}}} =
        \stIntSum{\roleQ}{i \in I}{\stChoice{\stLab[i]}{\stS[i]} \stSeq \stT[i]}%
      $, $\exists k \in I$ such that $\stLab = \stLab[k]$, and 
      $
        \unfoldOne{\stEnvApp{\stEnv}{\mpChanRole{\mpS}{\roleQ}}} =
        \stExtSum{\roleP}{j \in J}{\stChoice{\stLabi[j]}{\stSi[j]} \stSeq \stTi[j]}%
      $. 
      Then we apply \Cref{app:lem:match_comm_projection} on $
	\stEnvApp{\stEnv}{\mpChanRole{\mpS}{\roleP}} \stSub \gtProj{\gtGi}{\roleP}$
      and
      $\stEnvApp{\stEnv}{\mpChanRole{\mpS}{\roleQ}} \stSub \gtProj{\gtGi}{\roleQ}$ 
      to get $I \subseteq J$, and $\forall i \in I: \stLab[i] 
      = \stLabi[i]$ and $\stSi[i] \stSub \stS[i]$. Consequently, along with $\stLab = \stLab[k]$, 
      by applying \inferrule{\iruleTCtxCom} (and \inferrule{\iruleTCtxRec} as needed), 
      $\stEnvMoveAnnotP{\stEnv}{\stEnvCommAnnotSmall{\roleP}{\roleQ}{\stLab}}$, which is the thesis.

    \item[\inferrule{\iruleSafeRec}]
      Let $\stEnvi$ be constructed from $\stEnv$ with
      $ \stEnvApp{\stEnvi}{\mpChanRole{\mpS}{\roleP}} =
        \stT\subst{\stRecVar}{\stRec{\stRecVar}{\stT}}$.
      By \Cref{app:lem:subtyping:unfolding}, we know $
        \stEnvApp{\stEnvi}{\mpChanRole{\mpS}{\roleP}} \stSub 
        \stEnvApp{\stEnv}{\mpChanRole{\mpS}{\roleP}}
      $, and thus $\stEnvi \stSub \stEnv$.
      By~\Cref{def:assoc} and transitivity of subtyping, we have
      $ \stEnvAssoc{\gtGi}{\stEnvi}{\mpS}$,
      which means that $\stEnvi \in \predP$.
   
    \item[\inferrule{\iruleMoveSession}]
      Let $\stEnv \stEnvMoveGenAnnot \stEnvi$ with 
  $\stEnvAnnotGenericSym = \stEnvCommAnnotSmall{\roleP}{\roleQ}{\stLab}$, meaning that 
  $\stEnv \!\stEnvMoveWithSession[\mpS]\! \stEnvi$.  
  By \Cref{thm:completeness_association}, there exists $\gtGii$
      with $\gtGi
      \,\gtMove[\stEnvAnnotGenericSym]\, 
      \gtGii$ %
      and
      $\stEnvAssoc{\gtGii}{\stEnvi}{\mpS}$.
      By definition of $\predP$, the typing context $\stEnvi$ after transition $\stEnv$ on session $\mpS$  
      is in $\predP$.
      \qedhere
  \end{description}
\end{proof}

\subsubsection{Deadlock-Freedom by Association}

\begin{restatable}{lemma}{lemProjDF}
  If  $\stEnvAssoc{\gtG}{\stEnv}{\mpS}$, 
  then $\stEnv$ is $\mpS$-deadlock-free.%
  \label{lem:df_association}
\end{restatable}
\begin{proof}
By operational correspondence of global type $\gtG$ and typing context
  $\stEnv$ (\Cref{thm:soundness_association,thm:completeness_association}), there exists
  a global type $\gtGi$ such that
  $\gtG \,\gtMoveStar\,  
  \gtNotMove{\gtGi}$, with
  associated typing contexts $\stEnv \!\stEnvMoveWithSessionStar[\mpS]\! 
 \stEnvNotMoveWithSessionP[\mpS]{\stEnvi}$.
  Since no further reductions are possible for the global type $\gtGi$, 
  it must be in the  form of 
  $\gtEnd$~(\Cref{app:lem:gtype:progress}).
Therefore, the thesis  holds  by~\Cref{app:lem:assoc:termination}. 
\qedhere 
\end{proof}

\subsubsection{Liveness by Association}

\begin{restatable}{lemma}{lemProjLive}
  If  $\stEnvAssoc{\gtG}{\stEnv}{\mpS}$, 
  then $\stEnv$ is $\mpS$-live.%
  \label{lem:live_association}
\end{restatable}
\begin{proof}
We want to show that %
  any path starting with $\stEnv$ which is fair for session $\mpS$ is also live for $\mpS$.
  We proceed by contradiction, assuming that there is a fair path for session $\mpS$: $(\stEnv[n])_{n \in N}$ 
where $N = \setenum{0, 1, 2, \ldots}$, $\stEnv[0] = \stEnv$, 
and $\forall n, n+1 \in N: \stEnv[n] \!\stEnvMoveWithSession[\mpS]\!
\stEnv[n+1]$, which is not live for $\mpS$. 
We consider the following two cases. 
\begin{itemize}%
\item  Case 
$\stEnvMoveAnnotP{\stEnv[j]}{\stEnvOutAnnot{\roleP}{\roleQ}{\stChoice{\stLab}{\stS}}}$ with $j \in N$, and  
for any $k \in N$ with $k \geq j$, there does not exist $\stLabi$ %
    such that $\stEnv[k] \stEnvMoveCommAnnot{\mpS}{\roleP}{\roleQ}{\stLabi}
      \stEnv[k+1]$: by operational correspondence of global type $\gtG$
and typing context $\stEnv$~(\Cref{thm:soundness_association,thm:completeness_association}), 
there exists
a global type $\gtG[j]$ such that
$\gtG \,\gtMoveStar\,  
\gtG[j]$ and
$\stEnvAssoc{\gtG[j]}{\stEnv[j]}{\mpS}$. 
Moreover, since $\stEnvMoveAnnotP{\stEnv[j]}{\stEnvOutAnnot{\roleP}{\roleQ}{\stChoice{\stLab}{\stS}}}$, 
by~\Cref{lem:stenv-red:inversion-basic,app:lem:inversion_association},
we have $\unfoldOne{\stEnvApp{\stEnv[j]}{\mpChanRole{\mpS}{\roleP}}} =
        \stIntSum{\roleQ}{i \in I}{\stChoice{\stLab[i]}{\stS[i]} \stSeq \stT[i]}$, and 
  \begin{itemize}%
         \item either $\unfoldOne{\gtG[j]} = 
            \gtComm{\roleP}{\roleQ}{i \in I'}{\gtLab[i]}{\tySi[i]}{\gtGii[i]}$,
          where $I \subseteq I'$, and for all $i \in I: \stLab[i] = \gtLab[i]$, 
          $\stSi[i] \stSub \tyS[i]$, and 
	  $\stT[i] \stSub \gtProj{\gtGii[i]}{\roleP}$. It follows directly that 
          $ 
	  \stEnvApp{\stEnv[j]}{\mpChanRole{\mpS}{\roleQ}}
	  \stSub 
	  \gtProj{\gtG[j]}{\roleQ} 
	  \stSub 
	  \gtProj{\unfoldOne{\gtG[j]}}{\roleQ} 
	  = 
	  \stExtSum{\roleP}{i \in I'}{\stChoice{\stLab[i]}{\stSi[i]} \stSeq \gtProj{\gtGii[i]}{\roleQ}}
	  $. Hence, we have 
$\stEnvMoveAnnotP{\stEnv[j]}{\stEnvInAnnot{\roleQ}{\roleP}{\stChoice{\stLab}{\stSi}}}$  with $\stSi \stSub \stS$, and therefore,  $\stEnv[j] \stEnvMoveCommAnnot{\mpS}{\roleP}{\roleQ}{\stLab}$.  With the fact that $(\stEnv[n])_{n \in N}$ is a fair path for session $\mpS$, we know that there exist some $k$ and $\stLabi$ such that $k \in N$, $k \geq j$, and  
$\stEnv[k] \stEnvMoveCommAnnot{\mpS}{\roleP}{\roleQ}{\stLabi} \stEnv[k+1]$, a desired contradiction.     
          
          \item or 
           $\unfoldOne{\gtG[j]} = \gtComm{\roleS}{\roleT}{l \in
            L}{\gtLab[l]}{\tySi[l]}{\gtGii[l]}$,
          where for all $l \in L:  
	  \stEnvApp{\stEnv[j]}{\mpChanRole{\mpS}{\roleP}} \stSub \gtProj{\gtGii[l]}{\roleP}$,
          with $\roleP \neq \roleS$ and $\roleP \neq \roleT$. 
      By the assumption that for any $k \in N$ with $k \geq j$, there does not exist $\stLabi$ %
    such that $\stEnv[k] \stEnvMoveCommAnnot{\mpS}{\roleP}{\roleQ}{\stLabi}
      \stEnv[k+1]$, $\unfoldOne{\stEnvApp{\stEnv[j]}{\mpChanRole{\mpS}{\roleP}}} =
        \stIntSum{\roleQ}{i \in I}{\stChoice{\stLab[i]}{\stS[i]} \stSeq \stT[i]}$, we know that transmission  
        between $\roleP$ and $\roleQ$ will never occur from $\stEnv[j]$. By operational correspondence, we also conclude that such transmission will never be triggered from $\gtG[j]$. This is a desired contradiction, as by~\Cref{app:lem:inversion_projection},         
        each continuation $\gtGii[l]$ of $\gtG[j]$ must involve transmission between $\roleP$ and $\roleQ$, 
         and therefore, the previous subcase scenario should inevitably occur.  
         
\end{itemize}
\item Case  
$\stEnvMoveAnnotP{\stEnv[j]}{\stEnvInAnnot{\roleQ}{\roleP}{\stChoice{\stLab}{\stS}}}$ with $j \in N$, and  
for any $k \in N$ with $k \geq j$, there does not exist $\stLabi$ %
    such that $\stEnv[k] \stEnvMoveCommAnnot{\mpS}{\roleP}{\roleQ}{\stLabi}
      \stEnv[k+1]$: similar to the previous case. 
\qedhere 
\end{itemize}
\end{proof}

\colProjAll*
\begin{proof}
	Apply~\Cref{lem:safety_association,lem:df_association,lem:live_association}.
	\qedhere 
\end{proof}

\subsection{Relating Typing Context Properties}%
\label{sec:proof:relating:ctx}
\lemCtxSafetyImplications*

\begin{proof}
The negated implications in the statement are 
demonstrated in~\Cref{ex:ctx_df,ex:ctx_liveness,ex:ctx_not-assoc}. 
 Let's now consider the remaining implications. 

 \begin{description}

\item[(\ref{item:liveness-safety-implic:live-df}).]%
Assume $\stEnvLiveSessP{\mpS}{\stEnv}$. We need to 
prove $\stEnvDFSessP{\mpS}{\stEnv}$, \ie 
for any path $\stEnvNew[0] \!\stEnvMoveWithSession[\mpS]\! \stEnvNew[1] \!\stEnvMoveWithSession[\mpS]\! \cdots \!\stEnvMoveWithSession[\mpS]\! \stEnvNew[n]$ with $n > 0$, $\stEnvNew[0] = \stEnvNew$, and 
$\stEnvNotMoveWithSessionP[\mpS]{\stEnvNew[n]}$, it holds that $\forall \mpChanRole{\mpS}{\roleP} \in \dom{\stEnvNew[n]}: 
\stEnvApp{\stEnvNew[n]}{\mpChanRole{\mpS}{\roleP}} = \stEnd$. 

We first show that any such path $\stEnvNew[0] \!\stEnvMoveWithSession[\mpS]\! \stEnvNew[1] \!\stEnvMoveWithSession[\mpS]\! \cdots \!\stEnvMoveWithSession[\mpS]\! \stEnvNew[n]$ is fair for session $\mpS$ by contradiction, assuming it is not fair for $\mpS$. 
In this case, 
there exists $k$ such that $0 \leq k \leq n$ and $\stEnvMoveAnnotP{\stEnvNew[k]}{\ltsSendRecv{\mpS}{\roleP}{\roleQ}{\stLab}}$, while 
for all $j$ and $\stLabi$ with $k \leq j \leq n$, $\stEnvNew[j] \stEnvMoveCommAnnot{\mpS}{\roleP}{\roleQ}{\stLabi} \stEnvNew[j+1]$ does not occur in the path. Hence, by~\Cref{lem:stenv-red:trivial-2}, we have that  $\stEnvApp{\stEnv[n]}{\mpChanRole{\mpS}{\roleP}} =
      \stEnvApp{\stEnv[k]}{\mpChanRole{\mpS}{\roleP}}$ and 
      $\stEnvApp{\stEnv[n]}{\mpChanRole{\mpS}{\roleQ}} =
      \stEnvApp{\stEnv[k]}{\mpChanRole{\mpS}{\roleQ}}$. It follows directly that $\stEnvMoveAnnotP{\stEnvNew[n]}{\ltsSendRecv{\mpS}{\roleP}{\roleQ}{\stLab}}$, contradicting $\stEnvNotMoveWithSessionP[\mpS]{\stEnvNew[n]}$.

We proceed 
by contradiction again, 
assuming that there exists $\mpChanRole{\mpS}{\roleP} \in \dom{\stEnv[n]}$ 
such that 
$\stEnvApp{\stEnv[n]}{\mpChanRole{\mpS}{\roleP}} \not \stSub \stEnd$, \ie 
$\unfoldOne{\stEnvApp{\stEnv[n]}{\mpChanRole{\mpS}{\roleP}}} = %
    \stExtSum{\roleQ}{i \in I}{\stChoice{\stLab[i]}{\stS[i]} \stSeq \stT[i]}$ or 
$\unfoldOne{\stEnvApp{\stEnv[n]}{\mpChanRole{\mpS}{\roleP}}} =  
\stIntSum{\roleQ}{i \in I}{\stChoice{\stLab[i]}{\stS[i]} \stSeq \stT[i]}$.  

We now detail the case where $\unfoldOne{\stEnvApp{\stEnv[n]}{\mpChanRole{\mpS}{\roleP}}} = %
    \stExtSum{\roleQ}{i \in I}{\stChoice{\stLab[i]}{\stS[i]} \stSeq \stT[i]}$, with the other case following similarly. 

    Since $\unfoldOne{\stEnvApp{\stEnv[n]}{\mpChanRole{\mpS}{\roleP}}} = \stExtSum{\roleQ}{i \in I}{\stChoice{\stLab[i]}{\stS[i]} \stSeq \stT[i]}$, we have 
    $\stEnvMoveAnnotP{\stEnv[n]}{\stEnvInAnnot{\roleP}{\roleQ}{\stChoice{\stLab[k]}{\stS[k]}}}$ with $k \in I$. 
     Given that  $\stEnv[0]$ is $\mpS$-live and $\stEnvNew[0] \!\stEnvMoveWithSession[\mpS]\! \stEnvNew[1] \!\stEnvMoveWithSession[\mpS]\! \cdots \!\stEnvMoveWithSession[\mpS]\! \stEnvNew[n]$ is fair for $\mpS$, 
     there exist $j$ and $\stLabi$ such that $n \leq j \leq n$ and $\stEnvNew[j]
\!\stEnvMoveCommAnnot{\mpS}{\roleP}{\roleQ}{\stLabi}\! \stEnvNew[j+1]$. This follows 
$\stEnvNew[n] \!\stEnvMoveWithSession[\mpS]$, 
contradicting $\stEnvNotMoveWithSessionP[\mpS]{\stEnvNew[n]}$.

  \item[(\ref{item:liveness-safety-implic:asso-safe}).]%
  Straightforward from~\Cref{lem:safety_association}.

  \item[(\ref{item:liveness-safety-implic:asso-df}).]%
  Straightforward from~\Cref{lem:df_association}. 
  
  \item[(\ref{item:liveness-safety-implic:asso-live}).]%
  Straightforward from~\Cref{lem:live_association}.
\qedhere 
\end{description}

\end{proof}

\section{Subtyping Properties Regarding  Association}
\label{sec:app:subtyping_within_assoc}

\begin{lemma}
\label{lem:stenv-assoc-sub}
If $\stEnvAssoc{\gtG}{\stEnv}{\mpS}$ and 
$\stEnvi \stSub \stEnv$, 
then $\stEnvAssoc{\gtG}{\stEnvi}{\mpS}$. 
\end{lemma}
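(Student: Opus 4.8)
The plan is to carry the witnessing decomposition of $\stEnvAssoc{\gtG}{\stEnv}{\mpS}$ over to $\stEnvi$ essentially unchanged. By \Cref{def:mpst-env}, $\stEnv \stSub \stEnvi$ gives $\dom{\stEnv} = \dom{\stEnvi}$; so, writing $\stEnv = \stEnv[\gtG] \stEnvComp \stEnv[\stEnd]$ for the split provided by $\stEnvAssoc{\gtG}{\stEnv}{\mpS}$ (\Cref{def:assoc}), I would take $\stEnvi[\gtG]$ and $\stEnvi[\stEnd]$ to be the restrictions of $\stEnvi$ to $\dom{\stEnv[\gtG]}$ and $\dom{\stEnv[\stEnd]}$ respectively. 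These are disjoint, compose to $\stEnvi$, and have the required domains, since $\dom{\stEnvi[\gtG]} = \dom{\stEnv[\gtG]} = \setcomp{\mpChanRole{\mpS}{\roleP}}{\roleP \in \gtRoles{\gtG}}$.

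The projection clause then follows pointwise by composing subtypings: for each $\roleP \in \gtRoles{\gtG}$ we have $\gtProj{\gtG}{\roleP} \stSub \stEnvApp{\stEnv}{\mpChanRole{\mpS}{\roleP}}$ from the hypothesis $\stEnvAssoc{\gtG}{\stEnv}{\mpS}$, and $\stEnvApp{\stEnv}{\mpChanRole{\mpS}{\roleP}} \stSub \stEnvApp{\stEnvi}{\mpChanRole{\mpS}{\roleP}}$ from $\stEnv \stSub \stEnvi$, so transitivity of subtyping (\Cref{lem:transitive-subtyping}) yields $\gtProj{\gtG}{\roleP} \stSub \stEnvApp{\stEnvi}{\mpChanRole{\mpS}{\roleP}}$, as clause~1 of \Cref{def:assoc} demands.

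The termination clause is the only place that needs a moment's thought, and it is where I expect the main (minor) obstacle. For $\mpChanRole{\mpS}{\roleP} \in \dom{\stEnvi[\stEnd]} = \dom{\stEnv[\stEnd]}$ we have $\stEnvApp{\stEnv}{\mpChanRole{\mpS}{\roleP}} = \stEnd$, hence $\stEnd \stSub \stEnvApp{\stEnvi}{\mpChanRole{\mpS}{\roleP}}$, and by \Cref{lem:end-subtyping} also $\stEnvApp{\stEnvi}{\mpChanRole{\mpS}{\roleP}} \stSub \stEnd$; since, by \Cref{lem:unfold-no-rec,lem:unfold-subtyping} together with inspection of the subtyping rules, no internal or external choice is a subtype of $\stEnd$, the unfolding of $\stEnvApp{\stEnvi}{\mpChanRole{\mpS}{\roleP}}$ must be $\stEnd$. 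The subtlety is that subtyping does not force a supertype of $\stEnd$ to be \emph{syntactically} $\stEnd$ (e.g.\ $\stEnd \stSub \stRec{\stRecVar}{\stEnd}$), so the termination clause of \Cref{def:assoc} should be read up to unfolding; under that reading the argument just given discharges it. Everything else in the proof is a mechanical re-indexing of the witnessing decomposition plus the single appeal to transitivity of $\stSub$.
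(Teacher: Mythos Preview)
Your argument is correct and follows the same route as the paper's proof, which is simply ``by the definition of association (\Cref{def:assoc}), the definition of $\stEnv \stSub \stEnvi$ (\Cref{def:mpst-env}), and the transitivity of subtyping.'' You have in fact been more careful than the paper: your observation about the termination clause---that a supertype of $\stEnd$ need not be \emph{syntactically} $\stEnd$---is a genuine point that the paper's one-line proof glosses over, and your resolution (read the clause up to unfolding) is the natural one.
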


\begin{proof}
By the definition of association~(\Cref{def:association}), 
the definition of $\stEnvi \stSub \stEnv$~(\Cref{def:typing_context}), and 
the transitivity of subtyping~(\Cref{app:lem:subtyping:transitive}). 
\qedhere
\end{proof}

\begin{restatable}{lemma}{lemStenvReductionSubAssoc}
    \label{lem:stenv-assoc-reduction-sub}%
    Assume that\, $\stEnvAssoc{\gtG}{\stEnv}{\mpS}$,
    $\stEnvi \stSub \stEnv$ and $\stEnvi \stEnvMoveGenAnnot \stEnvii$ %
    with 
      $\stEnvAnnotGenericSym \!\in\! \setcomp{
      \ltsSendRecv{\mpS}{\roleP}{\roleQ}{\stLab}}{\,\roleP,\!\roleQ \!\in\! \roleSet}$. 
    Then, there is\, $\stEnviii$ such that\; %
    $\stEnv \stEnvMoveGenAnnot \stEnviii$ and $\stEnvii \stSub \stEnviii$.%
  \end{restatable}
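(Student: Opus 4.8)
The plan is to leverage the operational correspondence already established between global types and typing contexts. First I would apply \Cref{lem:stenv-assoc-sub} to the hypotheses $\stEnvAssoc{\gtG}{\stEnv}{\mpS}$ and $\stEnv \stSub \stEnvi$, obtaining $\stEnvAssoc{\gtG}{\stEnvi}{\mpS}$; thus \emph{both} $\stEnv$ and $\stEnvi$ are associated with the \emph{same} global type $\gtG$. Since the label $\stEnvAnnotGenericSym = \ltsSendRecv{\mpS}{\roleP}{\roleQ}{\stLab}$ of the given reduction $\stEnvi \stEnvMoveGenAnnot \stEnvii$ is a transmission label, \Cref{thm:gtype:proj-comp} applied to $\stEnvAssoc{\gtG}{\stEnvi}{\mpS}$ yields a global type $\gtGi$ with $\gtG \,\gtMove[\stEnvAnnotGenericSym]\, \gtGi$ and $\stEnvAssoc{\gtGi}{\stEnvii}{\mpS}$. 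What remains is to show that the subtype $\stEnv$ can replay this \emph{same} $\stEnvAnnotGenericSym$-labelled step into some $\stEnviii \stSub \stEnvii$; note that the plain soundness result \Cref{thm:gtype:proj-sound} is deliberately too weak here (it only gives \emph{some} $\roleP$-to-$\roleQ$ transmission, not the one tagged by $\stLab$), so a black-box combination of \Cref{thm:gtype:proj-sound} and \Cref{thm:gtype:proj-comp} does not suffice.

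The heart of the argument is a strengthened re-run of the induction of \Cref{thm:gtype:proj-sound}, this time by induction on the derivation of $\gtG \,\gtMove[\stEnvAnnotGenericSym]\, \gtGi$, carrying along the extra data $\stEnv \stSub \stEnvi$ and $\stEnvi \stEnvMoveGenAnnot \stEnvii$. In the \inferrule{\iruleGtMoveRec} case, \Cref{prop:gt-lts-unfold-once} preserves both associations under unfolding and the inductive hypothesis applies directly. In the interesting \inferrule{\iruleGtMoveComm} case, $\gtG = \gtCommSmall{\roleP}{\roleQ}{i \in I}{\gtLab[i]}{\stS[i]}{\gtG[i]}$ with $\stLab = \gtLab[j]$ and $\gtGi = \gtG[j]$; from $\stEnvAssoc{\gtG}{\stEnv}{\mpS}$ and inversion of subtyping (\Cref{lem:subtyping-invert}) I learn that $\unfoldOne{\stEnvApp{\stEnv}{\mpChanRole{\mpS}{\roleP}}}$ is an internal choice toward $\roleQ$ and $\unfoldOne{\stEnvApp{\stEnv}{\mpChanRole{\mpS}{\roleQ}}}$ an external choice toward $\roleP$, so \Cref{lem:proj:inversion-2} (case~1; case~2 is excluded because $\gtG$ is a transmission from $\roleP$) supplies label sets $I_{\roleP} \subseteq I \subseteq I_{\roleQ}$ together with the payload and continuation subtypings. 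Crucially, inverting the given step $\stEnvi \stEnvMoveGenAnnot \stEnvii$ (via \inferrule{\iruleTCtxCom} and \Cref{lem:stenv-red:inversion-basic}, peeling off \inferrule{\iruleTCtxCong}/\inferrule{\iruleTCtxCongBasic}/\inferrule{\iruleTCtxRec}) shows $j$ lies in the selectable-label set of $\unfoldOne{\stEnvApp{\stEnvi}{\mpChanRole{\mpS}{\roleP}}}$; since $\stEnvApp{\stEnv}{\mpChanRole{\mpS}{\roleP}} \stSub \stEnvApp{\stEnvi}{\mpChanRole{\mpS}{\roleP}}$ is a subtyping of internal choices, which only \emph{adds} selectable labels, we get $j \in I_{\roleP}$, and hence $j \in I \subseteq I_{\roleQ}$. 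Therefore \inferrule{\iruleTCtxOut} on $\mpChanRole{\mpS}{\roleP}$ and \inferrule{\iruleTCtxIn} on $\mpChanRole{\mpS}{\roleQ}$ both fire for label $\gtLab[j]$, their payloads are compatible by transitivity of subtyping (\Cref{lem:transitive-subtyping}), and \inferrule{\iruleTCtxCom} (with the congruence/recursion rules as needed) produces $\stEnv \stEnvMoveGenAnnot \stEnviii$. Finally $\stEnviii \stSub \stEnvii$ is checked entry-wise: the $\mpChanRole{\mpS}{\roleP}$ and $\mpChanRole{\mpS}{\roleQ}$ entries are the $j$-continuations, related by the continuation clauses of internal/external-choice subtyping; every other entry is unchanged by both reductions (\Cref{lem:stenv-red:trivial-5}) and already related by $\stEnv \stSub \stEnvi$; and domains agree by \Cref{lem:stenv-red:trivial-1-new}.

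The \inferrule{\iruleGtMoveCtx} case follows the corresponding case of \Cref{thm:gtype:proj-sound}: with $\gtG = \gtCommSmall{\roleS}{\roleT}{i \in I}{\gtLab[i]}{\stS[i]}{\gtG[i]}$ and $\setenum{\roleS,\roleT}$ disjoint from $\setenum{\roleP,\roleQ}$, I build, for each $i \in I$, contexts $\stEnv[i]$ and $\stEnvi[i]$ that agree with $\stEnv$ and $\stEnvi$ except on $\mpChanRole{\mpS}{\roleS},\mpChanRole{\mpS}{\roleT}$ where they carry the exact projections of $\gtG[i]$, so that $\stEnvAssoc{\gtG[i]}{\stEnv[i]}{\mpS}$ (using \Cref{lem:merge-subtyping} and \Cref{lem:transitive-subtyping}), $\stEnv[i] \stSub \stEnvi[i]$, and $\stEnvi[i] \stEnvMoveGenAnnot \stEnvii[i]$ (since the $\roleP,\roleQ$ entries are untouched by the $\roleS\!\to\!\roleT$ prefix, \Cref{lem:stenv-red:trivial-2}). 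The inductive hypothesis then gives $\stEnv[i] \stEnvMoveGenAnnot \stEnviii[i] \stSub \stEnvii[i]$ for each $i$; because all $\stEnv[i]$ share the $\roleP,\roleQ$ entries of $\stEnv$, determinism (\Cref{lem:stenv-red:det}) makes the $\roleP,\roleQ$-components of the $\stEnviii[i]$ coincide, and I reassemble $\stEnviii$ with those components on $\mpChanRole{\mpS}{\roleP},\mpChanRole{\mpS}{\roleQ}$ and $\stEnv$'s entries elsewhere, whence $\stEnv \stEnvMoveGenAnnot \stEnviii \stSub \stEnvii$.

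The main obstacle is exactly the asymmetry between the two endpoints involved in the communication: the ``weak'' soundness theorem reflects that at the \emph{receiver} side $\roleQ$ the subtyping $\stEnvApp{\stEnv}{\mpChanRole{\mpS}{\roleQ}} \stSub \stEnvApp{\stEnvi}{\mpChanRole{\mpS}{\roleQ}}$ (external choices) can \emph{drop} input branches, so $\stEnv$ need not, a priori, be willing to receive $\stLab$. The new leverage is that (i) at the \emph{sender} side the subtyping of internal choices runs the other way, keeping $\stLab$ available in $\stEnv$, and (ii) association pins the receiver's branches against $\gtProj{\gtG}{\roleQ}$, forcing $\stEnv$ to accommodate whatever $\gtG$ prescribes and what $\stEnvi$ actually uses — which is why the induction must dig through the global type (the \inferrule{\iruleGtMoveCtx} case) rather than being read off locally. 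The remaining friction — tracking which entries are left fixed by each reduction so that $\stEnv \stSub \stEnvi$ descends to $\stEnviii \stSub \stEnvii$, and the bookkeeping of the per-branch split in \inferrule{\iruleGtMoveCtx} — is routine and parallels the proof of \Cref{thm:gtype:proj-sound}.
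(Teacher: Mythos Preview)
Your argument is correct, but it takes a longer route than the paper's. The paper never goes through the global-type reduction relation at all: it simply inverts $\stEnvi \stEnvMoveCommAnnot{\mpS}{\roleP}{\roleQ}{\stLab} \stEnvii$ to obtain the internal/external choice shapes of $\stEnvi$ at $\roleP,\roleQ$ with label $k$, uses $\stEnv \stSub \stEnvi$ and \Cref{lem:subtyping-invert} to get the shapes of $\stEnv$ at $\roleP,\roleQ$ with label sets $I_{\roleP} \supseteq I \ni k$ and $J_{\roleQ} \subseteq J$, and then applies \Cref{lem:comm-match} directly to $\stEnvAssoc{\gtG}{\stEnv}{\mpS}$ to conclude $I_{\roleP} \subseteq J_{\roleQ}$ and payload compatibility. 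That one application of \Cref{lem:comm-match} is the whole proof; the rest is the same entry-wise check of $\stEnviii \stSub \stEnvii$ that you also do.

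The point you missed is that \Cref{lem:comm-match} already packages, as a black box, exactly the induction on the shape of $\gtG$ that you re-perform by hand: your \inferrule{\iruleGtMoveComm} case just reproves its base case, and your \inferrule{\iruleGtMoveCtx} case with the per-branch $\stEnv[i],\stEnvi[i]$ bookkeeping reproves its inductive step. So the detour via \Cref{lem:stenv-assoc-sub}, \Cref{thm:gtype:proj-comp}, and induction on $\gtG \,\gtMove[\stEnvAnnotGenericSym]\, \gtGi$ is sound but redundant; a single call to \Cref{lem:comm-match} on $\stEnv$ (not $\stEnvi$) gives $k \in I \subseteq I_{\roleP} \subseteq J_{\roleQ}$ with $\stSii[k] \stSub \stSiii[k]$ in one step.
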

\begin{proof}
Since $\stEnvi 
      \!\stEnvMoveCommAnnot{\mpS}{\roleP}{\roleQ}{\stLab}\!%
      \stEnvii$, by applying and inverting \inferrule{\iruleTCtxCom}~(and \inferrule{\iruleTCtxRec} when necessary),  and also using~\Cref{lem:stenv-red:inversion-basic}, we have 
      $ \unfoldOne{\stEnvApp{\stEnvi}{\mpChanRole{\mpS}{\roleP}}} =
         \stIntSum{\roleQ}{i \in I}{\stChoice{\stLab[i]}{\stS[i]} \stSeq \stT[i]}$, 
          $
        \unfoldOne{\stEnvApp{\stEnvi}{\mpChanRole{\mpS}{\roleQ}}} =
        \stExtSum{\roleP}{j \in J}{\stChoice{\stLabi[j]}{\stSi[j]} \stSeq \stTi[j]}%
      $, $\exists k: k \in I$, $k \in J$, $\stLab[k] = \stLabi[k] = \stLab$, 
      $\stEnvApp{\stEnvii}{\mpChanRole{\mpS}{\roleP}} = \stT[k]$, and 
       $\stEnvApp{\stEnvii}{\mpChanRole{\mpS}{\roleQ}} = \stTi[k]$. 
       Furthermore, by~\Cref{lem:stenv-red:trivial-1-new,lem:stenv-red:trivial-2}, 
       it also holds that for all  $\mpChanRole{\mpS}{\roleR} \in \dom{\stEnvi} = \dom{\stEnvii}$ with 
      $\roleR \neq \roleP$ and $\roleR \neq \roleQ$, %
      $\stEnvApp{\stEnvi}{\mpChanRole{\mpS}{\roleR}} =
      \stEnvApp{\stEnvii}{\mpChanRole{\mpS}{\roleR}}$. 
     Observe that by $\stEnvi \stSub \stEnv$,  
     $\unfoldOne{\stEnvApp{\stEnv}{\mpChanRole{\mpS}{\roleP}}} =
         \stIntSum{\roleQ}{i \in I_{\roleP}}{\stChoice{\stLab[i]}{\stSii[i]} \stSeq \stTii[i]}$ where 
         $I \subseteq I_{\roleP}$ and $\forall i \in I: \stSii[i] \stSub \stS[i]$ and 
         $\stT[i] \stSub \stTii[i]$, and 
        $
        \unfoldOne{\stEnvApp{\stEnv}{\mpChanRole{\mpS}{\roleQ}}} =
        \stExtSum{\roleP}{j \in J_{\roleQ}}{\stChoice{\stLabi[j]}{\stSiii[j]} \stSeq \stTiii[j]}%
      $ where 
         $J_{\roleQ} \subseteq J$ and $\forall j \in J_{\roleQ}: \stSi[j] \stSub \stSiii[j]$ and 
         $\stTi[j] \stSub \stTiii[j]$. 

         Now we apply~\Cref{app:lem:match_comm_projection} on $\stEnvAssoc{\gtG}{\stEnv}{\mpS}$, 
          $\unfoldOne{\stEnvApp{\stEnv}{\mpChanRole{\mpS}{\roleP}}}$, and 
           $\unfoldOne{\stEnvApp{\stEnv}{\mpChanRole{\mpS}{\roleQ}}}$, to get 
           $I \subseteq I_{\roleP} \subseteq J_{\roleQ} \subseteq J$, and 
           $\forall i \in I_{\roleP}: \stLab[i] = \stLabi[i]$ and $\stSiii[i] \stSub \stSii[i]$. 
           Consequently, we have $k \in I_{\roleP}$, $\stLab[k] = \stLabi[k] = \stLab$, and 
           $\stSiii[k] \stSub \stSii[k]$, which follows that there exists 
           $\stEnviii$ such that $\stEnv
      \!\stEnvMoveCommAnnot{\mpS}{\roleP}{\roleQ}{\stLab}\!%
      \stEnviii$, $\stEnvApp{\stEnviii}{\mpChanRole{\mpS}{\roleP}} = \stTii[k]$, 
       $\stEnvApp{\stEnviii}{\mpChanRole{\mpS}{\roleQ}} = \stTiii[k]$, and 
       for all  $\mpChanRole{\mpS}{\roleR} \in \dom{\stEnv} = \dom{\stEnvi} = 
       \dom{\stEnvii} = \dom{\stEnviii}$ with 
      $\roleR \neq \roleP$ and $\roleR \neq \roleQ$, %
      $\stEnvApp{\stEnv}{\mpChanRole{\mpS}{\roleR}} =
      \stEnvApp{\stEnviii}{\mpChanRole{\mpS}{\roleR}}$. 
      
      We are left to show that $\stEnvii \stSub \stEnviii$, which is straightforward from 
      $\stEnvApp{\stEnvii}{\mpChanRole{\mpS}{\roleP}} = \stT[k] 
      \stSub \stTii[k] = \stEnvApp{\stEnviii}{\mpChanRole{\mpS}{\roleP}}$, 
      $\stEnvApp{\stEnvii}{\mpChanRole{\mpS}{\roleQ}} = \stTi[k] 
      \stSub \stTiii[k] = \stEnvApp{\stEnviii}{\mpChanRole{\mpS}{\roleQ}}$, and 
      for all  $\mpChanRole{\mpS}{\roleR} \in \dom{\stEnviii} =
       \dom{\stEnvii}$ with 
      $\roleR \neq \roleP$ and $\roleR \neq \roleQ$, %
      $\stEnvApp{\stEnvii}{\mpChanRole{\mpS}{\roleR}} =
      \stEnvApp{\stEnvi}{\mpChanRole{\mpS}{\roleR}} \stSub 
       \stEnvApp{\stEnv}{\mpChanRole{\mpS}{\roleR}} =  
       \stEnvApp{\stEnviii}{\mpChanRole{\mpS}{\roleR}}$. 
       \qedhere
\end{proof}

\begin{restatable}{lemma}{lemStenvReductionSubAssocGeneral}
    \label{lem:stenv-assoc-reduction-sub-gen}%
    Assume that\, $\forall \mpS \in \stEnv: \exists \gtG[\mpS]: 
    \stEnvAssoc{\gtG[\mpS]}{\stEnv[\mpS]}{\mpS}$,
   $\stEnvi \stSub \stEnv$ and $\stEnvi \stEnvMoveGenAnnot \stEnvii$ %
    with:
      $\stEnvAnnotGenericSym \in \setcomp{
     \ltsSendRecv{\mpS}{\roleP}{\roleQ}{\stLab}}{\,\roleP,\!\roleQ \!\in\! \roleSet}$. 
    Then, there is\, $\stEnviii$ such that\; %
    $\stEnv \stEnvMoveGenAnnot \stEnviii$ and $\stEnvii \stSub \stEnviii$.%
  \end{restatable}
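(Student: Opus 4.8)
The transition label $\stEnvAnnotGenericSym = \ltsSendRecv{\mpS}{\roleP}{\roleQ}{\stLab}$ concerns only session $\mpS$, so the plan is to reduce the claim to the single-session case (\Cref{lem:stenv-assoc-reduction-sub}) by restricting every context to $\mpS$ and then reassembling. First I would decompose $\stEnv = \stEnv[\mpS] \stEnvComp \stEnv[0]$, where $\stEnv[0]$ gathers the entries of $\stEnv$ that do not concern session $\mpS$ (endpoints of other sessions and variables). Since $\stEnv \stSub \stEnvi$ forces $\dom{\stEnv} = \dom{\stEnvi}$ with pointwise subtyping (\Cref{def:mpst-env}), I can split $\stEnvi = \stEnvi[\mpS] \stEnvComp \stEnvi[0]$ with $\stEnv[\mpS] \stSub \stEnvi[\mpS]$ and $\stEnv[0] \stSub \stEnvi[0]$; moreover $\mpS \in \stEnv$, because the reduction of $\stEnvi$ requires $\mpChanRole{\mpS}{\roleP}, \mpChanRole{\mpS}{\roleQ} \in \dom{\stEnvi} = \dom{\stEnv}$, so by hypothesis there is a global type $\gtG[\mpS]$ with $\stEnvAssoc{\gtG[\mpS]}{\stEnv[\mpS]}{\mpS}$.

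Next I would show the reduction is local to $\mpS$. From $\stEnvi \stEnvMoveCommAnnot{\mpS}{\roleP}{\roleQ}{\stLab} \stEnvii$, inverting \inferrule{\iruleTCtxCom} (and \inferrule{\iruleTCtxRec}, \inferrule{\iruleTCtxCong}, \inferrule{\iruleTCtxCongBasic} where needed) together with \Cref{lem:stenv-red:inversion-basic} pins down the entries of $\mpChanRole{\mpS}{\roleP}$ and $\mpChanRole{\mpS}{\roleQ}$ before and after the step, while \Cref{lem:stenv-red:trivial-1-new,lem:stenv-red:trivial-5} (plus the observation that variable entries are never the subject of a reduction) show every other entry is unchanged. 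Hence $\stEnvii = \stEnvii[\mpS] \stEnvComp \stEnvi[0]$, and the very same derivation, carried out inside $\stEnvi[\mpS]$ alone, gives $\stEnvi[\mpS] \stEnvMoveCommAnnot{\mpS}{\roleP}{\roleQ}{\stLab} \stEnvii[\mpS]$. Then I would apply \Cref{lem:stenv-assoc-reduction-sub} to $\stEnvAssoc{\gtG[\mpS]}{\stEnv[\mpS]}{\mpS}$ and $\stEnv[\mpS] \stSub \stEnvi[\mpS] \stEnvMoveCommAnnot{\mpS}{\roleP}{\roleQ}{\stLab} \stEnvii[\mpS]$, obtaining $\stEnviii[\mpS]$ with $\stEnv[\mpS] \stEnvMoveCommAnnot{\mpS}{\roleP}{\roleQ}{\stLab} \stEnviii[\mpS] \stSub \stEnvii[\mpS]$.

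To finish, I would set $\stEnviii = \stEnviii[\mpS] \stEnvComp \stEnv[0]$ and add the entries of $\stEnv[0]$ back one at a time using \inferrule{\iruleTCtxCong} and \inferrule{\iruleTCtxCongBasic}, lifting the last reduction to $\stEnv \stEnvMoveCommAnnot{\mpS}{\roleP}{\roleQ}{\stLab} \stEnviii$, that is $\stEnv \stEnvMoveGenAnnot \stEnviii$. The inclusion $\stEnviii \stSub \stEnvii$ then holds channelwise: on session-$\mpS$ endpoints it is $\stEnviii[\mpS] \stSub \stEnvii[\mpS]$, and on every other entry it is $\stEnv[0] \stSub \stEnvi[0] = \stEnvii[0]$. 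The hard part is the bookkeeping of the second paragraph, namely rigorously establishing that a typing-context reduction carrying a session-$\mpS$ label decomposes as a reduction of the $\mpS$-restriction composed with an unchanged remainder, and symmetrically can be re-extended: this is routine given the inversion and triviality lemmas, but needs care because those lemmas state the ``unchanged'' facts only for endpoints (so variable entries must be argued separately) and because context composition $\stEnvComp$ is defined only for disjoint domains.
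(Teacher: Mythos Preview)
Your proposal is correct and follows essentially the same approach as the paper: decompose into the $\mpS$-part and the remainder, invoke \Cref{lem:stenv-assoc-reduction-sub} on the $\mpS$-restriction, and reassemble using \Cref{lem:stenv-red:trivial-5} and the definition of context subtyping. The paper's proof is a one-line citation of \Cref{def:mpst-env-subtype}, \Cref{lem:stenv-red:trivial-5}, and \Cref{lem:stenv-assoc-reduction-sub}; you have simply spelled out the bookkeeping that those citations leave implicit, including the care needed for variable entries.
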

  \begin{proof}
  Apply~\Cref{def:typing_context} and~\Cref{lem:stenv-red:trivial-5,lem:stenv-assoc-reduction-sub}. 
  \qedhere 
  \end{proof}

\begin{restatable}{proposition}{lemStenvReductionSubAssocInd}%
  \label{lem:stenv-assoc-reduction-sub-ind}%
  Assume that\, $\stEnvAssoc{\gtG}{\stEnv}{\mpS}$  
  ,\, 
  $\stEnvi \stSub \stEnv$ and $\stEnvi \stEnvMoveAnnot{\stEnvAnnotGenericSym[1]}\cdots\stEnvMoveAnnot{\stEnvAnnotGenericSym[n]} \stEnvii$, with:
    $\forall i \in 1...n: \stEnvAnnotGenericSym[i] \in \setcomp{
        \ltsSendRecv{\mpS}{\roleP}{\roleQ}{\stLab[i]}}{\,\roleP,\!\roleQ \!\in\! \roleSet}$. 
  Then, there is\, $\stEnviii$ such that\; %
  $\stEnv \stEnvMoveAnnot{\stEnvAnnotGenericSym[1]}\cdots\stEnvMoveAnnot{\stEnvAnnotGenericSym[n]} \stEnviii$ and $\stEnvii \stSub \stEnviii$.%
\end{restatable}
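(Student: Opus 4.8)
The plan is to prove \Cref{lem:stenv-assoc-reduction-sub-ind} by induction on $n$, the length of the reduction sequence $\stEnvi \stEnvMoveAnnot{\stEnvAnnotGenericSym[1]}\cdots\stEnvMoveAnnot{\stEnvAnnotGenericSym[n]} \stEnvii$, using \Cref{lem:stenv-assoc-reduction-sub} as the one-step tool and \Cref{thm:gtype:proj-comp} (completeness of association) to recover an associated global type at each step so that the inductive hypothesis applies.

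\textbf{Base case $n = 0$.} Here $\stEnvii = \stEnvi$, so $\stEnv \stSub \stEnvii$. Taking $\stEnviii = \stEnv$, the empty reduction sequence $\stEnv \stEnviii$ and $\stEnviii = \stEnv \stSub \stEnvii$ give the thesis directly (we also note $\stEnv \stEnvMoveAnnot{}\cdots$ with zero steps is trivially $\stEnv$).

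\textbf{Inductive step.} Suppose the claim holds for reduction sequences of length $n-1$, and assume $\stEnvAssoc{\gtG}{\stEnv}{\mpS}$ together with $\stEnv \stSub \stEnvi \stEnvMoveAnnot{\stEnvAnnotGenericSym[1]} \stEnvi[1] \stEnvMoveAnnot{\stEnvAnnotGenericSym[2]}\cdots\stEnvMoveAnnot{\stEnvAnnotGenericSym[n]} \stEnvii$, where every $\stEnvAnnotGenericSym[i]$ is a message-transmission label on session $\mpS$. First I apply \Cref{lem:stenv-assoc-reduction-sub} to $\stEnvAssoc{\gtG}{\stEnv}{\mpS}$ and the single step $\stEnv \stSub \stEnvi \stEnvMoveAnnot{\stEnvAnnotGenericSym[1]} \stEnvi[1]$: this yields some $\stEnviii[1]$ with $\stEnv \stEnvMoveAnnot{\stEnvAnnotGenericSym[1]} \stEnviii[1] \stSub \stEnvi[1]$. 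Since $\stEnvAnnotGenericSym[1] = \ltsSendRecv{\mpS}{\roleP}{\roleQ}{\stLab[1]}$, we have $\stEnv \stEnvMoveWithSession[\mpS] \stEnviii[1]$, so by \Cref{thm:gtype:proj-comp} there exists a global type $\gtGi$ such that $\gtG \gtMove[\stEnvAnnotGenericSym[1]] \gtGi$ and $\stEnvAssoc{\gtGi}{\stEnviii[1]}{\mpS}$. Now the remaining tail $\stEnviii[1] \stSub \stEnvi[1] \stEnvMoveAnnot{\stEnvAnnotGenericSym[2]}\cdots\stEnvMoveAnnot{\stEnvAnnotGenericSym[n]} \stEnvii$ is a sequence of $n-1$ message-transmission steps on $\mpS$ starting from a context below which $\stEnviii[1]$ sits, with $\stEnvAssoc{\gtGi}{\stEnviii[1]}{\mpS}$; applying the inductive hypothesis gives $\stEnviii$ with $\stEnviii[1] \stEnvMoveAnnot{\stEnvAnnotGenericSym[2]}\cdots\stEnvMoveAnnot{\stEnvAnnotGenericSym[n]} \stEnviii \stSub \stEnvii$. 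Prepending the first step yields $\stEnv \stEnvMoveAnnot{\stEnvAnnotGenericSym[1]}\cdots\stEnvMoveAnnot{\stEnvAnnotGenericSym[n]} \stEnviii \stSub \stEnvii$, which is the thesis.

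\textbf{Main obstacle.} The delicate point is that \Cref{lem:stenv-assoc-reduction-sub} requires the \emph{subtyping} hypothesis $\stEnv \stSub \stEnvi$ at each step, and after the first step we only know $\stEnviii[1] \stSub \stEnvi[1]$ — so to iterate we must re-establish an \emph{association} (not merely subtyping) for the smaller context $\stEnviii[1]$, which is exactly what \Cref{thm:gtype:proj-comp} supplies. One must check that the labels in the tail are still all of the session-$\mpS$ transmission form (they are, by hypothesis, unchanged) and that the chain $\stEnviii[1] \stSub \stEnvi[1] \stEnvMoveAnnot{\stEnvAnnotGenericSym[2]} \cdots$ matches the shape required by the inductive hypothesis; both are immediate. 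If one instead wants to route through the multi-session generalisation, \Cref{lem:stenv-assoc-reduction-sub-gen} and \Cref{cor:completeness} play the analogous roles, but for a single session $\mpS$ the argument above via \Cref{lem:stenv-assoc-reduction-sub} and \Cref{thm:gtype:proj-comp} is the cleanest.
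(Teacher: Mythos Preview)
Your proof is correct and follows essentially the same approach as the paper: induction on the length $n$ of the reduction sequence, using \Cref{lem:stenv-assoc-reduction-sub} as the one-step tool. The only cosmetic difference is that you peel off the \emph{first} transition and apply the inductive hypothesis to the tail, whereas the paper peels off the \emph{last} transition and applies the inductive hypothesis to the prefix; both work equally well.

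In fact, your write-up is more careful on one point: you explicitly invoke \Cref{thm:gtype:proj-comp} to re-establish an association $\stEnvAssoc{\gtGi}{\stEnviii[1]}{\mpS}$ for the intermediate context before re-applying the one-step lemma. The paper's proof silently relies on the same fact (after obtaining $\stEnviii[0]$ from the induction hypothesis, applying \Cref{lem:stenv-assoc-reduction-sub} to the last step requires an association for $\stEnviii[0]$, which follows from iterated completeness), but does not spell it out. Your ``main obstacle'' paragraph correctly identifies this as the only non-trivial bookkeeping in the argument.
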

\begin{proof}
    By induction on the number of transitions $n$ in $\stEnvi \stEnvMoveAnnot{\stEnvAnnotGenericSym[1]}\cdots\stEnvMoveAnnot{\stEnvAnnotGenericSym[n]} \stEnvii$.
    The base case ($n = 0$ transitions) is immediate: we have $\stEnvi = \stEnvii$,
    hence we conclude by taking $\stEnviii = \stEnv$.
    In the inductive case with $n = m+1$ transitions,
    there is $\stEnvii[0]$ such that $\stEnvi \stEnvMoveAnnot{\stEnvAnnotGenericSym_1}\cdots\stEnvMoveAnnot{\stEnvAnnotGenericSym[m]} \stEnvii[0] \stEnvMoveAnnot{\stEnvAnnotGenericSym[n]} \stEnvii$.
    By the induction hypothesis, there is $\stEnviii[0]$ such that
    $\stEnv \stEnvMoveAnnot{\stEnvAnnotGenericSym[1]}\cdots\stEnvMoveAnnot{\stEnvAnnotGenericSym[m]} \stEnviii[0]$ and $\stEnvii[0] \stSub \stEnviii[0]$. %
    By $\stEnvAssoc{\gtG}{\stEnv}{\mpS}$,  $\stEnv \stEnvMoveAnnot{\stEnvAnnotGenericSym[1]}\cdots\stEnvMoveAnnot{\stEnvAnnotGenericSym[m]} \stEnviii[0]$, and applying completeness of association~(\Cref{thm:gtype:proj-comp}) $m$ times, there exists some $\gtGiii$ such that $\stEnvAssoc{\gtGiii}{\stEnviii[0]}{\mpS}$. 
    Hence, by $\stEnvAssoc{\gtGiii}{\stEnviii[0]}{\mpS}$, $\stEnvii[0] \stEnvMoveAnnot{\stEnvAnnotGenericSym[n]} \stEnvii$, 
    $\stEnvii[0] \stSub \stEnviii[0]$, and~\Cref{lem:stenv-assoc-reduction-sub}, there exists $\stEnviii$
    such that $\stEnviii[0] \stEnvMoveAnnot{\stEnvAnnotGenericSym[n]} \stEnviii$
    and $\stEnvii \stSub \stEnviii$. Therefore, we have
    $\stEnv \stEnvMoveAnnot{\stEnvAnnotGenericSym[1]}\cdots\stEnvMoveAnnot{\stEnvAnnotGenericSym[n]} \stEnviii$ and $\stEnvii \stSub \stEnviii$,
    which is the thesis.
    \qedhere
\end{proof}

\section{Proofs for~\Cref{sec:typesystem}}
\label{app:sec:section_4_proof}

\subsection{Type System Properties}
\begin{restatable}[Broadening]{lemma}{lemBroadening}%
	\label{app:lem:narrowing}
	\label{lem:broading}
    If %
    \,$\stJudgeNew{\mpEnv}{\stEnv}{\mpP}$
    and %
    $\stEnv \stSub \stEnvi$, %
   then %
    $\stJudgeNew{\mpEnv}{\stEnvi}{\mpP}$.
  \end{restatable}
  \begin{proof}
    By induction on the derivation of $\stJudgeNew{\mpEnv}{\stEnv}{\mpP}$,
    we obtain a derivation that concludes $\stJudgeNew{\mpEnv}{\stEnvi}{\mpP}$
    by inserting (possibly vacuous) instances of typing rule \inferrule{\iruleMPSub}~(\Cref{fig:typing_rules}).
\end{proof}

\begin{lemma}[Terminated Typing Context]
\label{app:lem:termination}
If \,$\stEnvEndP{\stEnv}$, then either $\stEnv = \stEnvEmpty$ or $\forall \mpC \in \dom{\stEnv}: \stEnvApp{\stEnv}{\mpC} = \stEnd$. 
\end{lemma}
\begin{proof}
By inversion of \inferrule{\iruleMPEnd} and $\stEnvEndP{\stEnvEmpty}$. 
\qedhere 
\end{proof}

\begin{restatable}[Typing Inversion]{lemma}{lemTypingInversion}
  \label{app:lem:typing_inversion}%
  Assume $\stJudgeNew{\mpEnv}{\stEnvi}{\mpP}$. %
  Then there exists $\stEnv \stSub \stEnvi$ such that 
  \begin{enumerate}[leftmargin=*,label=(\arabic*),ref=\arabic*]
  \item
  \label{item:mpst-typing-inversion:nil}%
    $\mpP = \mpNil$ %
    implies $\stEnvEndP{\stEnv}$;%

  \item
  \label{item:mpst-typing-inversion:def}%
    $\mpP = %
    \mpDef{\mpX}{%
    \mpEnvMap{x_1}{\tyGround[1]}, 
    \ldots, 
    \mpEnvMap{x_n}{\tyGround[n]}, 
    \stEnvMap{y_1}{\stT[1]}, 
    \ldots, 
    \stEnvMap{y_m}{\stT[m]}
    }{\mpPi}{\mpQ}$ %
    implies:%
    \begin{enumerate}[label=(\roman*)]
    \item%
      $\stJudgeNew{\mpEnv \mpEnvComp 
      \mpEnvMap{\mpX}{\stB[1],\ldots,\stB[n],\stT[1],\ldots,\stT[m]}
\mpEnvComp \mpEnvMap{x_1}{\stB[1]} \mpEnvComp \ldots \mpEnvComp \mpEnvMap{x_n}{\stB[n]}}{%
	      \stEnv \stEnvComp
        \stEnvMap{y_1}{\stT[1]}%
        \stEnvComp \ldots \stEnvComp%
        \stEnvMap{y_m}{\stT[m]}%
      }{%
        \mpPi%
      }$,   %
      \\ and  %
    \item%
      $\stJudgeNew{%
        \mpEnv \mpEnvComp%
	\mpEnvMap{\mpX}{\stB[1],\ldots,\stB[n], \stT[1],\ldots,\stT[m]}%
      }{%
        \stEnv%
      }{%
        \mpQ%
      }$;  %
    \end{enumerate}
   
  \item
  \label{item:mpst-typing-inversion:call}%
  $\mpP = \mpCall{\mpX}{\mpE[1],\ldots,\mpE[n],\mpC[1],\ldots,\mpC[m]}$ %
    implies:
    \begin{enumerate}[label=(\roman*)]
\item%
	$\stEnv = \stEnvii, \stEnvMap{c_1}{\stT[1]},\ldots , \stEnvMap{c_m}{\stT[m]}
	$, and
    \item%
	    $
	    \mpEnv \vdash
        \mpEnvMap{\mpE[1]}{\tyGround[1]},\ldots,\mpEnvMap{\mpE[n]}{\tyGround[n]}
      $, and 
    \item%
      $\stEnvEndP{\stEnvii}$; 
    \end{enumerate}
  
  \item
  \label{item:mpst-typing-inversion:res}%
    $\mpP = \mpRes{\stEnvMap{\mpS}{\stEnvii}}\mpPi$ %
    implies:
    \begin{enumerate}[label=(\roman*)]
    \item%
      $\mpS \not\in \stEnv$, and %
    \item $\stEnvii =
  \setenum{\stEnvMap{\mpChanRole{\mpS}{\roleP}}{\gtProj{\gtG}{\roleP}}}_{%
    \roleP \in \gtRoles{\gtG}}$ for some $\gtG$,  and 
        \item%
      $\stJudgeNew{\mpEnv}{%
        \stEnv \stEnvComp \stEnvii%
      }{%
        \mpPi%
      }$;  %
    \end{enumerate}
    
  \item
  \label{item:mpst-typing-inversion:par}%
    $\mpP = \mpP[1] \mpPar \mpP[2]$ %
    implies:
    \begin{enumerate}[label=(\roman*)]
    \item%
      $\stEnv = \stEnv[1] \stEnvComp \stEnv[2]$ %
      such that%
    \item%
      $\stJudgeNew{\mpEnv}{%
        \stEnv[1]%
      }{%
        \mpP[1]%
      }$, and%
    \item%
      $\stJudgeNew{\mpEnv}{%
        \stEnv[2]%
      }{%
        \mpP[2]%
      }$;  %
    \end{enumerate}
  
  \item
  \label{item:mpst-typing-inversion:branch}%
    $\mpP = \mpBranch{\mpC}{\roleQ}{i \in I}{\mpLab[i]}{z_i}{\mpP[i]}{}$ %
    implies:
    \begin{enumerate}[label=(\roman*)]
    \item%
      $\stEnv = \stEnvii, \stEnvMap{\mpC}{
      \stExtSum{\roleQ}{i \in I}{\stChoice{\stLab[i]}{\tyS[i]} \stSeq \stT[i]}}$ such that
    \item%
      $\forall i \!\in\! I:\;%
      \left\{
        \begin{array}{c}
        \stJudgeNew{\mpEnvNew \mpEnvComp \stEnvMap{z_i}{\tyGround}}{
          \stEnvii \stEnvComp
                     \stEnvMap{\mpC}{\stT[i]}
        }{
          \mpP[i]
        } \,\,\,\, \text{if $\stS[i] = \tyGround$}
        \\
       \stJudgeNew{\mpEnvNew}{
          \stEnvii \stEnvComp \stEnvMap{z_i}{\stT} \stEnvComp 
                     \stEnvMap{\mpC}{\stT[i]}
        }{
          \mpP[i]
        } \,\,\,\, \text{if $\stS[i] = \stT$}
        \end{array}
      \right\}$; 
    \end{enumerate}
   
  \item\label{item:mpst-typing-inversion:sel:1}%
    $\mpP = \mpSel{\mpC}{\roleQ}{\mpLab}{\mpE}{\mpPi}$ implies:
    \begin{enumerate}[label=(\roman*)]
	    \item $\stEnv = \stEnvii, \stEnvMap{\mpC}{\stIntSum{\roleQ}{}{\stChoice{\stLab}{\tyGround} \stSeq \stT}}$ such that 

	    \item $\mpEnv \vdash \stEnvMap{\mpE}{\tyGround}$, and 
	    \item $\stJudgeNew{\mpEnv}{\stEnvii \stEnvComp \stEnvMap{c}{\stT}}{\mpPi}$; 
    \end{enumerate}
  \item\label{item:mpst-typing-inversion:sel:2}%
    $\mpP = \mpSel{\mpC}{\roleQ}{\mpLab}{\mpCi}{\mpPi}$ implies:
    \begin{enumerate}[label=(\roman*)]
	    \item $\stEnv = \stEnvii \stEnvComp \stEnvMap{\mpC}{\stIntSum{\roleQ}{}{\stChoice{\stLab}{\stTi} \stSeq \stT}} \stEnvComp \stEnvMap{\mpCi}{\stTi}$ such that 
	    \item $\stJudgeNew{\mpEnv}{\stEnvii \stEnvComp \stEnvMap{c}{\stT}}{\mpPi}$; 
    \end{enumerate}

\item \label{item:mpst-typing-inversion:if}
	$\mpP = \mpIf{e}{\mpQ[1]}{\mpQ[2]}$ implies: 
	\begin{enumerate}[label=(\roman*)]
		\item $\stEnvEntails{\mpEnv}{\mpE}{\tyBool}$, and 
		\item $\stJudgeNew{\mpEnv}{\stEnv}{\mpQ[1]}$, and 
		\item $\stJudgeNew{\mpEnv}{\stEnv}{\mpQ[2]}$. 
	\end{enumerate}
    
  \end{enumerate}
  \end{restatable} 
  \begin{proof}
By inverting $\inferrule{\iruleMPSub}$ and then applying induction on typing rules in \Cref{fig:typing_rules}. %
	  \qedhere
  \end{proof} 

\begin{restatable}[Substitution]{lemma}{lemSubstitution}
 \label{app:lem:substitution}%

\quad 

\begin{itemize}
	  \item If $\stJudgeNew{\mpEnv \mpEnvComp \mpEnvMap{x}{\stB}}{\stEnv}{\mpP}$ 
		  and $\mpEnvEntails{\mpEnv}{\mpV}{\tyGround}$, 
		  then $\stJudgeNew{\mpEnv}{\stEnv}{\mpP\subst{\mpFmt{x}}{\mpV}}$.
	  \item If $\stJudgeNew{\mpEnv}{\stEnv \mpEnvComp \mpEnvMap{\mpFmt{y}}{\stT}}{\mpP}$,
		  then $\stJudgeNew{\mpEnv}{\stEnv \mpEnvComp \mpEnvMap{\mpChanRole{\mpS}{\roleP}}{\stT}}{\mpP \subst{\mpFmt{y}}{\mpChanRole{\mpS}{\roleP}}}$.
  \end{itemize}
\end{restatable}
\begin{proof}
	By induction on typing rules in \Cref{fig:typing_rules}. 
	\qedhere%
  \end{proof}

  \begin{restatable}[Subject Congruence]{lemma}{lemSubjectCongruence}
  \label{lem:subject-congruence}%
  Assume %
  $\stJudgeNew{\mpEnv}{\stEnv}{\mpP}$ %
   and %
  $\mpP \equiv \mpPi$. %
   Then, %
  $\stJudgeNew{\mpEnv}{\stEnv}{\mpPi}$.
\end{restatable}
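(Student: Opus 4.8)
The plan is to prove Subject Congruence by induction on the derivation of $\mpP \equiv \mpPi$, following the structural congruence rules in~\Cref{fig:mpst-pi-congruence}. Since $\equiv$ is defined as the smallest congruence closed under those axioms, the inductive structure comprises: the base axioms (each equated pair), the reflexivity/symmetry/transitivity of $\equiv$, and the congruence closure under each process constructor. Reflexivity is trivial; for symmetry I would note that each axiom and its mirror image must both be handled, so it suffices to show type preservation in both directions of each axiom; transitivity follows by composing two applications of the induction hypothesis; and the congruence-closure cases reduce to the corresponding typing-inversion lemma (\Cref{lem:typeinversion}) followed by re-application of the same typing rule to the congruent subterm, using the induction hypothesis on that subterm.

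The core of the proof is therefore the finite case analysis over the structural congruence axioms. First I would handle the parallel-composition monoid laws ($\mpP \mpPar \mpQ \equiv \mpQ \mpPar \mpP$, associativity, $\mpP \mpPar \mpNil \equiv \mpP$): each is discharged by inverting \inferrule{\iruleMPPar} via \Cref{lem:typeinversion}(\ref{item:mpst-typing-inversion:par}), re-splitting the typing context $\stEnv$ accordingly — using commutativity and associativity of context composition $\stEnvComp$, which is well-defined because the $\dom{\cdot}$'s are disjoint — and for the unit law additionally using that $\stEnvEndP{\stEnv[2]}$ (from \inferrule{\iruleMPNil}) lets an empty or all-$\stEnd$ context be absorbed (here one needs the mild observation that if $\stEnvEndP{\stEnv[2]}$ then $\stJudge{\mpEnv}{\stEnv[1] \stEnvComp \stEnv[2]}{\mpP[1]}$ iff $\stJudge{\mpEnv}{\stEnv[1]}{\mpP[1]}$, which follows from~\inferrule{\iruleMPEnd} and the structure of the rules, or can be folded into a small auxiliary lemma). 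Next, the restriction axioms ($\mpRes{\mpS}{\mpNil} \equiv \mpNil$, swap of restrictions, scope extrusion $\mpRes{\mpS}{(\mpP \mpPar \mpQ)} \equiv \mpP \mpPar \mpRes{\mpS}{\mpQ}$ when $\mpS \notin \fc{\mpP}$): these are handled by inverting \inferrule{\iruleMPGlobalRes} (\Cref{lem:typeinversion}(\ref{item:mpst-typing-inversion:res})), which supplies $\stEnvAssoc{\gtG}{\stEnvi}{\mpS}$ and $\mpS \notin \stEnv$; for scope extrusion the side condition $\mpS \notin \fc{\mpP}$ ensures the session-$\mpS$ part of the typing context lies entirely in the branch typing $\mpQ$, so the context can be re-partitioned; the two-restriction swap uses that distinct bound sessions give independent associated sub-contexts. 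Finally, the process-definition axioms ($\mpDefAbbrev{\mpDefD}{\mpNil} \equiv \mpNil$, lifting a restriction or a parallel component out of a $\mathbf{def}$, and reordering of nested $\mathbf{def}$s) are discharged analogously by inverting \inferrule{\iruleMPDef} (\Cref{lem:typeinversion}(\ref{item:mpst-typing-inversion:def})), using the side conditions on declared/free process variables $\dpv{\cdot}, \fpv{\cdot}$ to show that $\mpEnv$ can be reassociated or that a declaration is vacuous, plus \inferrule{\iruleMPNil}/\inferrule{\iruleMPEnd} for the $\mpNil$ case.

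The main obstacle I expect is bookkeeping rather than conceptual: carefully tracking how the two typing contexts $\mpEnv$ and $\stEnv$ are split and recombined in each axiom, making sure every side condition of the structural rule (freedom of channels $\fc{\cdot}$, disjointness of process variables) is used exactly where it licenses a repartition of $\stEnv$ or $\mpEnv$, and handling the $\mpNil$-absorption uniformly — this last point is where I would most likely want a small explicit helper fact stating $\stEnvEndP{\stEnv[0]} \implies (\stJudge{\mpEnv}{\stEnv \stEnvComp \stEnv[0]}{\mpP} \Leftrightarrow \stJudge{\mpEnv}{\stEnv}{\mpP})$, provable by a straightforward induction on the typing derivation. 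Because $\equiv$ is symmetric, for each axiom I must argue both directions, but since the inversions above are reversible (invert the rule, rearrange, re-apply the rule), each direction is symmetric to the other and no extra work is needed beyond noting this. No deep properties of association, projection, or subtyping are required here: association only appears as a side condition on \inferrule{\iruleMPGlobalRes} and is carried through unchanged, since structural congruence never alters which session is restricted or its associated context.
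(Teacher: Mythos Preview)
Your proposal is correct and follows essentially the same approach as the paper: the paper's proof simply states that one analyses the cases where $\mpP \equiv \mpPi$ holds and applies the inversion of the typing judgement (\Cref{lem:typeinversion}), which is exactly the strategy you spell out in detail. Your elaboration of the bookkeeping (context re-partitioning, the $\stEnvEndPred$-absorption helper, and the observation that association is merely carried through unchanged) is accurate and more explicit than the paper's terse sketch.
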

\begin{proof}
By analysing the cases where $\mpP \equiv \mpPi$ holds, and by applying the inversion of the typing judgements $\stJudgeNew{\mpEnv}{\stEnv}{\mpP}$  and  $\stJudgeNew{\mpEnv}{\stEnv}{\mpPi}$. %
\qedhere 
\end{proof}

\begin{restatable}{lemma}{lemSubTermination}
\label{lem:termination_subtyping}
\label{app:lem:termination_subtyping}
If \,$\stJudgeNew{\mpEnv}{\stEnv \stEnvComp \stEnvi}{\mpP}$, $\stEnvEndP{\stEnvi}$, and 
$\stEnv \stSub \stEnvii$, then $\stJudgeNew{\mpEnv}{\stEnvii}{\mpP}$. 
\end{restatable}
\begin{proof}
By $\mpP \equiv \mpP \mpPar \mpNil$ and~\Cref{lem:subject-congruence}, we have $\stJudgeNew{\mpEnv}{\stEnv \stEnvComp \stEnvi}{\mpP \mpPar \mpNil}$. Inverting \inferrule{\iruleMPNil} and \inferrule{\iruleMPPar}, we obtain $\stJudgeNew{\mpEnv}{\stEnv}{\mpP}$. Hence, the thesis follows by applying \inferrule{\iruleMPSub}. 
\qedhere 
\end{proof}

\subsection{Subject Reduction}
\label{app:sec:subject_reduction_proof}
\lemSubjectReduction*
\begin{proof}
 Let us recap the assumptions:
  \begin{align}
    \label{item:subjred:typed-unused}
    &\stJudgeNew{\mpEnv}{\stEnv}{\mpP}
    \\
    \label{item:subjred:stenv-assoc}
    &
    \forall \mpS \in \stEnv: \exists
    \gtG[\mpS]: \stEnvAssoc{\gtG[\mpS]}{\stEnv[\mpS]}{\mpS} 
    \\
    \label{item:subjred:no-reliable-crash}
    &\mpP \!\mpMove\! \mpPi
  \end{align}
  
  \noindent
  The proof proceeds by induction of the derivation of %
  $\mpP \!\mpMove\! \mpPi$, %
  and when the reduction holds by rule $\inferrule{\iruleMPRedCtx}$, %
  with a further structural induction on the reduction context $\mpCtx$. %
  Most cases hold %
  by inverting the typing $\stJudgeNew{\mpEnv}{\stEnv}{\mpP}$~(\Cref{app:lem:typing_inversion}), %
 and applying the induction hypothesis along with~\Cref{lem:broading}. 
\begin{itemize}%
 \item Case \inferrule{\iruleMPRedCommExp}: 
	 \begin{multline}
    \label{eq:subj-red:comm:p-pi:1}%
    \begin{array}{rcl}%
      \textstyle%
      \mpP &=&%
      \mpBranch{\mpChanRole{\mpS}{\roleP}}{\roleQ}{i \in I}{%
        \mpLab[i]}{z_i}{\mpP[i]}{}
      \,\mpPar\,%
      \mpSel{\mpChanRole{\mpS}{\roleQ}}{\roleP}{\mpLab[k]}{%
        \mpE
      }{\mpQ}%
      \\[1mm]%
      \mpPi &=&%
      \mpP[k]\subst{\mpFmt{z_k}}{
        \mpV
      }%
      \,\mpPar\,%
      \mpQ%
      \quad%
      (k \in I, \eval{\mpE}{\mpV})%
    \end{array}
    \\
    \text{%
      (by inversion of \inferrule{\iruleMPRedCommExp})%
    }%
\end{multline}
 \begin{multline}
    \label{eq:subj-red:comm:p-typing:1}%
    \begin{array}{l}
    \stEnv[\stExtC] \stEnvComp \stEnv[\stIntC] \stSub \stEnv
    \quad\text{such that}\quad%
    \\\qquad
      \begin{array}{l}
        \stJudgeNew{\mpEnv}{%
          \stEnv[\stExtC]%
        }{%
          \mpBranch{\mpChanRole{\mpS}{\roleP}}{\roleQ}{i \in I}{%
            \mpLab[i]}{z_i}{\mpP[i]}{}
        }%
        \\%
        \stJudgeNew{\mpEnv}{%
          \stEnv[\stIntC]%
        }{%
          \mpSel{\mpChanRole{\mpS}{\roleQ}}{\roleP}{\mpLab[k]}{%
            \mpE
          }{\mpQ}
        }%
      \end{array}
    \end{array}
    \\ \text{%
      (by \eqref{eq:subj-red:comm:p-pi:1} %
      and~\Cref{app:lem:typing_inversion}\eqref{item:mpst-typing-inversion:par})  %
    }%
\end{multline}
\begin{multline}
    \label{eq:subj-red:comm:p-branch-typing:1}
    \begin{array}{@{}l@{}}
	     \stEnv[1] \stEnvComp \stEnvMap{\mpChanRole{\mpS}{\roleP}}{\stExtSum{\roleQ}{i \in I}{\stChoice{\stLab[i]}{\stS[i]} \stSeq \stT[i]}} \stSub \stEnv[\stExtC]
    \quad\text{such that}\quad%
    \\\qquad
        \forall i \in I%
	\left\{
      \begin{array}{@{\hskip 0mm}l@{\hskip 0mm}r@{\hskip 0mm}}%
        \stJudgeNew{\mpEnv \mpEnvComp \mpEnvMap{z_i}{\tyGround}}{
		\stEnv[1] \stEnvComp
          \stEnvMap{\mpChanRole{\mpS}{\roleP}}{\stT[i]}
        }{
          \mpP[i]
        } & \text{   if } \tyS[i] = \tyGround
	\\
          \stJudgeNew{\mpEnv}{
		\stEnv[1] \stEnvComp
          \stEnvMap{z_i}{\tyT} \stEnvComp
          \stEnvMap{\mpChanRole{\mpS}{\roleP}}{\stT[i]}
        }{
          \mpP[i]
        } & \text{   if } \tyS[i] = \tyT
      \end{array}
\right\}
    \hspace{-0mm}%
    \end{array}
    \\
    \text{%
      (by \eqref{eq:subj-red:comm:p-typing:1} %
      and~\Cref{app:lem:typing_inversion}\eqref{item:mpst-typing-inversion:branch}) %
    }%
\end{multline}
\begin{multline}
    \label{eq:subj-red:comm:p-sel-typing:1}%
    \begin{array}{@{}l@{}}
	    \stEnv[2],  \stEnvMap{\mpChanRole{\mpS}{\roleQ}}{
          \stIntSum{\roleP}{}{\stChoice{\stLab}{\stB} \stSeq \stT} 
	  \stSub \stEnv[\stIntC]
        }
    \;\;\text{such that}\;\;%
    \\\quad
      \begin{array}{l}
	      \stEnvEntails{\mpEnv}{%
          \mpE
        }{\stB}%
        \text{ and }
        \stJudgeNew{\mpEnv}{%
          \stEnv[2] \stEnvComp \stEnvMap{\mpChanRole{\mpS}{\roleQ}}{\stT}%
        }{%
          \mpQ%
        }%
      \end{array}
    \end{array}
    \\ 
    \text{%
      (by \eqref{eq:subj-red:comm:p-typing:1} %
      and~\Cref{app:lem:typing_inversion}\eqref{item:mpst-typing-inversion:sel:1}) %
    }%
\end{multline}
  Now, notice that:%
  \begin{multline}
    \label{eq:subj-red:comm:stenv-composition:1}%
    \stEnv[1] \stEnvComp \stEnvMap{\mpChanRole{\mpS}{\roleP}}{\stExtSum{\roleQ}{i \in I}{\stChoice{\stLab[i]}{\stS[i]} \stSeq \stT[i]}}%
\stEnvComp \stEnv[2],  \stEnvMap{\mpChanRole{\mpS}{\roleQ}}{\stIntSum{\roleP}{}{\stChoice{\stLab}{\stB} \stSeq \stT}}
	= \stEnvii \stSub \stEnv
	\\
    \text{%
      (by \eqref{eq:subj-red:comm:p-typing:1}, %
      \eqref{eq:subj-red:comm:p-branch-typing:1}, %
      and \eqref{eq:subj-red:comm:p-sel-typing:1})%
    }%
\end{multline}
\begin{flalign}
    &
    \label{eq:subj-red:comm:stenvii-assoc:1}%
    \forall \mpS \in \stEnv: \stEnvAssoc{\gtG[\mpS]}{\stEnvii[\mpS]}{\mpS}
    &
    \text{%
      (by %
      \eqref{item:subjred:stenv-assoc}, 
      \eqref{eq:subj-red:comm:stenv-composition:1},
      and \Cref{lem:stenv-assoc-sub})%
    }%
    \\%
    &
	   \label{eq:subj-red:comm:stenv-i-k-carried-sub:1}%
    k \in I %
    \quad\text{and}\quad%
    \stS[k] \stSub \stB%
    &%
    \text{%
	    (by 
      \eqref{eq:subj-red:comm:stenv-composition:1}, 
	    \eqref{eq:subj-red:comm:stenvii-assoc:1} and
      \Cref{app:lem:match_comm_projection})
    }%
    \\%
    &
    \label{eq:subj-red:comm:stenvii-move-stenviii:1}%
    \stEnvii \stEnvMove \stEnvi%
    = \stEnv[1] \stEnvComp%
    \stEnvMap{\mpChanRole{\mpS}{\roleP}}{\stT[k]} \stEnvComp%
    \stEnv[2] \stEnvComp %
    \stEnvMap{\mpChanRole{\mpS}{\roleQ}}{\stT}%
    &
    \text{%
      (by %
      \eqref{eq:subj-red:comm:stenv-composition:1}, 
      \eqref{eq:subj-red:comm:stenv-i-k-carried-sub:1},  %
      and \Cref{def:typing_context_reduction})%
    }%
   \\%
   &
	\label{eq:subj-red:comm:stenviii-assoc:1}%
    \forall \mpS \in \stEnv: \exists \gtGi[\mpS]: \gtG[\mpS] \,\gtMoveStar\, \gtGi[\mpS] \text{ and }  \stEnvAssoc{\gtGi[\mpS]}{\stEnvi[\mpS]}{\mpS}%
   &
    \text{%
      (by %
	    \eqref{eq:subj-red:comm:stenvii-assoc:1},
      \eqref{eq:subj-red:comm:stenvii-move-stenviii:1}  %
      and \Cref{cor:completeness_association}) 
   }%
  \end{flalign}
  We can now use $\stEnvi$ to type $\mpPi$:
  \begin{flalign}
    &\label{eq:subj-red:pi-branch-cont-typing:1}%
    \stJudgeNew{\mpEnv \stEnvComp \stEnvMap{z_k}{\stB}}{%
      \stEnv[1] \stEnvComp%
      \stEnvMap{\mpChanRole{\mpS}{\roleP}}{\stT[k]}%
    }{%
      \mpP[k]%
    }%
    &\text{%
      (by \eqref{eq:subj-red:comm:stenv-i-k-carried-sub:1}, %
      and \eqref{eq:subj-red:comm:p-branch-typing:1})%
    }%
    \\[1mm]%
    &\label{eq:subj-red:comm-payload-entails:1}%
    \stEnvEntails{\mpEnv}{%
      \mpV
    }{\stB}%
    &
      \text{%
      (by \eqref{eq:subj-red:comm:p-pi:1} and \eqref{eq:subj-red:comm:p-sel-typing:1})}
    \\%
    &\label{eq:subj-red:comm:pi-branch-cont-typing-subst:1}%
    \stJudgeNew{\mpEnv}{%
      \stEnv[1] \stEnvComp%
      \stEnvMap{\mpChanRole{\mpS}{\roleP}}{\stT[k]}%
    }{%
      \mpP[k]\subst{\mpFmt{z_k}}{
        \mpV
      }%
    }%
    &\text{%
      (by \eqref{eq:subj-red:pi-branch-cont-typing:1}, %
      \eqref{eq:subj-red:comm-payload-entails:1} %
      and \Cref{app:lem:substitution})%
    }%
\end{flalign}
\begin{multline}
	\label{eq:subj-red:comm:pi-branch-cont-typing-subst-stenviii:1}%
    \inference[\iruleMPPar]{%
      \begin{array}{l}
      \stJudgeNew{\mpEnv}{%
        \stEnv[1] \stEnvComp%
        \stEnvMap{\mpChanRole{\mpS}{\roleP}}{\stT[k]}%
      }{%
        \mpP[k]\subst{\mpFmt{z_k}}{
          \mpV
        }%
      }%
      \\
      \stJudgeNew{\mpEnv}{%
        \stEnv[2] \stEnvComp \stEnvMap{\mpChanRole{\mpS}{\roleQ}}{\stT}%
      }{%
        \mpQ%
      }%
      \end{array}
    }{%
      \stJudgeNew{\mpEnv}{%
        \stEnvi%
      }{%
        \mpPi%
      }%
    }%
    \\
     \text{%
      (by \eqref{eq:subj-red:comm:pi-branch-cont-typing-subst:1}, %
      \eqref{eq:subj-red:comm:p-sel-typing:1}, %
      \eqref{eq:subj-red:comm:stenvii-move-stenviii:1}, %
      \eqref{eq:subj-red:comm:stenviii-assoc:1},  %
      and \eqref{eq:subj-red:comm:p-pi:1})%
    }%
  \end{multline}
  We conclude this case by showing that there exists some $\stEnviii$ that satisfies the statement:
  \begin{flalign}
    &\label{eq:subj-red:comm:exists-stenvi:1}%
    \stEnv \stEnvMove \stEnviii \text{ and } \stEnvi \stSub \stEnviii
    &\text{%
      (by \eqref{item:subjred:stenv-assoc}, 
      \eqref{eq:subj-red:comm:stenv-composition:1},
      \eqref{eq:subj-red:comm:stenvii-move-stenviii:1}, %
      and~\Cref{lem:stenv-assoc-reduction-sub-gen})%
    }%
    \\%
    &\label{eq:subj-red:comm:exists-stenvi:2_new}%
    \stJudgeNew{\mpEnv}{\stEnviii}{\mpPi}%
    &\text{%
      (by %
      \eqref{eq:subj-red:comm:pi-branch-cont-typing-subst-stenviii:1}
      \eqref{eq:subj-red:comm:exists-stenvi:1}, %
      and \Cref{app:lem:narrowing})%
    }%
     \\%
    &\label{eq:subj-red:comm:stenvi-safe:1}%
    \forall \mpS \in \stEnviii: \exists \gtGiii[\mpS]: \gtG[\mpS] \,\gtMoveStar\, \gtGiii[\mpS] \text{ and }  \stEnvAssoc{\gtGiii[\mpS]}{\stEnviii[\mpS]}{\mpS} 
    &\text{%
      (by \eqref{item:subjred:stenv-assoc}, 
      \eqref{eq:subj-red:comm:exists-stenvi:1},  %
      and~\Cref{cor:completeness_association}) %
    }%
  \end{flalign}

 \item Case \inferrule{\iruleMPRedCommChannel}: 
 \begin{multline}
    \label{eq:subj-red:comm:p-pi:2}%
    \begin{array}{rcl}%
      \textstyle%
      \mpP &=&%
      \mpBranch{\mpChanRole{\mpS}{\roleP}}{\roleQ}{i \in I}{%
        \mpLab[i]}{z_i}{\mpP[i]}{}
      \,\mpPar\,%
      \mpSel{\mpChanRole{\mpS}{\roleQ}}{\roleP}{\mpLab[k]}{%
	      \mpChanRole{\mpSi}{\roleR}
      }{\mpQ}%
      \\[1mm]%
      \mpPi &=&%
      \mpP[k]\subst{\mpFmt{z_k}}{
	      \mpChanRole{\mpSi}{\roleR}
      }%
      \,\mpPar\,%
      \mpQ%
      \quad%
      (k \in I)
    \end{array}
    \\
    \text{%
      (by inversion of \inferrule{\iruleMPRedCommChannel})%
    }%
\end{multline}
\begin{multline}
    \label{eq:subj-red:comm:p-typing:2}%
    \begin{array}{l}
    \stEnv[\stExtC] \stEnvComp \stEnv[\stIntC] \stSub \stEnv
    \quad\text{such that}\quad%
    \\\qquad
      \begin{array}{l}
        \stJudgeNew{\mpEnv}{%
          \stEnv[\stExtC]%
        }{%
          \mpBranch{\mpChanRole{\mpS}{\roleP}}{\roleQ}{i \in I}{%
            \mpLab[i]}{z_i}{\mpP[i]}{}
        }%
        \\%
        \stJudgeNew{\mpEnv}{%
          \stEnv[\stIntC]%
        }{%
          \mpSel{\mpChanRole{\mpS}{\roleQ}}{\roleP}{\mpLab[k]}{%
		  \mpChanRole{\mpSi}{\roleR}
          }{\mpQ}
        }%
      \end{array}
    \end{array}
    \\
    \text{%
      (by \eqref{eq:subj-red:comm:p-pi:2} %
      and~\Cref{app:lem:typing_inversion}\eqref{item:mpst-typing-inversion:par})  %
    }%
\end{multline}
\begin{multline}
	\label{eq:subj-red:comm:p-branch-typing:2}
    \begin{array}{@{}l@{}}
	\stEnv[1] \stEnvComp \stEnvMap{\mpChanRole{\mpS}{\roleP}}{\stExtSum{\roleQ}{i \in I}{\stChoice{\stLab[i]}{\stS[i]} \stSeq \stT[i]}} \stSub \stEnv[\stExtC]
    \quad\text{such that}\quad%
    \\\qquad
        \forall i \in I%
	\left\{
      \begin{array}{@{\hskip 0mm}l@{\hskip 0mm}r@{\hskip 0mm}}%
        \stJudgeNew{\mpEnv \mpEnvComp \mpEnvMap{z_i}{\tyGround}}{
		\stEnv[1] \stEnvComp
          \stEnvMap{\mpChanRole{\mpS}{\roleP}}{\stT[i]}
        }{
          \mpP[i]
        } & \text{   if } \tyS[i] = \tyGround
	\\
          \stJudgeNew{\mpEnv}{
		\stEnv[1] \stEnvComp
          \stEnvMap{z_i}{\tyT} \stEnvComp
          \stEnvMap{\mpChanRole{\mpS}{\roleP}}{\stT[i]}
        }{
          \mpP[i]
        } & \text{   if } \tyS[i] = \tyT
      \end{array}
\right\}
    \hspace{-0mm}%
    \end{array}
    \\
    \text{%
      (by \eqref{eq:subj-red:comm:p-typing:2} %
      and~\Cref{app:lem:typing_inversion}\eqref{item:mpst-typing-inversion:branch}) %
    }%
\end{multline}
\begin{multline}
	\label{eq:subj-red:comm:p-sel-typing:2}%
    \begin{array}{@{}l@{}}
	 \stEnv[2] \stEnvComp  \stEnvMap{\mpChanRole{\mpS}{\roleQ}}{
		    \stIntSum{\roleP}{}{\stChoice{\stLab}{\stT[d]} \stSeq \stT}
	  \stEnvComp \stEnvMap{\mpChanRole{\mpSi}{\roleR}}{\stT[d]}
	  \stSub \stEnv[\stIntC]
        }
    \;\;\text{such that}\;\;%
    \\\quad
      \begin{array}{l}
        \stJudgeNew{\mpEnv}{%
          \stEnv[2] \stEnvComp \stEnvMap{\mpChanRole{\mpS}{\roleQ}}{\stT}%
        }{%
          \mpQ%
        }%
      \end{array}
    \end{array}
    \hspace{-15mm}%
    \\
    \text{%
      (by \eqref{eq:subj-red:comm:p-typing:2} %
      and~\Cref{app:lem:typing_inversion}\eqref{item:mpst-typing-inversion:sel:2}) %
    }%
      \end{multline}
  Now, notice that:%
  \begin{multline}
	  \label{eq:subj-red:comm:stenv-composition:2}%
    \stEnv[1] \stEnvComp \stEnvMap{\mpChanRole{\mpS}{\roleP}}{\stExtSum{\roleQ}{i \in I}{\stChoice{\stLab[i]}{\stS[i]} \stSeq \stT[i]}}%
    \stEnvComp \stEnv[2],  \stEnvMap{\mpChanRole{\mpS}{\roleQ}}{\stIntSum{\roleP}{}{\stChoice{\stLab}{\stT[d]} \stSeq \stT}}
	  \stEnvComp \stEnvMap{\mpChanRole{\mpSi}{\roleR}}{\stT[d]}
	  = \stEnvii 
 	\stSub \stEnv
	\\
	\text{%
      (by \eqref{eq:subj-red:comm:p-typing:2}, %
      \eqref{eq:subj-red:comm:p-branch-typing:2}, %
      and \eqref{eq:subj-red:comm:p-sel-typing:2})%
    }%
\end{multline}
  \begin{flalign}
    &\label{eq:subj-red:comm:stenvii-assoc:2}%
    \forall \mpS \in \stEnv: \stEnvAssoc{\gtG[\mpS]}{\stEnvii[\mpS]}{\mpS}
    &\text{%
      (by %
      \eqref{item:subjred:stenv-assoc}, 
      \eqref{eq:subj-red:comm:stenv-composition:2}
      and \Cref{lem:stenv-assoc-sub})%
    }%
    \\%
    &\label{eq:subj-red:comm:stenv-i-k-carried-sub:2}%
    k \in I %
    \quad\text{and}\quad%
    \stS[k] \stSub \stT[d]%
    &%
    \text{%
      (by %
      \eqref{eq:subj-red:comm:stenv-composition:2},
      \eqref{eq:subj-red:comm:stenvii-assoc:2} and
      \Cref{app:lem:match_comm_projection})
    }%
\end{flalign}
\begin{multline}
	\label{eq:subj-red:comm:stenvii-move-stenviii:2}%
    \stEnv \stEnvMove \stEnvi%
    = \stEnv[1] \stEnvComp%
    \stEnvMap{\mpChanRole{\mpS}{\roleP}}{\stT[k]} \stEnvComp%
    \stEnv[2] \stEnvComp %
    \stEnvMap{\mpChanRole{\mpS}{\roleQ}}{\stT}%
	  \stEnvComp \stEnvMap{\mpChanRole{\mpSi}{\roleR}}{\stT[d]}
	  \\
	  \text{%
      (by %
      \eqref{eq:subj-red:comm:stenv-composition:2},
      \eqref{eq:subj-red:comm:stenv-i-k-carried-sub:2} and
      \Cref{def:typing_context_reduction})%
    }%
\end{multline}
\begin{flalign}
   &\label{eq:subj-red:comm:stenviii-assoc:2}%
    \forall \mpS \in \stEnv: \exists \gtGi[\mpS]: \gtG[\mpS] \,\gtMoveStar\, \gtGi[\mpS] \text{ and }  \stEnvAssoc{\gtGi[\mpS]}{\stEnvi[\mpS]}{\mpS}%
   &\text{%
      (by %
      \eqref{eq:subj-red:comm:stenvii-move-stenviii:2},  %
      \eqref{eq:subj-red:comm:stenvii-assoc:2} and
      \Cref{cor:completeness_association}) 
   }%
  \end{flalign}
  We can now use $\stEnvi$ to type $\mpPi$:
  \begin{flalign}
    &\label{eq:subj-red:pi-branch-cont-typing:2}%
    \stJudgeNew{\mpEnv }{%
      \stEnv[1] \stEnvComp%
      \stEnvMap{z_k}{\stS[k]}
      \stEnvComp 
      \stEnvMap{\mpChanRole{\mpS}{\roleP}}{\stT[k]}%
    }{%
      \mpP[k]%
    }%
    &\text{%
      (by \eqref{eq:subj-red:comm:stenv-i-k-carried-sub:2}, %
      and \eqref{eq:subj-red:comm:p-branch-typing:2})%
    }%
    \\[1mm]%
    &\label{eq:subj-red:pi-branch-cont-typing:2_subtyping}%
    \stJudgeNew{\mpEnv }{%
      \stEnv[1] \stEnvComp%
      \stEnvMap{z_k}{\stT[d]}
      \stEnvComp 
      \stEnvMap{\mpChanRole{\mpS}{\roleP}}{\stT[k]}%
    }{%
      \mpP[k]%
    }%
    &\text{%
      (by \eqref{eq:subj-red:comm:stenv-i-k-carried-sub:2}, %
       \eqref{eq:subj-red:pi-branch-cont-typing:2}, and \inferrule{\iruleMPSub}
       )%
    }%
    \\[1mm]%
    &\label{eq:subj-red:pi-branch-cont-typing-env-subst-def:2}%
    \stEnv[1]  \stEnvComp%
    \stEnvMap{\mpChanRole{\mpS}{\roleP}}{\stT[k]}%
    \stEnvComp \stEnvMap{\mpChanRole{\mpSi}{\roleR}}{\stT[d]}
    \text{\; defined}%
    &\text{%
      (by \eqref{eq:subj-red:comm:p-sel-typing:2},
      \eqref{eq:subj-red:comm:p-branch-typing:2}, %
      and \eqref{eq:subj-red:comm:p-typing:2})%
    }%
    \\%
    &\label{eq:subj-red:comm:pi-branch-cont-typing-subst:2}%
    \stJudgeNew{\mpEnv}{%
      \stEnv[1] \stEnvComp%
      \stEnvMap{\mpChanRole{\mpS}{\roleP}}{\stT[k]}%
      \stEnvComp \stEnvMap{\mpChanRole{\mpSi}{\roleR}}{\stT[d]}
    }{%
      \mpP[k]\subst{\mpFmt{z_k}}{
	      \mpChanRole{\mpSi}{\roleR}
      }%
    }%
    &\text{%
      (by \eqref{eq:subj-red:pi-branch-cont-typing:2_subtyping}, %
      \eqref{eq:subj-red:pi-branch-cont-typing-env-subst-def:2}, %
      and \Cref{app:lem:substitution})%
    }%
\end{flalign}
\begin{multline}
	\label{eq:subj-red:comm:pi-branch-cont-typing-subst-stenviii:2}%
    \inference[\iruleMPPar]{%
      \begin{array}{l}
      \stJudgeNew{\mpEnv}{%
        \stEnv[1] \stEnvComp%
        \stEnvMap{\mpChanRole{\mpS}{\roleP}}{\stT[k]}%
	\stEnvComp \stEnvMap{\mpChanRole{\mpSi}{\roleR}}{\stT[d]}
      }{%
        \mpP[k]\subst{\mpFmt{z_k}}{
		\mpChanRole{\mpSi}{\roleR}
        }%
      }%
      \\
      \stJudgeNew{\mpEnv}{%
        \stEnv[2] \stEnvComp \stEnvMap{\mpChanRole{\mpS}{\roleQ}}{\stT}%
      }{%
        \mpQ%
      }%
      \end{array}
    }{%
      \stJudgeNew{\mpEnv}{%
        \stEnvi%
      }{%
        \mpPi%
      }%
    }%
    \\
     \text{%
      (by \eqref{eq:subj-red:comm:pi-branch-cont-typing-subst:2}, %
      \eqref{eq:subj-red:comm:p-sel-typing:2}, %
      \eqref{eq:subj-red:comm:stenvii-move-stenviii:2}, %
      \eqref{eq:subj-red:comm:stenviii-assoc:2},  %
      and \eqref{eq:subj-red:comm:p-pi:2})%
    }%
  \end{multline}
  We conclude this case by showing that there exists some $\stEnviii$ that satisfies the statement: 
  \begin{flalign}
    &\label{eq:subj-red:comm:exists-stenvi:2}%
    \exists \stEnviii: %
    \stEnv \stEnvMove \stEnviii \text{ and } \stEnvi \stSub \stEnviii
    &\text{%
      (by \eqref{item:subjred:stenv-assoc}, 
      \eqref{eq:subj-red:comm:stenv-composition:2},
      \eqref{eq:subj-red:comm:stenvii-move-stenviii:2}, %
      and~\Cref{lem:stenv-assoc-reduction-sub-gen})%
    }%
    \\%
    &\label{eq:subj-red:comm:exists-stenvi:2_new_D}
    \stJudgeNew{\mpEnv}{\stEnviii}{\mpPi}%
    &\text{%
      (by %
      \eqref{eq:subj-red:comm:pi-branch-cont-typing-subst-stenviii:2}
      \eqref{eq:subj-red:comm:exists-stenvi:2}, %
      and \Cref{app:lem:narrowing})%
    }%
     \\%
    &\label{eq:subj-red:comm:stenvi-safe:2}%
    \forall \mpS \in \stEnviii: \exists \gtGiii[\mpS]: \gtG[\mpS] \,\gtMoveStar\, \gtGiii[\mpS] \text{ and }  \stEnvAssoc{\gtGiii[\mpS]}{\stEnviii[\mpS]}{\mpS} 
    &\text{%
      (by \eqref{item:subjred:stenv-assoc}, 
      \eqref{eq:subj-red:comm:exists-stenvi:2},  %
      and~\Cref{cor:completeness_association}) %
    }%
  \end{flalign}

 \item Case \inferrule{\iruleMPRedCtx}: 
 By inversion of the rule~(\Cref{app:lem:typing_inversion}) and~\Cref{def:calculus_semantics}, we have to prove the statement in the following sub-cases:
\begin{enumerate}[leftmargin=*]
  \item
  \label{item:subj-red:ctx:par}
    $\mpP = \mpQ \mpPar \mpR$ \;\;and\;\; $\mpPi = \mpQi \mpPar \mpR$ \;\;and\;\; $\mpQ \mpMove \mpQi$
  \item
  \label{item:subj-red:ctx:res}
    $\mpP = \mpRes{\mpSi}{\mpQ}$ \;\;and\;\; $\mpPi = \mpRes{\mpSi}{\mpQi}$ \;\;and\;\; $\mpQ \mpMove \mpQi$
  \item
  \label{item:subj-red:ctx:def}
    $\mpP = \mpDefAbbrev{\mpDefD}{\mpQ}$ \;\;and\;\; $\mpPi = \mpDefAbbrev{\mpDefD}{\mpQi}$ \;\;and\;\; $\mpQ \mpMove \mpQi$
\end{enumerate}
Cases \ref{item:subj-red:ctx:par} and \ref{item:subj-red:ctx:def} are easily proved using the induction hypothesis.  Therefore, here we focus on case \ref{item:subj-red:ctx:res}.
\begin{multline}
  \label{eq:subj-red:ctx:res:p-typing}%
  \exists \stEnv[1] \stSub \stEnv, \gtG \;\;\text{such that}\;\;
    \begin{array}{@{}l@{}}
      \stEnv[2] =  
      \setenum{\stEnvMap{\mpChanRole{\mpSi}{\roleP}}{\gtProj{\gtG}{\roleP}}}_{%
    \roleP \in \gtRoles{\gtG}},\, 
      \mpSi \!\not\in\! \stEnv[1], 
     \text{ and }
      \stJudgeNew{\mpEnv}{
        \stEnv[1] \stEnvComp \stEnv[2]
      }{
        \mpQ
      }
    \end{array}
  \\
  \text{%
    (by \ref{item:subj-red:ctx:res} and~\Cref{app:lem:typing_inversion}\eqref{item:mpst-typing-inversion:res})
  }
\end{multline}
  \begin{align}
  &\label{eq:subj-red:ctx:res:assoc:1}%
  \stEnvAssoc{\gtG}{\stEnv[2]}{\mpSi}
  &
  \text{(by \eqref{eq:subj-red:ctx:res:p-typing}, \Cref{def:assoc},  %
 and~\Cref{app:lem:subtyping:reflexive})}
  \end{align}
\begin{multline}
  \label{eq:subj-red:ctx:res:stenvi-stenvsi}%
  \exists \stEnv[3], \stEnv[4], \gtGi\;\;\text{such that}\;
  \left\{\begin{array}{@{}l@{}}
    \mpSi \!\not\in\! \stEnv[3]
    \\ %
    \stEnv[1] \stEnvMoveStar \stEnv[3]
    \\ %
    \stEnv[2] \stEnvMoveStar \stEnv[4]
    \\ %
    \forall \mpS \in \stEnv[3]: \exists \gtGiii[\mpS]:  \gtG[\mpS] \,\gtMoveStar\, \gtGiii[\mpS] \text{ and }  \stEnvAssoc{\gtGiii[\mpS]}{{\stEnv[3]}_{\mpS}}{\mpS}
    \\
    \stEnvAssoc{\gtGi}{\stEnv[4]}{\mpSi}
    \\
    \stJudgeNew{\mpEnv}{\stEnv[3] \stEnvComp \stEnv[4]}{\mpQi}
  \end{array}\right\}
  \\
  \text{%
    (by \eqref{eq:subj-red:ctx:res:p-typing}, \eqref{eq:subj-red:ctx:res:assoc:1}, 
    and inductive hypothesis)
  }
\end{multline}

\begin{multline}
\label{eq:subj-red:ctx:res:large-env}
\exists \stEnv[5] \;\;\text{such that}\;\; 
\stEnv[5] = \setenum{\stEnvMap{\mpChanRole{\mpSi}{\roleP}}{\gtProj{\gtGi}{\roleP}}}_{%
    \roleP \in \gtRoles{\gtGi}} \text{ and } 
    \stJudgeNew{\mpEnv}{\stEnv[3] \stEnvComp \stEnv[5]}{\mpQi}
    \\
    \text{
    (by \eqref{eq:subj-red:ctx:res:stenvi-stenvsi}, \Cref{def:assoc}, and \Cref{lem:termination_subtyping}
    )}
\end{multline}

\begin{multline}
  \label{eq:subj-red:ctx:res:pi-typing}%
  \inference[\iruleMPGlobalRes]{%
    \begin{array}{@{}l@{}}
     \stEnv[5] = \setenum{\stEnvMap{\mpChanRole{\mpSi}{\roleP}}{\gtProj{\gtGi}{\roleP}}}_{%
    \roleP \in \gtRoles{\gtGi}}
      \quad
      \mpSi \!\not\in\! \stEnv[3]
      \quad%
      \stJudgeNew{\mpEnv}{
        \stEnv[3] \stEnvComp \stEnv[5]
      }{
        \mpQi
      }
    \end{array}
  }{%
    \stJudgeNew{\mpEnv}{%
      \stEnv[3]%
    }{%
      \mpPi %
    }%
  }%
  \\
  \text{%
    (by \eqref{eq:subj-red:ctx:res:stenvi-stenvsi},
    \eqref{eq:subj-red:ctx:res:large-env}
    and \ref{item:subj-red:ctx:res})
  }
\end{multline}
Hence, we obtain the thesis by \eqref{eq:subj-red:ctx:res:p-typing}, \eqref{eq:subj-red:ctx:res:stenvi-stenvsi} and \eqref{eq:subj-red:ctx:res:pi-typing}, \Cref{lem:stenv-assoc-reduction-sub-gen}, 
\Cref{app:lem:narrowing}, and \Cref{cor:completeness_association}. 
\item Case \inferrule{\iruleMPRedCall}: Follows from~\Cref{app:lem:typing_inversion}\eqref{item:mpst-typing-inversion:def},~\Cref{app:lem:typing_inversion}\eqref{item:mpst-typing-inversion:call},~\Cref{app:lem:substitution}, and~\Cref{lem:broading}. 

\item Case \inferrule{\iruleMPErrLabel}: Trivial, as $\mpP$ is not typable. 
\item Case \inferrule{\iruleMPRedCongr}: Follows directly from induction hypothesis and~\Cref{lem:subject-congruence}
\item Cases \inferrule{\iruleMPRedCondTrue} and~\inferrule{\iruleMPRedCondFalse}: 
Follow directly from~\Cref{app:lem:typing_inversion}\eqref{item:mpst-typing-inversion:if} and~\Cref{lem:broading}. 
\qedhere 
\end{itemize}
\end{proof}

\lemSubjectReductionFinal*
\begin{proof}
By~\Cref{def:assoc} and \Cref{app:lem:subtyping:reflexive}, 
we have $\forall \mpS \in \stEnv: \exists
   \gtG[\mpS]: \stEnvAssoc{\gtG[\mpS]}{\stEnv[\mpS]}{\mpS}$. 
   Then,  by~\Cref{lem:subject-reduction}, it follows that 
   $\exists \stEnvNewi$
  such that 
  $\stEnvNew \!\stEnvMoveStar\! \stEnvNewi$, 
  $\stJudgeNew{\mpEnvNew}{\stEnvNewi}{\mpPi}$, and 
  $\forall \mpS \in \stEnvNewi: 
  \exists \gtGi[\mpS]: \gtG[\mpS] \,\gtMoveStar\, \gtGi[\mpS]$ and 
  $\stEnvAssoc{\gtGi[\mpS]}{\stEnvNewi[\mpS]}{\mpS}$. %
   
   Let $\stEnvii$ be the typing context defined by  $\forall \mpS \in \stEnvi:  \stEnvii[\mpS] = 
   \setenum{\stEnvMap{\mpChanRole{\mpS}{\roleP}}{\gtProj{\gtGi[\mpS]}{\roleP}}}_{%
    \roleP \in \gtRoles{\gtGi[\mpS]}}$. It remains to show that $\stJudgeNew{\mpEnvNew}{\stEnvii}{\mpPi}$, which follows from $\stJudgeNew{\mpEnvNew}{\stEnvNewi}{\mpPi}$,~\Cref{def:assoc},  and~\Cref{lem:termination_subtyping}.   
    \qedhere 
\end{proof}

\lemTypeSafety*
\begin{proof}
  From the hypothesis $\mpP \!\mpMoveStar\! \mpPi$,
  we know that $\mpP = \mpP[0] \!\mpMove\! \mpP[1] \!\mpMove\! \cdots
  \!\mpMove\! \mpP[n] = \mpPi$ (for some $n$).
  The proof proceeds by induction on $n$.  
  
  The base case for n=0 is straightforward: we have $\mpP = \mpPi$, thus $\mpPi$ is well-typed. Furthermore, since the term $\mpErr$ is not typeable, $\mpPi$ cannot contain such a term.

In the inductive case for n = m+1, we already know (by the induction hypothesis) that $\mpP[m]$ is well-typed. By applying \Cref{thm:subject_reduction}, we can conclude that $\mpP[m+1] = \mpPi$ is also well-typed and does not contain any $\mpErr$ subterms. 
\end{proof}

\subsection{Session Fidelity}
\label{app:sec:session_fidelity_proof}
\lemSessionFidelity*

\begin{proof}
  The proof structure is  based on %
  induction on the derivation of the reduction of $\stEnv$. 
  We infer the contents of $\stEnv$, as well as the shape of $\mpP$ and its
  sub-processes $\mpP[\roleP]$, showing that they can mimic the reduction of
  $\stEnv$. 
  We focus on the case of $\stEnv \stEnvMoveAnnot{\ltsSendRecv{\mpS}{\roleP}{\roleQ}{\stLab}} \stEnvi$, while the other cases follow by applying the induction hypothesis. 
\begin{itemize}%
 \item Case $\stEnv \stEnvMoveAnnot{\ltsSendRecv{\mpS}{\roleP}{\roleQ}{\stLab}} \stEnvi$:  
 in this case, the process $\mpP[\roleP]$ playing role $\roleP$ in session $\mpS$ is a selection on $\mpChanRole{\mpS}{\roleP}$ towards $\roleQ$ (possibly within a process definition); 
 while the process $\mpP[\roleQ]$ playing role $\roleQ$ in session $\mpS$ is a branching 
 on $\mpChanRole{\mpS}{\roleQ}$ from  $\roleP$ (possibly within a process definition). 
 Therefore, by~\inferrule{\iruleMPRedCommExp} or \inferrule{\iruleMPRedCommChannel} in~\Cref{fig:mpst-pi-semantics}, 
 $\mpP$ can correspondingly reduce to $\mpPi$ by transmitting  
 either a value $\mpV$ or a channel endpoint $\mpChanRole{\mpSi}{\rolePi}$ 
 from $\roleP$ to 
 $\roleQ$ in session $\mpS$~(possibly after a finite number of transitions under rule \inferrule{\iruleMPRedCall}). The resulting continuation process $\mpPi$ 
 is typed by $\stEnvi$. 
 The assertion that there exists 
 $\gtGi$ such that $\gtG \gtMove \gtGi$ and $\stEnvAssoc{\gtGi}{\stEnvi}{\mpS}$ 
 follows from 
 $\stEnv \stEnvMoveAnnot{\ltsSendRecv{\mpS}{\roleP}{\roleQ}{\stLab}} \stEnvi$, 
 $\stEnvAssoc{\gtG}{\stEnv}{\mpS}$, 
 and~\Cref{thm:gtype:proj-comp}. 
\qedhere 
  \end{itemize}  
 \end{proof}
 
 \Cref{lem:single-session-persistent} below states that if a process $\mpP$ satisfies the assumptions of session fidelity~(\Cref{lem:session-fidelity}), then all its reductums will also satisfy these assumptions. 
In other words, if $\mpP$ enjoys session fidelity, so will all of its reductums. 

\begin{restatable}{proposition}{lemSingleSessionPersistent}%
    \label{lem:single-session-persistent}%
    Assume\, $\stJudge{\mpEnvEmpty\!}{\!\stEnv}{\!\mpP}$, %
    where %
    $\stEnvAssoc{\gtG}{\stEnv}{\mpS}$, %
   \,$\mpP \equiv \mpRes{\mpS[1]}{\ldots \mpRes{\mpS[n]}{\mpBigPar{\roleP \in I}{\mpP[\roleP]}}}$, %
    \,and\, $\stEnv = \bigcup_{\roleP \in I}\stEnv[\roleP]$ %
    such that, for each $\mpP[\roleP]$, %
    we have\, $\stJudge{\mpEnvEmpty\!}{\stEnv[\roleP]}{\!\mpP[\roleP]}$. %
    \,Further, assume that each $\mpP[\roleP]$
    is either\, $\mpNil$ (up to $\equiv$), %
    or only plays $\roleP$ in $\mpS$, by $\stEnv[\roleP]$. %
    Then,\, $\mpP \mpMove \mpPi$
    \,implies\, $\exists \stEnvi, \gtGi$ %
    such that\, %
    $\stEnv \!\stEnvMoveWithSessionStar[\mpS]\! \stEnvi$ %
    \,and\, %
    $\stJudge{\mpEnvEmpty\!}{\!\stEnvi}{\mpPi}$, %
    \;with\; %
    $\stEnvAssoc{\gtGi}{\stEnvi}{\mpS}$, %
   \,$\mpPi \equiv \mpRes{\mpSi[1]}{\ldots \mpRes{\mpSi[m]}{\mpBigPar{\roleP \in I}{\mpPi[\roleP]}}}$,
    \,and\, $\stEnvi = \bigcup_{\roleP \in I}\stEnvi[\roleP]$ %
    such that, for each $\mpPi[\roleP]$, 
    we have\, $\stJudge{\mpEnvEmpty\!}{\stEnvi[\roleP]}{\!\mpPi[\roleP]}$; %
    \,furthermore, each $\mpPi[\roleP]$
    is $\mpNil$ (up to $\equiv$),
    or only plays $\roleP$ in $\mpS$, by $\stEnvi[\roleP]$.%
\end{restatable}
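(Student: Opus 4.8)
\textbf{Proof plan for \Cref{lem:single-session-persistent}.}
The plan is to combine Subject Reduction (\Cref{lem:subject-reduction}) with Session Fidelity (\Cref{lem:session-fidelity}), then do careful bookkeeping to re-establish the ``single session, one role per component'' structure. First I would observe that, since $\stEnvAssoc{\gtG}{\stEnv}{\mpS}$, the context $\stEnv$ mentions only the single session $\mpS$, so the side condition of \Cref{lem:subject-reduction} ($\forall \mpSi \in \stEnv: \exists \gtG[\mpSi]: \stEnvAssoc{\gtG[\mpSi]}{\stEnv[\mpSi]}{\mpSi}$) is met trivially with $\gtG[\mpS] = \gtG$. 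Applying \Cref{lem:subject-reduction} to $\mpP \mpMove \mpPi$ yields $\stEnvi$ with $\stEnv \stEnvMoveStar \stEnvi$, $\stJudge{\mpEnvEmpty}{\stEnvi}{\mpPi}$, and $\stEnvAssoc{\gtGi}{\stEnvi}{\mpS}$ for some $\gtGi$. Since every step $\stEnvMove$ starting from a context that contains only session $\mpS$ is a step $\stEnvMoveWithSession[\mpS]$ (by \Cref{def:mpst-env-reduction} and \Cref{lem:stenv-red:trivial-4}, the domain stays within $\mpS$), we actually get $\stEnv \stEnvMoveWithSessionStar[\mpS] \stEnvi$, which is the required reduction in the conclusion.

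Next I would recover the decomposition of $\mpPi$. The key point is that a single reduction step $\mpP \mpMove \mpPi$ can only affect one or two of the parallel components $\mpP[\roleP]$: either an internal $\tau$-like step inside one $\mpP[\roleP]$ (rule \inferrule{\iruleMPRedCall} unfolding a definition, possibly under \inferrule{\iruleMPRedCtx}/\inferrule{\iruleMPRedCongr}), or a communication \inferrule{\iruleMPRedComm} between exactly two components $\mpP[\roleP]$ and $\mpP[\roleQ]$. In each case I would use \Cref{lem:typeinversion} (typing inversion for $\mpPar$, branching, selection) and \Cref{lem:subject-congruence} to peel off the affected component(s), show the reduct $\mpPi$ is again structurally a parallel composition $\mpBigPar{\roleP \in I}{\mpPi[\roleP]}$ indexed by the same $I$, with $\mpPi[\roleR] = \mpP[\roleR]$ for the unaffected roles $\roleR$, and with $\stEnvi = \bigcup_{\roleP \in I} \stEnvi[\roleP]$ where $\stEnvi[\roleR] = \stEnv[\roleR]$ for unaffected roles and $\stEnvi[\roleP]$, $\stEnvi[\roleQ]$ are the post-communication contexts delivered by inversion of the typing/reduction. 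The clause ``$\mpNil$ up to $\equiv$, or only plays $\roleP$ in $\mpS$, by $\stEnvi[\roleP]$'' is then checked component-wise: unaffected components are literally unchanged; an affected component either has consumed its last prefix and becomes $\mpNil$-like with $\stEnvEndP{\stEnvi[\roleP]}$, or it continues with a strictly smaller single-session type, and I would verify that \Cref{def:unique-role-proc} items (i)--(iv) are preserved --- (i) guarded definitions and (iv) the $\stEnvEndP{\cdot}$ condition on nested restrictions are structural and survive reduction, (ii) $\fv{\cdot} = \emptyset$ survives substitution of values/endpoints, and (iii) $\stEnvi[\roleP] = \stEnvi[0] \stEnvComp \stEnvMap{\mpChanRole{\mpS}{\roleP}}{\stTi}$ with $\stTi \stNotSub \stEnd$ follows because after a selection/branching prefix the continuation type is not $\stEnd$ (or, if it is, the component reduces further to $\mpNil$, which I would handle by applying session fidelity once more / noting this is the $\mpNil$ case).

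The main obstacle I anticipate is the $\stEnd$-boundary bookkeeping in item (iii) of \Cref{def:unique-role-proc}: after a communication, the continuation type $\stTi$ of an active component might be $\stEnd$ (the role has finished), in which case the component is no longer ``only plays $\roleP$'' but should fall into the $\mpNil$ case --- yet syntactically $\mpPi[\roleP]$ need not be literally $\mpNil$ (it could be, e.g., a process that is well-typed by an all-$\stEnd$ context). I would resolve this by using \Cref{lem:typeinversion}\eqref{item:mpst-typing-inversion:nil} together with the $\stEnvEndP{\cdot}$ predicate: a process typed under a context of only $\stEnd$/basic entries and playing no further role is structurally congruent to $\mpNil$ (this is essentially how $\stEnvEndP{\cdot}$ is used in rule \inferrule{\iruleMPNil}, and one shows $\mpPi[\roleP] \equiv \mpNil$ by induction on its typing derivation). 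A secondary subtlety is that $\mpP \mpMove \mpPi$ may require several $\stEnvMoveWithSession[\mpS]$ steps of the context to be mirrored only when recursion unfolding is involved; but since we are not asked to mirror $\mpP$'s step by a context step here (only to re-establish the invariant after Subject Reduction already gave us $\stEnv \stEnvMoveStar \stEnvi$), this does not cause trouble --- it is handled by the $\stEnvMoveWithSessionStar[\mpS]$ (reflexive–transitive) in the statement. Everything else is routine case analysis paralleling Thm.\@ 5.4 of \cite{POPL19LessIsMore} and its companion persistence lemma.
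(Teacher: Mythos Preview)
Your approach is essentially the paper's: its entire proof is the one-liner ``Straightforward from the proof of \Cref{lem:subject-reduction}, which accounts for all possible transitions from $\mpP$ to $\mpPi$, and in all cases yields the desired properties for its typing context $\stEnvi$.'' So your plan of applying \Cref{lem:subject-reduction} and then doing per-case bookkeeping on the parallel components is exactly right, and considerably more detailed than what the paper offers.

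Two small remarks on your sketch. First, \Cref{lem:session-fidelity} goes in the wrong direction for this proposition (context-reduces $\Rightarrow$ process-reduces); you never actually need it here, and your core argument does not use it, so you can drop the mention. Second, your proposed resolution of the $\stEnd$-boundary case --- ``a process typed under an all-$\stEnd$/basic context is $\equiv \mpNil$'' --- is not true in general (a restriction $\mpRes{\mpSi}{\ldots}$ with a non-$\stEnd$ inner context can be typed by an empty outer context), so the induction on the typing derivation must additionally carry item~(iv) of \Cref{def:unique-role-proc} (all nested restrictions have $\stEnvEndP{\cdot}$), which you already observed is preserved by reduction; with that extra invariant threaded through, the structural-congruence argument does go through. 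None of this is addressed in the paper's proof either, so you are not missing anything the paper supplies.
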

\begin{proof}
  Straightforward from the proof of \Cref{lem:subject-reduction}, which accounts for all possible transitions from $\mpP$ to $\mpPi$, and in all cases yields the desired properties for its typing context $\stEnvi$. 
\qedhere 
\end{proof}

\thmSessionFidelity*
\begin{proof}
Let $\stEnv = \setenum{\stEnvMap{\mpChanRole{\mpS}{\roleP}}{\gtProj{\gtG}{\roleP}}}_{%
   \roleP \in \mpFmt{I}}$ with $\gtRoles{\gtG} \subseteq I$. It follows directly that 
   $\stEnv = \bigcup_{\roleP \in I}\stEnv[\roleP]$, where, by~\Cref{lem:global_end}, 
   $\stEnv[\roleP] = \stEnvMap{\mpChanRole{\mpS}{\roleP}}{\stEnd}$ if $\roleP \notin \gtRoles{\gtG}$. 
 Then, by~\Cref{def:assoc}, $\stEnvAssoc{\gtG}{\stEnv}{\mpS}$. %
  
  Since $\gtG \,\gtMove$, by~\Cref{thm:soundness_association}, $\stEnv \!\stEnvMoveWithSession[\mpS]$.  
  Moreover, by the assumptions and~\Cref{lem:session-fidelity-association},  $\exists \stEnvi, \gtGi, \mpPi$ %
    such that 
    $\stEnv \!\stEnvMoveWithSession[\mpS]\! \stEnvi$, 
    $\gtG \,\gtMove\, \gtGi$, 
    $\mpP \mpMoveStar\! \mpPi$,  %
    and 
    $\stJudge{\mpEnvEmpty\!}{\!\stEnvi}{\mpPi}$, %
    with 
    $\stEnvAssoc{\gtGi}{\stEnvi}{\mpS}$, %
    $\mpPi \equiv \mpBigPar{\roleP \in I}{\mpPi[\roleP]}$, %
    and $\stEnvi = \bigcup_{\roleP \in I}\stEnvi[\roleP]$ %
    such that for each $\mpPi[\roleP]$:
    (1) $\stJudge{\mpEnvEmpty\!}{\stEnvi[\roleP]}{\!\mpPi[\roleP]}$, %
    and
    (2) either $\mpPi[\roleP] \equiv \mpNil$, 
    or $\mpPi[\roleP]$ only plays $\roleP$ in $\mpS$, by $\stEnvi[\roleP]$. 
    
    We are left to show that: 
    \begin{itemize}
    \item $\gtRoles{\gtGi} \subseteq I$: By $\stEnv \!\stEnvMoveWithSession[\mpS]\! \stEnvi$ 
    and~\Cref{lem:stenv-red:trivial-1-new}, 
    $\dom{\stEnvi} = \dom{\stEnv} = \setenum{\mpChanRole{\mpS}{\roleP}}_{\roleP \in \mpFmt{I}}$. 
   Moreover, since $\stEnvAssoc{\gtGi}{\stEnvi}{\mpS}$,  $\setenum{\mpChanRole{\mpS}{\roleP}}_{\roleP \in \gtRoles{\gtGi}} \subseteq \dom{\stEnvi}$, which implies that   $\gtRoles{\gtGi} \subseteq I$. 
   
    \item $\stJudge{\mpEnvEmpty\!}{\!\setenum{\stEnvMap{\mpChanRole{\mpS}{\roleP}}{\gtProj{\gtGi}{\roleP}}}_{%
   \roleP \in \mpFmt{I}}}{\mpPi}$: By $\stEnvAssoc{\gtGi}{\stEnvi}{\mpS}$,  $\dom{\stEnvi} = \setenum{\mpChanRole{\mpS}{\roleP}}_{\roleP \in \mpFmt{I}}$, and~\Cref{lem:global_end}, $\stEnvi \stSub \setenum{\stEnvMap{\mpChanRole{\mpS}{\roleP}}{\gtProj{\gtGi}{\roleP}}}_{%
   \roleP \in \mpFmt{I}}$. Moreover, by $\stJudge{\mpEnvEmpty\!}{\!\stEnvi}{\mpPi}$ and \inferrule{\iruleMPSub}, it follows that 
   $\stJudge{\mpEnvEmpty\!}{\!\setenum{\stEnvMap{\mpChanRole{\mpS}{\roleP}}{\gtProj{\gtGi}{\roleP}}}_{%
   \roleP \in \mpFmt{I}}}{\mpPi}$. 
   
   \item $\stJudge{\mpEnvEmpty\!}{\stEnvMap{\mpChanRole{\mpS}{\roleP}}{\gtProj{\gtGi}{\roleP}}}{\!\mpPi[\roleP]}$: Follows directly from $\stEnvi =  \bigcup_{\roleP \in I}\stEnvi[\roleP]\stSub \setenum{\stEnvMap{\mpChanRole{\mpS}{\roleP}}{\gtProj{\gtGi}{\roleP}}}_{%
   \roleP \in \mpFmt{I}}$, $\stJudge{\mpEnvEmpty\!}{\stEnvi[\roleP]}{\!\mpPi[\roleP]}$, and \inferrule{\iruleMPSub}. 
   
   \item Either $\mpPi[\roleP] \equiv \mpNil$, 
    or $\mpPi[\roleP]$ only plays $\roleP$ in $\mpS$, by $\stEnvMap{\mpChanRole{\mpS}{\roleP}}{\gtProj{\gtGi}{\roleP}}$: Follows directly from $\stJudge{\mpEnvEmpty\!}{\stEnvMap{\mpChanRole{\mpS}{\roleP}}{\gtProj{\gtGi}{\roleP}}}{\!\mpPi[\roleP]}$,  
    $\stEnvi[\roleP] \stSub \stEnvMap{\mpChanRole{\mpS}{\roleP}}{\gtProj{\gtGi}{\roleP}}$, and 
   the fact that either $\mpPi[\roleP] \equiv \mpNil$, 
    or $\mpPi[\roleP]$ only plays $\roleP$ in $\mpS$, by $\stEnvi[\roleP]$. 
   \qedhere
    \end{itemize}
\end{proof}

\subsection{Process Properties}
\label{app:sec:process_properties_proof}
\begin{restatable}[Process Deadlock-Freedom]{lemma}{lemProcessDF}%
  \label{lem:stenv-proc-df}
  \label{lem:deadlock-freedom}%
 Assume $\stJudge{\mpEnvEmpty\!}{\!\setenum{\stEnvMap{\mpChanRole{\mpS}{\roleP}}{\gtProj{\gtG}{\roleP}}}_{%
   \roleP \in \gtRoles{\gtG}}}{\!\mpP}$, %
  where 
  $\mpP \equiv \mpBigPar{\roleP \in  \gtRoles{\gtG}}{\mpP[\roleP]}$ %
  and for each $\mpP[\roleP]$, $\stJudge{\mpEnvEmpty\!}{\stEnvMap{\mpChanRole{\mpS}{\roleP}}{\gtProj{\gtG}{\roleP}}}{\!\mpP[\roleP]}$.
  Further, assume that each $\mpP[\roleP]$
  is either $\mpNil$ (up to $\equiv$), %
  or only plays $\roleP$ in $\mpS$, by $\stEnvMap{\mpChanRole{\mpS}{\roleP}}{\gtProj{\gtG}{\roleP}}$. %
  Then, $\mpP$ is deadlock-free. 
\end{restatable}
\begin{proof}
Let $\stEnv = \setenum{\stEnvMap{\mpChanRole{\mpS}{\roleP}}{\gtProj{\gtG}{\roleP}}}_{%
   \roleP \in \gtRoles{\gtG}}$. It follows directly that 
   $\stEnv = \bigcup_{\roleP \in \gtRoles{\gtG}}\stEnv[\roleP]$, where 
   $\stEnv[\roleP] = \stEnvMap{\mpChanRole{\mpS}{\roleP}}{\gtProj{\gtG}{\roleP}}$. 
 By~\Cref{def:assoc}, $\stEnvAssoc{\gtG}{\stEnv}{\mpS}$. Moreover, by~\Cref{cor:allproperties}, 
   $\stEnv$ is $\mpS$-deadlock-free.

  Consider any $\mpPi$ such that $\mpP \!\mpMoveStar\! \mpNotMoveP{\mpPi}$, with 
  $\mpP = \mpP[0] \!\mpMove\! \mpP[1] \!\mpMove\! \cdots \!\mpMove\! \mpP[n] = \mpNotMoveP{\mpPi}$ (for some $n$), where each reduction $\mpP[i] \!\mpMove\! \mpP[i+1]$ ($i \!\in\! 0...n\!-\!1$). 
  By~\Cref{lem:single-session-persistent}, we know that each $\mpP[i]$ is well-typed
  and its typing context $\stEnv[i]$ satisfies $\stEnv \stEnvMoveWithSessionStar[\mpS] \stEnv[i]$;
  moreover, $\mpP[i]$ adheres to the single-session requirements of \Cref{lem:session-fidelity}.
  Now,  observe that since the process $\mpP[n] = \mpNotMoveP{\mpPi}$ cannot reduce further, %
  by the contrapositive of \Cref{lem:session-fidelity}, 
  we obtain $\stEnvNotMoveWithSessionP[\mpS]{\stEnv[n]}$. 
  Furthermore, since $\stEnv$ is $\mpS$-deadlock-free,   
  by~\Cref{def:stenv-deadlock-free},   
  we have
  $\forall \mpChanRole{\mpS}{\roleP} \!\in\! \stEnv[n]$: $\stEnvApp{\stEnv[n]}{\mpChanRole{\mpS}{\roleP}} = \stEnd$. 
 Therefore, by \inferrule{\iruleMPNil},
  we have $\mpPi \equiv \mpNil$, 
  which (by~\Cref{def:proc-properties}\ref{item:proc-properties:df}) is the statement.
\end{proof}

\begin{restatable}[Process Liveness]{lemma}{lemProcessLive}%
  \label{lem:stenv-proc-live}
   Assume $\stJudge{\mpEnvEmpty\!}{\!\setenum{\stEnvMap{\mpChanRole{\mpS}{\roleP}}{\gtProj{\gtG}{\roleP}}}_{%
   \roleP \in \gtRoles{\gtG}}}{\!\mpP}$, %
  where 
  $\mpP \equiv \mpBigPar{\roleP \in  \gtRoles{\gtG}}{\mpP[\roleP]}$ %
  and for each $\mpP[\roleP]$, $\stJudge{\mpEnvEmpty\!}{\stEnvMap{\mpChanRole{\mpS}{\roleP}}{\gtProj{\gtG}{\roleP}}}{\!\mpP[\roleP]}$.
  Further, assume that each $\mpP[\roleP]$
  is either $\mpNil$ (up to $\equiv$), %
  or only plays $\roleP$ in $\mpS$, by $\stEnvMap{\mpChanRole{\mpS}{\roleP}}{\gtProj{\gtG}{\roleP}}$. %
  Then, $\mpP$ is live. 
\end{restatable}

\begin{proof}
   
   Let $\stEnv = \setenum{\stEnvMap{\mpChanRole{\mpS}{\roleP}}{\gtProj{\gtG}{\roleP}}}_{%
   \roleP \in \gtRoles{\gtG}}$. It follows directly that 
   $\stEnv = \bigcup_{\roleP \in \gtRoles{\gtG}}\stEnv[\roleP]$, where 
   $\stEnv[\roleP] = \stEnvMap{\mpChanRole{\mpS}{\roleP}}{\gtProj{\gtG}{\roleP}}$. 
 By~\Cref{def:assoc}, $\stEnvAssoc{\gtG}{\stEnv}{\mpS}$. Moreover, by~\Cref{cor:allproperties}, 
   $\stEnv$ is $\mpS$-live.

    The proof proceeds by contradiction: assume that $\mpP$ is \emph{not} live.
    Since (by hypothesis) each parallel component of $\mpP$ only plays one role $\roleP$ in session $\mpS$,
    there are $\mpPi, \mpCtx, \mpQ$ such that
    $\mpP = \mpP[0] \!\mpMove\! \mpP[1] \!\mpMove\! \cdots \!\mpMove\! \mpP[n] = \mpPi \!\equiv\! \mpCtxApp{\mpCtx}{\mpQ}$ where either:
    \begin{itemize}%
    \item%
      $\mpQ = \mpSel{\mpChanRole{\mpS}{\roleP}}{\roleQ}{\mpLab}{\mpW}{\mpQi}$ %
      (for some $\mpLab, \mpW, \mpQi$), and %
      $\not\exists \mpCtxi$: %
      $\mpCtx \mpCtxMoveStar \mpCtxi$ and $\mpPi \!\mpMoveStar\! \mpCtxApp{\mpCtxi}{\mpQi}$.\quad%
      By~\Cref{lem:single-session-persistent}, we know that each $\mpP[i]$ is well-typed
      and its typing context $\stEnv[i]$ is such that $\stEnv \stEnvMoveWithSessionStar[\mpS] \stEnv[i]$;
      moreover, each $\mpP[i]$ satisfies the single-session requirements of \Cref{lem:session-fidelity}.
      Therefore, $\mpPi$ satisfies the single-session requirements of \Cref{lem:session-fidelity},
      and is typed by some $\stEnvi$ such that $\stEnv \stEnvMoveWithSessionStar[\mpS] \stEnvi$. 
       Hence, by inversion of typing, $\mpQ$ is typed by some $\stEnvi[\roleP]$ (part of $\stEnvi$)
      where $\stEnvApp{\stEnvi[\roleP]}{\mpChanRole{\mpS}{\roleP}}$ is a (possibly recursive) internal choice
      towards $\roleQ$, including a choice $\stChoice{\stLab}{\stS}$ (where $\stS$ types the message payload $\mpW$). 
      Therefore, we have $\stEnvMoveAnnotP{\stEnvi}{\stEnvOutAnnot{\roleP}{\roleQ}{\stChoice{\stLab}{\stS}}}$. 
      
      Now, recall that (for the sake of the proof by contradiction) there is no $\mpCtxi$ with 
      $\mpCtx \mpCtxMoveStar \mpCtxi$ such that a reduction of  $\mpPi$ reaches $\mpCtxApp{\mpCtxi}{\mpQi}$; 
      that is,  
       the top-level selection of $\mpQ$ cannot be fired. 
    Hence,  there is at least
      one fair path beginning with $\stEnvi$
    that never fires a transmission label 
      $\ltsSendRecv{\mpS}{\roleP}{\roleQ}{\stLabi}$ (for any $\stLabi$).
      But then, such a fair path starting from $\stEnvi$ is not live, and furthermore, 
      (by \Cref{def:typing-ctx-live}) we obtain that $\stEnv$ is \emph{not} live, a desired contradiction; 

    \item%
      $\mpQ = \mpBranch{\mpChanRole{\mpS}{\roleP}}{\roleQ}{i \in I}{\mpLab[i]}{x_i}{\mpQi[i]}{}$ %
      (for some $I$, $\mpLab[i], \mpFmt{x_i}, \mpQi[i]$), and 
      $\not\exists \mpCtxi, k \!\in\! I, \mpU$:\, %
      $\mpCtx \mpCtxMoveStar \mpCtxi$ and $\mpPi \mpMoveStar \mpCtxApp{\mpCtxi}{\mpQi[k]\subst{x_k}{\mpU}}$. %
      The proof is similar to the previous case, and reaches a similar contradiction. 
       \qedhere 
    \end{itemize} 
   \end{proof}

\lemProcessPropertiesVerif*
\begin{proof}
  Directly from~\Cref{lem:stenv-proc-df,lem:stenv-proc-live}.  
  \qedhere 
  \end{proof}

\end{document}